\pdfoutput=1
\RequirePackage{ifpdf}
\ifpdf 
\documentclass[pdftex]{sigma}
\else
\documentclass{sigma}
\fi

\usepackage[dvipsnames]{xcolor}
\usepackage{mathtools,bbm,mathdots,tensor}
\usepackage[]{dsfont}
\usepackage{empheq}
\usepackage{longtable}

\usepackage{framed}
\usepackage{manfnt}
\usepackage{simpler-wick}
\usepackage[new]{old-arrows}
\usepackage{stmaryrd}
\usepackage{epic}
\usepackage{eepic}
\usepackage[matrix,arrow]{xy}
\usepackage[calc,useregional]{datetime2}
\usepackage{mathrsfs}
\usepackage{pgfmath}
\usepackage{ytableau}
\usepackage{thmtools}

\usepackage{slashed,extarrows,bigints,cancel}
\usepackage{float,multirow,makecell}
\usepackage{bm}

\usepackage{booktabs}

\usepackage{subcaption}
\usepackage[labelsep=period,labelfont=bf,font=small]{caption}

\usepackage{pgfplots}
\usepackage{tikz}
\usetikzlibrary{cd, decorations.markings, decorations.pathmorphing, arrows,calc,knots,hobby,
	decorations.pathreplacing,
	shapes.geometric,
	calc, arrows, arrows.meta,patterns,intersections, pgfplots.fillbetween}

\usepackage{pgfplots}
\tikzset{snake it/.style={decorate, decoration=snake}}
\tikzset{->-/.style={decoration={
			markings,
			mark=at position #1 with {\arrow{>}}},postaction={decorate}}}
\tikzset{-<-/.style={decoration={
			markings,
			mark=at position #1 with {\arrow{<}}},postaction={decorate}}}

\tikzset{cross/.style={path picture={
			\draw[black]
			(path picture bounding box.south east) -- (path picture bounding box.north west) (path picture bounding box.south west) -- (path picture bounding box.north east);
}}}

\usepackage{tikz-cd}

\newcommand{\mscr}[1]{\mathscr{#1}}
\newcommand{\mds}[1]{\mathds{#1}}
\newcommand{\mfk}[1]{\mathfrak{#1}}
\newcommand{\mbs}[1]{\boldsymbol{#1}}
\newcommand{\mbb}[1]{\mathbb{#1}}

\newcommand{\mbf}[1]{\mathbf{#1}}
\newcommand{\msf}[1]{\mathsf{#1}}
\newcommand{\mcal}[1]{\mathcal{#1}}

\newcommand{\sd}{\msf{d}}
\newcommand{\si}{\sigma}
\newcommand*{\oli}[1]{{\overbracket[.5pt][-1pt]{#1}}}

\newcommand{\pa}[1]{\partial_{#1}}

\newcommand{\ot}{\leftarrow}

\newcommand*\bbar[1]{%
	{\hbox{%
			\vbox{%
				\hrule height 0.5pt %
				\kern0.2ex%
				\hbox{%
					\kern-0.1em%
					\ensuremath{#1}%
					\kern-0.1em%
				}%
			}%
		}%
	}
}

\DeclareMathOperator{\SU}{\tenofo{SU}}

\DeclareMathOperator{\U}{\tenofo{U}}

\newcommand{\tenofo}[1]{\text{\normalfont #1}}
\newcommand{\wt}[1]{\widetilde{#1}}

\newcommand{\y}[1]{\msf{y}_{#1}}

\newcommand{\higgs}{\mathcal{M}_H}

\newcommand{\thdQ}{\mscr{Q}}
\newcommand{\twdQ}{\msf{Q}}

\newcommand{\A}{\tenofo{A}}
\newcommand{\B}{\tenofo{B}}
\newcommand{\C}{\tenofo{C}}

\theoremstyle{plain}
\newtheorem{thr}{Theorem}[section]
\newtheorem{conj}{Conjecture}[section]
\newtheorem{lem}{Lemma}[section]
\newtheorem{cor}{Corollary}[section]
\newtheorem{prop}{Proposition}[section]

\theoremstyle{definition}
\newtheorem{dfn}{Definition}[section]
\newtheorem{ex}{Example}[section]
\newtheorem{rmk}{Remark}[section]

\newenvironment{eqaligned*}
{%
	\begin{equation*}
		\begin{aligned}
		}
		{%
		\end{aligned}
	\end{equation*}
	\ignorespacesafterend}

\newenvironment{eqgathered}
{%
	\begin{equation}
		\begin{gathered}
		}
		{%
		\end{gathered}
	\end{equation}
	\ignorespacesafterend}

\newenvironment{eqgathered*}
{%
	\begin{equation*}
		\begin{gathered}
		}
		{%
		\end{gathered}
	\end{equation*}
	\ignorespacesafterend}

\definecolor{mygrey}{rgb}{.45,.4,.7}
\definecolor{mypurple}{RGB}{127, 73, 235}
\definecolor{mycyan}{rgb}{.1,.75,.95}
\definecolor{myred}{rgb}{.9,.1,.15}
\definecolor{myblue}{rgb}{.05,.15,.85}
\definecolor{mygreen}{rgb}{.05, .6, .1}

\newcommand{\rf}[1]{(\ref{#1})}

\newcommand{\tr}{\operatorname{tr}}
\newcommand{\dd}{\text{d}}
\newcommand{\id}{\mathds{1}}
\newcommand{\wtd}{\widetilde}
\newcommand{\what}{\widehat}
\newcommand{\ov}[1]{\overline{#1}}

\renewcommand{\Im}{\text{Im}}
\newcommand{\Hom}{\text{Hom}}
\newcommand{\End}{\text{End}}

\newcommand{\lan}{\left\langle}
\newcommand{\ran}{\right\rangle}

\newcommand{\bra}[1]{\left\langle #1 \right|}
\newcommand{\ket}[1]{\left|#1\right\rangle}

\newcommand{\mcr}{\mathscr}
\newcommand{\mfr}{\mathfrak}

\newcommand{\cF}{{\mathcal F}}

\newcommand{\cH}{{\mathcal H}}

\newcommand{\cJ}{{\mathcal J}}
\newcommand{\cK}{{\mathcal K}}
\newcommand{\cL}{{\mathcal L}}
\newcommand{\cM}{{\mathcal M}}
\newcommand{\cN}{{\mathcal N}}

\newcommand{\cQ}{{\mathcal Q}}
\newcommand{\cR}{{\mathcal R}}

\newcommand{\bC}{\mathbb{C}}

\newcommand{\bE}{\mathbb{E}}

\newcommand{\bL}{\mathbb{L}}
\newcommand{\bM}{\mathbb{M}}

\newcommand{\bP}{\mathbb{P}}
\newcommand{\bR}{\mathbb{R}}
\newcommand{\bV}{\mathbb{V}}
\newcommand{\bW}{\mathbb{W}}
\newcommand{\bZ}{\mathbb{Z}}

\newcommand{\sfA}{\mathsf{A}}

\newcommand{\sfD}{\mathsf{D}}
\newcommand{\sfF}{\mathsf{F}}
\newcommand{\sfG}{\mathsf{G}}

\newcommand{\sfQ}{\mathsf{Q}}

\newcommand{\scrN}{\mathscr{N}}

\newcommand{\ii}{\text{i}}

\DeclareMathAlphabet{\mathpzc}{OT1}{pzc}{m}{it}

\newcommand{\GL}{\text{GL}}

\newcommand{\gl}{\mathfrak{gl}}
\renewcommand{\sl}{\mathfrak{sl}}
\newcommand{\su}{\mathfrak{su}}

\newcommand{\al}{\alpha}
\newcommand{\be}{\beta}
\newcommand{\ga}{\gamma}

\newcommand{\de}{\delta}
\newcommand{\De}{\Delta}

\newcommand{\ep}{\epsilon}
\newcommand{\vep}{\varepsilon}
\newcommand{\ze}{\zeta}
\newcommand{\tht}{\theta}
\newcommand{\vth}{\vartheta}

\newcommand{\ka}{\kappa}
\newcommand{\la}{\lambda}
\newcommand{\La}{\Lambda}

\newcommand{\Om}{\Omega}
\newcommand{\Si}{\Sigma}

\newcommand{\Up}{\Upsilon}

\newcommand{\beq}{\begin{equation}}
	\newcommand{\eeq}{\end{equation}}
\newcommand{\bea}{\begin{eqnarray}}
	\newcommand{\eea}{\end{eqnarray}}
\newcommand{\nn}{\nonumber}

\usepackage{environ}
\NewEnviron{draftnote}{
	\color{gray}
	\noindent\colorbox{gray!70}{\textcolor{white}{Draft note}}
	\hrule
	\BODY
	\hrule
}

\newcommand{\efunction}{\bm{f}}
\newcommand{\estab}{\mbf{Stab}}
\newcommand{\ermatrix}{\mbf{R}}

\newcommand{\tfunction}{\msf{f}}
\newcommand{\tstab}{\msf{Stab}}
\newcommand{\trmatrix}{\msf{R}}

\newcommand{\rfunction}{\mathrm{f}}
\newcommand{\rstab}{\mathrm{Stab}}
\newcommand{\rrmatrix}{\mathrm{R}}

\input{Figures_and_Tables.tex}

\numberwithin{equation}{section}

\newtheorem*{Theorem*}{Theorem}

 { \theoremstyle{definition}

 }


\begin{document}

\allowdisplaybreaks

\newcommand{\arXivNumber}{2308.12333}

\renewcommand{\PaperNumber}{099}

\FirstPageHeading

\ShortArticleName{Elliptic Stable Envelopes and Dynamical $R$-Matrices}

\ArticleName{Elliptic Stable Envelopes for Certain Non-Symplectic\\ Varieties and Dynamical $\boldsymbol{R}$-Matrices for Superspin\\ Chains from the Bethe/Gauge Correspondence}

\Author{Nafiz ISHTIAQUE~$^{\rm a}$, Seyed Faroogh MOOSAVIAN~$^{\rm b}$ and Yehao ZHOU~$^{\rm c}$}

\AuthorNameForHeading{N.~Ishtiaque, S.F.~Moosavian and Y.~Zhou}

\Address{$^{\rm a)}$~Institut des Hautes \'Etudes Scientifiques, 35 Rte de Chartres, 91440 Bures-sur-Yvette, France}
\EmailD{\href{ishtiaque@ihes.fr}{ishtiaque@ihes.fr}}

\Address{$^{\rm b)}$~Department of Physics, McGill University, Ernest Rutherford Physics Building,\\
\hphantom{$^{\rm b)}$}~3600 Rue University, Montr\'eal, QC H3A 2T8, Canada}
\EmailD{\href{mailto:sfmoosavian@gmail.com}{sfmoosavian@gmail.com}}

\Address{$^{\rm c)}$~Kavli Institute for the Physics and Mathematics of the Universe (WPI), University of Tokyo,\\
\hphantom{$^{\rm c)}$}~Kashiwa, Chiba 277-0882, Japan}
\EmailD{\href{yehao.zhou@ipmu.jp}{yehao.zhou@ipmu.jp}}

\ArticleDates{Received February 07, 2024, in final form October 11, 2024; Published online October 31, 2024}

\Abstract{We generalize Aganagic--Okounkov's theory of elliptic stable envelopes, and its physical realization in Dedushenko--Nekrasov's and Bullimore--Zhang's works, to certain varieties without holomorphic symplectic structure or polarization. These classes of varieties include, in particular, classical Higgs branches of 3d $\mathcal N=2$ quiver gauge theories. The Bethe/gauge correspondence relates such a gauge theory to an anisotropic/elliptic superspin chain, and the stable envelopes compute the $R$-matrix that solves the dynamical Yang--Baxter equation (dYBE) for this spin chain. As an illustrative example, we solve the dYBE for the elliptic $\mathfrak{sl}(1|1)$ spin chain with fundamental representations using the corresponding 3d $\mathcal N=2$ SQCD whose classical Higgs branch is the Lascoux resolution of a determinantal variety. Certain Janus partition functions of this theory on $I \times \mathbb E$ for an interval $I$ and an elliptic curve $\mathbb E$ compute the elliptic stable envelopes, and in turn the geometric elliptic $R$-matrix, of the anisotropic $\mathfrak{sl}(1|1)$ spin chain. Furthermore, we consider the 2d and 1d reductions of elliptic stable envelopes and the $R$-matrix. The reduction to 2d gives the K-theoretic stable envelopes and the trigonometric $R$-matrix, and a further reduction to 1d produces the cohomological stable envelopes and the rational $R$-matrix. The latter recovers Rim\'anyi--Rozansky's results that appeared recently in the mathematical literature.}

\Keywords{equivariant elliptic cohomology; elliptic stable envelope; 3d $\mathcal{N}=2$ theory; Janus interfaces; elliptic genus}

\Classification{81R12; 81T60; 55N34}

\newpage

\tableofcontents

\section{Introduction}
\label{sec:introduction}

\subsection{Background and motivation}
In this paper, we study some mathematical and physical aspects of one-dimensional quantum integrable spin chains with anisotropy based on Lie superalgebras $\mfr g$ of type A. Given a Lie algebra/superalgebra $\mfr g$, there is a family of integrable spin chains with varying spectral curves and spectrum-generating algebras:
\begin{table}[H]\renewcommand{\arraystretch}{1.2}
\setlength\tabcolsep{4pt}
\centering
\begin{tabular}{|ccc|}
\hline
Spin chain & Spectral curve & Spectrum generating algebra \\
\hline
XXX/Rational & $\bC$ & Yangian, $\msf Y_\hbar(\mfr g)$ \\
XXZ/Trigonometric & $\bC^\times$ & Quantum affine, $\mcr U_\hbar(\hat{\mfr g})$ \\
XYZ/Elliptic & $\bE_\tau$ & Elliptic dynamical quantum group, $E_{\tau, \hbar}(\mfr g)$ \\
\hline
\end{tabular}
\caption{One-dimensional integrable quantum spin chains based on the Lie algebra $\mfr g$. Here $\hbar$ is the usual quantization parameter, $\tau$ denotes the complex structure of the elliptic curve $\bE_\tau$, and~$\hat {\mfr g}$ is the affine Lie algebra of $\mfr g$.}
\label{table:spinchains}
\end{table}
Given one of these choices, to complete the description of a specific spin chain, we need to choose the number of sites $L$ in the chain, along with the $\mfr g$-module $V_i$ and the spectral curve valued inhomogeneity $v_i$ at the $i$th site for $i=1,\dots, L$. As an example, the XYZ $\sl_2$ spin chain with $V_i = \bC^2$ being the fundamental representation for all $i$ can be defined by the following Hamiltonian \cite{baxter1982exactly}:
\beq
	H_{\text{XYZ-}\frac{1}{2}} = -\frac{1}{2} \sum_{i=1}^{L} \sum_{j=1}^3 J_j \si_{(i)}^j \si_{(i+1)}^j , \label{HXYZ1/2}
\eeq
where $\si^1$, $\si^2$, $\si^3$ are the standard Pauli matrices, $\si_{(i)}$ acts on $V_i$, and $J_1$, $J_2$, $J_3$ are three mutually distinct coupling constants. We consider periodic spin chains by identifying the $(L+1)$st site with the first one. Starting from the Hamiltonian, one can construct a large commutative algebra of operators containing the Hamiltonian, called the Bethe subalgebra of the spectrum-generating algebra. These operators provide the large number of conserved charges for the spin chain, which is the source of its integrability. The wave function and energy spectrum of such spin chains are obtained using its equivalence to the 2d statistical eight-vertex model and the Bethe ansatz~\cite{baxter1982exactly,Faddeev199605,1993qism.book.....K,Sklyanin1982, Sklyanin:1978pj, FaddeevSklyaninTakhtajan1980,sutherland1967, FaddeevTakhtajan1979, Yang196711, Yang196712}.

It became apparent from the pioneering works \cite{Baxter197203,Baxter197805,McGuire196405,SemenovTyanShanskii198310,Yang196711,Yang196712} that the more general and fundamental approach to defining integrable spin chains is utilizing the \emph{$R$-matrix} and the \emph{Yang--Baxter equation $($YBE$)$}. For any two $\mfr g$-modules $U$ and~$V$ with inhomogeneities $u$ and~$v$, the $R$-matrix is an operator $R(u-v) \in \End(U \otimes V)$ depending meromorphically on the difference~${u-v}$.\footnote{The difference is taken in the appropriate abelian group, $\bC$, $\bC^\times$ or $\bE_\tau$.} It is sometimes convenient to use a diagrammatic notation for the $R$-matrix, as in Figure~\ref{fig:R}.

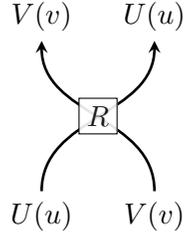
\begin{figure}[h]
\centering
\begin{tikzpicture}[x={(.75cm,0cm)}, y={(0cm,1cm)}]
 \draw[-stealth, line width=1, name path=line1] (0,0) .. controls (0,1) and (2,1) .. (2,2);
 \draw[-stealth, line width=1, name path=line2] (2,0) .. controls (2,1) and (0,1) .. (0,2);
 \node[below=0] at (0,0) {$U(u)$};
 \node[below=0] at (2,0) {$V(v)$};
 \node[above=0] at (0,2) {$V(v)$};
 \node[above=0] at (2,2) {$U(u)$};

 \fill [name intersections={of=line1 and line2, by={intersection}}] (intersection) node[rectangle, draw, fill=white, fill opacity=.9, inner sep=3pt] {$R$};
\end{tikzpicture}
\caption{Diagrammatic representation of the $R$-matrix $R(u-v)\colon U\otimes V \to U \otimes V$. The notation~$U(u)$ means that at the site with the module $U$ the inhomogeneity is $u$. For rational spin chains, given a~$\mfr g$-module $U$, $U(u)$ becomes the evaluation module for the Yangian $\msf Y_\hbar(\mfr g)$ with the evaluation parameter~$u$.} \label{fig:R}
\end{figure}

The defining property of the $R$-matrix is that it satisfies the YBE. For $\mfr g$-modules~$U$,~$V$ and~$W$ with inhomogeneities $u$, $v$ and $w$, the YBE can be written as
\begin{align}
 R_{UV}(u-v)R_{UW}(u-w)R_{VW}(v-w)=R_{VW}(v-w)R_{UW}(u-w)R_{UV}(u-v).
\label{YBE}
\end{align}
It is an equality of operators in $U \otimes V \otimes W$. Here $R_{UV}$ is the $R$-matrix in $\End(U \otimes V)$ and its action is extended to $U \otimes V \otimes W$ by identity, similarly for $R_{VW}$ and $R_{UW}$. Figure~\ref{fig:YBE} shows the diagrammatic YBE.
\begin{figure}[h]
\centering
\begin{tikzpicture}[x={(.8cm,0cm)}, y={(0cm,1cm)}, baseline=39pt]
 \draw[-stealth, line width=1, name path=line1] (0,0) .. controls (0,1.5) and (1.5,1.5) .. (3,3);
 \draw[-stealth, line width=1, name path=line2] (1.5,0) .. controls (1.5,.75) and (2.75,.75) .. (2.75,1.5) .. controls (2.75,2.25) and (1.5, 2.25) .. (1.5, 3);
 \draw[-stealth, line width=1, name path=line3] (3,0) .. controls (1.5,1.5) and (0,1.5) .. (0,3);
 \node[below=0] at (0,0) {$U(u)$};
 \node[below=0] at (1.5,0) {$V(v)$};
 \node[below=0] at (3,0) {$W(w)$};
 \node[above=0] at (0,3) {$W(w)$};
 \node[above=0] at (1.5,3) {$V(v)$};
 \node[above=0] at (3,3) {$U(u)$};

 \fill [name intersections={of=line1 and line3, by={intersection}}] (intersection) node[rectangle, draw, fill=white, fill opacity=.9, inner sep=2pt] {$R_{UW}$};
 \fill [name intersections={of=line1 and line2, by={intersection}}] (intersection) node[rectangle, draw, fill=white, fill opacity=.9, inner sep=2pt] {$R_{UV}$};
 \fill [name intersections={of=line2 and line3, by={intersection}}] (intersection) node[rectangle, draw, fill=white, fill opacity=.9, inner sep=2pt] {$R_{VW}$};
\end{tikzpicture}
$=$
\begin{tikzpicture}[x={(.8cm,0cm)}, y={(0cm,1cm)}, baseline=39pt]
 \draw[-stealth, line width=1, name path=line1] (0,0) .. controls (1.5,1.5) and (3,1.5) .. (3,3);
 \draw[-stealth, line width=1, name path=line2] (1.5,0) .. controls (1.5,.75) and (.25,.75) .. (.25,1.5) .. controls (.25,2.25) and (1.5, 2.25) .. (1.5, 3);
 \draw[-stealth, line width=1, name path=line3] (3,0) .. controls (3,1.5) and (1.5,1.5) .. (0,3);
 \node[below=0] at (0,0) {$U(u)$};
 \node[below=0] at (1.5,0) {$V(v)$};
 \node[below=0] at (3,0) {$W(w)$};
 \node[above=0] at (0,3) {$W(w)$};
 \node[above=0] at (1.5,3) {$V(v)$};
 \node[above=0] at (3,3) {$U(u)$};

 \fill [name intersections={of=line1 and line3, by={intersection}}] (intersection) node[rectangle, draw, fill=white, fill opacity=.9, inner sep=2pt] {$R_{UW}$};
 \fill [name intersections={of=line1 and line2, by={intersection}}] (intersection) node[rectangle, draw, fill=white, fill opacity=.9, inner sep=2pt] {$R_{UV}$};
 \fill [name intersections={of=line2 and line3, by={intersection}}] (intersection) node[rectangle, draw, fill=white, fill opacity=.9, inner sep=2pt] {$R_{VW}$};
\end{tikzpicture}
\caption{Diagrammatic Yang--Baxter equation. A line joining two $R$-matrices is a contraction of indices for the vector space labeling the line. In particular, the order of matrix multiplication in the YBE \rf{YBE} is read off from the diagrams by reading from bottom to top.} \label{fig:YBE}
\end{figure}
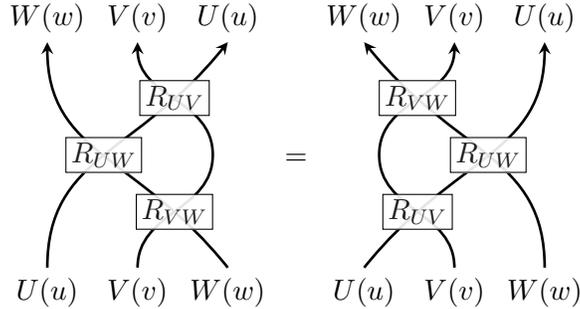

The $R$-matrix is used to construct the commuting charges of the spin chain as follows. Given the spin chain Hilbert space $\cH = \bigotimes_{i=1}^L V_i$, we add an auxiliary site $U$ to it with inhomogeneity~$u$.\footnote{Inhomogeneity of the auxiliary site is also referred to as the spectral parameter in literature. We use the terms inhomogeneity of a site and its spectral parameter interchangeably.} Then we define the transfer matrix
\[
	t(u) := \tr_U[R_{UV_1}(u-v_1) R_{UV_2}(u-v_2) \cdots R_{UV_L}(u-v_L)] \in \End(\cH) ,
\]
which can also be depicted in a diagram as
\[
t(u) =
\begin{tikzpicture}[x={(1.5cm,0cm)}, y={(0cm,1cm)}, baseline=11pt]
\foreach \i in {1,...,4}
	\path[-stealth, line width = 1, name path = v\i] (-3+\i,-1) -- (-3+\i,2);

\draw[-stealth, line width = .75, name path = u2] (-2.5,.9) .. controls (-1.5,1.25) and (.5,1.2) .. (1.5,.9);

\draw[-stealth, line width = .75, name path = u1] (1.5,.9) .. controls (2,.7) and (2,.3) .. (1.5,.1) .. controls (.5,-.2) and (-1.5,-.25) .. (-2.5,.1) .. controls (-3,.3) and (-3,.7) .. (-2.5,.9);

\foreach \i in {1,2,4}
	\fill [name intersections={of=u2 and v\i, by={intersection}}] (intersection) node[rectangle, fill=white, minimum size=5pt] {};
	
\foreach \i in {1,2,4}
	\draw[-stealth, line width = 1, name path = v\i] (-3+\i,-1) -- (-3+\i,2);

\foreach \i in {1,2}
	\fill [name intersections={of=u1 and v\i, by={intersection}}] (intersection) node[rectangle, draw, fill=white, fill opacity=.9, inner sep=2pt] {$R_{UV_{\i}}$};

\fill [name intersections={of=u1 and v4, by={intersection}}] (intersection) node[rectangle, draw, fill=white, fill opacity=.9, inner sep=2pt] {$R_{UV_{L}}$};

\foreach \i in {1,2}
	\fill [name intersections={of=v3 and u\i, by={intersection}}] (intersection) node[rectangle, fill=white, inner sep=7pt] {$\bm\cdots$};
	
\node at (0,.5) {$\cdots$};
\node[below=0] at (0,-1) {$\cdots$};

\node[below=0] at (-2,-1) {$V_1(v_1)$};
\node[below=0] at (-1,-1) {$V_2(v_2)$};
\node[below=0] at (1,-1) {$V_L(v_L)$};

\node at (2,1) {$U(u)$};
\end{tikzpicture} .
\]
As a consequence of the YBE, these transfer matrices commute with each other for arbitrary inhomogeneities
$[t(u), t(v)] = 0$.
Therefore, if we expand the transfer matrix as $t(u) = \sum_{n=0}^\infty u^n h_n$, then the coefficients $h_n \in \End(\cH)$ form a large commutative algebra of operators:
$[h_m, h_n] = 0$.
This becomes the Bethe subalgebra of the spin chain. Any element of this subalgebra can be taken to be the Hamiltonian of the quantum mechanical system.

The $R$-matrix, as a solution of the YBE, has been computed in all cases from Table~\ref{table:spinchains} for bosonic Lie \cite{Bazhanov198709,Bazhanov198710,BazhanovStroganov198207, BelavinDrinfeld198207,BelavinDrinfeld198307,BelavinDrinfeld199808,KulishReshetikhinSklyanin198109,KulishSklyanin198207,Reshetikhin198710,Sklyanin198210,
Stolin199112} and affine Lie algebra \cite{Jimbo198612,Reshetikhin198710}. For Lie superalgebras, the first examples of solutions to YBE appeared in \cite{KulishSklyanin198207}. Certain rational solutions were also obtained in \cite{Kulish198611}. The classical trigonometric solutions to YBE are constructed in \cite{LeitesSerganova198401} while the explicit trigonometric quantum $R$-matrices first appeared in \cite{BazhanovShadrikov198712}. %
Furthermore, the rational solution for $\mfr g = \sl(1|1)$ is constructed using the notion of stable envelopes in \cite{RimanyiRozansky202105}. In this paper, we propose a general construction of elliptic $R$-matrices for Lie superalgebras of type~A and we explicitly compute the elliptic $R$-matrix for $\mfr g = \sl(1|1)$. Reduction of the elliptic $R$-matrix automatically gives the trigonometric $R$-matrix. Further reduction of the trigonometric $R$-matrix gives the rational $R$-matrix. The reduction of our elliptic $\sl(1|1)$ $R$-matrix down to the rational case matches with the previously computed rational $\sl(1|1)$ $R$-matrix from \cite{RimanyiRozansky202105}.

The tool that we use to construct the elliptic $R$-matrices is a conjectural relationship between quantum integrable systems and supersymmetric gauge theories, called the \emph{Bethe--Gauge correspondence}, proposed by Nekrasov and Shatashvili \cite{NekrasovShatashvili200901c,NekrasovShatashvili200901b,NekrasovShatashvili201405}. Part of their conjecture is an action of the spectrum-generating algebra of a quantum integrable system, Yangian for example, on the cohomology (or generalized cohomologies) of the Higgs branch of the corresponding gauge theory. The correspondence and its string-theory origin, when $\mfk{g}$ is a bosonic Lie algebra, have been extensively studied \cite{BullimoreKimLukowski201708,ChungYoshida201605,CostelloYagi201810,FodaManabe201907,GalakhovLiYamazaki202206,KannoSugiyamaYoshida201806,KimuraZhu202012,
Nekrasov2013,Nekrasov2014,NekrasovShatashvili200901c,NekrasovShatashvili200901b,NekrasovShatashvili201405,NekrasovWitten201002,OkudaYoshida201501,
Orlando201310,OrlandoReffert201111}, generalized to higher dimensions \cite{DingZhang202304,DingZhang202303,JeongLeeNekrasov202103,JeongNekrasov201806,LeeNekrasov202009,NekrasovPestunShatashvili201312, NekrasovShatashvili200901c,NekrasovWitten201002}, and the mathematical formulation has been established \cite{AganagicOkounkov201704,AganagicOkounkov201604,BykovZinnJustin201904,MaulikOkounkov201211,Okounkov201512,Okounkov201710}. For $\mfk{g}$ being a Lie superalgebra, the correspondence and its string-theory origin were studied recently \cite{BoujakhroutSaidiAhlLaamaraDrissi202304,IshtiaqueMoosavianRaghavendranYagi202110,Nekrasov201811,OrlandoReffert201005} but despite progress \cite{ChernyakLeurentVolin202004,RimanyiRozansky202105}, a full mathematical treatment was lacking. Providing such a mathematical formulation for $\mfk{g}$ being a~Lie superalgebra of type A has been one of the central motivations of this work.

Notably, for a 3d gauge theory with $\mcal N=4$ supersymmetry, the Higgs branch is a complex symplectic variety. A particular example is the 3d $\mcal N=4$ quiver gauge theory, whose Higgs branch is the \emph{Nakajima quiver variety} \cite{nakajima1998quiver}, and in this case, the works of Nakajima \cite{nakajima2001quiver} and Varagnolo \cite{varagnolo2000quiver} establish the quantum affine algebra $\mscr U_\hbar(\hat{\mathfrak g}_Q)$ (resp.\ Yangian $\mathsf Y_{\hbar}(\mathfrak g_Q)$) actions on equivariant cohomology (resp.\ equivariant K-theory) of the Nakajima quiver variety $\mcal M_Q$, where~$\mathfrak g_Q$ is the Kac--Moody algebra associated to the quiver $Q$. Their construction relies on the generators and relations of the Yangian $\mathsf Y_{\hbar}(\mathfrak g_Q)$ or quantum affine algebra $\mscr U_\hbar(\hat{\mathfrak g}_Q)$.

 From the quantum integrability point of view, the spectrum-generating algebras are better explained in the framework of $R$-matrices and Yang--Baxter equations \rf{YBE}, instead of generators and relations. In the context of the Bethe--Gauge correspondence, the vector spaces on which the $R$-matrices act are generalized cohomologies of the Higgs branches of the corresponding gauge theories.

Providing the mathematical formulation of the Bethe--Gauge correspondence has been one of the main motivations of Maulik and Okounkov in their fundamental work \cite{MaulikOkounkov201211}. To construct $R$-matrices acting on the cohomologies, Maulik and Okounkov \cite{MaulikOkounkov201211} came up with a map, called the \emph{stable envelope}, from the equivariant cohomology of the torus fixed point set to the equivariant cohomology of the whole Higgs branch
\begin{align}
 \mathrm{Stab}\colon\ H_{\mathsf T}\bigl(\mathsf X^{\mathsf A}\bigr)\to H_{\mathsf T}(\mathsf X). \label{HXAtoHX}
\end{align}
Here $\mathsf X$ is assumed to be a smooth complex symplectic variety, with a torus $\mathsf T$ action such that the subtorus $\mathsf A\subset\mathsf T$ preserves the symplectic structure. Their construction depends on a choice of chamber $\mathfrak C$ which is a certain subset of the real Lie algebra $\mathrm{Lie}(\mathsf A)_{\mathbb R}$, and the $R$-matrix is defined as
\begin{align}
 R_{\mathfrak C_2\shortleftarrow\mathfrak C_1}:=\mathrm{Stab}_{\mathfrak C_2}^{-1}\circ\mathrm{Stab}_{\mathfrak C_1}\colon\ H_{\mathsf T}\bigl(\mathsf X^{\mathsf A}\bigr)\to H_{\mathsf T}\bigl(\mathsf X^{\mathsf A}\bigr). \label{R=StabInvStab}
\end{align}
Then it follows from the definition that $R$-matrices satisfy the braiding relation
\begin{align}
 R_{\mathfrak C_1\shortleftarrow\mathfrak C_{n}}\circ R_{\mathfrak C_n\shortleftarrow\mathfrak C_{n-1}}\circ\cdots\circ R_{\mathfrak C_2\shortleftarrow\mathfrak C_1}=1, \label{RR...R}
\end{align}
which implies the Yang--Baxter equation. The choice of chamber $\mathfrak C$ induces a partial order~$\preceq$ on the set of connected components of $\mathsf X^{\mathsf A}$ such that $\mathrm{Stab}_{\mathfrak C}$ is triangular with respect to $\preceq$ and~${R_{-\mathfrak C\shortleftarrow\mathfrak C}=\mathrm{Stab}_{-\mathfrak C}^{-1}\circ\mathrm{Stab}_{\mathfrak C}}$ gives a Gauss decomposition of $R_{-\mathfrak C\shortleftarrow\mathfrak C}$. A similar kind of Gauss decomposition of $R$-matrices has been discussed by Khoroshkin and Tolstoy in the algebraic setting \cite{KhoroshkinTolstoy199102,Razumov202210}. Since the paper of Maulik and Okounkov came out, the construction of stable envelopes has been extended to K-theory \cite{Okounkov201512} and elliptic cohomology \cite{AganagicOkounkov201604}, giving rise to trigonometric and elliptic solutions to the Yang--Baxter equations respectively.

From the physics point of view, a strategy for the construction of cohomological stable envelopes was suggested by Nekrasov \cite{Nekrasov2013,Nekrasov2014}, and was successfully implemented for gauge theories associated with $\mfk{sl}(2)$ spin chains in \cite{BullimoreKimLukowski201708}. The latter also provides a recipe for the computation of the $R$-matrix by an A-model localization computation. The systematic gauge-theoretic construction of elliptic stable envelopes was developed in \cite{BullimoreZhang202109,DedushenkoNekrasov202109}. In these works the elliptic stable envelope was identified as certain Janus partition functions of 3d $\cN=4$ theories on $I \times \bE_\tau$ where~$I$ is some interval and $\bE_\tau$ is the elliptic curve. The Janus interface interpolates between certain boundary conditions corresponding to the vacua of massless and massive 3d theories. These boundary conditions correspond to equivariant elliptic cohomology classes of the Higgs branch of the 3d theory and the (flavor) torus fixed points of the Higgs branch respectively~\cite{Dedushenko202211}. The elliptic stable envelope is then seen as a map between these cohomologies (the elliptic version of \rf{HXAtoHX}). Dedushenko and Nekrasov \cite{DedushenkoNekrasov202109} construct Janus interfaces for more general~3d $\cN=2$ theories, though they provide concrete boundary conditions corresponding to vacua of 3d $\cN=4$ theories and they compute the stable envelopes for complex symplectic varieties corresponding to 3d $\cN=4$ Higgs branches. In this paper we build on the techniques of \cite{DedushenkoNekrasov202109} and generalize the boundary conditions to vacua of 3d $\cN=2$ theories and use them to compute the elliptic stable envelopes of non-symplectic varieties corresponding to 3d $\cN=2$ Higgs branches. Some related topics such as vertex functions for Higgs branches of 3d $\mcal{N}=4$ (and $\mcal{N}=2$) theories and the role of elliptic stable envelopes in the transformation of vertex functions under mirror symmetry/symplectic duality have been discussed in \cite{CrewZhangZhao202306,DedushenkoNekrasiv202306}.

A key ingredient in the mathematical construction of elliptic stable envelope \cite{AganagicOkounkov201604} is a \emph{polarization}, i.e., a decomposition of the tangent bundle in the K-theory
\begin{align}\label{eqn: polarization}
 T_{\mathsf X}=T^{1/2}_{\mathsf X}+\hbar^{-1}\bigl(T^{1/2}_{\mathsf X}\bigr)^\vee\in K_{\mathsf T}(\mathsf X).
\end{align}
This kind of decomposition shows up naturally on the Higgs branches of 3d $\mcal N=4$ gauge theories, for example, hypertoric varieties and Nakajima quiver varieties. Computations of stable envelopes have been done for many examples of these types of varieties \cite{Dinkins202011,Dinkins20207,FelderRimanyiTarasov201702,RimanyiTarasovVarchenko201212,RimanyiTarasovVarchenko201411,RimanyiTarasovVarchenko201705,RimanyiWeber202007}. However, for more general 3d $\mcal N=2$ theories, their Higgs branches do not have complex symplectic structure or polarization, even worse they can be singular. There have been efforts to construct stable envelopes for varieties without complex symplectic structure or polarization~\cite{RimanyiRozansky202105}, the authors constructed cohomological stable envelopes using the \emph{super weight functions}, which generalized their previous works on weight functions and stable envelopes on cotangent bundles of flag varieties \cite{RimanyiTarasovVarchenko201212,TarasovVarchenko199311}. They found the $R$-matrix of $\mathsf Y_{\hbar}(\mathfrak{sl}(1|1))$ as the geometric $R$-matrix of the total space of the bundle $\mscr O(-1)^{\oplus 2}\to \mathbb P^1$. The super Lie algebra~$\mathfrak{sl}(1|1)$ justifies the terminology ``super" weight functions.

Another approach to generalizing the construction of stable envelopes was proposed by Okounkov in \cite{okounkov2021inductive}, with a focus on the elliptic version. An observation in \cite{okounkov2021inductive} is that polarization is not crucial for the existence of elliptic stable envelopes. A technical alternative to polarization is a notion called an \emph{attractive line bundle} $\mscr S_{\mathsf X}$ on the $\mathsf T$-equivariant elliptic cohomology~$\mathrm{Ell}_{\mathsf T}(\mathsf X)$ (see Appendix \ref{subsec: Equivariant Elliptic Cohomology} for a review on equivariant elliptic cohomology). Roughly speaking, $\mscr S_{\mathsf X}$ provides a ``square root'' of the elliptic Thom line bundle~$\Theta(T_{\mathsf X})$, in parallel to a polarization being ``half'' of the tangent bundle. For a precise definition, see Section~\ref{subsec: Attractive Line Bundle}. It is shown in \cite{okounkov2021inductive} that if~$\mathsf X$ admits a polarization \eqref{eqn: polarization}, then the elliptic Thom line bundle \smash{$\Theta\bigl(T^{1/2}_{\mathsf X}\bigr)$} is an attractive line bundle, which makes the construction of elliptic stable envelopes in \cite{AganagicOkounkov201604} a special case of the more general one in \cite{okounkov2021inductive}.

Part of the objectives in this paper is to use the idea developed in \cite{okounkov2021inductive} to extend the construction of elliptic stable envelopes in \cite{AganagicOkounkov201604} to the classical Higgs branches of a certain class of~3d~${\mcal N=2}$ gauge theories which do not have $\mcal N=4$ supersymmetry. In mathematical terminology, we construct the elliptic stable envelopes for certain smooth varieties which do not necessarily admit complex symplectic structure or polarization. The replacement for the polarization in our construction is a certain structure called the \emph{partial polarization}, which means a~decomposition of the tangent bundle in the K-theory
$ T_{\mathsf X}=\mathrm{Pol}_{\mathsf X}+\hbar^{-1}\mathrm{Pol}_{\mathsf X}^\vee+\mscr E\in K_{\mathsf T}(\mathsf X)$,
where $\mscr E\in K_{\mathsf T}(\mathsf X)$ is a K-theory class that is required to have the following properties:
\begin{itemize}\itemsep=0pt
 \item $\mscr E=\mscr E^\vee$ in $K_{\mathsf T}(\mathsf X)$, and that
 \item the elliptic Thom bundle $\Theta(\mscr E)$ admits a square root.
\end{itemize}
Then it turns out that \smash{$\Theta(\mathrm{Pol}_{\mathsf X})\otimes \sqrt{\Theta(\mscr E)}$} is an attractive line bundle, thus the machinery developed in \cite{okounkov2021inductive} is applicable to our situation.

We will give a sufficient condition for the existence of partial polarization in Proposition~\ref{prop: criterion for partial polarization}. A class of examples that satisfy the criterion in Proposition~\ref{prop: criterion for partial polarization} are quiver varieties. They are similar to Nakajima quiver varieties, namely our quiver variety is built from the cotangent bundle to the representation of a quiver by imposing some relations and stability conditions and then taking the quotient by the gauge group; but the difference is that we allow some of the nodes to be free from imposing complex moment map equations, these nodes will be called \emph{odd}, and the ordinary nodes where complex moment map equations are imposed will be called \emph{even}. We use the following notation for the odd/even nodes
\begin{equation*}
 \begin{tikzpicture}[x={(1.5cm,0cm)}, y={(0cm,1.5cm)}]
 \node[draw, circle, label=above:even] at (2,0) (n3) {};
 \node[draw, circle, cross, label=above:odd] at (0,0) (n1) {};
 \end{tikzpicture}.
\end{equation*}
The following is an example of a type A quiver (framed and doubled)
\beq
\begin{tikzpicture}[x={(2cm,0cm)}, y={(0cm,2cm)}, baseline=-1cm]
\node[draw, circle] at (0,0) (n1) {};
\node[draw, circle,cross] at (1,0) (n2) {};
\node[draw, circle] at (2,0) (n3) {};
\node at (3,0) (ndot) {$\cdots$};
\node[draw, circle] at (4,0) (n4) {};
\node[draw, circle,cross] at (5,0) (n5) {};
\node[draw, circle,cross] at (6,0) (n6) {};

\draw[-stealth] (n1.25) to (n2.155);
\draw[-stealth] (n2.25) to (n3.155);
\draw[-stealth] (n3.25) to (ndot.165);
\draw[-stealth] (ndot.10) to (n4.155);
\draw[-stealth] (n4.25) to (n5.155);
\draw[-stealth] (n5.25) to (n6.155);

\draw[-stealth] (n6.205) to (n5.335);
\draw[-stealth] (n5.205) to (n4.335);
\draw[-stealth] (n4.205) to (ndot.347);
\draw[-stealth] (ndot.190) to (n3.335);
\draw[-stealth] (n3.205) to (n2.335);
\draw[-stealth] (n2.205) to (n1.335);

\node[draw, rectangle] at (0,-.75) (f1) {};
\node[draw, rectangle] at (1,-.75) (f2) {};
\node[draw, rectangle] at (2,-.75) (f3) {};
\node[draw, rectangle] at (4,-.75) (f4) {};
\node[draw, rectangle] at (5,-.75) (f5) {};
\node[draw, rectangle] at (6,-.75) (f6) {};

\draw[-stealth] (n1.292) to (f1.60);
\draw[-stealth] (f1.120) to (n1.248);
\draw[-stealth] (n2.292) to (f2.60);
\draw[-stealth] (f2.120) to (n2.248);
\draw[-stealth] (n3.292) to (f3.60);
\draw[-stealth] (f3.120) to (n3.248);
\draw[-stealth] (n4.292) to (f4.60);
\draw[-stealth] (f4.120) to (n4.248);
\draw[-stealth] (n5.292) to (f5.60);
\draw[-stealth] (f5.120) to (n5.248);
\draw[-stealth] (n6.292) to (f6.60);
\draw[-stealth] (f6.120) to (n6.248);
\end{tikzpicture}.
\label{an example of type A quiver}
\eeq
Our notation might remind the reader of Kac--Dynkin diagrams for Lie superalgebras \cite{frappat1996dictionary}, for example, the Kac--Dynkin diagram corresponding to \eqref{an example of type A quiver} is the following
\beq
\begin{tikzpicture}[x={(2cm,0cm)}, y={(0cm,2cm)}]
\node[draw, circle] at (0,0) (n1) {};
\node[draw, circle,cross] at (1,0) (n2) {};
\node[draw, circle] at (2,0) (n3) {};
\node at (3,0) (ndot) {$\cdots$};
\node[draw, circle] at (4,0) (n4) {};
\node[draw, circle,cross] at (5,0) (n5) {};
\node[draw, circle,cross] at (6,0) (n6) {};

\draw (n1) to (n2);
\draw (n2) to (n3);
\draw (n3) to (ndot);
\draw (ndot) to (n4);
\draw (n4) to (n5);
\draw (n5) to (n6);

\end{tikzpicture}.
\label{an example of Kac--Dynkin diagram}
\eeq
Indeed our choice of notation is on purpose, due to the following conjecture we make.
\begin{conj}\label{conj: algebra action}
Let $Q$ be a finite or affine type A quiver with nodes decorated as above, and let $\mathfrak g_Q$ be the Lie superalgebra associated with the corresponding Kac--Dynkin diagram. For a~generic stability parameter $\zeta$, and a pair of dimension vectors $\mathbf v,\mathbf w\in \mathbb N^{Q_0}$, denote by $\mcal M^{\zeta}(\mathbf v,\mathbf w)$ the quiver variety of the given data. There exists an action of elliptic dynamical quantum super group $E_{\hbar,\tau}(\mathfrak g_Q)$ on the extended equivariant elliptic cohomology scheme
\[
 \mathsf E_{\mathsf T}\bigl(\mcal M^{\zeta}(\mathbf w)\bigr)=\bigsqcup_{\mathbf v} \mathsf E_{\mathsf T}\bigl(\mcal M^{\zeta}(\mathbf v,\mathbf w)\bigr).
\]
Similarly, there exists an action of quantum affine super algebra $\mscr U_{\hbar}(\hat{\mathfrak g}_Q)$ on the equivariant K-theory
\[
 K_{\mathsf T}\bigl(\mcal M^{\zeta}(\mathbf w)\bigr)=\bigoplus_{\mathbf v} K_{\mathsf T}\bigl(\mcal M^{\zeta}(\mathbf v,\mathbf w)\bigr),
\]
and an action of super Yangian $\mathsf Y_{\hbar}(\mathfrak g_Q)$ on the equivariant cohomology
\[
 H_{\mathsf T}\bigl(\mcal M^{\zeta}(\mathbf w)\bigr)=\bigoplus_{\mathbf v} H_{\mathsf T}\bigl(\mcal M^{\zeta}(\mathbf v,\mathbf w)\bigr).
\]
Moreover, all the actions factor through the corresponding Maulik--Okounkov quantum groups constructed via the stable envelopes for $\mcal M^{\zeta}(\mathbf w)$.
\end{conj}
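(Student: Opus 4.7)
The plan is to adapt the Maulik--Okounkov construction of geometric R-matrices and their associated quantum groups to the partial-polarization setting introduced in this paper. I would first verify that the type A quiver varieties $\mcal M^{\zeta}(\mathbf v,\mathbf w)$ admit partial polarizations: the tangent bundle should decompose into contributions from the (doubled) arrows adjacent to even nodes, forming a symplectic pair $\mathrm{Pol}+\hbar^{-1}\mathrm{Pol}^{\vee}$, together with contributions from arrows to/from odd nodes forming a self-dual class $\mscr E$ whose elliptic Thom bundle admits a square root. The sufficient criterion of Proposition \ref{prop: criterion for partial polarization} should then apply, providing the attractive line bundle needed to invoke Okounkov's inductive construction from \cite{okounkov2021inductive}.

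Given the partial polarization, the machinery of this paper produces elliptic stable envelopes $\mathrm{Stab}_{\mathfrak C}$ for $\mcal M^{\zeta}(\mathbf w)$, whose compositions yield R-matrices $R_{\mathfrak C_2 \leftarrow \mathfrak C_1}$ that satisfy the Yang--Baxter equation via the braiding relation \eqref{RR...R}. Next, I would adapt the algebraic construction of \cite{MaulikOkounkov201211}: decomposing $\mathrm{Lie}(\mathsf A)_{\mathbb R}$ into chambers separated by root hyperplanes, the resulting wall R-matrices generate, via a suitable dynamical elliptic RTT-style formalism, an algebra acting on $\mathsf E_{\mathsf T}(\mcal M^{\zeta}(\mathbf w))$. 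This constructs the MO (dynamical) quantum group for our class of varieties. The reductions from elliptic to K-theoretic and then to cohomological stable envelopes described earlier in the paper should then yield the analogous MO quantum affine super algebra acting on $K_{\mathsf T}(\mcal M^{\zeta}(\mathbf w))$ and MO super Yangian acting on $H_{\mathsf T}(\mcal M^{\zeta}(\mathbf w))$.

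The main obstacle is identifying this geometrically constructed MO quantum super group with the abstract algebras $E_{\hbar,\tau}(\mathfrak g_Q)$, $\mscr U_{\hbar}(\hat{\mathfrak g}_Q)$, and $\mathsf Y_{\hbar}(\mathfrak g_Q)$ defined through generators and relations. This requires a careful matching of wall R-matrices associated with simple roots against the R-matrices obtained from the defining presentations. For the bosonic case such a matching was carried out in \cite{MaulikOkounkov201211}. In the super setting, the explicit elliptic $\mathfrak{sl}(1|1)$ R-matrix computed in this paper provides the key rank-one building block for odd walls, but extending it to an arbitrary quiver demands checking the modified (super) Serre-type relations and, more delicately, a fullness statement ensuring that the algebra generated by wall R-matrices exhausts rather than merely embeds into the abstract quantum super group. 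It is on this identification step, rather than on the geometric construction itself, that the bulk of the technical work would be concentrated, and it is also the step most sensitive to having a suitable explicit presentation of $E_{\hbar,\tau}(\mathfrak g_Q)$ available in the super case.
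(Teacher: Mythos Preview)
The statement you are addressing is a \emph{conjecture} in the paper, not a theorem; the paper does not provide a proof, and none is intended. What the paper does offer as supporting evidence is the computation in \S\ref{subsec: R-Matrices for Quiver Varieties} showing that the geometric R-matrices satisfy a dynamical Yang--Baxter equation \eqref{eqn: DYBE for quiver} whose shifts $\mbs{\mu}$ match weights of highest-weight modules of the expected Lie superalgebra $\mathfrak g_Q$. This is consistency evidence, not a proof.

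Your outline is a reasonable sketch of what a proof would require, and you have correctly located the decisive obstruction: the identification of the geometrically defined Maulik--Okounkov quantum group with the abstract algebras $E_{\hbar,\tau}(\mathfrak g_Q)$, $\mscr U_{\hbar}(\hat{\mathfrak g}_Q)$, $\mathsf Y_{\hbar}(\mathfrak g_Q)$. This step is genuinely open. Even in the bosonic case the comparison of MO Yangians with Drinfeld-presented Yangians required substantial further work beyond \cite{MaulikOkounkov201211}, and in the super elliptic setting the target algebra $E_{\hbar,\tau}(\mathfrak g_Q)$ does not yet have a fully developed presentation in the literature against which one could match. The earlier parts of your plan (partial polarization for these quivers, existence of elliptic stable envelopes, R-matrices via \eqref{eqn: dfn of R-matirx}) are indeed established in the paper---see Example \ref{ex: partial polarization for the quivers} and Theorem \ref{thm: stable envelope}---so the MO-type quantum group is well defined on the geometric side. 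But constructing the map from the abstract quantum super group and proving it factors through (let alone surjects onto) the MO algebra is precisely the content of the conjecture, not something the paper's tools resolve.
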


In anisotropic spin chains, as in \rf{HXYZ1/2}, when all coupling constants are distinct, the $R$-matrix depends non-trivially on them. This dependence is manifested as a dependence of the $R$-matrix on an additional spectral curve valued parameter called the Dynamical parameter. Thus the elliptic $R$-matrix, and the corresponding YBE which involves nontrivial shifts of the spectral parameters, are also called the dynamical $R$-matrix and the dynamical Yang--Baxter equation (dYBE) \cite{Felder199412,Felder199407}. Our Conjecture \ref{conj: algebra action} is supported by the evidence of the weight structure occurring in the shifts in the dYBE. In fact, we will show (see \eqref{eqn: DYBE for quiver}) that the $R$-matrices constructed from the elliptic stable envelopes for quiver varieties satisfy the following dynamical Yang--Baxter equation
\begin{equation}\label{eqn: dYBE_intro}
 \mathbf R_{12}(\mbs{z}) \mathbf R_{13}\bigl(\mbs{z}-\hbar\mbs{\mu}^{(2)}\bigr) \mathbf R_{23}(\mbs{z})= \mathbf R_{23}\bigl(\mbs{z}-\hbar\mbs{\mu}^{(1)}\bigr) \mathbf R_{13}(\mbs{z}) \mathbf R_{12}\bigl(\mbs{z}-\hbar\mbs{\mu}^{(3)}\bigr),
\end{equation}
where $z$ is the dynamical parameter and we have suppressed the spectral parameter dependence of $R$-matrices. Here the dynamical shifts $\mbs{\mu}$ are weights in a certain highest-weight module of~$\widehat{\mathfrak{sl}}(m|n)$ or ${\mathfrak{sl}}(m|n)$.

\subsection{Summary of the results}

Let us now briefly summarize the main contributions of this work.

{\bf Elliptic stable envelopes for partially polarized varieties.}
\begin{itemize}\itemsep=0pt
 \item We define the notion of partial polarization on a smooth variety $\mathsf X$ with a torus $\mathsf T$ action (see Definition~\ref{dfn: partial polarization}), and give a sufficient condition for the existence of partial polarization (see Proposition~\ref{prop: criterion for partial polarization}). It turns out that a large class of GIT quotients admits partial polarization (see Example~\ref{ex: pol for quotient of Hamiltonian reduction}), this includes all abelian gauge group cases, and all the quiver varieties considered in Section~\ref{subsubsec: Example: The Doubled Quivers}, and the total spaces of vector bundles on partial flag varieties (see Example~\ref{ex: bundles over flag variety}).
 \item We show that a partial polarization induces an attractive line bundle on the equivariant elliptic cohomology $\mathrm{Ell}_{\mathsf T}(\mathsf X)$, a notion defined by Okounkov in \cite{okounkov2021inductive}. Then the existence and uniqueness of elliptic stable envelope for partially polarized varieties follow from the general result in \cite{okounkov2021inductive}, see Theorem~\ref{thm: stable envelope}. In particular, elliptic stable envelopes exist for all cases considered in Examples \ref{ex: pol for quotient of Hamiltonian reduction}, \ref{ex: partial polarization for the quivers}, \ref{ex: bundles over flag variety} and \ref{ex: bundles over flag variety_type A}. The last example was also considered in the recent work of Rim\'anyi and Rozansky \cite[Section~2.3.3]{RimanyiRozansky202105}, and our result implies the existence of stable envelopes in this example.

 \item We compute the elliptic stable envelope for the quiver with a single node which is odd, i.e., the total space of direct sum of $L$ copies of tautological bundles on $\mathrm{Gr}(N,L)$, see equation~\eqref{eqn: stable envelope for M(N,L)}.

 \item The triangle lemma and the duality for the elliptic stable envelopes on partially polarized varieties are obtained verbatim to that in \cite{AganagicOkounkov201604}, see Section~\ref{subsec: Triangle Lemma and Duality}.

 \item We extend the abelianization procedure described by Aganagic and Okounkov \cite{AganagicOkounkov201604} to the partially polarized case in Section~\ref{subsec: Abelianization}. This is potentially helpful in the computation of stable envelopes for more general quivers.

 \item In Section~\ref{subsec: R-Matrices}, the $R$-matrices are defined in the same way as \cite{AganagicOkounkov201604,MaulikOkounkov201211}, and dynamical Yang--Baxter equations \eqref{eqn: DYBE} follows from the braiding relations. The $R$-matrices and dynamical Yang--Baxter equations for the quiver varieties are discussed in Section~\ref{subsec: R-Matrices for Quiver Varieties}. For general quivers of finite or affine type A, we observe that the dynamical shift $\mbs{\mu}$ is closely related to a highest-weight module of the corresponding Lie superalgebra.

 \item In Section~\ref{sec:k-theory and cohomology limit}, we show that the analog of the limit procedure in \cite{AganagicOkounkov201604} also works for the partially polarized varieties and produces K-theoretic and cohomological stable envelopes, see Corollary~\ref{cor:k-theoretic stable envelope from the elliptic one}. We present the explicit K-theoretic and cohomological stable envelopes for the quiver with a single node which is odd, see \eqref{eqn: K-theoretic stable envelope for M(N,L)} and \eqref{eqn: cohomological stable envelope for M(N,L)}. The cohomological stable envelope \eqref{eqn: cohomological stable envelope for M(N,L)} recovers the recent result of Rim\'anyi and Rozansky \cite[Theorem~7.3]{RimanyiRozansky202105}.
\end{itemize}

{\bf A couple of distinctions between 3d $\boldsymbol{\cN=2}$ and 3d $\boldsymbol{\cN=4}$ cases.}
For the purpose of illustration, we solve the dYBE for the Lie superalgebra $\mfk{g}=\mfk{sl}(1|1)$ from the corresponding gauge theory, which is a 3d $\mcal{N}=2$ SQCD. While rather similar in computation, there are a few important conceptual differences from the computation of $\sl(2)$ elliptic stable envelopes from 3d~$\cN=4$ SQCDs.

\begin{itemize}\itemsep=0pt
 \item \textbf{Higgs branch: Non-hyperk\"ahler and quantum correction.} Higgs branches of 3d $\cN=2$ theories without $\cN=4$ supersymmetry are K\"ahler manifolds, unlike 3d $\cN=4$ Higgs branches which are hyperk\"ahler. For example, in the case of 3d $\cN=2$ SQCD, the classical Higgs branch $\higgs(N,L)$ is given by the direct sum of $L$ copies of the tautological bundle over $\tenofo{Gr}(N,L)$, the so-called resolved determinantal variety \cite{Lascoux1978}. Also, unlike in case of the 3d $\cN=4$ Higgs branches, 3d $\cN=2$ Higgs branches receive quantum corrections \cite{Aharony:1997bx,deBoer:1997kr}. In this paper, unless explicitly mentioned otherwise, all mentions of 3d $\cN=2$ Higgs branches will refer to the \emph{classical} Higgs branches.

 \item \textbf{Non-isolated fixed points.} Another departure from the 3d $\cN=4$ SQCD case is that there can still be continuous Higgs branch moduli in 3d $\cN=2$ SQCDs in the presence of generic flavor twisted masses (even at the quantum level \cite{Aharony:1997bx,deBoer:1997kr}). As a result, we can not have a general correspondence between fully massive vacua and Bethe eigenstates of integrable superspin chains, as is more familiar in the bosonic Bethe/Gauge correspondence~\cite{NekrasovShatashvili200901c,NekrasovShatashvili200901b}. More specifically, in the context of interval partition functions computing stable envelopes or $R$-matrices for bosonic spin chains, boundary conditions are chosen representing these fully massive vacua \cite{BullimoreKimLukowski201708,DedushenkoNekrasov202109}. On the Higgs branch, these fully massive vacua correspond to isolated fixed points of the (flavor) torus action. In contrast, the fixed point set of the 3d $\cN=2$ Higgs branches decomposes into connected components of nonzero dimensions \rf{MHA}. So we introduce brane-type boundary conditions preserving~${\cN=(0,2)}$ supersymmetry mimicking these connected components (see Section~\ref{sec:thimble}), rather than isolated points.
\end{itemize}

{\bf The solution to dYBE for $\boldsymbol{\mfk{sl}(1|1)}$ from 3d $\boldsymbol{\mcal{N}=2}$ SQCD.}
Let us now present the formulas.
\begin{itemize}\itemsep=0pt
\item \textbf{Elliptic stable envelopes and the elliptic $\boldsymbol{R}$-matrix for $\boldsymbol{\sl(1|1)}$.}
We compute the stable envelopes as Janus partition functions in 3d $\cN=2$ SQCD on $I \times \bE_\tau$. The theory has~$\U(N)$ gauge group and $\SU(L) \times \U(1)_\hbar$ flavor symmetry. There is also a topological~$\U(1)_\text{top}$ symmetry acting only on monopoles. A Janus interface is created by varying~$\U(1)^{L-1}$ twisted masses $m_1, \dots, m_L$ (satisfying $m_1 + \dots + m_L=0$) along $I$. The masses interpolate between nonzero generic values satisfying $m_1 < \dots < m_L$ and zero. A~choice of such ordering creates a chamber $\mfr C$ in $\bR^{L-1}$. The formula for the stable envelope is (see Section~\ref{sec:explicit construction of elliptic stable envelope} for details)
\begin{align}
 \estab _{\mfr C}(p) :={}& (-1)^{\sum_a\#(i<p(a))}\nonumber\\
 &\times\text{Sym}_{S_N} \Biggl[ \Biggl(\prod_a \efunction_{\mfr C, p(a)}(s_a, x, \hbar, z)\Biggr) \Biggl( \prod_{p(a) > p(b)} \frac{1}{\vth\left(s_a s_b^{-1}\right)} \Biggr) \Biggr],\label{eq:elliptic stable envelope for sl(1|1),introduction}
\end{align}
where $\vartheta$ is the elliptic theta function (defined in \eqref{eq:elliptic theta function in the aganagic-okounkov convention}), $\tenofo{Sym}_{S_N}$ denotes the symmetrization with respect to the gauge holonomies $s_a$, and
\[
 \efunction_{\mfr C, n}(s, \bm x, \hbar, z) := \Biggl(\prod_{i < n} \vth(s x_i ) \Biggr)\frac{\vth\bigl(s x_n z \hbar^{n - L}\bigr)}{\vth\bigl( z \hbar^{n - L} \bigr)} \Biggl( \prod_{i > n} \vth(s x_i \hbar ) \Biggr).
\]
The elliptic equivariant parameters $s$, $x$, $\hbar$ and $z$ correspond to background holonomies for the $\U(N)$, $\SU(L)$, $\U(1)_\hbar$, and $\U(1)_\text{top}$ connections respectively.

The corresponding $R$-matrix, which satisfies dYBE, is given by (see Section~\ref{sec:elliptic r-matrix for sl(1|1)} for details)
\begin{equation}\label{eq:elliptic gl(1|1) rational r-matrix, introduction}
	\def\arraystretch{1.5}
	\setlength{\arrayrulewidth}{.6pt}
 \setlength{\arraycolsep}{6pt}
	\ermatrix_{-\mfk{C}\ot\mfk{C}}(u,z)=
	\left(
	\begin{array}{c|cc|c}
		1 & 0 & 0 & 0
		\\
		\hline
		0 & \frac{\vth\left(u\right) \vth(z) \vth(z\hbar^{-2})}{\vth\left(u^{-1} \hbar\right) \vth(z \hbar^{-1})^2} & \frac{\vth(\hbar) \vth\left(u z \hbar^{-1}\right)}{\vth\left(u^{-1} \hbar\right) \vth\left(z \hbar^{-1}\right)} & 0
 \\[4pt]
 0 & \frac{\vth(\hbar) \vth\left(u^{-1} z \hbar^{-1}\right)}{\vth\left(u^{-1} \hbar\right) \vth\left(z \hbar^{-1}\right)} & \frac{\vth\left(u\right)}{\vth\left(u^{-1} \hbar\right)} & 0
		\\[4pt]
		\hline
		0 & 0 & 0 & \frac{\vth\left(u \hbar\right)}{\vth\left(u^{-1} \hbar\right)}
	\end{array}\right),
\end{equation}
where $-\mfk{C}$ denotes the opposite chamber, and the diagonal blocks from top to bottom correspond to zero, one, and two-magnon sectors, respectively. To the best of our knowledge, the elliptic stable envelopes and the elliptic $R$-matrix for $\mfk{sl}(1|1)$ have not been constructed in the literature before. Furthermore, this is also the first time that a new solution to dYBE has been constructed directly from the gauge theory using the Bethe/Gauge correspondence that we could find.

\item \textbf{K-theoretic stable envelopes and the trigonometric $\boldsymbol{R}$-matrix for $\boldsymbol{\sl(1|1)}$.}
The 3d~$\to$ 2d reduction of elliptic stable envelopes gives K-theoretic stable envelopes and the trigonometric $R$-matrix. The precise way to get the K-theoretic stable envelope is described in Section~\ref{sec:k-theory and cohomology limit}. First, we define the slope parameter $\msf{s}$ associated to the K\"ahler parameter~$z$~by
 \[
 \msf{s}:=\lim_{\substack{\ln z\to \infty \\ \ln q\to\infty}}\tenofo{Re}\left(-\frac{\ln z}{\ln q}\right)\in\mbb{R}\backslash\mbb{Z},
 \]
where $q:={\rm e}^{2\pi\mfk{i}\tau}$. Then, the K-theoretic stable envelope, as the limit of \eqref{eq:elliptic stable envelope for sl(1|1),introduction}, is given by (see Example~\ref{sec:k-theory and cohomology limit for resolved determinantal variety} and Section~\ref{sec:the K-theory limit} for details)
\begin{align}
 \tstab_{\mfk{C},\msf{s}}(p)={}&(-1)^{\sum_a\#(i<p(a))}\hbar^{\frac{1}{2}\sum_a\#(i>p(a))}\nonumber
 \\
 &\times\! \tenofo{Sym}_{S_N}\!\!\left[\!\left(\prod_{a=1}^N \tfunction_{\mfk{C},\msf{s},p(a)}(s_a,\mbs{x},\hbar)\!\right)\!
 \cdot\!\Biggl(\prod_{\substack{a>b}}\frac{1}{\bigl(s_as_b^{-1}\bigr)^{\frac{1}{2}}\!-\!\bigl(s_as_b^{-1}\bigr)^{-\frac{1}{2}}}\!\Biggr)\!\right]\!,\label{eq:k-theoretic stable envelope for sl(1|1), introduction}
\end{align}
where
\[
 \tfunction_{\mfk{C},\msf{s},n}(s,\mbs{x},\hbar)\equiv (sx_{n})^{\lfloor{\msf{s}}\rfloor}\prod_{i<n}\bigl(1-(sx_i)^{-1}\bigr)\prod_{i>n}\bigl(1-(sx_i\hbar)^{-1}\bigr).
\]
The corresponding trigonometric $R$-matrix, as the reduction of \eqref{eq:elliptic gl(1|1) rational r-matrix, introduction}, is ${\msf{G}(x):= x^{\frac{1}{2}}-x^{-\frac{1}{2}}}$
\begin{equation}\label{eq:gl(1|1) trigonometric r-matrix, introduction}
	\def\arraystretch{1.5}
	\setlength{\arrayrulewidth}{.8pt}
 \setlength{\arraycolsep}{6pt}
	\trmatrix_{-\mfk{C}\ot\mfk{C}}(u)=
	\left(
	\begin{array}{c|cc|c}
		1 & 0 & 0 & 0
		\\
		\hline
		0 & \frac{\msf{G}(u)}{\msf{G}\bigl(u^{-1}\hbar\bigr)} & u^{\lfloor \msf{s}\rfloor+\frac{1}{2}}\frac{\msf{G}(\hbar)}{\msf{G}\bigl(u^{-1}\hbar\bigr)} & 0
 \\[6pt]
 0 & u^{-\lfloor \msf{s}\rfloor-\frac{1}{2}}\frac{\msf{G}(\hbar)}{\msf{G}\bigl(u^{-1}\hbar\bigr)} & \frac{\msf{G}(u)}{\msf{G}\bigl(u^{-1}\hbar\bigr)} & 0
		\\[4pt]
		\hline
		0 & 0 & 0 & \frac{\msf{G}(u\hbar)}{\msf{G}\bigl(u^{-1}\hbar\bigr)}
	\end{array}\right).
\end{equation}
Note that all holonomies are still multiplicative but are now $\mbb{C}^\times$-valued.

\item \textbf{Cohomological stable envelopes and the rational $\boldsymbol{R}$-matrix for $\boldsymbol{\sl(1|1)}$.} A~further~2d~$\to$ 1d reduction of \eqref{eq:k-theoretic stable envelope for sl(1|1), introduction} gives cohomological stable envelopes (see Section~\ref{sec:k-theory and cohomology limit} for the precise mathematical procedure)
\[
 \rstab_{\mfk{C}}(p)=(-1)^{\sum_a\#(i<p(a))}\tenofo{Sym}_{S_N}\left[\left(\prod_{a=1}^N\rfunction_{\mfk{C},m}(s,\mbs{x},\hbar)\right)\cdot\Biggl(\prod_{\substack{a>b}}\frac{1}{s_a-s_b}\Biggr)\right],
\]
where
\smash{$\rfunction_{\mfk{C},n}(s,\mbs{x},\hbar):= \prod_{i<n}(s+x_i)\prod_{i>n}(s+x_i+\hbar)$},
while the reduction of \eqref{eq:gl(1|1) trigonometric r-matrix, introduction} produces the rational $R$-matrix
\[
	\def\arraystretch{1.5}
	\setlength{\arrayrulewidth}{.6pt}
 \setlength{\arraycolsep}{6pt}
	\rrmatrix_{-\mfk{C}\ot\mfk{C}}(u)=
	\left(
	\begin{array}{c|cc|c}
		1 & 0 & 0 & 0
		\\
		\hline
		0 & \frac{u}{\hbar-u} & \frac{\hbar}{\hbar-u} & 0
 \\[4pt]
 0 & \frac{\hbar}{\hbar-u} & \frac{u}{\hbar-u} & 0
		\\[4pt]
		\hline
		0 & 0 & 0 & \frac{\hbar+u}{\hbar-u}
	\end{array}\right).
\]
All holonomies are now additive and $\mbb{C}$-valued. These reproduce the results of Rim\'anyi and Rozansky \cite{RimanyiRozansky202105} (see Example~\ref{sec:k-theory and cohomology limit for resolved determinantal variety} and Section~\ref{sec:the cohomology limit} for details).
\end{itemize}

{\bf Dynamical $\boldsymbol{R}$-matrices for the fundamental representation of $\boldsymbol{\mfk{sl}(n|1)}$.}
We write down explicit elliptic dynamical $R$-matrices for the fundamental representation of $\mathfrak{sl}(n|1)$. Details are in Example~\ref{exa:elliptic dynamical $R$-matrix for sl(n|1)}. Let us label the basis for $\mathbb C^{n+1}$ by $v_{\alpha}$ where $\alpha\in \{0,1,\dots,n\}$. Then there exists a dynamical $R$-matrix for the tensor product $\mathbb C^{n+1}\otimes \mathbb C^{n+1}$ which reads as follows (see Example~\ref{exa:elliptic dynamical $R$-matrix for sl(n|1)} for details)
\begin{gather}\label{eqn: $R$-matrix for sl(n|1)}
\mathbf R(u,z_1,\dots,z_n)(v_\alpha\otimes v_{\beta})\nonumber\\
\qquad=\begin{cases}
v_\alpha\otimes v_{\beta},& \alpha=\beta<n, \\
D(u)v_\alpha\otimes v_{\beta},& \alpha=\beta=n,\\
\displaystyle C(u)v_\alpha\otimes v_{\beta}+B\left(u,\hbar^{-\delta_{\beta,n}}\prod_{i=\alpha+1}^\beta z_i\right)v_\beta\otimes v_{\alpha}, & \alpha<\beta,\\
\displaystyle A\left(u,\hbar^{-\delta_{\alpha,n}}\prod_{i=\beta+1}^\alpha z_i\right)v_\alpha\otimes v_{\beta}+B\left(u,\hbar^{\delta_{\alpha,n}}\prod_{i=\beta+1}^\alpha z_i^{-1}\right)v_\beta\otimes v_{\alpha}, & \beta<\alpha.
\end{cases}\hspace{-10mm}
\end{gather}
Here we use the shorthand notations for some special functions
\begin{gather*}
A(u,z)=\frac{\vartheta(z\hbar)\vartheta\bigl(z\hbar^{-1}\bigr)\vartheta(u)}{\vartheta(z)^2\vartheta\bigl(u\hbar^{-1}\bigr)},\qquad B(u,z)=\frac{\vartheta(\hbar)\vartheta(u z)}{\vartheta(z)\vartheta\bigl(u^{-1}\hbar\bigr)},\qquad
C(u)=\frac{\vartheta(u)}{\vartheta\bigl(u\hbar^{-1}\bigr)},\\
D(u)=\frac{\vartheta(u\hbar)}{\vartheta\bigl(u^{-1}\hbar\bigr)}.
\end{gather*}
Then
$\mathbf R$ satisfies the dynamical Yang--Baxter equation \eqref{eqn: dYBE_intro} for $\mbs{\mu}=(\mu_1,\dots,\mu_n)$ given by weights for the fundamental representation of $\mathfrak{sl}(n|1)$ (see \eqref{eqn: weights for sl(n|1)} for details on the weights). This suggests that we should actually treat $\mathbb C^{n+1}$ as $\mathbb C^{n|1}$, with basis elements $v_0,v_1,\dots,v_{n-1}$ being even and~$v_n$ being odd.

The rational limit of \eqref{eqn: $R$-matrix for sl(n|1)} is
\[
\mathrm{R}(u)
=\mathrm{P}\left(\frac{u}{u-\hbar}\Pi-\frac{\hbar}{u-\hbar}\mathbf 1\right).
\]
Here $\mathrm{P}$ is the usual swapping-tensor operator, i.e., $\mathrm{P}(v_{\alpha}\otimes v_{\beta})=v_{\beta}\otimes v_{\alpha}$, and $\Pi$ is the super swapping-tensor operator, i.e., $\Pi(v_{\alpha}\otimes v_{\beta})=(-1)^{|v_{\alpha}|\cdot |v_{\beta}|}v_{\beta}\otimes v_{\alpha}$, where $|v|$ is the parity of the vector $v$ which equals to $0$ if $v$ is even and $1$ if $v$ is odd. Note that we can rewrite
\[
(u-\hbar)\Pi \mathrm{P}\mathrm R(u)=u\mathbf 1-\hbar\Pi,
\]
 the right hand side is the more familiar rational $R$-matrix for the fundamental representation of~$\mathfrak{sl}(n|1)$ in the literature \cite{RagoucySatta200706}.

\subsection{Future directions}\label{sec:discussion and future directions}

Let us mention some directions related to the study of this paper that we find interesting:

{\bf Stable envelopes from 4d Chern--Simons theory.} A unifying theme for $(1+1)d$ integrable systems has emerged in recent years. 4d Chern--Simons theory describes quantum integrable spin-chain models and a large class of classical 2d integrable field theories~\cite{Costello201303,Costello201308,CostelloWittenYamazaki201709,CostelloWittenYamazaki201802,CostelloYamazaki201908}. In the particular case of integrable spin-chain models, the $R$-matrix can be constructed by a~relatively straightforward Feynman diagram calculation. However, the mathematical formulation of the BGC shows that more basic objects to study are stable envelopes through which $R$-matrix can be constructed easily. It would be very interesting to provide a 4d Chern--Simons perspective of the mathematical works \cite{AganagicOkounkov201604,MaulikOkounkov201211,Okounkov201512,Okounkov201710} and specially the construction of various stable envelopes. The string-theory realization of the Bethe--Gauge correspondence in which 4d Chern--Simons theory plays a crucial role should be a very useful guide in providing such a~perspective (see \cite{CostelloYagi201810,Dedushenko202107,IshtiaqueMoosavianRaghavendranYagi202110} for some related works in this direction).

{\bf 3d $\boldsymbol{\mcal{N}=2}$ mirror symmetry for stable envelopes.} It is believed that the stable envelopes for two mirror-dual pairs $\bigl(X,X^\vee\bigr)$ must match after certain identification of parameters. This conjecture has been proven in many special cases related to 3d $\mcal{N}=4$ mirror symmetry: cotangent bundle of Grassmannian \cite{RimanyiSmirnovVarchenkoZhou201902}, complete flag variety of type-$A_n$ \cite{RimanyiSmirnovVarchenkoZhou201906} and arbitrary type~\cite{RimanyiWeber201904}, hypertoric varieties \cite{RuanWenZhou202011,SmirnovZhou202006}, and many examples of Bow varieties \cite{RimanyiShou202012}. We expect that a similar conclusion holds for the case of 3d $\mcal{N}=2$ mirror symmetry. Therefore, the first step in proving this conjecture would be to construct the elliptic stable envelope for the Higgs branch of the mirror-dual of the gauge theory we considered here. It is expected that the physical setup of \cite{DedushenkoNekrasov202109} which we employed here in the presence of a Janus interface in the space of FI parameters (instead of a mass-Janus interface) produces the elliptic stable envelopes for the mirror-dual of Higgs branches considered in this paper. This would provide a solid ground for progress towards the mathematical formulation of 3d $\mcal{N}=2$ mirror symmetry (or symplectic duality).\looseness=-1

\subsection{Organization of the paper}

The mathematical and physical contents of the paper are mainly organized in Part \ref{part:math} and Part~\ref{part:physics} respectively. In Section~\ref{sec:mathematical construction of higgs branches}, we characterize the varieties we are interested in, including the classical Higgs branches of 3d $\cN=2$ theories, and in Section~\ref{sec:elliptic stable envelopes for partially-polarized varieties}, we lay down the mathematical foundation of elliptic stable envelopes for these spaces. In Section~\ref{sec:the setup}, we give motivations, in the context of the Bethe/Gauge correspondence, for computing the stable envelopes as certain mass Janus interfaces. In particular, we argue that certain mass Janus partition functions for~3d~$\cN=2$ SQCDs on $I \times \bE_\tau$ should give us the $\sl(1|1)$ elliptic stable envelopes. To compute the Janus partition functions for the 3d theories on intervals we need specific half-BPS boundary conditions corresponding to the vacua of the theory -- we describe these in Section~\ref{sec:3d N=2 SQCD and half-bps boundary conditions}. In Section~\ref{sec:stable envelopes and the $R$-matrix from gauge theory}, we do the explicit computation of the Janus partition functions and find the formulas for the elliptic $\sl(1|1)$ stable envelopes and the $R$-matrix satisfying the dYBE. In the last main section, Section~\ref{sec:2d and 1d avatars of elliptic stable envelopes}, we take certain limits of the elliptic stable envelopes and recover the K-theoretic and the cohomological stable envelopes and their respective $R$-matrices for the $\mfk{sl}(1|1)$ superspin chains.

We include a brief review of equivariant elliptic cohomology in Appendix~\ref{subsec: Equivariant Elliptic Cohomology}. Our conventions regarding supersymmetry are summarized in Appendix~\ref{sec:details of three-dimensional computations}. Lastly, in Appendix~\ref{sec:geometric r-matrix from the a-model localization}, we provide an alternative computation of the rational $R$-matrix for $\sl(1|1)$ spin chains based on the observations in \cite{BullimoreKimLukowski201708,Nekrasov2013,Nekrasov2014} using the A-model computation.

{\bf Note added.} After the appearance of this work, a related work appeared in the context of stable envelopes and 3d $\mcal{N}=2$ gauge theories in \cite{Tamagni202311}. This work does not have any overlap with the present paper.

\subsection{Glossary of notations}
Some often-used symbols:

{\bf Latin}
\begin{longtable}{@{}rl@{}}
$\msf A$ & Maximal torus of $F_{\msf A}$. \\
$\mfr a$ & Lie algebra of $\msf A$. \\
\multirow{2}{*}{$\mfr C$} & A chamber for the $\msf A$-twisted masses, \\[-2pt]
& for $F_{\msf A} = \SU(L)$ it corresponds to a choice of ordering for the $m_i$s. \\
$\bE_\tau$ & An elliptic curve with complex moduli $\tau$, part of the 3d space-time. \\
$\mathrm{Ell}_{\mathsf T}(\mathsf X)$ & $\mathsf T$-equivariant elliptic cohomology of $\mathsf X$.\\
$\mathsf{E}_{\mathsf T}(\mathsf X)$ & Extended $\mathsf T$-equivariant elliptic cohomology of $\mathsf X$.\\
$\mscr E_{\mathsf T}$ & $\mathsf T$-equivariant elliptic cohomology of a point. \\
$\mscr E_{\mathsf K}$ & $\mathsf K$-equivariant elliptic cohomology of a point. \\
$F$ & \begin{tabular}{ll} Total flavor symmetry, \\ $F=F_{\msf A} \times \bC^\times_\hbar$ in Part~\ref{part:math} and $F=F_{\msf A} \times \U(1)_\hbar$ in Part~\ref{part:physics}. \end{tabular} \\
$F_{\msf A}$ & \begin{tabular}{ll} Part of the flavor symmetry, the framing in quivers, \\ complex in Part~\ref{part:math} and real in Part~\ref{part:physics}. In SQCDs $F_{\msf A} = \SU(L)$. \end{tabular} \\
$\cF_p$ & \begin{tabular}{ll} Connected component of the $\msf A$-fixed point set in $\higgs$ labeled by $p$, \\ $\higgs^{\msf A} = \bigsqcup_p \cF_p$. In SQCDs $p$ is a choice of $N$ integers out of $\{1,\dots, L\}$. \end{tabular} \\
$\what\cF_p$ & \begin{tabular}{ll} Lift of $\cF_p$ to the space $\mbf M$ of chirals, \\ projection: $\pi\colon\mbf M^s \twoheadrightarrow \higgs$, $\pi\bigl(\what\cF_p\bigr) = \cF_p$. \end{tabular} \\
$\mfr f$ & Lie algebra of $F$. \\
$\mfr f_{\msf A}$ & Lie algebra of $F_{\msf A}$. \\
$G$ & \begin{tabular}{ll} Gauge group, complex in Part \ref{part:math} and real in Part \ref{part:physics}, \\ in SQCDs $G = \SU(N)$. \end{tabular} \\
$\mfr g$ & Lie algebra of $G$. \\
$\msf H$ & Maximal torus of $G$ \\
$\mfr h$ & Lie algebra of $\msf H$. \\
$\hbar$ & Equivariant elliptic parameter/fugacity for $\U(1)_\hbar$. \\
$I$ & A finite interval $[y_-, y_+]$, part of the 3d space-time. \\
$\mathsf K$ & K\"ahler torus.\\
$\mbf M$ & Space of chiral multiplets. $\mbf M = T^*\Hom\bigl(\bC^L, \bC^N\bigr)$ for SQCDs. \\
$\mbf M^s$ & Stable locus of $\mbf M$. \\
$\mbf M_0$ & Space of massless chirals in the vacuum $p$. \\
$\mbf M_{\mfr C; \pm}(p)$ & Space of $+/-$-vely massive chirals in the vacuum $p$ and chamber $\mfr C$. \\
$\higgs$ & Classical Higgs branch. \\
$m_i$, $m_i^\bC$ & $\msf A$-twisted (in 3d) and $\msf A_\bC$-twisted (in 2d) masses. \\
$m_i^I$ & Triplet of $\sfA$-twisted masses in 1d, $I=1,2,3$. \\
$\mathrm{Pol}_{\mathsf X}$ & Partial polarization of $\mathsf X$. \\
$\mathrm{Pol}^{\mathrm{op}}_{\mathsf X}$ & Opposite partial polarization of $\mathsf X$. \\
$Q$ & Quiver/Kac--Dynkin diagram (e.g., \eqref{an example of Kac--Dynkin diagram}). \\
$\ov Q$ & Framed and doubled quiver (e.g., \eqref{an example of type A quiver}). \\
$\wtd Q$ & Tripled/gauge theory quiver (e.g., \eqref{quiver31}). \\
$\cQ$, $\wtd \cQ$ & Chiral multiplets in SQCDs valued in $\Hom\bigl(\bC^L, \bC^N\bigr)$ and $\Hom\bigl(\bC^N, \bC^L\bigr)$. \\
$\mcr Q$, $\msf Q$, $\mbf Q$ & Supercharges in 3d, 2d, and 1d. \\
$\cR$, $\wtd \cR$ & Space of the $Q$s and $\wtd Q$s in SQCD. \\
$\cR_0(p)$, $\wtd\cR_(p)$ & Spaces of massless $Q$s and $\wtd Q$s in the vacuum $p$. \\
$\cR_{\mfr C;\pm}(p)$, $\wtd \cR_{\mfr C;\pm}(p)$ & $+/-$-ve mass subspaces of $\cR$, $\wtd\cR$ in the vacuum $p$ and chamber $\mfr C$. \\
$\ermatrix$, $\trmatrix^{\msf{s}}$, $\rrmatrix$ & Elliptic, K-theoretic, and rational $R$-matrices. \\
$\bR_\hbar$, $\bR_\text{top}$ & Lie algebras of $\U(1)_\hbar$, and $\U(1)_\text{top}$. \\
$\msf{s}$ & Slope in the K-theoretic stable envelope. \\
$s_a$ & Equivariant elliptic parameter/fugacity for $\msf H$. In SQCDs $a=1,\dots, N$. \\
$S^1_\A$, $S^1_\B$ & \begin{tabular}{ll} Non-contractible cycles of $\bE_\tau = S^1_\A \times S^1_\B$, \\ we go from 3d $\rightarrow$ 2d $\rightarrow$ 1d by reducing $S^1_\B$ first and then $S^1_\A$. \end{tabular} \\
$\estab$, $\tstab^{\msf{s}}$, $\rstab$ & Elliptic, K-theoretic, and cohomological stable envelopes. \\
$\msf T$ & Maximal torus of $F$. \\
$\mfr t$ & Lie algebra of $\msf T$. \\
$T_{\mathsf X}$ & Tangent bundle of $\mathsf X$.\\
$\mscr U$ & Universal line bundle on elliptic cohomology scheme.\\
$\U(1)_\hbar$, $\U(1)_\text{top}$ & Parts of the flavor symmetry. $\U(1)_\text{top}$ acts only on monopoles in 3d. \\
$\msf X$ & Smooth quasi-projective complex variety with $\msf T$ action. \\
$x_i$ & Equivariant elliptic parameter/fugacity for $\msf A$. In SQCDs $i=1,\dots,L$. \\
$y$ & Real coordinate along $I$. \\
$z$ & Equivariant elliptic parameter/fugacity for $\U(1)_\text{top}$.
\end{longtable}

{\bf Greek}
\begin{longtable}{@{}rl@{}}
$\ze$, $\ze^\bC$ & Real (in 3d) and complex in 2d FI parameters. \\
$\ze^I$ & Triplet of real FI parameters in 1d, $I=1,2,3$. \\
$\mu$ & Complex moment map for the gauge group $G$ action.\\
$\mu_{\mathrm{ev}}$ & Complex moment map for the even part $G_{\mathrm{ev}}$ of the gauge group action.\\
$\mu_{\mfr g}$ & Real moment map for the gauge group $G$ action. \\
$\si$ & Real adjoint scalar in 3d $\cN=2$ vector multiplet. \\
$\si^\bC$ & Complex adjoint scalar in 2d $\cN=(2,2)$ vector multiplet.
\end{longtable}

\part{Stable envelopes: beyond symplectic varieties} \label{part:math}

\section[Classical Higgs branches of 3d $\mcal N=2$ gauge theories]{Classical Higgs branches of 3d $\boldsymbol{\mcal N=2}$ gauge theories}
\label{sec:mathematical construction of higgs branches}

The essential data extracted from a 3d $\mcal{N}=2$ gauge theory is a complex algebraic group $G$, a~complex $G$-representation $\mathbf M$, with a $G$-invariant algebraic function $\mcal{W}\colon\mathbf M\to \mathbb C$, and a character~${\zeta\colon G\to \mathbb C^{\times}}$. The Higgs branch of the 3d $\mcal{N}=2$ gauge theory associated to $(G,\mathbf M,\mcal W,\zeta)$ is then defined as the GIT quotient
\begin{align}
 \higgs(G,\mathbf M,\mcal W,\zeta):=\mathrm{Crit}(\mcal W)^{\zeta-ss}/G, \label{MHgitDef}
\end{align}
where $\mathrm{Crit}(\mcal W)^{\zeta-ss}$ is the $\zeta$-semistable locus of the critical locus $\mathrm{Crit}(\mcal W)$ which is the solution to the equation $\mathsf d\mcal W=0$.

{\bf Assumption.} We assume that the semistable locus $\mathrm{Crit}(\mcal W)^{\zeta-ss}$ is smooth and the action of $G$ on it is free so that we get a smooth Higgs branch $\higgs(G,\mathbf M,\mcal W,\zeta)$. As we will see later, this assumption is satisfied in many examples that we consider. This setting is not the most general one for generic 3d $\mcal{N}=2$ gauge theories, but this is enough for our purpose.

\subsection{Example: GIT quotients}\label{subsubsec: Example: Quotient of Hamiltonian Reduction}
A typical example of $\higgs(G,\mathbf M,\mcal W,\zeta)$ is as follows. Let $G=G_{\mathrm{ev}}\times G_{\mathrm{odd}}$ that acts on a~representation $\mcal R$ and then we take $\mathbf M:=\mcal R\oplus \mcal R^{\vee}\oplus \mathfrak g_{\mathrm{ev}}$. We choose a complex moment map~${\mu\colon \mcal R\oplus \mcal R^{\vee}\!\!\to \mathfrak{g}}$ for the $G$ action, and define $\mu_{\mathrm{ev}}\colon\mcal R\oplus \mcal R^{\vee}\!\!\to \mathfrak{g}_{\mathrm{ev}}^{\vee}$ to be the composition $\mathrm{pr}_{\mathrm{ev}}\circ \mu$, where~${\mathrm{pr}_{\mathrm{ev}}\colon \mathfrak{g}^\vee\!\!\to \mathfrak{g}^\vee_{\mathrm{ev}}}$ is the projection to the even part. Note that $\mu_{\mathrm{ev}}$ is the moment map for the action of $G_{\mathrm{ev}}$ on~$\mathbf M$. We take $\mcal W=\langle X,\mu_{\mathrm{ev}}\rangle$ where $X$ is the coordinate on $\mathfrak g_{\mathrm{ev}}$ and $\langle\cdot,\cdot\rangle$ is the pairing between~$\mathfrak g_{\mathrm{ev}}$ and~$\mathfrak g_{\mathrm{ev}}^{\vee}$. Then we choose a character $\zeta\colon G\to \mathbb C^{\times}$. In this case, the Higgs branch is then isomorphic~to
$ \higgs(G,\mathbf M,\mcal W,\zeta)\cong\mu_{\mathrm{ev}}^{-1}(0)^{\zeta-ss}/\!\!/G$,
where $\mu_{\mathrm{ev}}^{-1}(0)^{\zeta-ss}$ is the $\zeta$-semistable locus of the~$\mu_{\mathrm{ev}}^{-1}(0)$. In other words, $\higgs(G,\mathbf M,\mcal W,\zeta)$ is a~further quotient of a~Hamiltonian reduction by a group action
\begin{align}\label{eqn: quotient of Hamiltonian reduction}
 \higgs(G,\mathbf M,\mcal W,\zeta)\cong\bigl(\bigl(\mcal R\oplus \mcal R^{\vee} \bigr)^{\zeta-ss}{/\!\!/\!\!/} G_{\mathrm{ev}}\bigr)/\!\!/G_{\mathrm{odd}}.
\end{align}
Notice that we have a $G$-equivariant closed embedding
$\mu^{-1}(0)\hookrightarrow \mu^{-1}\bigl(\mathfrak g_{\mathrm{ev}}^{\perp}\bigr)=\mu_{\mathrm{ev}}^{-1}(0)$,
which leads to a closed embedding
\[
\mu^{-1}(0)^{\zeta-ss}/\!\!/G\hookrightarrow \mu_{\mathrm{ev}}^{-1}(0)^{\zeta-ss}/\!\!/G.
\]
In other words, if we consider the moment map $\mu$ for the whole gauge group $G$, and perform the Hamiltonian reduction for the $G$-action, then the resulting variety is a smooth and closed (but could be empty) subvariety of $\higgs(G,\mathbf M,\mcal W,\zeta)$.

\begin{ex}\label{ex: abelian gauge theory}
When $G$ is abelian, $G$ acts trivially on its Lie algebra $\mathfrak g$, and we have a commutative diagram
\begin{equation*}
\begin{tikzcd}
\mu^{-1}(0)^{\zeta-ss}/\!\!/G \arrow[r,hook] \arrow[d] & \mu^{-1}_{\mathrm{ev}}(0)^{\zeta-ss}/\!\!/G \arrow[r,hook] \arrow[d] & \bigl(\mcal R\oplus \mcal R^{\vee}\bigr)^{\zeta-ss}/\!\!/G \arrow[d] \\
\{0\} \arrow[r,hook] & \mathfrak g_{\mathrm{ev}}^{\perp} \arrow[r,hook] & \mathfrak g^\vee
\end{tikzcd}
\end{equation*}
such that every square is Cartesian. The whole diagram is equivariant under the action of the flavour torus $\msf{A}=(\mathbb C^{\times})^{\operatorname{rk}\mcal R}/G$. The central fiber $\mu^{-1}(0)^{\zeta-ss}/\!\!/G$ is a hypertoric variety and the GIT quotient \smash{$\bigl(\mcal R\oplus \mcal R^{\vee}\bigr)^{\zeta-ss}/\!\!/G$} is known as the Lawrence toric variety \cite{hausel2002toric}. For a generic choice of $\zeta$, we have \smash{$\bigl(\mcal R\oplus \mcal R^{\vee}\bigr)^{\zeta-ss}=\bigl(\mcal R\oplus \mcal R^{\vee}\bigr)^{\zeta-s}$} and the latter is smooth over the base $\mathfrak g^\vee$, therefore \smash{$\bigl(\mcal R\oplus \mcal R^{\vee}\bigr)^{\zeta-ss}/\!\!/G$} is flat over the base $\mathfrak g^\vee$ and $\mu^{-1}_{\mathrm{ev}}(0)^{\zeta-ss}/\!\!/G$ is flat over $\mathfrak g_{\mathrm{ev}}^{\perp}$. In the case when the charge matrix $A\colon\mathbb Z^{\operatorname{rk}\mcal R}\to \mathrm{Char}(G)$ is surjective and unimodular,\footnote{The action of $G$ on $\mcal R$ gives a homomorphism $G\to (\mathbb C^{\times})^{\operatorname{rk}\mcal R}$, the charge matrix $A$ is the induced map on characters. $A$ is called unimodular if and only if every $\operatorname{rk}G\times \operatorname{rk}G$ submatrix has determinant $\in \{0,\pm 1\}$.} the action of~$G$ on~\smash{$\bigl(\mcal R\oplus \mcal R^{\vee}\bigr)^{\zeta-ss}$} is free,\footnote{In fact, $G$ acts on \smash{$\bigl(\mcal R\oplus \mcal R^{\vee}\bigr)^{\zeta-ss}$} freely $\Longleftrightarrow$ $G$ acts on $\mu^{-1}(0)^{\zeta-ss}$ freely $\Longleftrightarrow$ charge matrix is surjective and unimodular \cite{hausel2002toric}.} so \smash{$\bigl(\mcal R\oplus \mcal R^{\vee}\bigr)^{\zeta-ss}/\!\!/G$} is smooth over $\mathfrak g^\vee$, thus $\mu^{-1}_{\mathrm{ev}}(0)^{\zeta-ss}/\!\!/G$ is a~smooth deformation of the hypertoric variety $\mu^{-1}(0)^{\zeta-ss}/\!\!/G$ over the base $\mathfrak g_{\mathrm{ev}}^{\perp}$.
\end{ex}

In the abelian gauge group example, we have the implication
\[
\mu^{-1}_{\mathrm{ev}}(0)^{\zeta-ss}\neq \varnothing \Rightarrow \mu^{-1}(0)^{\zeta-ss}\neq \varnothing.
\]
 In fact, if $\mu^{-1}_{\mathrm{ev}}(0)^{\zeta-ss}$ is nonempty, then $\mu^{-1}_{\mathrm{ev}}(0)^{\zeta-ss}/\!\!/G$ is projective over the affine scheme $\mu^{-1}_{\mathrm{ev}}(0)/\!\!/G$. The projection is equivariant under the $\mathbb C^{\times}_{r}$ which acts on $\mcal R\oplus \mcal R^{\vee}$ with weight~$r$ and acts on $\mathfrak g_{\mathrm{ev}}^{\perp}$ and $\mathfrak g^\vee$ with weight $r^{2}$. $\mathbb C^{\times}_{r}$ contracts $\mu^{-1}_{\mathrm{ev}}(0)/\!\!/G$ to a unique point, which is the image of $0\in \mcal R\oplus\mcal R^\vee$, therefore the image of $\mu^{-1}_{\mathrm{ev}}(0)^{\zeta-ss}/\!\!/G$ in $\mu^{-1}_{\mathrm{ev}}(0)/\!\!/G$, which is a $\mathbb C^{\times}_{\hbar}$-invariant closed subset, must contain the image of $0$. In particular, $\mu^{-1}(0)^{\zeta-ss}/\!\!/G$ is nonempty.

 To summarize, in the case when the gauge group is abelian and under the unimodular assumption, we always get a smooth deformation of the corresponding hypertoric variety over the base $\mathfrak g_{\mathrm{ev}}^{\perp}$, and the deformation is equivariant under the action of $\msf{A}\times \mathbb C^{\times}_{r}$, where $\msf{A}$ is the flavour torus $(\mathbb C^{\times})^{\operatorname{rk}\mcal R}/G$. The smooth deformation gives rise to an isomorphism between cohomology groups $H^*\bigl(\mu^{-1}(0)^{\zeta-ss}/\!\!/G\bigr)\cong H^*\bigl(\mu^{-1}_{\mathrm{ev}}(0)^{\zeta-ss}/\!\!/G\bigr)$, and we will see that the stable envelopes for~${\mu^{-1}(0)^{\zeta-ss}/\!\!/G}$ and $\mu^{-1}_{\mathrm{ev}}(0)^{\zeta-ss}/\!\!/G$ are the same (see Example~\ref{ex: abelian stable envelope}).

 However, when $G$ is nonabelian, the cohomologies of $\mu^{-1}(0)^{\zeta-ss}/\!\!/G$ and $\mu^{-1}_{\mathrm{ev}}(0)^{\zeta-ss}/\!\!/G$ are not isomorphic in general; more drastically, it is possible that $\mu^{-1}_{\mathrm{ev}}(0)^{\zeta-ss}$ is nonempty but $\mu^{-1}(0)^{\zeta-ss}$ is empty. See Example~\ref{ex: quiver with one f node 2} below.

\begin{rmk}[tautological generation]
We say that the K-theory $K\bigl(\mu_{\mathrm{ev}}^{-1}(0)^{\zeta-ss}/\!\!/G\bigr)$ is generated by tautological classes if the natural map $K_G(\mathrm{pt})\to K\bigl(\mu_{\mathrm{ev}}^{-1}(0)^{\zeta-ss}/\!\!/G\bigr)$ is surjective. This property is also known as Kirwan surjectivity. For hyperk\"ahler quotients (i.e., $G_{\mathrm{odd}}=1$), it is~known that both hypertoric varieties (i.e., $G$ is torus) and the Nakajima quiver varieties have~the Kirwan surjectivity property, see \cite{harada2004properties,mcgerty2018kirwan}. Let $\zeta_b$ be the restriction of $\zeta$ to $G_{\mathrm{ev}}$, then we have a natural open embedding \smash{$\bigl(\mcal R\oplus\mcal R^\vee\bigr)^{\zeta-ss}\subset \bigl(\mcal R\oplus\mcal R^\vee\bigr)^{\zeta_b-ss}$}, which induces open embedding:
\begin{align*}
 \bigl(\mcal R\oplus \mcal R^{\vee} \bigr)^{\zeta-ss}{/\!\!/\!\!/} G_{\mathrm{ev}}\hookrightarrow \bigl(\mcal R\oplus \mcal R^{\vee}\bigr) {/\!\!/\!\!/}_{\zeta_b} G_{\mathrm{ev}},
\end{align*}
where the right-hand-side is the usual hyperk\"ahler reduction. Assume that $K_{G_{\mathrm{odd}}}\bigl(\bigl(\mcal R\oplus \mcal R^{\vee}\bigr) \allowbreak\smash{{/\!\!/\!\!/}_{\zeta_b} G_{\mathrm{ev}}\bigr)}$ is generated by $K_{G_{\mathrm{ev}}\times G_{\mathrm{odd}}}(\mathrm{pt})=K_{G}(\mathrm{pt})$, then \smash{$K_{G_{\mathrm{odd}}}\bigl(\bigl(\mcal R\oplus \mcal R^{\vee}\bigr)^{\zeta-ss}{/\!\!/\!\!/} G_{\mathrm{ev}}\bigr)$} is also generated by $K_{G}(\mathrm{pt})$, whence \eqref{eqn: quotient of Hamiltonian reduction} implies that $K\bigl(\mu_{\mathrm{ev}}^{-1}(0)^{\zeta-ss}/\!\!/G\bigr)$ is generated by $K_{G}(\mathrm{pt})$. In other words,
\begin{itemize}\itemsep=0pt
 \item $G_{\mathrm{odd}}$-equivariant Kirwan surjectivity of $\bigl(\mcal R\oplus \mcal R^{\vee}\bigr) {/\!\!/\!\!/}_{\zeta_b} G_{\mathrm{ev}}$ implies the Kirwan surjectivity of $\mu_{\mathrm{ev}}^{-1}(0)^{\zeta-ss}/\!\!/G$.
\end{itemize}
Since for the hypertoric varieties and the Nakajima quiver varieties, the Kirwan surjectivity holds for flavour group equivariant K-theory, we conclude that: if either $G$ is a torus or $(\mcal R,G)$ is constructed from a quiver representation (see Section~\ref{subsubsec: Example: The Doubled Quivers}), then $\mu_{\mathrm{ev}}^{-1}(0)^{\zeta-ss}/\!\!/G$ has the Kirwan surjectivity.
\end{rmk}

\subsection{Example: quiver varieties}\label{subsubsec: Example: The Doubled Quivers}
A large class of GIT quotients comes from 3d $\mcal{N}=2$ quiver gauge theories, and we recall the construction of the Higgs branches of such theories here.

 Let $Q=(Q_0,Q_1)$ be a quiver, i.e., $Q_0$ and $Q_1$ are finite sets (known as the set of nodes and edges respectively) together with two maps $h,t\colon Q_1\to Q_0$ sending an edge to its head and tail respectively. We add one more structure to the quiver $Q$, namely, we separate $Q_0$ into two parts~${Q_0=Q_0^{\mathrm{ev}}\sqcup Q_0^{\mathrm{odd}}}$, which will be called even and odd respectively. Our notation for the even and odd nodes is as follows
\begin{equation*}
 \begin{tikzpicture}[x={(1.5cm,0cm)}, y={(0cm,1.5cm)}]
 \node[draw, circle, label=above:even] at (2,0) (n3) {};
 \node[draw, circle, cross, label=above:odd] at (0,0) (n1) {};
 \end{tikzpicture}.
\end{equation*}
Let $\mathbf w,\mathbf v\in \mathbb N^{Q_0}$ be $Q_0$-tuples of natural numbers, then the gauge group and the chiral multiplets are built from the above data as follows.

 The gauge group $G=G_{\mathrm{ev}}\times G_{\mathrm{odd}}$ is such that
\begin{align*}
 G_{\mathrm{ev}}=\prod_{i\in Q_0^{\mathrm{ev}}}\mathrm{GL}(\mathbf v_i),\qquad G_{\mathrm{odd}}=\prod_{i\in Q_0^{\mathrm{odd}}}\mathrm{GL}(\mathbf v_i).
\end{align*}
And the space of chiral multiplets $\mathbf M=\mcal R\oplus \mcal R^{\vee}\oplus \mathfrak g_{\mathrm{ev}}$ is such that
\begin{align*}
 \mcal R=\bigoplus_{i\in Q_0}\mathrm{Hom}(V_i,W_i)\oplus \bigoplus_{a\in Q_1}\mathrm{Hom}(V_{t(a)},V_{h(a)}),
\end{align*}
where $W_i$ is the $\mathbf w_i$-dimensional complex vector space and $V_i$ is the $\mathbf v_i$-dimensional complex vector space. The action of $G$ on $\mcal R$ is the obvious one, namely $\mathrm{GL}(\mathbf v_i)$ acts on $V_i$ as the fundamental representation. We give notations to the field contents of $\mcal R$ by letting $\alpha_i\in\mathrm{Hom}(V_i,W_i)$ for~${i\in Q_0}$ and $x_{a}\in \mathrm{Hom}(V_{t(a)},V_{h(a)})$ for~${a\in Q_1}$; for the dual representation $\mcal R^\vee$, we use the notation $\widetilde{\alpha}_i\in \mathrm{Hom}(W_i,V_i)$ and $\widetilde{x}_{a}\in \mathrm{Hom}(V_{h(a)},V_{t(a)})$ for $a\in Q_1$.

 The holomorphic symplectic form on $\mcal R\oplus \mcal R^{\vee}$ is
\[
 \omega=\sum_{a\in Q_1}\sd x_{a}\wedge \sd \widetilde{x}_{a}+\sum_{i\in Q_0}\sd \alpha_{i}\wedge \sd \widetilde{\alpha}_{i},
\]
so there is a moment map $\mu\colon\mcal R\oplus \mcal R^{\vee}\to \mathfrak g^\vee$ such that
\[
 \mu\bigl(x_a,\widetilde{x}_a,\alpha_i,\widetilde{\alpha}_i\bigr)=\sum_{a\in Q_1}\bigl[x_a,\widetilde{x}_a\bigr]+\sum_{i\in Q_0} \widetilde{\alpha}_{i}\alpha_{i}.
\]
We define $\mu_{\mathrm{ev}}\colon \mcal R\oplus \mcal R^{\vee}\to \mathfrak g^\vee_{\mathrm{ev}}$ to be the composition of $\mu$ with the projection $\mathfrak g^\vee\to \mathfrak g^\vee_{\mathrm{ev}}$.

 We introduce the flavour symmetry group $F=G_W\times \mathbb C^{\times}_{\hbar}$, such that
\smash{$G_W=\prod_{i\in Q_0}\mathrm{GL}(\mathbf w_i)$},
$\mathrm{GL}(\mathbf w_i)$ acts on $W_i$ by fundamental representation, and $\mathbb C^{\times}_{\hbar}$ acts on $\mathbf M$ by scaling $\mcal R^{\vee}$ with weight~$\hbar^{-1}$ and scaling $\mathfrak g_{\mathrm{ev}}$ with weight $\hbar$ and fixing $\mcal R$. Note that the action of $F$ on $\mathbf M$ commutes with the action of $G$. We also fix a maximal torus $\mathsf T$ of $F$, namely $\mathsf T=T_W\times \mathbb C^{\times}_{\hbar}$ such that~${T_W=\prod_{i\in Q_0}(\mathbb C^{\times })^{\mathbf w_i}}$ is a maximal torus of $G_W$.

 Finally, we choose a character $\zeta\colon G\to \mathbb C^{\times}$, $\zeta$ can be written as $\zeta(g)=\prod_{i\in Q_0}\det(g_i)^{\zeta_i}$. Then the Higgs branch of the 3d $\mcal{N}=2$ quiver gauge theory associated to the quiver $Q$ with gauge rank $\mathbf v$ and flavour rank $\mathbf w$ is isomorphic to the GIT quotient
\begin{align}\label{HiggsSyQ}
 \mcal M^{\zeta}(\mathbf v,\mathbf w)=\mu_{\mathrm{ev}}^{-1}(0)^{\zeta-ss}/\!\!/G,
\end{align}
and we write
\begin{align*}
 \mcal M^{\zeta}(\mathbf w):=\underset{\mathbf v\in \mathbb N^{Q_0}}{\bigsqcup} \mcal M^{\zeta}(\mathbf v,\mathbf w),
\end{align*}
for a fixed $\mathbf w$. In other words, we impose complex moment map equations only for even nodes.

 If $\zeta$ equals to $\zeta_+:=(1,\dots,1)$ or $\zeta_-:=(-1,\dots,-1)$, then according to the King's criterion for the stability \cite{king1994moduli}, a quiver representation $\bigl(V,x_a,\widetilde{x}_a,\alpha_i,\widetilde{\alpha}_i\bigr)\in \mcal R\oplus \mcal R^{\vee}$ is $\zeta$-semistable if and only if the following condition is satisfied respectively:
\begin{itemize}\itemsep=0pt
 \item[($\zeta_+$)] If $S_i\subset V_i$ are subspaces such that $S$ is preserved under the maps $\bigl(x_a,\widetilde{x}_a\bigr)$, and that $S_i\supset \Im(\widetilde{\alpha}_i)$ for all $i\in Q_0$, then $S=V$.
 \item[($\zeta_-$)] If $T_i\subset V_i$ are subspaces such that $T$ is preserved under the maps $\bigl(x_a,\widetilde{x}_a\bigr)$, and that $T_i\subset \mathrm{Ker}(\alpha_i)$ for all $i\in Q_0$, then $T=0$.
\end{itemize}
In the following discussion, we always assume that
\begin{itemize}
 \item $\zeta$ is generic, i.e., $\bigl(\mcal R\oplus \mcal R^{\vee}\bigr)^{\zeta-s}=\bigl(\mcal R\oplus \mcal R^{\vee}\bigr)^{\zeta-ss}$.
\end{itemize}
For example, $\zeta_\pm$ are generic.

It is well known that if $\zeta$ is generic, then \smash{$\mu\colon\bigl(\mcal R\oplus \mcal R^{\vee}\bigr)^{\zeta-ss}\to \mathfrak g^\vee$} is smooth, and the action of $G$ on \smash{$\bigl(\mcal R\oplus \mcal R^{\vee}\bigr)^{\zeta-ss}$} is free. Since the projection $\mathrm{pr}_{\mathrm{ev}}\colon\mathfrak g^\vee\to \mathfrak g^\vee_{\mathrm{ev}}$ is smooth, the composition~${\mu_{\mathrm{ev}}=\mathrm{pr}_{\mathrm{ev}}\circ\mu}$ is smooth. Therefore, we have the following lemma.

\begin{lem}\label{lem:G/Higgs}
Assume that $\zeta$ is generic then, $\mcal M^{\zeta}(\mathbf v,\mathbf w)$ is a smooth variety and the quotient map $\mu_{\mathrm{ev}}^{-1}(0)^{\zeta-ss}\to \mcal M^{\zeta}(\mathbf v,\mathbf w)$ is a principal $G$-bundle.
\end{lem}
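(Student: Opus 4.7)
The plan is to deduce both claims from two standard facts about GIT quotients by reductive groups acting freely on a smooth variety, once we verify that $\mu_{\mathrm{ev}}$ is a submersion on the $\zeta$-semistable locus and that $G$ acts freely there.

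First I would establish smoothness of $\mu_{\mathrm{ev}}^{-1}(0)^{\zeta-ss}$. The key input is the well-known fact, already invoked in the paragraph preceding the lemma, that for generic $\zeta$ the full moment map $\mu:(\mcal R\oplus\mcal R^\vee)^{\zeta-ss}\to\mathfrak g^\vee$ is a smooth morphism. The standard way to see this is to use the symplectic form: at a point $x$, the kernel of $d\mu_x$ is the symplectic orthogonal of the image of the infinitesimal $G$-action $\mathfrak g\to T_x(\mcal R\oplus\mcal R^\vee)$, so $d\mu_x$ is surjective exactly when this infinitesimal action is injective, i.e. when the stabilizer of $x$ in $G$ is finite. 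For generic $\zeta$, King's criterion recalled above shows $\zeta$-semistable equals $\zeta$-stable and stable points have trivial stabilizer, which gives the smoothness of $\mu$. Since the projection $\mathrm{pr}_{\mathrm{ev}}:\mathfrak g^\vee\twoheadrightarrow\mathfrak g_{\mathrm{ev}}^\vee$ is a linear surjection between vector spaces (hence smooth), the composition $\mu_{\mathrm{ev}}=\mathrm{pr}_{\mathrm{ev}}\circ\mu$ is also smooth on $(\mcal R\oplus\mcal R^\vee)^{\zeta-ss}$. Therefore $\mu_{\mathrm{ev}}^{-1}(0)^{\zeta-ss}$, being the preimage of the regular value $0$ under a smooth morphism between smooth varieties, is itself smooth.

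Next I would promote the GIT quotient to a principal $G$-bundle. Again by the generic choice of $\zeta$, the action of $G$ on $(\mcal R\oplus\mcal R^\vee)^{\zeta-ss}$ is free and all $\zeta$-semistable points are $\zeta$-stable; restricting to the $G$-invariant closed subscheme $\mu_{\mathrm{ev}}^{-1}(0)^{\zeta-ss}$, the action remains free and every point remains stable for the induced linearization. Because $G=\prod\mathrm{GL}(\mathbf v_i)$ is a product of general linear groups and hence a special group in Serre's sense, Luna's étale slice theorem (or equivalently Mumford's GIT for a free action with stable$=$semistable) yields that the quotient map
\begin{equation*}
\pi:\mu_{\mathrm{ev}}^{-1}(0)^{\zeta-ss}\longrightarrow \mcal M^\zeta(\mathbf v,\mathbf w)=\mu_{\mathrm{ev}}^{-1}(0)^{\zeta-ss}/\!\!/G
\end{equation*}
is a geometric quotient and a principal $G$-bundle; moreover $\mcal M^\zeta(\mathbf v,\mathbf w)$ inherits smoothness from its total space via the smoothness of $\pi$.

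The only genuinely nontrivial ingredient I would need to invoke carefully is the smoothness of $\mu$ on the $\zeta$-semistable locus for generic $\zeta$; everything else is bookkeeping with GIT. Since this fact is classical in the Nakajima quiver-variety setting and, as noted in the excerpt, is the cited ``well-known'' statement, I would simply reference it (e.g. via the symplectic argument sketched above, or directly via Nakajima's treatment) and proceed. No new computation is required beyond tracking that the same argument applies when one only imposes the moment map equations at the even nodes, which is immediate because taking a further linear projection of a smooth map remains smooth.
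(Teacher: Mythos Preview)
Your argument is correct and follows exactly the same route as the paper: the paragraph immediately preceding the lemma already records that $\mu$ is smooth on the $\zeta$-semistable locus with free $G$-action for generic $\zeta$, and that composing with the linear projection $\mathrm{pr}_{\mathrm{ev}}$ keeps $\mu_{\mathrm{ev}}$ smooth, so the lemma is stated as an immediate consequence. Your additional remarks (the symplectic argument for smoothness of $\mu$, and invoking that $\prod\mathrm{GL}(\mathbf v_i)$ is special to upgrade the geometric quotient to a Zariski-locally trivial principal bundle) are welcome elaborations but do not change the approach.
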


\begin{ex}\label{ex: quiver with no f node}
If there is no odd node, i.e., $Q^{\mathrm{odd}}_0$ is empty, then $\mu_{\mathrm{ev}}\!=\!\mu$ and in this case~$\mcal M^{\zeta}(\mathbf v,\mathbf w)$ is a Nakajima quiver variety \cite{Nakajima199411,nakajima1998quiver}.
\end{ex}

\begin{ex}\label{ex: quiver with one f node 1}
Let $Q$ be an $A_{n}$ quiver with $Q^{\mathrm{ev}}_0=\{1,\dots,n-1\}$ and $Q^{\mathrm{odd}}_0=\{n\}$, take framing dimension $\mathbf w=(r,0,\dots,0)$ and gauge dimension $\mathbf v=(v_1,\dots,v_{n})$. Below is the doubled quiver~$\overline{Q}$,
\[
\begin{tikzpicture}[x={(2cm,0cm)}, y={(0cm,2cm)}, baseline=-1cm]
\node[draw, circle, label=above:$v_1$] at (0,0) (n1) {};
\node[draw, circle, label=above:$v_2$] at (1,0) (n2) {};
\node at (2,0) (ndot) {$\cdots$};
\node[draw, circle, label=above:$v_{n-1}$] at (3,0) (n3) {};
\node[draw, circle,cross, label=above:$v_n$] at (4,0) (n4) {};

\draw[-stealth] (n1.25) to (n2.155);
\draw[-stealth] (n2.25) to (ndot.165);
\draw[-stealth] (ndot.10) to (n3.155);
\draw[-stealth] (n3.25) to (n4.155);

\draw[-stealth] (n4.205) to (n3.335);
\draw[-stealth] (n3.205) to (ndot.347);
\draw[-stealth] (ndot.190) to (n2.335);
\draw[-stealth] (n2.205) to (n1.335);

\node[draw, rectangle, label=left:$r$] at (0,-.75) (f1) {};

\draw[-stealth] (n1.292) to (f1.60);
\draw[-stealth] (f1.120) to (n1.248);
\end{tikzpicture}
\]
$\mcal M^{\zeta_+}(\mathbf v,\mathbf w)$ is nonempty if and only if $r\ge v_1\ge\cdots\ge v_n$, and if nonempty then $\mcal M^{\zeta_+}(\mathbf v,\mathbf w)$ is isomorphic to $\mathrm{GL}_r{\times}^P\mathfrak{m}$, where $P$ is the parabolic subgroup of $\mathrm{GL}_r$ that stabilizes a fixed flag~${F_\bullet=F_1\subset F_2\subset\cdots\subset F_n\subset F_{n+1}=\mathbb C^r}$ such that $\dim F_{n+1}/F_i=v_i$, and $\mathfrak m$ is the Lie algebra of the subgroup $M\subset P$ that acts on $F_{i+1}/F_{i}$ as identity for all $i$ except for $i=n$. $\mcal M^{\zeta_+}(\mathbf v,\mathbf w)$ is a vector bundle over the flag variety $\mathrm{Fl}_{\mathbf v}=\mathrm{GL}_r/P$. Note that $M$ contains the unipotent radical of $P$, therefore $\mcal M^{\zeta_+}(\mathbf v,\mathbf w)$ contains $T^*\mathrm{Fl}_{\mathbf v}$ as a closed subvariety.
\end{ex}

\begin{ex}\label{ex: quiver with one f node 2}
Let $Q$ be an $A_{n}$ quiver with $Q^{\mathrm{ev}}_0=\{2,\dots,n\}$ and $Q^{\mathrm{odd}}_0=\{1\}$, take framing dimension $\mathbf w=(r,0,\dots,0)$ and gauge dimension $\mathbf v=(v_1,\dots,v_{n})$. Below is the doubled quiver~$\overline{Q}$,
\[
\begin{tikzpicture}[x={(2cm,0cm)}, y={(0cm,2cm)}, baseline=-1cm]
\node[draw, circle, cross, label=above:$v_1$] at (0,0) (n1) {};
\node[draw, circle, label=above:$v_2$] at (1,0) (n2) {};
\node at (2,0) (ndot) {$\cdots$};
\node[draw, circle, label=above:$v_{n-1}$] at (3,0) (n3) {};
\node[draw, circle, label=above:$v_n$] at (4,0) (n4) {};

\draw[-stealth] (n1.25) to (n2.155);
\draw[-stealth] (n2.25) to (ndot.165);
\draw[-stealth] (ndot.10) to (n3.155);
\draw[-stealth] (n3.25) to (n4.155);

\draw[-stealth] (n4.205) to (n3.335);
\draw[-stealth] (n3.205) to (ndot.347);
\draw[-stealth] (ndot.190) to (n2.335);
\draw[-stealth] (n2.205) to (n1.335);

\node[draw, rectangle, label=left:$r$] at (0,-.75) (f1) {};

\draw[-stealth] (n1.292) to (f1.60);
\draw[-stealth] (f1.120) to (n1.248);
\end{tikzpicture}
\]
$\mcal M^{\zeta_+}(\mathbf v,\mathbf w)$ is nonempty if and only if $v_n\le r$ and $v_{i+1}\le v_i\le v_{i+1}+r$. $\mcal M^{\zeta_+}(\mathbf v,\mathbf w)$ can be described as follows. Let $d_i=v_i-v_{i+1}$ for $1\le i\le n-1$ and $d_n=v_n$, then $\mcal M^{\zeta_+}(\mathbf v,\mathbf w)$ is a~vector bundle over the convolution of affine Grassmannian orbits
\[
\widetilde{\mathrm{Gr}}_{\mathrm{GL}_r}^{\vec\lambda}=\mathrm{Gr}_{\mathrm{GL}_r}^{\omega_{d_n}}\widetilde{\times}\mathrm{Gr}_{\mathrm{GL}_r}^{\omega_{d_{n-1}}}\widetilde{\times}\cdots \widetilde{\times}\mathrm{Gr}_{\mathrm{GL}_r}^{\omega_{d_1}},
\]
where $\vec\lambda=(\omega_{d_n},\dots,\omega_{d_1})$ and $\omega_i$ is the $i$-th fundamental coweight of $\mathrm{GL}_r$. The convolution of the affine Grassmannian gives a map \smash{$m\colon \widetilde{\mathrm{Gr}}_{\mathrm{GL}_r}^{\vec\lambda}\to \overline{\mathrm{Gr}}_{\mathrm{GL}_r}^{v_1\omega_1}$}, and the latter represents sub-lattices~$\mathcal L$ in $\mathbb C[z]^{\oplus r}$ of codimension $v_1$, so it has a $v_1$-dimensional vector bundle $\mathcal V$ which is the universal quotient $\mathbb C[z]^{\oplus r}/\mathcal L_{\mathrm{univ}}$. $\mcal M^{\zeta_+}(\mathbf v,\mathbf w)$ is the total space of the vector bundle $\mathrm{Hom}(\mathcal V,\mathcal O^{\oplus r})$ on the~\smash{$\widetilde{\mathrm{Gr}}_{\mathrm{GL}_r}^{\vec\lambda}$}.
\end{ex}

In the last example, if we replace the odd node with a even one, then $\mcal M^{\zeta_+}(\mathbf v,\mathbf w)$ is nonempty if and only if $r\ge v_1\ge v_2\ge\cdots\ge v_n$. In particular, if $n\ge 2$ then there exists $\mathbf v$ such that~$\mu^{-1}(0)^{\zeta_+-ss}$ is empty but $\mu^{-1}_{\mathrm{ev}}(0)^{\zeta_+-ss}$ is nonempty, for instance $\mathbf v=(nr,(n-1)r,\dots,r)$.

 Let $\msf{A}\subset T_W$ be a subtorus, such that $W$ decomposes as eigenspaces
\[
 W=\bigoplus_{\lambda\in \mathrm{Char}(\msf{A})}W^\lambda,
\]
and we write $\mathbf w=\sum_{\lambda}\mathbf w^\lambda$ for the dimension vector, then it is well known that the $\msf{A}$-fixed points set $\mcal M^{\zeta}(\mathbf w)^{\msf{A}}$ is isomorphic to the product
\[
 \mcal M^{\zeta}(\mathbf w)^{\msf{A}}\cong \prod_{\lambda\in \mathrm{Cochar}(\msf{A})}\mcal M\bigl(\mathbf w^\lambda\bigr).
\]

\section{Elliptic stable envelopes for partially-polarized varieties}
\label{sec:elliptic stable envelopes for partially-polarized varieties}
Aganagic--Okounkov defined the elliptic stable envelope and showed their existence and uniqueness for hypertoric varieties and Nakajima quiver varieties \cite{AganagicOkounkov201604}, later Okounkov extended the result on the existence and uniqueness of elliptic stable envelope to algebraic stacks with a~polarization~\cite{okounkov2020nonabelian} and to varieties with an attractive line bundle on its equivariant elliptic cohomology~\cite{okounkov2021inductive}.

 A polarization, namely a decomposition of tangent bundle \smash{$T_X=T^{1/2}_X+\bigl(T^{1/2}_X\bigr)^\vee$} in the K-theory class of a variety or stack $X$, is manifested for the Higgs branches of 3d $\mcal N=4$ gauge theories, since the moduli space is holomorphic symplectic by construction.

 However, in the 3d $\mcal N=2$ setting, we do not have a polarization in general. In fact, the Higgs branch can be odd-dimensional, for example, the resolved conifold in the case of SQED with two flavours (see equation \rf{MH12}). In this section, we show that the condition of having a polarization can be weakened to having a partial polarization (defined below). We basically follow the line of arguments in \cite{okounkov2021inductive}, the main point is that we can always construct an attractive line bundle from a partial polarization.

 Throughout this section, we denote by $\mathsf X$ a smooth quasi-projective complex variety with a~torus~$\mathsf T$ action, we fix a nontrivial group homomorphism $\mathsf T\to \mathbb C^{\times }_{\hbar}$ and a subtorus $\mathsf A\subset {\ker\bigl(\mathsf T\to \mathbb C^{\times }_{\hbar}\bigr)}$.

\begin{dfn}\label{dfn: partial polarization}
A \emph{partial polarization} on $\mathsf X$ is the following data:
\begin{itemize}\itemsep=0pt
 \item a decomposition of the tangent bundle
 \[
 T_{\mathsf X}=\mathrm{Pol}^+_{\mathsf X}+\hbar^{-1}\bigl(\mathrm{Pol}^+_{\mathsf X}\bigr)^\vee+\mathrm{Pol}^-_{\mathsf X}+\hbar(\mathrm{Pol}^-_{\mathsf X})^\vee+\mscr E \in K_{\mathsf T}(\mathsf X),
\]
for some $\mathsf T$-equivariant K-theory classes $\mathrm{Pol}_{\mathsf X}:=\mathrm{Pol}^+_{\mathsf X}+\mathrm{Pol}^-_{\mathsf X}$ and $\mscr E$,
\end{itemize}
such that
\begin{itemize}\itemsep=0pt
 \item[(1)] $\mscr E=\mscr E^\vee$ in $K_{\mathsf T}(\mathsf X)$, and that
 \item[(2)] the elliptic Thom line bundle $\Theta(\mscr E)$ admits a square root.\footnote{For a review of elliptic cohomology and Theta bundles, see Appendix \ref{subsec: Equivariant Elliptic Cohomology}.}
\end{itemize}
We call such $\mathsf X$ a partially-polarized variety, with partial polarization $\mathrm{Pol}_{\mathsf X}$. And we define the opposite partial polarization to be
$
\mathrm{Pol}^{\mathrm{op}}_{\mathsf X}=\mathrm{Pol}^{\mathrm{op},+}_{\mathsf X}+\mathrm{Pol}^{\mathrm{op},-}_{\mathsf X}=\hbar^{-1}\bigl(\mathrm{Pol}^+_{\mathsf X}\bigr)^\vee+\hbar(\mathrm{Pol}^-_{\mathsf X})^\vee$.
\end{dfn}

\begin{rmk}
In particular, the existence of partial polarization implies that $T_{\mathsf X}$ equals to $T^\vee_{\mathsf X}$ in $K_{\mathsf A}(\mathsf X)$.
\end{rmk}
We give a criterion when a variety $\mathsf X$ has a partial polarization in the following.
\begin{prop}\label{prop: criterion for partial polarization}
Let $\mathsf{P}^{\pm}_{\mathsf X}\in K_{\mathsf T}(\mathsf X)$, and moreover assume that there exist
\begin{itemize}\itemsep=0pt
 \item[$(1)$] a finite set of pairs $(G_i,n_i)_{i\in I}$, where $n_i\in\mathbb Z_{\neq 0}$ and $G_i$ is a reductive group whose derived subgroup $[G_i,G_i]$ is simply connected and every simple constituent is of type $\mathrm{A}$, $\mathrm{C}$, $\mathrm{D}$, $\mathrm{E}_6$, or $\mathrm{G}_2$,
 \item[$(2)$] for every $i\in I$, a $\mathsf T$-equivariant principal $G_i$ bundle $\mathcal P_i$ on $\mathsf X$,
\end{itemize}
such that the tangent bundle $T_{\mathsf X}$ decomposes in the $\mathsf T$-equivariant K-theory into
\[
 T_{\mathsf X}=\mathsf{P}^+_{\mathsf X}+\hbar^{-1}\bigl(\mathsf{P}^+_{\mathsf X}\bigr)^\vee+\mathsf{P}^-_{\mathsf X}+\hbar(\mathsf{P}^-_{\mathsf X})^\vee+\sum_{i\in I}n_i\mathrm{adj}(\mcal P_i) \in K_{\mathsf T}(\mathsf X),
\]
where $\mathrm{adj}(\mcal P_i):=\mcal P_i\times^{G_i}\mathfrak g_i$ is the adjoint bundle associated to $\mcal P_i$. Then $\mathsf{P}_{\mathsf X}=\mathsf{P}^+_{\mathsf X}+\mathsf{P}^-_{\mathsf X}$ is a~partial polarization on $\mathsf X$.
\end{prop}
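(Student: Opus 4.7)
The plan is to take $\mathrm{Pol}^{\pm}_{\mathsf X} := \mathsf{P}^{\pm}_{\mathsf X}$ and to set
\[
    \mscr E := \sum_{i \in I} n_i \, \mathrm{adj}(\mcal P_i) \in K_{\mathsf T}(\mathsf X),
\]
so that the K-theoretic identity in the hypothesis becomes exactly the decomposition of $T_{\mathsf X}$ demanded by Definition~\ref{dfn: partial polarization}. The task then reduces to verifying the two closure conditions of that definition: the self-duality $\mscr E = \mscr E^\vee$, and the existence of a square root of the elliptic Thom line bundle $\Theta(\mscr E)$.

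\textbf{Self-duality.} For the first condition, I would invoke the standard fact that any reductive complex Lie algebra $\mathfrak g_i$ is self-dual as a $G_i$-module: the Killing form is a nondegenerate $G_i$-invariant symmetric bilinear pairing on the derived subalgebra, and on the center one can fix any nondegenerate symmetric form. This gives a $G_i$-equivariant isomorphism $\mathfrak g_i \cong \mathfrak g_i^\vee$, which induces $\mathrm{adj}(\mcal P_i) \cong \mathrm{adj}(\mcal P_i)^\vee$ in $K_{\mathsf T}(\mathsf X)$ for every $i$. Summing with the integer coefficients $n_i$ yields $\mscr E = \mscr E^\vee$.

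\textbf{Square root --- reduction step.} Using multiplicativity $\Theta(V + W) = \Theta(V) \otimes \Theta(W)$ and writing each $n_i = 2 k_i + \epsilon_i$ with $\epsilon_i \in \{0, 1\}$, I would factor
\[
    \Theta(\mscr E) \;\cong\; \Bigl( \bigotimes_{i \in I} \Theta(\mathrm{adj}(\mcal P_i))^{\otimes k_i} \Bigr)^{\otimes 2} \;\otimes\; \bigotimes_{i \,:\, \epsilon_i = 1} \Theta(\mathrm{adj}(\mcal P_i)).
\]
The first tensor factor is manifestly a square, so it suffices to produce a square root for each individual $\Theta(\mathrm{adj}(\mcal P_i))$. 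Since the connected center $Z(G_i)^{\circ}$ acts trivially on $\mathfrak g_i$, the bundle $\mathrm{adj}(\mcal P_i)$ depends only on the associated $[G_i, G_i]$-bundle, so I would further reduce to the case in which $G_i$ itself is simply connected and simple of type $\mathrm A$, $\mathrm C$, $\mathrm D$, $\mathrm E_6$, or $\mathrm G_2$.

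\textbf{Key lemma and the main obstacle.} What remains is the representation-theoretic assertion that for each of these Dynkin types, the elliptic Thom bundle $\Theta(\mathrm{adj}(\mcal P))$ admits a square root. The natural candidate is built from a half-roots construction: the roots pair up under $\alpha \mapsto -\alpha$, and in the simply connected case the half-sum of positive roots $\rho$ lifts to a genuine weight of the maximal torus, producing a $\mathsf T$-equivariant line bundle on $\mathrm{Ell}_{\mathsf T}(\mathsf X)$ whose tensor square reproduces $\Theta(\mathrm{adj}(\mcal P))$ up to the Cartan contribution. The restriction to the listed Dynkin types is precisely what ensures that the residual Cartan and Weyl-group data descend to give a well-defined global square root on $\mathsf X$. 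This final representation-theoretic verification---pinning down why the excluded types $\mathrm B$, $\mathrm E_7$, $\mathrm E_8$, $\mathrm F_4$ fail this descent---is where I expect the essential difficulty to lie; the preceding K-theoretic bookkeeping and self-duality arguments are routine by comparison.
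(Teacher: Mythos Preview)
Your setup, self-duality argument, and multiplicative reduction to the individual factors $\Theta(\mathrm{adj}(\mcal P_i))$ are correct and match the paper exactly: the proposition is stated to follow directly from a lemma asserting that $\Theta(\mathrm{adj}(\mcal P))$ admits a square root under the stated hypotheses on $G$. All the content lies in that lemma.

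Your sketch of the lemma, however, is off-track. The observation that $\rho$ is an integral weight in the simply connected case holds for \emph{all} Dynkin types, including $\mathrm B,\mathrm E_7,\mathrm E_8,\mathrm F_4$, so it cannot be the feature that singles out $\mathrm A,\mathrm C,\mathrm D,\mathrm E_6,\mathrm G_2$. More fundamentally, the naive candidate $\Theta\bigl(\bigoplus_{\alpha>0}\mathfrak g_\alpha\bigr)$ over the maximal-torus cover $\mathbb E\otimes_{\mathbb Z}\mathbb X_*$ transforms by the sign character of $W$ under Weyl reflections, so it does not descend to $\mathbb E\otimes_{\mathbb Z}\mathbb X_*/W$ as a line bundle; this descent obstruction is not governed by the integrality of $\rho$.

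The paper's argument is instead numerical. For simple simply connected $G$, Looijenga identifies $\mathbb E\otimes_{\mathbb Z}Q^\vee/W$ with the weighted projective space $\mathbb P(1,g_1,\dots,g_r)$, where $\theta^\vee=\sum_i g_i\alpha_i^\vee$; its Picard group is freely generated by $\mscr O(m)$ with $m=\mathrm{lcm}(g_1,\dots,g_r)$, and Looijenga also shows $\Theta(\mathfrak g)\cong\mscr O(2h^\vee)$. Hence a square root exists iff $m\mid h^\vee$, which a type-by-type check confirms exactly for the listed types. The general reductive case is then handled not by passing to a principal $[G,G]$-bundle directly (the lift of the $G$-bundle may be obstructed) but by a fibration argument over $\mscr E_{H/\Gamma}$, writing $G$ as the quotient of a torus $H$ times the derived subgroup by a finite central $\Gamma$ and chasing the square root through the resulting exact sequence of Picard groups.
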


The above proposition is the direct consequence of the following lemma.

\begin{lem}\label{lem: square root}
Let $G$ be a reductive group whose derived subgroup is simply connected and every simple constituent is of type $\mathrm{A}$, $\mathrm{C}$, $\mathrm{D}$, $\mathrm{E}_6$, or $\mathrm{G}_2$. Let $\mcal P$ be a $\mathsf T$-equivariant principal $G$-bundle on~$\mathsf X$, then the elliptic Thom line bundle $\Theta(\mathrm{adj}(\mcal P))$ has a square root in $\mathrm{Pic}(\mathrm{Ell}_{\mathsf T}(\mathsf X))$.
\end{lem}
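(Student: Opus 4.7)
My plan is to reduce to the case of $G$ simply connected and simple of the listed types, use the splitting principle to pass to the flag bundle where the adjoint representation splits into root line bundles plus a trivial summand, pair positive with negative roots to write the Thom bundle as a manifest square, and then descend the square root back to $\mathsf X$.

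\smallskip

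First I would reduce to the simple simply connected case. Writing $G = Z(G)^{\circ} \cdot [G,G]$ with $[G,G]$ simply connected by hypothesis, the center $Z(G)$ acts trivially on $\mathfrak g$, so $\mathrm{adj}(\mathcal P)$ descends to the bundle $\mathcal P/Z(G)^{\circ}$ with structure group $G/Z(G)^{\circ}$. The latter is a product of simple simply connected groups of the listed types (by the assumption on the simple constituents of $[G,G]$), and since $\Theta$ is multiplicative under direct sums of bundles, it suffices to treat one simple factor at a time.

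\smallskip

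Next I would apply the splitting principle. Let $B \subset G$ be a Borel with maximal torus $T_G$, and let $\pi : \mathcal P/B \to \mathsf X$ be the associated flag bundle. The pullback of the adjoint bundle splits as
\begin{equation*}
\pi^{\ast} \mathrm{adj}(\mathcal P) \,\cong\, \bigoplus_{\alpha \in \Phi^+} \bigl( L_{\alpha} \oplus L_{\alpha}^{\vee} \bigr) \,\oplus\, \mathcal O^{\oplus r}_{\mathcal P/B},
\end{equation*}
where $L_{\alpha}$ is the line bundle associated to the root $\alpha$ and $r = \mathrm{rk}\, G$. Passing to elliptic Thom bundles and using the oddness $\vartheta(-x) = -\vartheta(x)$, each pair $\Theta(L_{\alpha}) \otimes \Theta(L_{\alpha}^{\vee})$ is canonically isomorphic to $\Theta(L_{\alpha})^{\otimes 2}$ as a line bundle, while the trivial summand contributes a trivial line bundle. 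Hence
\begin{equation*}
\pi^{\ast} \Theta(\mathrm{adj}(\mathcal P)) \,\cong\, \Bigl( \bigotimes_{\alpha \in \Phi^+} \Theta(L_{\alpha}) \Bigr)^{\otimes 2},
\end{equation*}
producing a square root of $\pi^{\ast} \Theta(\mathrm{adj}(\mathcal P))$ over $\mathcal P/B$.

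\smallskip

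The hard part will be descending this square root along $\pi : \mathcal P/B \to \mathsf X$. Since $\pi$ is a smooth proper fibration with simply connected fibers, the pullback $\pi^{\ast}$ on Picard groups of elliptic cohomology is injective, so one must check that the candidate square root actually lies in the image of $\pi^{\ast}$. This is a statement about Weyl equivariance: the line bundle $\bigotimes_{\alpha > 0} \Theta(L_{\alpha})$ must be invariant (or invariant up to a canonical trivializable cocycle) under the Weyl group action arising from changing the Borel. Here is where the hypotheses on Dynkin type enter: for types $\mathrm{A}, \mathrm{C}, \mathrm{D}, \mathrm{E}_6, \mathrm{G}_2$ the relevant parity obstruction, controlled by the class of the half-sum of positive roots $\rho = \tfrac{1}{2}\sum_{\alpha > 0} \alpha$ inside the quotient of the weight lattice by the root lattice, vanishes, whereas it is nonzero precisely for types $\mathrm{B}, \mathrm{E}_7, \mathrm{E}_8, \mathrm{F}_4$. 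Carrying out this cohomological bookkeeping — presumably via a Leray spectral sequence for $\mathrm{Pic}$ of the fibration, together with a case-by-case verification of the $\rho$-lattice condition for the listed types — is the main technical obstacle of the proof.
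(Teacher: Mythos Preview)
Your splitting-principle approach is different from the paper's, and the descent step has a genuine gap: the obstruction you name is wrong. You claim the obstruction is the class of $\rho$ in the weight lattice modulo the root lattice, vanishing precisely for types $\mathrm{A},\mathrm{C},\mathrm{D},\mathrm{E}_6,\mathrm{G}_2$. But for $\mathrm{G}_2$, $\mathrm{F}_4$, and $\mathrm{E}_8$ the center is trivial, so the weight lattice equals the root lattice and the class of $\rho$ is automatically zero in all three cases---yet $\mathrm{F}_4$ and $\mathrm{E}_8$ are excluded while $\mathrm{G}_2$ is not. So the $\rho$-criterion cannot be what governs the descent. More structurally, the action of $W$ on $\mathbb{E}\otimes Q^\vee$ is not free, so $W$-invariance of $\bigotimes_{\alpha>0}\Theta(L_\alpha)$ as a line bundle (which does hold, since $\Theta(L_\alpha)\cong\Theta(L_{-\alpha})$) is not sufficient for descent to the quotient; one really has to compute $\mathrm{Pic}$ of the quotient.

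The paper bypasses the descent bookkeeping entirely. It first reduces to the universal case $\mathscr{E}_G=\mathbb{E}\otimes_{\mathbb Z}\mathbb{X}_*/W$, then for $G$ simple simply connected invokes Looijenga's theorem identifying $\mathbb{E}\otimes Q^\vee/W$ with the weighted projective space $\mathbb{P}(1,g_1,\ldots,g_r)$, whose Picard group is freely generated by $\mathscr{O}(m)$ with $m=\mathrm{lcm}(g_1,\ldots,g_r)$. Since $\Theta(\mathfrak g)\cong\mathscr{O}(2h^\vee)$, a square root exists iff $m\mid h^\vee$, and this holds exactly for the listed types by direct inspection. Your descent approach, even if repaired, would ultimately need the same Looijenga input to identify $\mathrm{Pic}$ of the quotient and see that the candidate $\bigotimes_{\alpha>0}\Theta(L_\alpha)$ (which has the right ``degree'' $h^\vee$ upstairs) lies in the image of pullback precisely when $m\mid h^\vee$. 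The general reductive case is then handled by a separate fibre-sequence argument over $\mathscr{E}_{H/\Gamma}$, not by your reduction modulo the connected center.
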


\begin{proof}
It suffices to prove the existence of a square root in the universal case, i.e., $\Theta(\mathfrak g)$ on~${\mbb{E}\otimes_{\mathbb Z}\mathbb X_*/W}$ has a square root, where $\mathbb X_*$ is the cocharacter lattice of $G$ and $W$ is the Weyl group and for any $G$-representation $V$, $\Theta(V)$ is the pullback of the $\Theta$-bundle along the induced map~${\mbb{E}\otimes_{\mathbb Z}\mathbb X_*/W\to S^{\operatorname{rk} V}\mbb{E}}$ between the moduli of semistable bundles on $\mbb{E}$.

 We first assume that $G$ is simple, then the cocharacter lattice $\mathbb X_*$ agrees with the coroot lattice $Q^\vee$. According to the result of Looijenga \cite{looijenga1976root}, $\mbb{E}\otimes_{\mathbb Z}Q^{\vee}/W$ is isomorphic to the weighted projective space $\mathbb P(1,g_1,\dots,g_r)$ such that
\begin{align*}
 \theta^\vee=\sum_{i=1}^r g_i\alpha_i^\vee,
\end{align*}
is the decomposition of the dual of the highest root in terms of simple coroots. It is well known that the Picard group of $\mathbb P(1,g_1,\dots,g_r)$ is freely generated by $\mscr O(m)$, where $m=\mathrm{lcm}(g_1,\dots,g_r)$ \cite{beltrametti1985itronduction}, and Looijenga also shows that $\Theta(\mathfrak g)$ is isomorphic to $\mscr O\bigl(2h^\vee\bigr)$ where $h^\vee=1+\sum_{i=1}^r g_i$ is the dual Coxeter number \cite{looijenga1976root}, therefore $\Theta(\mathfrak g)$ has a square root if and only if $h^\vee$ is divisible by $m$, which is true if and only if $G$ is of type $\mathrm{A}$, $\mathrm{C}$, $\mathrm{D}$, $\mathrm{E}_6$ or $\mathrm{G}_2$.

 In the general case, $G$ is the central quotient of $\widetilde G:=H\times G'$ by a finite abelian group $\Gamma$, where $H$ is a torus and $G'$ is the derived subgroup of $G$. Then $\mbb{E}\otimes_{\mathbb Z}\mathbb X_*/W$ is the quotient of~${\mbb{E}\otimes_{\mathbb Z}\widetilde{\mathbb X}_*/W}$ by $H^1(\mbb{E},\Gamma)$, where $\widetilde{\mathbb X}_*$ is the cocharacter lattice of $\widetilde G$, so there is a fiber sequence
\begin{gather}\label{eqn: fiber seq of Pic 1}
H^1(\mbb{E},\Gamma)^\vee\longrightarrow \mathrm{Pic}(\mbb{E}\otimes_{\mathbb Z}\mathbb X_*/W)\longrightarrow \mathrm{Pic}(\mbb{E}\otimes_{\mathbb Z}\widetilde{\mathbb X}_*/W)^{H^1(\mbb{E},\Gamma)},
\end{gather}
where $H^1(\mbb{E},\Gamma)^\vee$ is the Pontryagin dual of $H^1(\mbb{E},\Gamma)$. Note that we can identify $\mbb{E}\otimes_{\mathbb Z}\widetilde{\mathbb X}_*/W=\mscr E_H\times \bigl(\mbb{E}\otimes_{\mathbb Z}Q^{\vee}/W\bigr)$, where $Q^{\vee}$ is the coroot lattice of $G'$. We also note that there is an \'etale locally trivial fibration $\mbb{E}\otimes_{\mathbb Z}\mathbb X_*/W\to \mscr E_{H/\Gamma}$ with fibers isomorphic to $\mbb{E}\otimes_{\mathbb Z}Q^{\vee}/W$, such that the trivial fibration $\mscr E_H\times \bigl(\mbb{E}\otimes_{\mathbb Z}Q^{\vee}/W\bigr)$ is the pullback along the quotient map $\mscr E_{H}\to \mscr E_{H/\Gamma}$. The fiber sequence \eqref{eqn: fiber seq of Pic 1} is compatible with its counterpart coming from the quotient map between the bases $\mscr E_H\to \mscr E_{H/\Gamma}$. In other words, we have the commutative diagram of fiber sequences
\begin{equation*}
\begin{tikzcd}
H^1(\mbb{E},\Gamma)^\vee \arrow[r]\arrow[d,equal] & \mathrm{Pic}(\mscr E_{H/\Gamma}) \arrow[r]\arrow[d] & \mathrm{Pic}(\mscr E_H)^{H^1(\mbb{E},\Gamma)} \arrow[d]\\
H^1(\mbb{E},\Gamma)^\vee \arrow[r] & \mathrm{Pic}(\mbb{E}\otimes_{\mathbb Z}\mathbb X_*/W) \arrow[r] & \mathrm{Pic}\bigl(\mbb{E}\otimes_{\mathbb Z}\widetilde{\mathbb X}_*/W\bigr)^{H^1(\mbb{E},\Gamma)}.
\end{tikzcd}
\end{equation*}
Now the image of $\Theta(\mathfrak g)$ in \smash{$\mathrm{Pic}\bigl(\mbb{E}\otimes_{\mathbb Z}\widetilde{\mathbb X}_*/W\bigr)^{H^1(\mbb{E},\Gamma)}$} is the $\mscr O_{\mscr E_H}\boxtimes \Theta(\mathfrak g')$, which has a square root~\smash{$\mscr O_{\mscr E_H}\boxtimes \Theta(\mathfrak g')^{\otimes\frac{1}{2}}$} by the previous step. The square root is invariant under the action $H^1(\mbb{E},\Gamma)$ since $H^1(\mbb{E},\Gamma)$ acts trivially on $\mathrm{Pic}\bigl(\mbb{E}\otimes_{\mathbb Z}Q^{\vee}/W\bigr)$. We lift such square root to $\mscr L\in\mathrm{Pic}(\mbb{E}\otimes_{\mathbb Z}\mathbb X_*/W)$ so that $\Theta(\mathfrak g)\cong \mscr L^{\otimes 2}\otimes \mscr K$ for some $\mscr K\in H^1(\mbb{E},\Gamma)^\vee$. Since $H^1(\mbb{E},\Gamma)^\vee\in \mathrm{Pic}^0(\mscr E_{H/\Gamma})$, so we can find \smash{$\mscr K_1\in \mathrm{Pic}^0(\mscr E_{H/\Gamma})$} such that \smash{$\mscr K\cong \mscr K_1^{\otimes 2}$}, hence we have \smash{$\Theta(\mathfrak g)\cong (\mscr L\otimes \mscr K_1)^{\otimes 2}$}.
\end{proof}

\begin{rmk}
When $G=\mathrm{GL}_n$, there is an alternative proof of the above lemma. In this case~${\mathrm{adj}(\mcal P)\cong\mathrm{End}(\mcal V)}$ where $\mcal V$ is a $\mathsf T$-equivariant vector bundle. Notice that $\Theta\bigl(\mcal V\mcal V^\vee\bigr)$ is isomorphic to pullback $\Delta^*\Theta\bigl(\mcal V_1\mcal V_2^\vee\bigr)$ along the diagonal morphism $\mathrm{Ell}_{\mathsf T}(\mathsf X)\hookrightarrow \mathrm{Ell}_{\mathsf T}(\mathsf X)\times \mathrm{Ell}_{\mathsf T}(\mathsf X)$, where~${\mcal V_i=\mathrm{pr}_i^*\mcal V}$ are pullback of $\mcal V$ along projections to the first and the second component. Moreover, $\mathrm{(12)}^*\Theta\bigl(\mcal V_1\mcal V_2^\vee\bigr)\cong \Theta\bigl(\mcal V_1^\vee\mcal V_2\bigr)$ which is isomorphic to $\Theta\bigl(\mcal V_1\mcal V_2^\vee\bigr)$, where (12) is the permutation of factors, thus we can apply \cite[Lemma 6.1]{nekrasov2014membranes} to conclude that $\Theta\bigl(\mcal V\mcal V^\vee\bigr)$ admits a~square root.
\end{rmk}

\begin{ex}\label{ex: pol for quotient of Hamiltonian reduction}
Suppose that we are in the situation of Section~\ref{subsubsec: Example: Quotient of Hamiltonian Reduction}, namely, $G=G_{\mathrm{ev}}\times G_{\mathrm{odd}}$ acts on a representation $\mcal R$ and $\mu_{\mathrm{ev}}\colon\mcal R\oplus\mcal R^\vee\to \mathfrak g_{\mathrm{ev}}$ is the moment map for $G_{\mathrm{ev}}$ action. Assume moreover that $G_{\mathrm{odd}}$ is the reductive group whose derived subgroup is simply connected and every simple constituent is of type $\mathrm{A}$, $\mathrm{C}$, $\mathrm{D}$, $\mathrm{E}_6$, or $\mathrm{G}_2$. Then $\mathsf X:=\mu^{-1}_{\mathrm{ev}}(0)^{\zeta-ss}/\!\!/G$ is a partially-polarized variety with
$\mathrm{Pol}_{\mathsf X}=\mathrm{Pol}^+_{\mathsf X}=\mscr R-\mathrm{adj}(\mcal P_b)$,
where $\mscr R$ is the bundle associated to the representation $\mcal R$ and $\mcal P_b\times\mcal P_f$ is the principal $G_{\mathrm{ev}}\times G_{\mathrm{odd}}=G$ bundle $\mu^{-1}_{\mathrm{ev}}(0)^{\zeta-ss}\to \mathsf X$. We have the decomposition
$T_{\mathsf X}=\mathrm{Pol}_{\mathsf X}+\hbar^{-1}\mathrm{Pol}_{\mathsf X}^\vee-\mathrm{adj}(\mcal P_f)\in K_{\mathsf T}(\mathsf X)$.
\end{ex}

\begin{ex}\label{ex: partial polarization for the quivers}
In the quiver case, adjoint bundles can be written in terms of tautological bundles. Let $Q$ be a quiver with decomposition of nodes $Q_0=Q^{\mathrm{ev}}_0\sqcup Q^{\mathrm{odd}}_0$ into even and odd parts. In this case, $\mathsf T=\mathsf A\times \mathbb C^{\times}_{\hbar}$, where $\mathsf A=T_W$ is the maximal torus of the flavour group. Our choice of $\mathbb C^{\times}_{\hbar}$-weight convention is as follows. We split the set of arrows into two parts~${Q_1=Q_1^+\sqcup Q_1^-}$, such that
\begin{itemize}\itemsep=0pt
 \item[$(\star)$] All arrows connected to an even node are from the same group, i.e., either all of them are in $Q_1^+$ or all of them are in $Q_1^-$. Denote by $Q_0^{\mathrm{ev},\pm}$ the subset of $Q^{\mathrm{ev}}_0$ consisting of even nodes such that arrows connected to them are all in $Q_1^{\pm}$.
\end{itemize}
Then we have $Q^{\mathrm{ev}}_0=Q_0^{\mathrm{ev},+}\sqcup Q_0^{\mathrm{ev},-}$ by our assumption.
We define the action of $\mathbb C^{\times}_{\hbar}$ on the doubled quiver $\overline{Q}$ by assigning the $\mathbb C^{\times}_{\hbar}$-weights as follows:
\begin{center}\renewcommand{\arraystretch}{1.2}
\begin{tabular}{|c | c | c | c | c | c|}
 \hline
 & $x_a$ & $\widetilde{x}_a,a\in Q^{\pm}_1$ & $\alpha_i$ & $\widetilde{\alpha}_i,i\in Q^{\mathrm{odd}}_0$ & $\widetilde{\alpha}_i,i\in Q^{\mathrm{ev},\pm}_0$ \\ [0.5ex]
 \hline
 $\mathbb C^{\times}_{\hbar}$-weight & $0$ & $\mp 1$ & $0$ & $-1$ & $\mp 1$ \\ [1ex]
 \hline
\end{tabular}
\end{center}
Let $\zeta$ be generic stability parameter, then $\mcal M^{\zeta}(\mathbf v,\mathbf w)$ is smooth. Denote by $\mcal V_i$ the tautological bundle on $\mcal M^{\zeta}(\mathbf v,\mathbf w)$ corresponding to the $i$-th gauge node, and denote by $\mcal W_i$ the tautological bundle on $\mcal M^{\zeta}(\mathbf v,\mathbf w)$ corresponding to the $i$-th framing, then $\mathrm{Pol}_{\mcal M}=\mathrm{Pol}^+_{\mcal M}+\mathrm{Pol}^-_{\mcal M}$ such that
\begin{gather}
\mathrm{Pol}^+_{\mcal M}:=\sum_{i\in Q_0\setminus Q^{\mathrm{ev},-}_0}\mcal W_i\mcal V_i^\vee+\sum_{a\in Q^+_1}\mcal V_{h(a)}\mcal V_{t(a)}^\vee-\sum_{j\in Q^{\mathrm{ev},+}_0}\mcal V_j\mcal V_j^\vee,\nonumber\\
\mathrm{Pol}^-_{\mcal M}:=\sum_{i\in Q^{\mathrm{ev},-}_0}\mcal W_i\mcal V_i^\vee+\sum_{a\in Q^-_1}\mcal V_{h(a)}\mcal V_{t(a)}^\vee-\sum_{j\in Q^{\mathrm{ev},-}_0}\mcal V_j\mcal V_j^\vee\label{eqn: pol in the quiver case}
\end{gather}
is a partial polarization on $\mcal M(\mathbf v,\mathbf w)$. In fact,
\[
 T_{\mcal M^{\zeta}(\mathbf v,\mathbf w)}=\mathrm{Pol}^+_{\mcal M}+\hbar^{-1}\bigl(\mathrm{Pol}^+_{\mcal M}\bigr)^\vee+\mathrm{Pol}^-_{\mcal M}+\hbar(\mathrm{Pol}^-_{\mcal M})^\vee-\sum_{i\in Q^{\mathrm{odd}}_0}\mcal V_i\mcal V_i^\vee,
\]
in $K_{\mathsf T}\bigl(\mcal M^\zeta(\mathbf v,\mathbf w)\bigr)$.
\end{ex}

\begin{ex}\label{ex: bundles over flag variety}
Let $G$ be a semisimple and simply connected algebraic group, $P$ be a parabolic subgroup, and $U\subset P$ be the unipotent radical. Let $M$ be a connected normal subgroup of $P$ such that $M$ contains $U$, and $\mathfrak m$ be its Lie algebra. Denote by $K=P/M$, which is a quotient of the Levi $L\cong P/U$. Take $\mathsf X=G{\times}^P\mathfrak{m}$, and $\mathsf T=\mathsf A\times \mathbb C^{\times}_{\hbar}$ acts on $\mathsf X$ such that $\mathsf A$ is maximal torus of $G$ and $\mathbb C^{\times}_{\hbar}$ only acts on the fiber $\mathfrak{m}$ with weight $\hbar^{-1}$. Then tangent bundle of $\mathsf X$ decomposes into
\[
T_{\mathsf X}=T_{G/P}+\hbar^{-1}T^\vee_{G/P}+\mscr L-\mscr K \in K_{\mathsf T}(\mathsf X),
\]
where $T_{G/P}$ is the pullback of the tangent bundle of $G/P$ via the canonical projection $G{\times}^P\mathfrak{m}\to G/P$, and $\mscr L$ (resp.\ $\mscr K$) is the associated adjoint bundle of the principal $L$-bundle (resp.\ $K$-bundle) induced from the principal $P$-bundle $G{\times}\mathfrak{m}\to \mathsf X$. Assume that every simple constituent of $L$ is of type $\mathrm{A}$, $\mathrm{C}$, $\mathrm{D}$, $\mathrm{E}_6$, or $\mathrm{G}_2$, then the same holds for $K$. Therefore, $T_{G/P}$ is a partial polarization of $\mathsf X$ under this assumption.
\end{ex}

\begin{ex}\label{ex: bundles over flag variety_type A}
A special case of the previous example is when $G=\mathrm{SL}_n$ and $P$ is the subgroup fixing a flag $F_{\bullet}=(0=F_0\subset \mathcal F_1\subset\cdots\subset F_{k-1}\subset F_k=\mathbb C^n)$, we choose the Lie algebra $\mathfrak m_{s}$ according to a marking number $s\in \{\pm 1\}^k$ such that
\[
 \mathfrak m_{s}(F_i)\subset
 \begin{cases}
 F_{i-1}, & s_i=+1,\\
 F_{i}, & s_i=-1,
 \end{cases}
\]
then $\mathsf X=G{\times}^P\mathfrak{m}_s$ has a partial polarization for any choice of marking $s$. These varieties show up in the work of Rim\'anyi and Rozansky \cite[Section~2.3.3]{RimanyiRozansky202105} as the moduli space of certain quiver-like diagram construction.
\end{ex}

\subsection{Chambers and attracting sets}\label{subsec: Chambers and Attracting Sets}
Let $\mathrm{Cochar}(\mathsf A)$ be the cocharacter lattice of $\mathsf A$, and we denote
$\mathfrak a_{\mathbb R}:=\mathrm{Cochar}(\mathsf A)\otimes_{\mathbb Z}\mathbb R\subset \mathfrak a=\mathrm{Lie}(\mathsf A)$.
We define the \emph{roots} of the pair $(\mathsf X,\mathsf A)$ to be the set of weights $\{\alpha\}$ appearing in the normal bundle to $\mathsf X^{\mathsf A}$. A \emph{chamber} is defined to be a connected component of the complement of hyperplanes cut out by roots, i.e.,
\begin{align}
 \mathfrak a_{\mathbb R}\big\backslash \bigcup_{\alpha\in \mathrm{roots}}\alpha^{\perp}=\bigsqcup_i \mathfrak C_i. \label{chamberDefM}
\end{align}
For every $\sigma\in \mathfrak C_i\cap \mathrm{Cochar}(\mathsf A)$, we have $\mathsf X^{\sigma}=\mathsf X^{\mathsf A}$.

 Let $\mathfrak C$ be a chamber, then we say that a root $\alpha$ is attracting (resp.\ repelling) if $\alpha$ is positive (resp.\ negative) on $\mathfrak C$. For a connected component $\mathcal F\subset\mathsf X^{\mathsf A}$, we define the attracting part~\smash{$N^+_{\mathsf X/\mathcal F}$} of normal bundle to be the span of attracting root space in \smash{$N_{\mathsf X/\mathcal F}$}; similarly, the repelling part~\smash{$N^-_{\mathsf X/\mathcal F}$} is defined to be the span of repelling root space in \smash{$N_{\mathsf X/\mathcal F}$}. Then we define attracting submanifold~$\mathsf{Attr}_{\mathfrak C}(\mathcal F)$ to be the subset
\begin{align}
 \{x\in \mathsf X \mid \lim_{t\to 0}\sigma(t)\cdot x\in \mathcal F\}, \label{attrDef}
\end{align}
for some $\sigma\in \mathfrak C\cap \mathrm{Cochar}(\mathsf A)$. $\mathsf{Attr}_{\mathfrak C}(\mathcal F)$ does not depend on the choice of $\sigma$, in fact, it is the exponential of attracting part of the normal bundle \smash{$N^+_{\mathsf X/\mathcal F}$}. Define the union ${\mathsf{Attr}_{\mathfrak C}:=\coprod_{\mathcal F}\mathsf{Attr}_{\mathfrak C}(\mathcal F)}$, then $\mathsf{Attr}_{\mathfrak C}$ admits an immersion
$\mathsf{Attr}_{\mathfrak C}\hookrightarrow \mathsf X\times \mathsf X^{\mathsf A}$, $ x\mapsto (x,\lim_{t\to 0}\sigma(t)\cdot x)$.
In general, $\mathsf{Attr}_{\mathfrak C}$ is not closed in $\mathsf X\times \mathsf X^{\mathsf A}$, since $\lim_{t\to \infty}\sigma(t)\cdot x$, if exists, is not in $\mathsf{Attr}_{\mathfrak C}$. We define~\smash{$\mathsf{Attr}^{f}_{\mathfrak C}$} to be the set of pairs $(x,y)$ that belongs to a chain of closures of attracting $\mathsf A$-orbits. \smash{$\mathsf{Attr}^{f}_{\mathfrak C}$} is closed in~${\mathsf X\times \mathsf X^{\mathsf A}}$.

A partial order $\preceq$ on the set of connected components of $\mathsf X^{\mathsf A}$ is generated by letting
\[
 \mathcal F_j\subset \overline{\mathsf{Attr}_{\mathfrak C}(\mathcal F_i)}\ \Rightarrow \ \mathcal F_j\preceq \mathcal F_i.
\]
We define closed subvarieties $\mathsf{Attr}^{<}_{\mathfrak C}\subset \mathsf{Attr}^{\le}_{\mathfrak C}\subset \mathsf X\times \mathsf X^{\mathsf A}$ by
\[
 \mathsf{Attr}^{<}_{\mathfrak C}:=\bigcup_{\mathcal F_j\prec \mathcal F_i} \mathsf{Attr}_{\mathfrak C}(\mathcal F_j)\times \mathcal F_i,
 \qquad
 \mathsf{Attr}^{\le}_{\mathfrak C}:=\bigcup_{\mathcal F_j\preceq \mathcal F_i} \mathsf{Attr}_{\mathfrak C}(\mathcal F_j)\times \mathcal F_i.
\]
Note that $\mathsf{Attr}^{f}_{\mathfrak C}$ is a closed subvariety of \smash{$\mathsf{Attr}^{\le}_{\mathfrak C}$} and
\smash{$\mathsf{Attr}^{f}_{\mathfrak C}\cap \bigl(\mathsf{Attr}^{\le}_{\mathfrak C}\setminus\mathsf{Attr}^{<}_{\mathfrak C}\bigr)=\mathsf{Attr}_{\mathfrak C}$}.

\begin{ex}
In the case when $\mathsf X$ is constructed from quiver representation, there is a nice description of attracting sets, due to Andrei Negu\c{t} \cite{neguct2023quantum}. The setting in loc.\ cit.\ is for a~Nakajima quiver variety (i.e., $G_{\mathrm{odd}}=1$), nevertheless the arguments are applicable to the situation when $G_{\mathrm{odd}}\neq 1$. We record the result here. Consider $\mathsf X=\mcal M^{\zeta_+}(\mathbf v,\mathbf w)$ for a quiver $Q$ with dimension vectors $(\mathbf v,\mathbf w)$ and the stability condition $\zeta_+$, let $\mathsf A=\mathbb C^{\times}_a$ that acts on framing vector space as \smash{$W=a W^{(1)}\oplus W^{(2)}$}, so that the framing dimension vector decomposes accordingly~${mathbf w=\mathbf w^{(1)}+\mathbf w^{(2)}}$, then
\[
\mathsf X^{\mathsf A}=\bigsqcup_{\mathbf v^{(1)}+\mathbf v^{(2)}=\mathbf v}\mcal M^{\zeta}\bigl(\mathbf v^{(1)},\mathbf w^{(1)}\bigr)\times\mcal M^{\zeta}\bigl(\mathbf v^{(2)},\mathbf w^{(2)}\bigr).
\]
 We choose the chamber $\mathfrak C$ such that $a>0$, then the same argument in \cite[Lemma 3.10]{neguct2023quantum} can be applied, and $\mathsf{Attr}_{\mathfrak C}$ consists of quiver representations $(V,W)$, $\bigl(V^{(1)},W^{(1)}\bigr)$, $\bigl(V^{(2)},W^{(2)}\bigr)$ in $\mathsf X\times \mathsf X^{\mathsf A}$ that fits into exact sequences
\smash{$0\to V^{(1)}\hookrightarrow V\twoheadrightarrow V^{(2)}\to 0$},
that commutes with $\bigl(x_a,\widetilde{x}_a\bigr)$ and also commutes with $(\alpha_i,\widetilde{\alpha}_i)$ through the splitting exact sequence
\smash{$0\to W^{(1)}\hookrightarrow W\twoheadrightarrow W^{(2)}\to 0$}.
Moreover, the same argument in \cite[Lemma 3.11]{neguct2023quantum} can be applied (since the moment map equation is not essential in the proof), and \smash{$\mathsf{Attr}^f_{\mathfrak C}$} consists of quiver representations $(V,W)$, $\bigl(V^{(1)},W^{(1)}\bigr)$, $\bigl(V^{(2)},W^{(2)}\bigr)$ in $\mathsf X\times \mathsf X^{\mathsf A}$ that fits into the complex
\smash{$V^{(1)}\overset{f}{\longrightarrow} V\overset{g}{\longrightarrow} V^{(2)}$},
such that
\begin{itemize}\itemsep=0pt
 \item the composition $g\circ f=0$, and
 \item $f$ and $g$ commutes with $(x_a,\widetilde{x}_a)$ and also commutes with $(\alpha_i,\widetilde{\alpha}_i)$ through the splitting exact sequence
\smash{$ 0\to W^{(1)}\hookrightarrow W\twoheadrightarrow W^{(2)}\to 0$}.
\item there exist filtrations of quiver representations
\begin{align*}
 &V^{(1)}=E^0\twoheadrightarrow E^1\twoheadrightarrow\cdots \twoheadrightarrow E^{k-1}\twoheadrightarrow E^k=\mathrm{Im}(f),\\
 &\mathrm{Ker}(g)=F^k\twoheadrightarrow F^{k-1}\twoheadrightarrow\cdots \twoheadrightarrow F^{1}\twoheadrightarrow F^0= V^{(2)},
\end{align*}
such that the kernels of $E^l\twoheadrightarrow E^{l+1}$ and $F^{l+1}\twoheadrightarrow F^l$ are isomorphic quiver representations.
\end{itemize}
In particular,
\[
 \mcal M^{\zeta}\bigl(\mathbf v^{(1)},\mathbf w^{(1)}\bigr)\times\mcal M^{\zeta}\bigl(\mathbf v^{(2)},\mathbf w^{(2)}\bigr)\preceq \mcal M^{\zeta}\bigl(\mathbf v^{(1)}+\mathbf u,\mathbf w^{(1)}\bigr)\times\mcal M^{\zeta}\bigl(\mathbf v^{(2)}-\mathbf u,\mathbf w^{(2)}\bigr),
\]
for all $\mathbf u\in \mathbb N^{Q_0}$.
\end{ex}

The restriction of partial polarization $\mathrm{Pol}_{\mathsf X}$ to $\mathsf X^{\mathsf A}$ decomposes according to the chamber $\mathfrak C$ as
\[
 \smash{\mathrm{Pol}_{\mathsf X}|_{\mathsf X^{\mathsf A}}=\mathrm{Pol}_{\mathsf X}|_{\mathsf X^{\mathsf A},>0}+\mathrm{Pol}_{\mathsf X}|_{\mathsf X^{\mathsf A},\mathrm{fixed}}+\mathrm{Pol}_{\mathsf X}|_{\mathsf X^{\mathsf A},<0}.}
\]

\begin{dfn}
We define the index bundle
\smash{$\mathsf{ind}=\mathrm{Pol}^+_{\mathsf X}|_{\mathsf X^{\mathsf A},>0}-\mathrm{Pol}^-_{\mathsf X}|_{\mathsf X^{\mathsf A},>0}\in K_{\mathsf T}\bigl(\mathsf X^{\mathsf A}\bigr)$},
the alternating sum of attracting part of partial polarization.
\end{dfn}

\begin{lem}
\smash{$\mathrm{Pol}_{\mathsf X}|_{\mathsf X^{\mathsf A},\mathrm{fixed}}$} is a partial polarization on $\mathsf X^{\mathsf A}$.
\end{lem}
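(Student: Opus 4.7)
The plan is to exhibit $(\mathrm{Pol}^{+}_{\mathsf X}|_{\mathsf X^{\mathsf A},\mathrm{fixed}},\ \mathrm{Pol}^{-}_{\mathsf X}|_{\mathsf X^{\mathsf A},\mathrm{fixed}},\ \mscr E|_{\mathsf X^{\mathsf A},\mathrm{fixed}})$ as the data of a partial polarization on $\mathsf X^{\mathsf A}$ in the sense of Definition \ref{dfn: partial polarization}. Concretely, I would verify: (i) the corresponding decomposition of $T_{\mathsf X^{\mathsf A}}$, (ii) self-duality of $\mscr E|_{\mathsf X^{\mathsf A},\mathrm{fixed}}$, and (iii) existence of a square root of $\Theta(\mscr E|_{\mathsf X^{\mathsf A},\mathrm{fixed}})$ on $\mathrm{Ell}_{\mathsf T}(\mathsf X^{\mathsf A})$.

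The first two points are essentially formal. The key observation is that extracting the $\mathsf A$-fixed summand in $K_{\mathsf T}$ commutes with every operation in sight: with multiplication by $\hbar^{\pm 1}$, because $\mathsf A\subset \ker(\mathsf T\to \mathbb C^{\times}_{\hbar})$ makes $\hbar$ an $\mathsf A$-invariant character; with $K$-theoretic dualization, because dualization negates $\mathsf A$-weights and so preserves the zero-weight part, giving $(V^{\vee})|_{\mathsf X^{\mathsf A},\mathrm{fixed}}=(V|_{\mathsf X^{\mathsf A},\mathrm{fixed}})^{\vee}$; and of course with restriction to $\mathsf X^{\mathsf A}$. Combined with the standard identification $T_{\mathsf X}|_{\mathsf X^{\mathsf A},\mathrm{fixed}}=T_{\mathsf X^{\mathsf A}}$, term-by-term extraction of the $\mathsf A$-fixed summand from the partial-polarization decomposition of $T_{\mathsf X}$ immediately produces the decomposition required in (i), and the same principle applied to $\mscr E=\mscr E^{\vee}$ gives (ii).

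The main content is (iii). Let $\mscr L$ be the given square root of $\Theta(\mscr E)$ on $\mathrm{Ell}_{\mathsf T}(\mathsf X)$. Pulling back along the closed embedding $\mathrm{Ell}_{\mathsf T}(\mathsf X^{\mathsf A})\hookrightarrow \mathrm{Ell}_{\mathsf T}(\mathsf X)$ gives a line bundle $\mscr L|_{\mathsf X^{\mathsf A}}$ whose square equals $\Theta(\mscr E|_{\mathsf X^{\mathsf A},\mathrm{fixed}})\otimes \Theta(\mscr E|_{\mathsf X^{\mathsf A},\mathrm{mov}})$, so it suffices to exhibit a square root of the moving factor. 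Choosing any chamber $\mathfrak C\subset \mathfrak a_{\mathbb R}$ to polarize the non-zero $\mathsf A$-weights, the self-duality $\mscr E=\mscr E^{\vee}$ forces $\mscr E|_{\mathsf X^{\mathsf A},\mathrm{mov}}=\mscr F+\mscr F^{\vee}$ in $K_{\mathsf T}(\mathsf X^{\mathsf A})$, where $\mscr F$ collects the $\mathfrak C$-attracting weights; hence $\Theta(\mscr E|_{\mathsf X^{\mathsf A},\mathrm{mov}})=\Theta(\mscr F)\otimes \Theta(\mscr F^{\vee})$. The standard pair-up identity for elliptic Thom classes identifies this line bundle with $\Theta(\mscr F)^{\otimes 2}$ up to a twist by $\det(\mscr F)^{-1}$, so the remaining task is to show that this twist is a square in $\mathrm{Pic}(\mathrm{Ell}_{\mathsf T}(\mathsf X^{\mathsf A}))$. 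This is the main technical obstacle; I would handle it along the lines of Lemma \ref{lem: square root}, exploiting that $\mscr L$ is only determined up to a $2$-torsion class on $\mathrm{Ell}_{\mathsf T}(\mathsf X)$, together with the freedom to shift $\mscr F$ by self-dual pieces, to arrange that the restriction of $\mscr L$ factors as $\sqrt{\Theta(\mscr E|_{\mathsf X^{\mathsf A},\mathrm{fixed}})}\otimes \Theta(\mscr F)$. The desired square root of $\Theta(\mscr E|_{\mathsf X^{\mathsf A},\mathrm{fixed}})$ is then read off as $\mscr L|_{\mathsf X^{\mathsf A}}\otimes \Theta(\mscr F)^{-1}$.
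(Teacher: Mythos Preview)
Your overall strategy matches the paper's: verify the decomposition of $T_{\mathsf X^{\mathsf A}}$, self-duality of the fixed part of $\mscr E$, and existence of a square root of $\Theta(\mscr E|_{\mathsf X^{\mathsf A},\mathrm{fixed}})$. Parts (i) and (ii) are fine and essentially identical to the paper's argument.

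For (iii), however, you introduce a spurious obstruction. In equivariant elliptic cohomology one has the clean identity $\Theta(V^{\vee})\cong\Theta(V)$ as line bundles (only the canonical section changes by a sign); there is no determinant twist. This is recorded in the paper's appendix on elliptic cohomology and follows from the fact that the theta divisor $D_{\Theta}=\{0\}+S^{r-1}\mathbb E$ on $S^r\mathbb E$ is invariant under the inversion $[-1]$. Hence, with your notation, $\Theta(\mscr E|_{\mathsf X^{\mathsf A},\mathrm{mov}})=\Theta(\mscr F)\otimes\Theta(\mscr F^{\vee})\cong\Theta(\mscr F)^{\otimes 2}$ outright, and $\mscr L|_{\mathsf X^{\mathsf A}}\otimes\Theta(\mscr F)^{-1}$ is the desired square root of $\Theta(\mscr E|_{\mathsf X^{\mathsf A},\mathrm{fixed}})$. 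The paper's proof does exactly this (phrased as $\Theta(\mscr E^{-\lambda})\cong\Theta((\mscr E^{\lambda})^{\vee})\cong\Theta(\mscr E^{\lambda})$, so the nonzero-weight contributions pair into squares). Your proposed detour through a $\det(\mscr F)^{-1}$ twist and an appeal to Lemma~\ref{lem: square root} is unnecessary; once you use the correct identity, the argument terminates immediately.
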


\begin{proof}
Let $T_{\mathsf X}=\mathrm{Pol}^+_{\mathsf X}+\hbar^{-1}\bigl(\mathrm{Pol}^+_{\mathsf X}\bigr)^\vee+\mathrm{Pol}^-_{\mathsf X}+\hbar(\mathrm{Pol}^-_{\mathsf X})^\vee+\mscr E$, then there exists a decomposition of $\mscr E$ into eigenspaces of $\mathsf A$-actions
\begin{align*}
 \mscr E|_{\mathsf X^{\mathsf A}}=\sum_{\lambda\in \mathrm{Cochar}(\mathsf A)} \mscr E^{\lambda}.
\end{align*}
By the duality $\mscr E\cong \mscr E^\vee$, we have $\mscr E^{-\lambda}\cong \bigl(\mscr E^{\lambda}\bigr)^\vee$, therefore we have
\begin{align*}
 \Theta\bigl(\mscr E^0\bigr)=\Theta(\mscr E|_{\mathsf X^{\mathsf A}})\otimes \bigotimes_{\lambda(\mathfrak C)>0}\Theta\bigl(\mscr E^{\lambda}\bigr)^{\otimes 2}.
\end{align*}
In particular, we see that $\Theta\bigl(\mscr E^0\bigr)$ admits a square root. Moreover, we have the decomposition of the tangent bundle of $\mathsf X^{\mathsf A}$ in $K_{\mathsf T}\bigl(\mathsf X^{\mathsf A}\bigr)$
\begin{align*}
 T_{\mathsf X^{\mathsf A}}=\mathrm{Pol}^+_{\mathsf X}|_{\mathsf X^{\mathsf A},\mathrm{fixed}}+\hbar^{-1}\left(\mathrm{Pol}^+_{\mathsf X}|_{\mathsf X^{\mathsf A},\mathrm{fixed}}\right)^\vee+\mathrm{Pol}^-_{\mathsf X}|_{\mathsf X^{\mathsf A},\mathrm{fixed}}+\hbar\left(\mathrm{Pol}^-_{\mathsf X}|_{\mathsf X^{\mathsf A},\mathrm{fixed}}\right)^\vee+\mscr E^0,
\end{align*}
which implies that $\mathrm{Pol}_{\mathsf X}|_{\mathsf X^{\mathsf A},\mathrm{fixed}}$ is a partial polarization on $\mathsf X^{\mathsf A}$.
\end{proof}

We denote by $\mathrm{Pol}_{\mathsf X^{\mathsf A}}$ the restriction of partial polarization $\mathrm{Pol}_{\mathsf X}|_{\mathsf X^{\mathsf A},\mathrm{fixed}}$ on $\mathsf X^{\mathsf A}$.

\subsection{Attractive line bundle}\label{subsec: Attractive Line Bundle}
This is the most crucial part of our construction of elliptic stable envelopes for partially-polarized varieties. An important ingredient that we need from \cite{okounkov2021inductive} is the following.
\begin{dfn}
A line bundle $\mscr L$ on $\mathrm{Ell}_{\mathsf T}(\mathsf X)$ is called attractive for a given chamber $\mathfrak C$ if
\smash{$ \deg_{\mathsf A}\mscr L=\deg_{\mathsf A}\Theta(N^-_{\mathsf X/\mathsf X^{\mathsf A}})$},
where $\deg_{\mathsf A}\mscr L$ is the degree of the restriction of $\mscr L$ to the fiber along the projection $\mathrm{Ell}_{\mathsf T}\bigl(\mathsf X^{\mathsf A}\bigr)\to \mathrm{Ell}_{\mathsf T/\mathsf A}\bigl(\mathsf X^{\mathsf A}\bigr)$.\footnote{The degree of a line bundle $\mscr L$ on an abelian variety $\mcal A$ is defined as the image of $\mscr L$ in the N\'eron--Severi group~${\mathrm{NS}(\mcal A)=\mathrm{Pic}(\mcal A)/\mathrm{Pic}^0(\mcal A)}$, see Appendix \ref{subsec: Equivariant Elliptic Cohomology} for details.} The $\deg_{\mathsf A}$ takes value in $H^0\bigl(\mathsf X^{\mathsf A},S^2\mathrm{Char}(\mathsf A)\bigr)$.
\end{dfn}

\begin{dfn}
From now on, we fix a square root for $\Theta(\mscr E)$ and define $\mscr S_{\mathsf X}:=\Theta(\mathrm{Pol}_{\mathsf X})\otimes \smash{\Theta(\mscr E)^{\otimes \frac{1}{2}}}$.
\end{dfn}

\begin{prop}\label{prop: attractive line bundle}
$\mscr S_{\mathsf X}$ is an attractive line bundle for every chamber $\mathfrak C$ of $\mathrm{Cochar}(\mathsf A)$.
\end{prop}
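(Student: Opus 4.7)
The strategy is to reduce the identity $\deg_{\mathsf A}\mscr S_{\mathsf X}=\deg_{\mathsf A}\Theta(N^-_{\mathsf X/\mathsf X^{\mathsf A}})$ to a comparison of quadratic forms on $\mathrm{Char}(\mathsf A)$, by decomposing every $K$-theory class on $\mathsf X^{\mathsf A}$ into $\mathsf A$-weight spaces. The key input, standard in the theta-bundle formalism of \cite{okounkov2021inductive}, is that for a $\mathsf T$-equivariant class $V=\sum_\lambda V^\lambda$ on $\mathsf X^{\mathsf A}$, $\deg_{\mathsf A}\Theta(V)$ is the quadratic form $\sum_\lambda\lambda^2\mathrm{rk}(V^\lambda)$; in particular, the weight $\lambda=0$ contributes nothing, and the form is insensitive to twists by $\mathsf A$-trivial characters such as $\hbar$ (which is crucial here, since $\mathsf A\subset\ker(\mathsf T\to\mathbb C^\times_\hbar)$).

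First I would unpack the repelling normal bundle from the partial-polarization decomposition. Restricting to $\mathsf X^{\mathsf A}$ and writing $\mathrm{Pol}^\pm_{\mathsf X}|_{\mathsf X^{\mathsf A}}=\sum_\lambda(\mathrm{Pol}^\pm)^\lambda$, $\mscr E|_{\mathsf X^{\mathsf A}}=\sum_\lambda\mscr E^\lambda$, and separating the non-fixed weights by sign against $\mathfrak C$, one obtains
\[
N^-_{\mathsf X/\mathsf X^{\mathsf A}}=\sum_{\lambda(\mathfrak C)<0}\bigl[(\mathrm{Pol}^+)^\lambda+(\mathrm{Pol}^-)^\lambda+\mscr E^\lambda\bigr]+\sum_{\lambda(\mathfrak C)>0}\bigl[\hbar^{-1}\bigl((\mathrm{Pol}^+)^\lambda\bigr)^\vee+\hbar\bigl((\mathrm{Pol}^-)^\lambda\bigr)^\vee\bigr].
\]
The dualizations in the second sum flip the sign of the $\mathsf A$-weight but preserve the $\lambda^2$ contribution, and the $\hbar^{\pm 1}$ factors are invisible to $\deg_{\mathsf A}$.

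Next I would compute the two sides directly. For the left side,
\[
\deg_{\mathsf A}\mscr S_{\mathsf X}=\deg_{\mathsf A}\Theta(\mathrm{Pol}_{\mathsf X})+\tfrac12\deg_{\mathsf A}\Theta(\mscr E)=\sum_{\lambda\ne 0}\lambda^2\bigl(\mathrm{rk}((\mathrm{Pol}^+)^\lambda)+\mathrm{rk}((\mathrm{Pol}^-)^\lambda)\bigr)+\tfrac12\sum_{\lambda\ne 0}\lambda^2\,\mathrm{rk}(\mscr E^\lambda).
\]
The hypothesis $\mscr E=\mscr E^\vee$ gives $\mathrm{rk}(\mscr E^\lambda)=\mathrm{rk}(\mscr E^{-\lambda})$, so the $\tfrac12$ collapses the last sum to $\sum_{\lambda(\mathfrak C)<0}\lambda^2\,\mathrm{rk}(\mscr E^\lambda)$. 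For the right side, applying the same degree formula to the expression for $N^-_{\mathsf X/\mathsf X^{\mathsf A}}$ above and using $\lambda^2=(-\lambda)^2$ on the positive-weight terms yields exactly the same expression. Existence of the half-power $\Theta(\mscr E)^{\otimes 1/2}$ appearing in $\mscr S_{\mathsf X}$ is guaranteed by hypothesis (2) of Definition \ref{dfn: partial polarization}, so $\mscr S_{\mathsf X}$ is well defined before the comparison is made.

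The argument is essentially bookkeeping once the quadratic degree formula is granted. The main subtlety, and the only place where one must be careful, is tracking how the dualizations and the $\hbar^{\pm 1}$ twists interact with the $\mathsf A$-weight decomposition when reading off $N^-_{\mathsf X/\mathsf X^{\mathsf A}}$ from the partial polarization: the dualities flip the sign of $\lambda$ but the $\hbar$-twists do nothing to $\deg_{\mathsf A}$, and it is precisely the self-duality of $\mscr E$ that contributes the $\tfrac12$ factor needed to balance the square root of $\Theta(\mscr E)$ against the $\mscr E^{<0}$ summand of $N^-$.
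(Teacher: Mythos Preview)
Your proof is correct. Both your argument and the paper's rest on the same facts: the quadratic-form formula $\deg_{\mathsf A}\Theta(V)=\sum_\lambda\lambda^2\,\mathrm{rk}(V^\lambda)$, invariance of $\deg_{\mathsf A}$ under $\hbar^{\pm 1}$-twists (since $\mathsf A\subset\ker(\mathsf T\to\mathbb C^\times_\hbar)$), and the self-duality $\mscr E=\mscr E^\vee$.

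The organization differs slightly. You decompose $N^-_{\mathsf X/\mathsf X^{\mathsf A}}$ explicitly into its five pieces coming from the partial polarization and then match the quadratic forms term by term. The paper instead uses a doubling trick: since $S^2\mathrm{Char}(\mathsf A)$ is torsion-free, it suffices to compare $\deg_{\mathsf A}(\mscr S_{\mathsf X}^{\otimes 2})$ with $\deg_{\mathsf A}\Theta(N^-+N^{-\vee})$, and then observes that both sides equal $\deg_{\mathsf A}\Theta(T_{\mathsf X})$ --- the right because $N^{-\vee}$ is $\mathsf A$-equivariantly $N^+$ and the fixed part contributes nothing, the left because $\mscr S_{\mathsf X}^{\otimes 2}\cong\Theta(\mathrm{Pol}_{\mathsf X}+\mathrm{Pol}_{\mathsf X}^\vee+\mscr E)$ and the $\hbar$-twists restoring $T_{\mathsf X}$ are $\mathsf A$-invisible. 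The paper's route is shorter and avoids the explicit bookkeeping of which weight pieces sit in $N^-$; your route has the virtue of making the role of $\mscr E=\mscr E^\vee$ (balancing the $\tfrac12$ against the half of $\mscr E$ lying in $N^-$) completely transparent.
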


\begin{proof}
Since $S^2\mathrm{Char}(\mathsf A)$ is a torsion-free abelian group, it suffices to show that $\deg_{\mathsf A}\bigl(\mscr S_{\mathsf X}^{\otimes 2}\bigr)=\smash{\deg_{\mathsf A}\bigl(\Theta\bigl(N^-_{\mathsf X/\mathsf X^{\mathsf A}}+N^{-\vee}_{\mathsf X/\mathsf X^{\mathsf A}}\bigr)\bigr)}$. Since \smash{$N^{-\vee}_{\mathsf X/\mathsf X^{\mathsf A}}$} is $\mathsf A$-equivariantly isomorphic to \smash{$N^{+}_{\mathsf X/\mathsf X^{\mathsf A}}$}, we have
\[
\deg_{\mathsf A}\bigl(\Theta\bigl(N^-_{\mathsf X/\mathsf X^{\mathsf A}}+N^{-\vee}_{\mathsf X/\mathsf X^{\mathsf A}}\bigr)\bigr)=\deg_{\mathsf A}\bigl(\Theta(N_{\mathsf X/\mathsf X^{\mathsf A}})\bigr)=\deg_{\mathsf A}(\Theta(T_{\mathsf X})).
\]
 On the other hand, \smash{$\mscr S_{\mathsf X}^{\otimes 2}\cong \Theta\bigl(\mathrm{Pol}_{\mathsf X}+\mathrm{Pol}^\vee_{\mathsf X}+\mscr E\bigr)$}, and its $\mathsf A$-degree equals to that of
\[
\Theta\bigl(\mathrm{Pol}^+_{\mathsf X}+\hbar^{-1}\bigl(\mathrm{Pol}^+_{\mathsf X}\bigr)^\vee+\mathrm{Pol}^-_{\mathsf X}+\hbar(\mathrm{Pol}^-_{\mathsf X})^\vee+\mscr E\bigr)=\Theta(T_{\mathsf X}),
\]
this proves the lemma.
\end{proof}

\begin{rmk}
For a line bundle $\mscr L$ on $\mathrm{Ell}_{\mathsf T}(\mathsf X)$, define $\mscr L^{\triangledown}:=\mscr L^{\vee}\otimes\Theta(T_{\mathsf X})$. Then there is an isomorphism \smash{$\mscr S_{\mathsf X}^{\triangledown}\cong\Theta\bigl(\mathrm{Pol}^{\mathrm{op}}_{\mathsf X}\bigr)\otimes \Theta(\mscr E)^{\otimes \frac{1}{2}}$}, therefore $\mscr S_{\mathsf X}^{\triangledown}$ is the attractive line bundle associated to the opposite partial polarization.
\end{rmk}

The attractive line bundle associated to a polarization naturally restricts to $\mathsf A$-fixed points. In fact we can define \smash{$\mscr S_{\mathsf X,\mathsf A}:=i^*\mscr S_{\mathsf X}\otimes\Theta(-N^-_{\mathsf X/\mathsf X^{\mathsf A}})$} where $i\colon \mathrm{Ell}_{\mathsf T}\bigl(\mathsf X^{\mathsf A}\bigr)\to \mathrm{Ell}_{\mathsf T}(\mathsf X)$ is the map induced by the inclusion $\mathsf X^{\mathsf A}\hookrightarrow \mathsf X$. On the other hand, we also have the line bundle $\mscr S_{\mathsf X^{\mathsf A}}$ defined using the restriction of the partial polarization $\mathrm{Pol}_{\mathsf X^{\mathsf A}}\in \mathrm{Pic}\bigl(\mathrm{Ell}_{\mathsf T}\bigl(\mathsf X^{\mathsf A}\bigr)\bigr)$, i.e.,
\smash{$
\mscr S_{\mathsf X^{\mathsf A}}=\Theta(\mathrm{Pol}_{\mathsf X^{\mathsf A}})\otimes \Theta\bigl(\mscr E^0\bigr)^{\otimes \frac{1}{2}}$}.
$\mscr S_{\mathsf X,\mathsf A}$ is not isomorphic to $\mscr S_{\mathsf X^{\mathsf A}}$ in general, and they are related as follows
\begin{align}\label{eqn: dynamical shifts in attractive line bundle}
 \mscr S_{\mathsf X,\mathsf A}\otimes \mscr U\cong \mscr S_{\mathsf X^{\mathsf A}}\otimes \Theta(\hbar)^{-\operatorname{rk} \mathsf{ind}}\otimes \tau(-\hbar\det\mathsf{ind})^*\mscr U.
\end{align}
$\mscr U$ is the universal line bundle and $\tau(-\hbar\det\mathsf{ind})$ is the translation \cite{AganagicOkounkov201604}, whose definitions are recalled below.

\subsection{Universal line bundle, K\"ahler torus and resonant locus}
Let $\mcal A$ be an abelian variety and $\mcal A^\vee$ its dual abelian variety, then there is a universal line bundle~$\mscr U_{\text{Poincar\'e}}$ on $\mcal A^\vee\times \mcal A$. The sections of $\mscr U_{\text{Poincar\'e}}$ on the universal cover of $\mcal A^\vee_s\times \mcal A_z$ has the same factor of automorphy as the function
\smash{$\frac{\vartheta(sz)}{\vartheta(s)\vartheta(z)}$}.

\textbf{Assumption.} We assume that $\mathrm{Pic}(\mathsf X)$ is finitely generated as an abelian group, and we fix a~set of generators $\{\mscr L^{\circ}_i\}_{i=1}^r$, which induces a group homomorphism
$\mathsf K=(\mathbb C^{\times})^r\twoheadrightarrow \mathrm{Pic}(\mathsf X)\otimes_{\mathbb Z}\mathbb C^{\times}$.
$\mathsf K$ is called the K\"ahler torus.

 We choose an equivariant lift $\mscr L_i\in \mathrm{Pic}_{\mathsf T}(\mathsf X)$ for each $\mscr L^{\circ}_i$, then the elliptic Chern class gives a map $c_i\colon\mathrm{Ell}_{\mathsf T}(\mathsf X)\to \mathbb E$. We consider $c_i\times 1\colon \mathrm{Ell}_{\mathsf T}(\mathsf X)\times \mscr E_{z_i}\to \mathbb E\times \mscr E_{z_i}$ and define
\[
 \mscr U(\mscr L_i,z_i):=(c_i\times 1)^*\mscr U_{\text{Poincar\'e}},
\]
where we identify $\mathbb E\cong \mscr E_{z_i}^\vee$. Moreover, we define the extended equivariant elliptic cohomology~${\mathsf E_{\mathsf T}(\mathsf X):=\mathrm{Ell}_{\mathsf T}(\mathsf X)\times \mscr E_{\mathsf{K}}}$
which is a scheme over
$
 \mscr B_{\mathsf T,\mathsf X}:=\mscr E_{\mathsf T}\times \mscr E_{\mathsf{K}}$.
Using the identification~${\mscr E_{\mathsf K}\cong \prod_{i=1}^r\mscr E_{z_i}}$, the line bundle $\mscr U(\mscr L_i,z_i)$ is naturally defined on the extended equivariant elliptic cohomology $\mathsf E_{\mathsf T}(\mathsf X)$, and we set
$ \mscr U:=\bigotimes_{i=1}^r \mscr U(\mscr L_i,z_i)$.

 Let $\mathcal F_j\preceq \mathcal F_i$ be a pair of connected components of $\mathsf X^{\mathsf A}$ and consider the commutative diagram
\[
\begin{tikzcd}
\mathsf E_{\mathsf T}(\mathcal F_j\times \mathcal F_i)\arrow[r,"\phi"]\arrow[dr,"\psi"] \arrow[d,"p"]& \mathsf E_{\mathsf T/\mathsf A}(\mathcal F_j\times \mathcal F_i) \arrow[d,"p"]\\
\mscr B_{\mathsf T,\mathsf X}\arrow [r,"\phi"] & \mscr B_{\mathsf T/\mathsf A,\mathsf X},
\end{tikzcd}
\]
where $\mscr B_{\mathsf T/\mathsf A,\mathsf X}:=\mscr E_{\mathsf T/\mathsf A}\times \mscr E_{\mathsf{K}}$.
\begin{dfn}
Let $\mscr S_{\mathsf X,\mathcal F}$ be the restriction of $\mscr S_{\mathsf X,\mathsf A}$ to a connected component $\mathcal F$ of $\mathsf X^{\mathsf A}$. The resonant locus $\mathbf\Delta$ is defined as the union of
\[
 p\bigl(\mathrm{supp}R\phi_*\bigl(\mscr S_{\mathsf X,\mathcal F_j}\otimes\mscr U\boxtimes (\mscr S_{\mathsf X,\mathcal F_i}\otimes\mscr U)^\vee\bigr)\bigr)\subset \mscr B_{\mathsf T/\mathsf A,\mathsf X},
\]
over all pairs $\mathcal F_j\preceq \mathcal F_i$ of connected components of $\mathsf X^{\mathsf A}$. We also use the same notation $\mathbf \Delta$ for its preimages.
\end{dfn}
Verbatim to that of \cite[Proposition~2.6]{okounkov2021inductive}, we have the following.
\begin{lem}
The complement of $\mathbf \Delta$ in $\mathsf E_{\mathsf T}(\mathsf X)$ is open and dense.
\end{lem}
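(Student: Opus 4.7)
The plan is to adapt the argument of \cite[Proposition 2.6]{okounkov2021inductive} to the partially-polarized setting. The statement splits into two claims: $\mathbf\Delta\subset \mscr B_{\mathsf T/\mathsf A,\mathsf X}$ is Zariski-closed, and it is a proper subvariety. For closedness, fix a pair $\mathcal F_j\preceq \mathcal F_i$. The sheaf $R\phi_*(\mscr S_{\mathsf X,\mathcal F_j}\otimes\mscr U\boxtimes (\mscr S_{\mathsf X,\mathcal F_i}\otimes\mscr U)^\vee)$ is a coherent sheaf on $\mathsf E_{\mathsf T/\mathsf A}(\mathcal F_j\times\mathcal F_i)$, hence its support is Zariski-closed by standard coherent sheaf theory. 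The image of this support under $p$ is closed provided $p$ restricts to a proper morphism on the support, and this reduces to properness of the $\mathsf A$-fixed components $\mathcal F_i\subset \mathsf X^{\mathsf A}$. In the examples of interest (hypertoric varieties, Nakajima quiver varieties, their partially polarized generalisations of Section \ref{subsubsec: Example: The Doubled Quivers}, and the flag-bundle spaces of Example \ref{ex: bundles over flag variety}), the $\mathsf A$-fixed locus is proper, so $\mathrm{Ell}_{\mathsf T/\mathsf A}(\mathcal F_j\times \mathcal F_i)\to \mscr E_{\mathsf T/\mathsf A}$ is proper. Since the partial order $\preceq$ is defined on a finite set of components, $\mathbf\Delta$ is a finite union of closed subsets and therefore closed.

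For density I would analyse the line bundle fiberwise along $\phi$. Over a point of $\mathsf E_{\mathsf T/\mathsf A}(\mathcal F_j\times\mathcal F_i)\times \mscr E_{\mathsf K}$, the fiber of $\phi$ is a copy of $\mscr E_{\mathsf A}$. By Proposition \ref{prop: attractive line bundle}, $\mscr S_{\mathsf X,\mathcal F_j}$ and $\mscr S_{\mathsf X,\mathcal F_i}$ both have $\mathsf A$-degree equal to $\deg_{\mathsf A}\Theta(N^-_{\mathsf X/\mathsf X^{\mathsf A}})$, so their ratio restricts to a degree-zero line bundle on the fiber, depending only on $(\mathcal F_j,\mathcal F_i)$ and on the equivariant parameters in $\mscr E_{\mathsf T/\mathsf A}$. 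The remaining factor $\mscr U\boxtimes \mscr U^\vee$ is a degree-zero line bundle on $\mscr E_{\mathsf A}$ whose class in $\mathrm{Pic}^0(\mscr E_{\mathsf A})$ varies linearly in the K\"ahler parameters $z\in \mscr E_{\mathsf K}$, through the pairing between $z_i$ and the $\mathsf A$-components of the equivariant Chern classes $c_i(\mscr L_i)$. Hence the combined restriction is a non-trivial degree-zero line bundle on $\mscr E_{\mathsf A}$ away from a proper closed subset of $\mscr B_{\mathsf T/\mathsf A,\mathsf X}$, and on an abelian variety a non-trivial degree-zero line bundle has vanishing cohomology. Therefore $R\phi_*$ is the zero sheaf on a Zariski-open subset, so $\mathbf\Delta\neq \mscr B_{\mathsf T/\mathsf A,\mathsf X}$.

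The main obstacle is identifying the precise open locus in $\mscr E_{\mathsf K}$ that witnesses non-triviality on each $\mathsf A$-fiber uniformly in $(\mathcal F_j,\mathcal F_i)$. One must check that the chosen generators $\{\mscr L_i^{\circ}\}$ of $\mathrm{Pic}(\mathsf X)$ pair with $\mathsf A$ richly enough that the induced map $\mscr E_{\mathsf K}\to \mathrm{Pic}^0(\mscr E_{\mathsf A})$ is not absorbed by the fixed translation coming from $\mscr S_{\mathsf X,\mathcal F_j}\boxtimes \mscr S_{\mathsf X,\mathcal F_i}^\vee$. Since for each pair this is a linear condition on $\mscr E_{\mathsf K}$ and there are only finitely many pairs, the bad locus is a finite union of proper closed subvarieties of $\mscr B_{\mathsf T/\mathsf A,\mathsf X}$, completing the proof.
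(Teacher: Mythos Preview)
The paper's proof is simply a citation to \cite[Proposition 2.6]{okounkov2021inductive}, stating that the argument carries over verbatim; your proposal is a reconstruction of precisely that argument, so you are aligned with the paper's approach. One small correction: properness of $p$ does not require properness of the components $\mathcal F_i$---equivariant elliptic cohomology $\mathrm{Ell}_{\mathsf T}(\mathcal F_j\times\mathcal F_i)$ is always finite over $\mscr E_{\mathsf T}$ (see Appendix~\ref{subsec: Equivariant Elliptic Cohomology}), so $p$ is automatically finite and hence proper, and your hedge ``in the examples of interest'' is unnecessary.
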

For a homomorphism between abelian varieties $g\colon \mscr E_{\mathsf T}\to \mscr E_{\mathsf{K}}$, we associate an automorphism~${\tau(g)\colon\mscr B_{\mathsf T,\mathsf X}\cong \mscr B_{\mathsf T,\mathsf X}}$ by
$
 (t,z)\mapsto (t,z+g (t))$,
where $t$ is the coordinate on $\mscr E_{\mathsf T}$. We also use the same notation $\tau(g)$ for its pullback to $\mathsf E_{\mathsf T}(\mathsf X)$. Such transformation does not affect $\mathrm{Ell}_{\mathsf T}(\mathsf X)$, only the K\"ahler parameters are shifted. An example is as follows: for a pair
\begin{align*}
 \mu\in \mathrm{Char}(\mathsf T)=\mathrm{Hom}(\mscr E_{\mathsf T},\mathbb E),\qquad \lambda\in \mathrm{Cochar}(\mathsf K)=\mathrm{Hom}(\mathbb E,\mscr E_{\mathsf{K}}),
\end{align*}
$\lambda\mu$ is an abelian variety homomorphism, so we have a shift map $\tau(\lambda\mu)$.

 The line bundle $\tau(g)^*\mscr U\otimes\mscr U^{-1}$ depends linearly on $g$, i.e.,
\begin{align*}
 \frac{\tau(g_1)^*\mscr U}{\mscr U}\otimes\frac{\tau(g_2)^*\mscr U}{\mscr U}\cong \frac{\tau(g_1\cdot g_2)^*\mscr U}{\mscr U}.
\end{align*}
In the case when $\mu\in \mathrm{Char}(\mathsf T)$ and $\lambda\in \mathrm{Cochar}(\mathsf K)$, there is a meromorphic section
\smash{$\frac{\vartheta(\lambda\cdot\mu)}{\vartheta(\lambda)\vartheta(\mu)}$},
of~the line bundle $\tau(\lambda\mu)^*\mscr U\otimes\mscr U^{-1}$, here $\lambda\cdot\mu$ is the coordinate multiplication \cite[Lemma~2.4]{AganagicOkounkov201604}. We will write $\mscr U(\lambda,\mu)$ for the line bundle $\tau(\lambda\mu)^*\mscr U\otimes\mscr U^{-1}$.

\subsection{Elliptic stable envelope} \label{sec:ellStab}
Now we have all the ingredients, and the main result of this section: the existence and uniqueness of the elliptic stable envelope for partially-polarized varieties, directly follows from the general construction in \cite{okounkov2021inductive}.
\begin{thr}\label{thm: stable envelope}
If $\mathsf X$ is partially polarized, then there exists a unique section
\[
 \mathbf{Stab}_{\mathfrak C,\mscr S_{\mathsf X}}\in \Gamma\bigl(\mathsf E_{\mathsf T}\bigl(\mathsf X\times \mathsf X^{\mathsf A}\bigr)\backslash \mathbf \Delta,\mscr S_{\mathsf X}\otimes\mscr U\boxtimes (\mscr S_{\mathsf X,\mathsf A}\otimes\mscr U)^{\triangledown}\bigr),
\]
such that it is supported on \smash{$\mathsf{Attr}^f_{\mathfrak C}$} and that its restriction to the complement of $\mathsf{Attr}^{<}_{\mathfrak C}$ is given by~$(-1)^{\operatorname{rk} \mathsf{ind}}[\mathsf{Attr}_{\mathfrak C}]$.\footnote{For the definition of supports, see Appendix \ref{subsec: Equivariant Elliptic Cohomology}.}
\end{thr}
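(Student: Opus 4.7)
The plan is to reduce Theorem 2.1 to the general existence and uniqueness theorem of Okounkov \cite{okounkov2021inductive} for elliptic stable envelopes associated with an attractive line bundle on the equivariant elliptic cohomology. The entire hypothesis one needs in Okounkov's framework is a line bundle $\mscr L$ on $\mathrm{Ell}_{\mathsf T}(\mathsf X)$ whose $\mathsf A$-degree matches $\deg_{\mathsf A}\Theta(N^-_{\mathsf X/\mathsf X^{\mathsf A}})$; partial polarization provides exactly such a bundle via $\mscr S_{\mathsf X}=\Theta(\mathrm{Pol}_{\mathsf X})\otimes\Theta(\mscr E)^{\otimes 1/2}$, and its attractiveness was already verified in Proposition \ref{prop: attractive line bundle} using only the K-theoretic identity $T_{\mathsf X}=\mathrm{Pol}_{\mathsf X}+\mathrm{Pol}_{\mathsf X}^{\vee}\cdot\hbar^{\pm 1}+\mscr E$ together with $\mscr E\cong\mscr E^{\vee}$. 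So the first step is simply to invoke Okounkov's theorem with this choice of $\mscr L$, which produces a section of $\mscr S_{\mathsf X}\otimes\mscr U\boxtimes (\mscr S_{\mathsf X,\mathsf A}\otimes\mscr U)^{\triangledown}$ on $\mathsf E_{\mathsf T}(\mathsf X\times \mathsf X^{\mathsf A})\setminus\mathbf\Delta$, supported on $\mathsf{Attr}^f_{\mathfrak C}$.

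Next I would verify that the two normalization conditions Okounkov imposes translate into the normalization stated in Theorem \ref{thm: stable envelope}. On the complement of $\mathsf{Attr}^{<}_{\mathfrak C}$ the support is contained in $\mathsf{Attr}_{\mathfrak C}$, which is the exponential of $N^+_{\mathsf X/\mathsf X^{\mathsf A}}$, and Okounkov's normalization gives the elliptic Thom class of $N^-_{\mathsf X/\mathsf X^{\mathsf A}}$ up to a sign. The sign $(-1)^{\mathrm{rk}\:\mathsf{ind}}$ appears as the comparison between the orientation on $N^+$ induced by $\mathrm{Pol}_{\mathsf X}$ and the orientation induced by $\mathrm{Pol}^{\mathrm{op}}_{\mathsf X}$, which differs precisely by the parity of the index bundle $\mathsf{ind}=\mathrm{Pol}^+_{\mathsf X}|_{\mathsf X^{\mathsf A},>0}-\mathrm{Pol}^-_{\mathsf X}|_{\mathsf X^{\mathsf A},>0}$; concretely, each attracting summand contributed by $\mathrm{Pol}^-_{\mathsf X}$ acquires a sign when one rewrites $\Theta(N^-)$ in terms of $\mscr S_{\mathsf X}\otimes \mscr S_{\mathsf X,\mathsf A}^{\triangledown}$ using the $\mathsf A$-weight decomposition of $\mathrm{Pol}_{\mathsf X}|_{\mathsf X^{\mathsf A}}$ together with $\hbar$-twists. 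The dynamical shift \eqref{eqn: dynamical shifts in attractive line bundle} accounts for the remaining discrepancy between $\mscr S_{\mathsf X,\mathsf A}$ and $\mscr S_{\mathsf X^{\mathsf A}}$ and absorbs the $\Theta(\hbar)^{-\mathrm{rk}\:\mathsf{ind}}\otimes\tau(-\hbar\det\mathsf{ind})^*\mscr U$ factor into a translation of K\"ahler parameters, so it is cohomologically harmless.

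For uniqueness, the argument proceeds by holomorphic triangularity with respect to the order $\preceq$ on $\pi_0(\mathsf X^{\mathsf A})$: any other candidate $\mathbf{Stab}'$ with the same off-diagonal support and the same normalization would yield a difference $\mathbf{Stab}-\mathbf{Stab}'$ which is a global section of $\mscr S_{\mathsf X}\otimes\mscr U\boxtimes(\mscr S_{\mathsf X,\mathsf A}\otimes\mscr U)^{\triangledown}$ supported on $\mathsf{Attr}^{<}_{\mathfrak C}$. Pushing forward to $\mscr B_{\mathsf T/\mathsf A,\mathsf X}$ and using the definition of the resonant locus, such a section must vanish away from $\mathbf\Delta$; this is Okounkov's inductive uniqueness lemma applied block by block along $\preceq$, and it goes through verbatim in our setting because the input is only the attractiveness property of $\mscr S_{\mathsf X}$.

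The main (and essentially only) conceptual obstacle is to confirm that the inductive machinery of \cite{okounkov2021inductive}, originally phrased with polarizations in mind, only uses them through the resulting attractive line bundle and its restriction to $\mathsf A$-fixed components. Once this is checked, none of the steps care whether $\mscr S_{\mathsf X}$ comes from a polarization or a partial polarization. The technical point that needs attention is showing that the eigenspace decomposition of $\mscr E|_{\mathsf X^{\mathsf A}}$ used in the proof of Proposition \ref{prop: attractive line bundle} also yields a partial polarization on $\mathsf X^{\mathsf A}$ (so that the inductive step produces a target bundle of the same structural form), which is exactly the content of the lemma immediately preceding Proposition \ref{prop: attractive line bundle}. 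With that in hand, the theorem follows by directly citing the general construction.
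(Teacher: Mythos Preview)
Your proposal is correct and follows the same approach as the paper: the paper's proof is a single sentence invoking \cite[Theorem 1]{okounkov2021inductive} applied to the attractive line bundle $\mscr S_{\mathsf X}$, whose attractiveness was established in Proposition~\ref{prop: attractive line bundle}. Your additional discussion of the normalization sign, the dynamical shift, and the inductive uniqueness argument simply unpacks what is already packaged inside Okounkov's theorem, so while more detailed, it is not a different route.
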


\begin{proof}
This is a special case of \cite[Theorem~1]{okounkov2021inductive} applied to the attractive line bundle $\mscr S_{\mathsf X}$.
\end{proof}

Equivalently, $\mathbf{Stab}_{\mathfrak C,\mscr S_{\mathsf X}}$ gives rise to a map between sheaves $\mscr S_{\mathsf X,\mathsf A}\otimes\mscr U\to \mscr S_{\mathsf X}\otimes\mscr U$ over the non-resonant locus $\mscr B_{\mathsf T,\mathsf X}\backslash \mathbf \Delta$, which is uniquely characterized by
\begin{itemize}\itemsep=0pt
 \item[(1)] The support of $\mathbf{Stab}_{\mathfrak C,\mscr S_{\mathsf X}}$ is triangular with respect to $\mathfrak C$, i.e., \smash{$\mathbf{Stab}_{\mathfrak C,\mscr S_{\mathsf X}}([\mathcal F_i])|_{\mathcal F_j}=0$} for a~pair of connected components of $\mathsf X^{\mathsf A}$ such that $\mathcal F_j\npreceq \mathcal F_i$.
 \item[(2)] The diagonal \smash{$\mathbf{Stab}_{\mathfrak C,\mscr S_{\mathsf X}}([\mathcal F_i])|_{\mathcal F_i}=(-1)^{\operatorname{rk} \mathsf{ind}}\vartheta(N^-_{\mathsf X/\mathcal F_i})$}, where \smash{$\vartheta(N^-_{\mathsf X/\mathcal F_i})$} is the canonical section of the elliptic Thom line bundle \smash{$\Theta(N^-_{\mathsf X/\mathcal F_i})$}.
\end{itemize}
We will use the above point of view on the stable envelope in the following discussions.

 According to \eqref{eqn: dynamical shifts in attractive line bundle}, the elliptic stable envelope is a map between sheaves
\[
 \mathbf{Stab}_{\mathfrak C,\mscr S_{\mathsf X}}\colon\ \mscr S_{\mathsf X^{\mathsf A}}\otimes \Theta(\hbar)^{-\operatorname{rk} \mathsf{ind}}\otimes \tau(-\hbar\det\mathsf{ind})^*\mscr U\to \mscr S_{\mathsf X}\otimes\mscr U.
\]
This turns out to be more practical since it is usually easier to compute $\mathrm{Pol}_{\mathsf X^{\mathsf A}}$ than $\mscr S_{\mathsf X,\mathsf A}$.

{\bf Change of partial polarization.} If we change the partial polarization by
\[
 \mathrm{Pol}'_{\mathsf X}=\mathrm{Pol}_{\mathsf X}-(\mscr F_++\mscr F_-)+\hbar^{-1}\mscr F^\vee_+ +\hbar \mscr F^\vee_-,
\]
for some $\mscr F_+,\mscr F_-\in K_{\mathsf T}(\mathsf X)$, then the associated attractive line bundle changes by
\[
 \mscr S'_{\mathsf X}\otimes\mscr U\cong \mscr S_{\mathsf X}\otimes \tau(\hbar\det(\mscr F_+-\mscr F_-))^*\mscr U\otimes\Theta(\hbar)^{\operatorname{rk}\mscr F},
\]
and the elliptic stable envelope changes accordingly
\[
 \mathbf{Stab}_{\mathfrak C,\mscr S'_{\mathsf X}}(z)=\mathbf{Stab}_{\mathfrak C,\mscr S_{\mathsf X}}(z+\hbar\det(\mscr F_+-\mscr F_-))\cdot (-1)^{\operatorname{rk}\mscr F_{\mathrm{moving}}} ,
\]
where $z$ is the K\"ahler parameter (coordinate on $\mscr E_{\mathsf{K}}$), and $\mscr F_{\mathrm{moving}}$ is the $\mathsf A$-moving part of $\mscr F|_{\mathsf X^{\mathsf A}}$. In particular, if we choose the opposite partial polarization \smash{$\mathrm{Pol}^{\mathrm{op}}_{\mathsf X}=\hbar^{-1}\bigl(\mathrm{Pol}^+_{\mathsf X}\bigr)^\vee+\hbar(\mathrm{Pol}^-_{\mathsf X})^\vee$}, then
\[
 \mathbf{Stab}_{\mathfrak C,\mscr S^{\triangledown}_{\mathsf X}}(z)=\mathbf{Stab}_{\mathfrak C,\mscr S_{\mathsf X}}\bigl(z+\hbar\det\bigl(\mathrm{Pol}^+_{\mathsf X}-\mathrm{Pol}^-_{\mathsf X}\bigr)\bigr)\cdot(-1)^{\operatorname{rk} \mathrm{Pol}_{\mathsf X,\mathrm{moving}}}.
\]

\begin{ex}[abelian Gauge groups]\label{ex: abelian stable envelope}
Let us consider the Higgs branch of an Abelian $\mcal N=2$ gauge theory, i.e., Example~\ref{ex: abelian gauge theory}. In this case $\mathsf X:=\mu^{-1}_{\mathrm{ev}}(0)^{\zeta-ss}/\!\!/G$ is a $\mathsf T$-equivariant smooth deformation of $\mathsf X_0:=\mu^{-1}(0)^{\zeta-ss}/\!\!/G$ over the base $\mathfrak g_{\mathrm{ev}}^{\perp}$, where $\mathsf T=\mathsf A\times \mathbb C^{\times}_{\hbar}=\bigl(\bigl(\mathbb C^{\times}\bigr)^{\operatorname{rk}\mcal R}/G\bigr)\times \mathbb C^{\times}_{\hbar}$. $\mathbb C^{\times}_{\hbar}$ acts on $\mcal R\oplus \mcal R^\vee$ as $\mcal R\oplus \hbar^{-1}\mcal R^\vee$ and acts on $\mathfrak g$ as $\hbar^{-1}\mathfrak g$. Then every $\mathsf A$-fixed point of the hypertoric variety $\mathsf X_0$ extends to an \'etale cover of the base $\mathfrak g_{\mathrm{ev}}^{\perp}$, which must be trivial cover since every point in this cover is repelled by the $\mathbb C^{\times}_{\hbar}$ action from the fixed point. Let $\mathcal F_p$ be the component of $\mathsf X^{\mathsf A}$ that contains fixed point $p\in \mathsf X^{\mathsf A}_0$, then for any chamber $\mathfrak C$ of $\mathrm{Lie}(\mathsf A)_{\mathbb R}$, $\mathcal F_p\preceq \mathcal F_q$ in $\mathsf X$ if and only if $p\preceq q$ in $\mathsf X_0$. In other words, the chamber $\mathfrak C$ induces exactly the same partial order on~$\mathsf X^{\mathsf A}$ and $\mathsf X^{\mathsf A}_0$. Moreover, the inclusion $\mathsf X_0\hookrightarrow \mathsf X$ induces isomorphism on equivariant elliptic cohomology~${\mathrm{Ell}_{\mathsf T}(\mathsf X_0)\cong \mathrm{Ell}_{\mathsf T}(\mathsf X)}$. It is straightforward to see that \smash{$\mscr S_{\mathsf X}\cong\Theta(\mscr R)\cong \mscr S_{\mathsf X_0}$}, i.e., $\mathsf X$ and~$\mathsf X_0$ have the same attractive line bundle. The normal bundles are isomorphic \smash{$N^{\pm}_{\mathsf X/\mathcal F_p}|_p=N^{\pm}_{\mathsf X_0/p}|_p$}, so the index bundles are also isomorphic. To summarize, $\mathsf X$ and $\mathsf X_0$ have the same elliptic stable envelope
$\mathbf{Stab}_{\mathfrak C,\mscr S_{\mathsf X}}([\mathcal F_p])= \mathbf{Stab}_{\mathfrak C,\mscr S_{\mathsf X_0}}([p])$.
The explicit formula of the elliptic stable envelope for hypertoric varieties can be found in \cite{AganagicOkounkov201604}, see also \cite{Smirnov201804}.
\end{ex}

\begin{ex}[resolved determinantal varieties]\label{ex: stab for M(N,L)}
Consider the quiver $Q$ with one odd node, i.e., $Q_0=Q^{\mathrm{odd}}_0=\{1\}$ and $\mathbf v=N,\mathbf w=L$, and $\zeta=\zeta_-$, then $\mcal M^{\zeta}(N,L)$ is nonempty if and only if $N\le L$, and if nonempty then it is isomorphic to the total space of the vector bundle
$
\mathrm{Tot}\bigl(\mcal V^{\oplus L}\to\mathrm{Gr}(N,L)\bigr)$,
 where $\mcal V$ is the tautological bundle of rank $N$ on the Grassmannian. It is a resolution of the determinantal variety which parameterizes $L\times L$ matrices of rank $\le N$.

 In this case $\mathsf A=T_W=(\mathbb C^{\times })^L$ and $\mathsf T=\mathsf A\times \mathbb C^{\times }_{\hbar}$. $\mathsf A$ acts on both the base and the fiber of the bundle $\mathrm{Tot}\bigl(\mcal V^{\oplus L}\to\mathrm{Gr}(N,L)\bigr)$, and $\mathbb C^{\times }_{\hbar}$ scales the fiber with weight $\hbar^{-1}$. Let the equivariant parameters of the gauge group $\mathrm{GL}_N$ be $\mbs{s}=\{s_a\}_{a=1}^N$ and the equivariant parameters of $\mathsf A$ be $\mbs{x}=\{x_i\}_{i=1}^L$. Connected components of $\mathsf A$-fixed points are labelled by order-preserving embedding $p\colon\{1,\dots,N\}\hookrightarrow\{1,\dots,L\}$. Denote by $\{m_1,\dots,m_L\}$ the coordinates on $\mathrm{Lie}(\mathsf A)$, then we choose the chamber $\mathfrak C$ such that $m_1<\cdots<m_L$. We choose the partial polarization as in \eqref{eqn: pol in the quiver case}. Then we claim that the stable envelope is the following:\footnote{\label{foot:normStab}For the gauge-theoretic derivation of \eqref{eqn: stable envelope for M(N,L)}, see Section~\ref{sec:explicit construction of elliptic stable envelope}. Note that the gauge-theoretic computation uses the normalization of stable envelope such that \smash{$\mathbf{Stab}_{\mathfrak C,\mscr S_{\mathsf X}}([\mathcal F])|_{\mathcal F}=\vartheta(N^-_{\mathsf X/{\mathcal F}})$}.}
\begin{align}\label{eqn: stable envelope for M(N,L)}
 \mathbf{Stab}_{\mathfrak C,\mscr S_{\mathsf X}}([\mathcal F_p])=\tenofo{Sym}_{S_N}\left[\left(\prod_{a=1}^N\efunction_{p(a)}(s_a,\mbs{x},\hbar,z)\right)\cdot\left(\prod_{\substack{a>b}}\frac{1}{\vartheta\bigl(s_as_b^{-1}\bigr)}\right)\right],
\end{align}
where $\mathrm{Sym}_{S_N}$ means summation over permutations $\{s_a\}\mapsto \{s_{\sigma(a)}\}_{\sigma\in S_N}$, and $\efunction_m(s,\mbs{x},\hbar,z)$ is the following function:
\[
\efunction_m(s,\mbs{x},\hbar,z):=\frac{\vartheta\bigl(sx_{m}\hbar^{m-L}z\bigr)}{\vartheta\bigl(\hbar^{m-L}z\bigr)}\prod_{i<m}\vartheta(sx_i)\prod_{j>m}\vartheta(sx_j\hbar),
\]
where $z$ is the K\"ahler parameter corresponding to the ample line bundle $(\det \mcal V)^{-1}$.

 Let us show that \eqref{eqn: stable envelope for M(N,L)} is indeed the stable envelope. The first thing to check is that \eqref{eqn: stable envelope for M(N,L)} is a~map between line bundles $\mscr S_{\mathsf X^{\mathsf A}}\otimes \Theta(\hbar)^{-\operatorname{rk} \mathsf{ind}}\otimes \tau(-\hbar\det\mathsf{ind})^*\mscr U\to \mscr S_{\mathsf X}\otimes\mscr U$. Note that~$\mscr S_{\mathsf X^{\mathsf A}}$ is a~trivial line bundle, and $\mathsf{ind}$ has equivariant weights $x_i/x_{p(a)}$ such that $i>p(a)$. Every summand in the symmetrization \eqref{eqn: stable envelope for M(N,L)} can be rewritten as
\begin{gather*}
\Biggl(\prod_{i,a}\vartheta(s_ax_i)\Biggr)\Biggl(\prod_a\frac{\vartheta\bigl(s_ax_{p(a)}\hbar^{p(a)-L}z\bigr)}{\vartheta(s_ax_{p(a)})\vartheta\bigl(\hbar^{p(a)-L}z\bigr)}\Biggr)
\Biggl(\prod_{i>p(a)}\frac{\vartheta(s_ax_i\hbar)}{\vartheta(s_ax_i)\vartheta(\hbar)}\Biggr)\\
\qquad\times\Biggl(\prod_{\substack{a>b}}\frac{1}{\vartheta\bigl(s_as_b^{-1}\bigr)}\Biggr)\vartheta(\hbar)^{\#(i>p(a))},
\end{gather*}
which is a meromorphic section of the line bundle
\begin{align*}
 \Theta(\mathrm{Pol}_{\mathsf X})\otimes\mscr U\otimes(\tau(-\hbar\det\mathsf{ind})^*\mscr U|_{\mathcal F_p})^{-1}\otimes \Theta\bigl(\mcal V\mcal V^\vee\bigr)^{\otimes\frac{-1}{2}}\otimes \Theta(\hbar)^{\operatorname{rk} \mathsf{ind}},
\end{align*}
which is isomorphic to $\mscr S_{\mathsf X}\otimes\mscr U\otimes (\tau(-\hbar\det\mathsf{ind})^*\mscr U)^{-1}\otimes \Theta(\hbar)^{\operatorname{rk} \mathsf{ind}}$, this shows that \eqref{eqn: stable envelope for M(N,L)} lives in the correct line bundle.

 Next, we verify that \eqref{eqn: stable envelope for M(N,L)} is triangular with respect to $\mathfrak C$. In fact, if $\mathcal F_{p'}\npreceq \mathcal F_p$ is another connected component of fixed points, in other words, $p'\colon \{1,\dots,N\}\hookrightarrow\{1,\dots,L\}$ is an order-preserving embedding and there exists $1\le a\le N$ such that $p'(a)<p(a)$, then for every $\sigma\in S_N$ there exists $b$ such that $p'(\sigma(b))<p(b)$, so every summand in the symmetrization \eqref{eqn: stable envelope for M(N,L)} vanishes.

Finally, the restriction of \eqref{eqn: stable envelope for M(N,L)} to diagonal is
\begin{align*}
(-1)^{\#(i>p(a))}\prod_{a=1}^N\prod_{\substack{i<p(a)\\i\notin \mathrm{Im}(p)}}\vartheta\left(\frac{x_i}{x_{p(a)}}\right)\prod_{i>p(a)}\vartheta\left(\frac{x_{p(a)}}{x_i\hbar} \right),
\end{align*}
which equals to \smash{$(-1)^{\operatorname{rk} \mathsf{ind}}\vartheta(N^-_{\mathsf X/\mathcal F_p})$}. This verifies that \eqref{eqn: stable envelope for M(N,L)} is indeed the stable envelope for $\mcal M^{\zeta}(N,L)$ with respect to the chamber $\mfk C$.
\end{ex}

\subsection{Triangle lemma and duality}\label{subsec: Triangle Lemma and Duality}
Let $\mathfrak C'\subset\mathfrak C$ be a face, and let $\mathsf A'\subset \mathsf A$ be the subtorus associated to the span of $\mathfrak C'$ in $\mathrm{Lie}(\mathsf A)$.
\begin{prop}[triangle lemma]
We have
\smash{$\mathbf{Stab}_{\mathfrak C,\mscr S_{\mathsf X}}=\mathbf{Stab}_{\mathfrak C',\mscr S_{\mathsf X}}\circ \mathbf{Stab}_{\mathfrak C/\mathfrak C',\mscr S_{\mathsf X,\mathsf A'}}$}.
Moreover,
\begin{align}\label{eqn: stab on a wall}
 \mathbf{Stab}_{\mathfrak C/\mathfrak C',\mscr S_{\mathsf X,\mathsf A'}}(z)=\mathbf{Stab}_{\mathfrak C/\mathfrak C',\mscr S_{\mathsf X^{\mathsf A'}}}(z -\hbar \det\mathsf{ind}_{\mathsf X^{\mathsf A'}}),
\end{align}
where \smash{$\mathsf{ind}_{\mathsf X^{\mathsf A'}}=\mathrm{Pol}^+_{\mathsf X}|_{\mathsf X^{\mathsf A'},>0}-\mathrm{Pol}^-_{\mathsf X}|_{\mathsf X^{\mathsf A'},>0}$} is the alternating sum of the attracting part of partial polarization with respect to the chamber $\mathfrak C'$.
\end{prop}
\begin{proof}
This is verbatim to that of \cite[Proposition~3.3]{AganagicOkounkov201604}, replacing polarization with partial polarization.
\end{proof}

Assume that $\mathsf X^{\mathsf T}$ is proper, then we have the equivariant pushforward $f_{\circledast}\colon\Theta(T_{\mathsf X})\!\to\mscr O_{\mscr E_{\mathsf T},\mathrm{localized}}$ for the map $f\colon\mathsf X\to \mathrm{pt}$, and accordingly we define
\smash{$\mathbf{Stab}^*_{\mathfrak C,\mscr S_{\mathsf X}}(z):=\mathbf{Stab}_{-\mathfrak C,\mscr S^{\triangledown}_{\mathsf X}}(-z)^\vee$},
where \smash{$z\in \mscr E_{\mathrm{Pic}_{\mathsf T(\mathsf X)}}$} is the K\"ahler parameter and the dual means transpose with respect to the duality
\[
\mscr S_{\mathsf X}\otimes \mscr S^{\triangledown}_{\mathsf X}\overset{\mathrm{product}}{\longrightarrow}\Theta(T_{\mathsf X})\overset{f_{\circledast}}{\rightarrow}\mscr O_{\mscr E_{\mathsf T},\mathrm{localized}}.
\]

\begin{prop}[duality]
We have
\begin{align}\label{eqn: duality}
\mathbf{Stab}^*_{\mathfrak C,\mscr S_{\mathsf X}}\circ \mathbf{Stab}_{\mathfrak C,\mscr S_{\mathsf X}}=1.
\end{align}
\end{prop}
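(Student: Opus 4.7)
The plan is to follow the line of argument in \cite[Section 3]{AganagicOkounkov201604}, with modifications accommodating the $\Theta(\mscr E)^{1/2}$ factor appearing in $\mscr S_{\mathsf X}$. Setting $\Psi := \mathbf{Stab}^*_{\mathfrak C,\mscr S_{\mathsf X}}\circ\mathbf{Stab}_{\mathfrak C,\mscr S_{\mathsf X}}$, I would compute its matrix entries in the basis of fixed-component classes $\{[\mathcal F_i]\}$. Unfolding the definition $\mathbf{Stab}^*_{\mathfrak C,\mscr S_{\mathsf X}}(z)=\mathbf{Stab}_{-\mathfrak C,\mscr S^{\triangledown}_{\mathsf X}}(-z)^{\vee}$ and applying fixed-point localization for the Serre-duality pairing $f_\circledast$, the matrix element reads
\begin{align*}
    \Psi_{ji}=\sum_{\mathcal F}\frac{\mathbf{Stab}_{-\mathfrak C,\mscr S^{\triangledown}_{\mathsf X}}(-z)([\mathcal F_j])|_{\mathcal F}\,\cdot\,\mathbf{Stab}_{\mathfrak C,\mscr S_{\mathsf X}}(z)([\mathcal F_i])|_{\mathcal F}}{\vartheta(N_{\mathsf X/\mathcal F})},
\end{align*}
a meromorphic section of a line bundle on $\mathsf E_{\mathsf T}(\mathcal F_j\times\mathcal F_i)\setminus\mathbf\Delta$. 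The first task would be to verify, using $\mscr S_{\mathsf X}\otimes\mscr S^{\triangledown}_{\mathsf X}\cong\Theta(T_{\mathsf X})$ and the self-duality $\mscr E\cong\mscr E^\vee$, that the K\"ahler-parameter shifts from the two stable envelopes and the Serre-dualizing twist cancel out, so that $\Psi$ is a well-defined endomorphism of $\mscr S_{\mathsf X,\mathsf A}\otimes\mscr U$ independent of the choice of square root of $\Theta(\mscr E)$.

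For the off-diagonal entries I would invoke triangularity in two directions: $\mathbf{Stab}_{\mathfrak C,\mscr S_{\mathsf X}}$ is triangular with respect to $\preceq$, while $\mathbf{Stab}_{-\mathfrak C,\mscr S^{\triangledown}_{\mathsf X}}$ is triangular with respect to the opposite order $\succeq$. The sum above then collapses to contributions from $\mathcal F$ in the chain $\{\mathcal F_j\preceq\mathcal F\preceq\mathcal F_i\}$. If $\mathcal F_j$ and $\mathcal F_i$ are incomparable, the chain is empty and $\Psi_{ji}\equiv 0$ is immediate. If $\mathcal F_j\prec\mathcal F_i$ is strict, the surviving terms must recombine into a section that is holomorphic everywhere on $\mathsf E_{\mathsf T}(\mathcal F_j\times\mathcal F_i)$: apparent poles along the walls cancel because the numerator is supported on $\mathsf{Attr}^f_{\mathfrak C}\cap\mathsf{Attr}^f_{-\mathfrak C}$ by Theorem \ref{thm: stable envelope}, and the polar loci of $1/\vartheta(N_{\mathsf X/\mathcal F})$ are precisely the elliptic Euler classes in the normal directions. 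A rigidity argument on the abelian variety---namely, the line bundle $\mscr S_{\mathsf X,\mathcal F_j}\otimes\mscr S_{\mathsf X,\mathcal F_i}^{-1}$ with its twists admits no nonzero global holomorphic sections for $j\ne i$---then forces $\Psi_{ji}=0$.

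For the diagonal entry $i=j$, only $\mathcal F=\mathcal F_i$ survives the triangularity cut. Using the normalizations
\begin{align*}
    \mathbf{Stab}_{\mathfrak C,\mscr S_{\mathsf X}}([\mathcal F_i])|_{\mathcal F_i}&=(-1)^{\mathrm{rk}\,\mathsf{ind}}\,\vartheta(N^-_{\mathsf X/\mathcal F_i}),\\
    \mathbf{Stab}_{-\mathfrak C,\mscr S^{\triangledown}_{\mathsf X}}([\mathcal F_i])|_{\mathcal F_i}&=(-1)^{\mathrm{rk}\,\mathsf{ind}^{\mathrm{op}}}\,\vartheta(N^+_{\mathsf X/\mathcal F_i}),
\end{align*}
where $\mathsf{ind}^{\mathrm{op}}$ denotes the index bundle associated to the chamber $-\mathfrak C$ and the opposite partial polarization $\mathrm{Pol}^{\mathrm{op}}_{\mathsf X}$, a short rank count via \eqref{eqn: opposite partial polarization} gives $\mathrm{rk}\,\mathsf{ind}=\mathrm{rk}\,\mathsf{ind}^{\mathrm{op}}$ so the two signs cancel. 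Since $N_{\mathsf X/\mathcal F_i}=N^+_{\mathsf X/\mathcal F_i}\oplus N^-_{\mathsf X/\mathcal F_i}$, the product $\vartheta(N^+_{\mathsf X/\mathcal F_i})\vartheta(N^-_{\mathsf X/\mathcal F_i})$ exactly cancels the denominator, yielding $\Psi_{ii}\equiv 1$.

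The hardest step will be the rigidity argument for $\mathcal F_j\prec\mathcal F_i$. Unlike the fully polarized case of \cite{AganagicOkounkov201604}, the attractive line bundle $\mscr S_{\mathsf X}$ carries an additional factor $\Theta(\mscr E)^{1/2}$ whose restriction to fixed components requires care: it could a priori introduce extra global sections and spoil rigidity. The resolution is precisely the self-duality $\mscr E=\mscr E^\vee$ of Definition \ref{dfn: partial polarization}, which, exactly as in the proof of Proposition \ref{prop: attractive line bundle}, forces $\Theta(\mscr E)^{1/2}$ to contribute trivial $\mathsf A$-degree on each fixed component. Consequently, the $\mathsf A$-degrees of the line bundle in which $\Psi_{ji}$ lives match those of the symplectic setting, and the rigidity argument of \cite{AganagicOkounkov201604} transfers to the partially-polarized case verbatim.
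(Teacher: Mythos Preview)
Your proposal is correct and takes essentially the same approach as the paper, which simply states that the proof is verbatim to \cite[Proposition 3.4]{AganagicOkounkov201604} with polarization replaced by partial polarization. You have spelled out what that deferral means, and correctly identified the one place where extra care is needed---the $\Theta(\mscr E)^{1/2}$ factor---together with the reason (self-duality $\mscr E=\mscr E^\vee$, as in Proposition~\ref{prop: attractive line bundle}) why it does not affect the $\mathsf A$-degree computation and hence leaves the Aganagic--Okounkov rigidity argument intact.
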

\begin{proof}
This is verbatim to that of \cite[Proposition~3.4]{AganagicOkounkov201604}, replacing polarization by partial polarization.
\end{proof}

Combine the duality \eqref{eqn: duality} with the formula of the stable envelope for opposite partial polarization, and we obtain the following corollary.

\begin{cor}[inverse stable envelope for the opposite chamber]
We have
\[
\mathbf{Stab}^{-1}_{-\mathfrak C,\mscr S_{\mathsf X}}(z)=(-1)^{\operatorname{rk} \mathrm{Pol}_{\mathsf X,\mathrm{moving}}} \mathbf{Stab}_{\mathfrak C,\mscr S_{\mathsf X}}\bigl(-z+\hbar\det\bigl(\mathrm{Pol}^+_{\mathsf X}-\mathrm{Pol}^-_{\mathsf X}\bigr)\bigr)^\vee.
\]
where $z$ is the K\"ahler parameter $($coordinate on $\mscr E_{\mathsf{K}})$.
\end{cor}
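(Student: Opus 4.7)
The plan is to derive the corollary by chaining together the two results stated immediately before it, namely the Duality proposition \eqref{eqn: duality} and the change-of-polarization formula \eqref{eqn: opposite pol}. The corollary is essentially an algebraic rearrangement, so no new geometric input will be needed — the argument will be a short direct computation.

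First, I would instantiate the Duality proposition at the opposite chamber $-\mathfrak C$, which yields the identity
\[
\mathbf{Stab}^{*}_{-\mathfrak C,\mscr S_{\mathsf X}}\circ \mathbf{Stab}_{-\mathfrak C,\mscr S_{\mathsf X}}=1,
\]
so that $\mathbf{Stab}^{-1}_{-\mathfrak C,\mscr S_{\mathsf X}}=\mathbf{Stab}^{*}_{-\mathfrak C,\mscr S_{\mathsf X}}$. Unwinding the definition
\[
\mathbf{Stab}^{*}_{-\mathfrak C,\mscr S_{\mathsf X}}(z):=\mathbf{Stab}_{\mathfrak C,\mscr S^{\triangledown}_{\mathsf X}}(-z)^{\vee}
\]
(the sign flip in the chamber label takes $-\mathfrak C$ back to $\mathfrak C$), I get
\[
\mathbf{Stab}^{-1}_{-\mathfrak C,\mscr S_{\mathsf X}}(z)=\mathbf{Stab}_{\mathfrak C,\mscr S^{\triangledown}_{\mathsf X}}(-z)^{\vee}.
\]

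Next, I would apply the change-of-polarization formula \eqref{eqn: opposite pol} to the right-hand side, which re-expresses the stable envelope taken with respect to the opposite partial polarization $\mathrm{Pol}^{\mathrm{op}}_{\mathsf X}$ in terms of the stable envelope for the original $\mathrm{Pol}_{\mathsf X}$, at the cost of a K\"ahler shift and a sign:
\[
\mathbf{Stab}_{\mathfrak C,\mscr S^{\triangledown}_{\mathsf X}}(-z)=(-1)^{\mathrm{rk}\,\mathrm{Pol}_{\mathsf X,\mathrm{moving}}}\,\mathbf{Stab}_{\mathfrak C,\mscr S_{\mathsf X}}\!\bigl(-z+\hbar\det(\mathrm{Pol}^{+}_{\mathsf X}-\mathrm{Pol}^{-}_{\mathsf X})\bigr).
\]
Taking the transpose of both sides with respect to the pairing $\mscr S_{\mathsf X}\otimes\mscr S^{\triangledown}_{\mathsf X}\to \mscr O_{\mscr E_{\mathsf T},\mathrm{localized}}$ used in the duality, and substituting into the previous display, yields precisely the stated formula.

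The only subtlety I anticipate is bookkeeping: one must check that the dualization $(\cdot)^{\vee}$ does not introduce an extra sign when commuted past the scalar $(-1)^{\mathrm{rk}\,\mathrm{Pol}_{\mathsf X,\mathrm{moving}}}$, and that the K\"ahler argument $-z+\hbar\det(\mathrm{Pol}^{+}_{\mathsf X}-\mathrm{Pol}^{-}_{\mathsf X})$ ends up inside the $(\cdot)^\vee$ on the correct side. Both are routine because the sign is a scalar and the transpose only acts on the matrix of the operator, not on its spectral/K\"ahler arguments. So the overall argument is a two-line manipulation; the main conceptual content lies entirely in the already-proven Duality proposition and in \eqref{eqn: opposite pol}.
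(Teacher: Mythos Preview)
Your proposal is correct and matches the paper's approach exactly: the paper simply states that the corollary follows by combining the duality \eqref{eqn: duality} with the opposite-polarization formula \eqref{eqn: opposite pol}, which is precisely the two-step manipulation you outline.
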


From now on, unless specified, we will drop the attractive line bundle in the notation and write $\mathbf{Stab}_{\mathfrak C}$ for $\mathbf{Stab}_{\mathfrak C,\mscr S_{\mathsf X}}$ and $\mathbf{Stab}_{\mathfrak C/\mathfrak C'}$ for \smash{$\mathbf{Stab}_{\mathfrak C/\mathfrak C',\mscr S_{\mathsf X^{\mathsf A'}}}$}, etc.

\subsection{Abelianization}\label{subsec: Abelianization}
A general strategy to compute elliptic stable envelope for GIT quotients by nonabelian gauge group is abelianization. This is described in detail for the symplectic reduction in \cite[Section~4.3]{AganagicOkounkov201604}, and in this subsection, we adapt their technique to the more general setting in Section~\ref{subsubsec: Example: Quotient of Hamiltonian Reduction}. In the following, we point out the necessary modification to \cite[Section~4.3]{AganagicOkounkov201604}.

 Let $S_b$ (resp.\ $S_f$) be a maximal torus of $G_{\mathrm{ev}}$ (resp.\ $G_{\mathrm{odd}}$) and let $B_b\supset S_b$ (resp.\ $B_f\supset S_f$) be a Borel subgroup with Lie algebra $\mathfrak b_b$ (resp.\ $\mathfrak b_f$). Denote by $S:=S_b\times S_f$ and $B:=B_b\times B_f$, then we have a diagram similar to that of \cite[equation~(59)]{AganagicOkounkov201604}
\begin{equation}\label{cd: abelianization}
\begin{tikzcd}
\mathsf{Fl}\arrow[r,"\mathsf j_+"]\arrow[d,"\pi"] & \mu^{-1}\bigl(\mathfrak b_b^{\perp}\bigr)^{\zeta-ss}/\!\!/ S \arrow[r,"\mathsf j_-"] &\mathsf X_S\\
\mathsf X.
\end{tikzcd}
\end{equation}
\smash{$\mathsf{Fl}=\mu^{-1}\bigl(\mathfrak g_{\mathrm{ev}}^{\perp}\bigr)^{\zeta-ss}/\!\!/ B\cong (\mu\times \mu_{\mathbb R})^{-1}\bigl(\mathfrak g_{\mathrm{ev}}^{\perp}\times \{\zeta\}\bigr)/U\cap S$}, where $\mu_{\mathbb R}\colon\mcal R\oplus\mcal R^\vee\to \mathrm{Lie}(U)^\vee$ is the real moment map with respect to the action of a maximal compact subgroup $U\subset G$ and a $U$-invariant Hermitian metric on $\mcal R\oplus\mcal R^\vee$, and $\mathsf X_S=\mu^{-1}\bigl(\mathfrak s_b^{\perp}\bigr)/\!\!/_{\zeta}S$ is the GIT quotient of $\mu^{-1}\bigl(\mathfrak s_b^{\perp}\bigr)$ by the torus $S$. Note that $\mathsf j_+$ is a non-holomorphic embedding and $\mathsf j_-$ is a holomorphic immersion. $\pi$ is a $G/B$-bundle, so every statement in \cite[Section~4.3.2]{AganagicOkounkov201604} still holds for $\pi$.

 We have decompositions
\[
 \pi^*\mscr G_{\mathrm{ev}}=\mscr N_b+\mscr N_b^\vee+\mathrm{Lie}(S_b)\otimes\mscr O_{\mathsf{Fl}},\qquad \pi^*\mscr G_{\mathrm{odd}}=\mscr N_f+\mscr N_f^\vee+\mathrm{Lie}(S_f)\otimes\mscr O_{\mathsf{Fl}},
\]
where $\mscr N_b$ (resp.\ $\mscr N_b$) is the bundle on $\mathsf X_S$ associated to the adjoint action of $S$ on $\mathfrak n_b=[\mathfrak b_b,\mathfrak b_b]$ (resp.\ $\mathfrak n_f=[\mathfrak b_f,\mathfrak b_f]$). The relative tangent bundle of $\pi$ is $T_{\pi}=\mscr N^\vee=\mscr N_b^\vee+\mscr N_f^\vee$, and the normal bundle of $\mathsf j_+$ is \smash{$N_{\mathsf j_+}=\mscr N+\hbar^{-1}\mscr N_b$}, and the normal bundle of $\mathsf j_-$ is \smash{$N_{\mathsf j_-}=\hbar^{-1}\mscr N_b^\vee$}.

 For simplicity, let us choose the partial polarizations
\[
 \mathrm{Pol}_{\mathsf X}=\mathrm{Pol}^+_{\mathsf X}=\mscr R-\mscr G_{\mathrm{ev}},
 \qquad
 \mathrm{Pol}_{\mathsf X_S}=\mathrm{Pol}^+_{\mathsf X_S}=\mscr R+\hbar^{-1}\mscr N_b^\vee-\mscr N_b-\mathrm{Lie}(S_b)\otimes\mscr O_{\mathsf{X}_S}.
\]
Note that the partial polarization on $\mathsf X_S$ differs from the choice in Example~\ref{ex: pol for quotient of Hamiltonian reduction} by $\hbar^{-1}\mscr N_b^\vee-\mscr N_b$, which affects the stable envelope by
\[
\mathbf{Stab}_{\mfk C}(z)\mapsto \mathbf{Stab}_{\mfk C}(z+\hbar\det\mscr N_b)\cdot(-1)^{\operatorname{rk}\mscr N_{b,\mathrm{moving}}},
\]
 thus \eqref{eqn: newton polytope} still holds for such modified choice of partial polarization.

 Now we have the modified version of \cite[equation~(63)]{AganagicOkounkov201604}
\[
 \mathsf j^*\bigl(\mathrm{Pol}_{\mathsf X_S}-\hbar^{-1}\mscr N_b^\vee+\mscr N_f^{\vee}\bigr)=\pi^*\mathrm{Pol}_{\mathsf X}+T_\pi,
\]
where $\mathsf j=\mathsf j_-\circ\mathsf j_+$. Also notice that \smash{$\Theta\bigl(\mscr N_f^{\vee}\bigr)\cong\Theta(\mscr N_f)\cong \pi^*\Theta(\mscr G_{\mathrm{odd}})^{\otimes\frac{1}{2}}$}. Therefore, we have the modified version of \cite[equation~(64)]{AganagicOkounkov201604}
\begin{equation}\label{eqn: chain of maps}
\begin{tikzcd}
\mscr S_{\mathsf X} & \pi^*\mscr S_X\otimes\Theta(T_\pi) \arrow[l,"\pi_{\circledast}" '] & \mathsf j^*_-\mscr S_{\mathsf X_S}\otimes \Theta\bigl(-\hbar^{-1}\mscr N_b^\vee\bigr)\arrow[l,"\mathsf j^*_+" '] \arrow[r,"\mathsf j_{-\circledast}"] & \mscr S_{\mathsf X_S}.
\end{tikzcd}
\end{equation}
The map $\mathsf j^*_+$ in \eqref{eqn: chain of maps} is surjective since any tautological class extends to $\mathsf X_S$.

 Next, we choose a connected component $\mathcal F\subset\mathsf X^{\mathsf A}$, and let $\mathcal F'$ be the unique connected component of $\pi^{-1}(\mathcal F)^{\mathsf A}$ such that normal weights of $\mathcal F'$ in $\pi^{-1}(\mathcal F)$ are repelling with respect to the chamber $\mathfrak C$, i.e., $\left(T_\pi|_{\mathcal F'}\right)_{>0}=0$. This implies that $\bigl(\mscr N^\vee|_{\mathcal F'}\bigr)_{>0}=0$ by the same argument with~\cite[Section~4.3.8]{AganagicOkounkov201604}. Consider the following analog of \eqref{cd: abelianization}
\[
\begin{tikzcd}
\mathcal F'\arrow[r,"\mathsf j'_+"]\arrow[d,"\pi'"] & \mathcal F_S\cap\mu^{-1}\bigl(\mathfrak b_b^{\perp}\bigr)^{\zeta-ss}/\!\!/ S \arrow[r,"\mathsf j'_-"] &\mathcal \mathcal F_S\\
F,
\end{tikzcd}
\]
where $\mathcal F_S$ is the connected component of $\mathsf X^{\mathsf A}_S$ that contains $\mathcal F'$. Then by the same argument with~\cite[Section~4.3.9]{AganagicOkounkov201604}, $\mathsf j'^{*}_+$ is surjective, and
$\pi^*(\mathsf{ind}\mathcal F)=\mathsf j'^{*}(\mathsf{ind}\mathcal F_S)$,
where $\mathsf j'=\mathsf j'_-\circ \mathsf j'_+$.

 Now we have all ingredients to state the abelianization result. We claim that the following analog of \cite[equation~(72)]{AganagicOkounkov201604} is well defined and the diagram is commutative
\begin{equation}\label{eqn: abelianization}
\begin{tikzcd}
\mscr S_{\mathcal F}\otimes \mscr U' \arrow[rr,"\mathsf j'_{-\circledast}\left(\mathsf j'^{*}_+\right)^{-1}(\pi'_{\circledast})^{-1}"]\arrow[d,"\mathbf{Stab}"] & & \mscr S_{\mathcal F_S}\otimes \mscr U' \arrow[d,"\mathbf{Stab}_S"]\\
\mscr S_{\mathsf X}\otimes\mscr U\otimes \Theta(\hbar)^{\operatorname{rk} \mathsf{ind}} & & \mscr S_{\mathsf X_S}\otimes\mscr U\otimes \Theta(\hbar)^{\operatorname{rk} \mathsf{ind}} \arrow[ll,"\pi_{\circledast}\mathsf j_+^* \left(\mathsf j_{-\circledast}\right)^{-1}"],
\end{tikzcd}
\end{equation}
where $\mscr U'=\tau(-\hbar\det \mathsf{ind})^*\mscr U$. In fact, the argument in \cite[Section~4.3.11]{AganagicOkounkov201604} works verbatim, so the inverse of the top horizontal arrow is well defined; the normal weights to $\mathfrak b_b^{\perp}\subset\mathfrak g^\vee$ are non-attracting then it follows that
\smash{$
\mathsf{Attr}^f\bigl(\mu^{-1}\bigl(\mathfrak b_b^{\perp}\bigr)^{\mathsf A}\bigr)\subset \mu^{-1}\bigl(\mathfrak b_b^{\perp}\bigr)$},
 thus $\mathbf{Stab}_S\circ \mathsf j'_{-\circledast}$ factors through~$\mathsf j_{-\circledast}$, so the inverse of the bottom horizontal arrow is also well defined. To show that the diagram commutes, we only need to show that $\mathbf{Stab}$ agrees with the composition of the rest of the arrows in the diagram when restricted to the diagonal. The point is to compare the difference between~\smash{$\pi^*N^-_{\mathsf X/\mathcal F}$} and \smash{$N^-_{\mathsf X_S/\mathcal F'}$}, and from our previous discussion we learn that the new repelling directions are $(T_\pi|_{\mathcal F'})_{\mathrm{moving}}$ and $(N_{\mathsf j_-}|_{\mathcal F'})_{\mathrm{moving}}$. $(T_\pi|_{\mathcal F'})_{\mathrm{moving}}$ is cancelled out by~$\pi_{\circledast}$, and $(N_{\mathsf j_-}|_{\mathcal F'})_{\mathrm{moving}}$ is cancelled with \smash{$(\mathsf j_{-\circledast})^{-1}\circ \mathbf{Stab}_S\circ \mathsf j'_{-\circledast}$}. This concludes the proof of abelianization \eqref{eqn: abelianization}.

\subsection[$R$-matrices and dynamical Yang--Baxter equations]{$\boldsymbol{R}$-matrices and dynamical Yang--Baxter equations}\label{subsec: R-Matrices}

\begin{dfn}
Let $\mathfrak C_1$, $\mathfrak C_2$ be two chambers in $\mathrm{Lie}(\mathsf A)$, then we define the $R$-matrix
\begin{align}\label{eqn: dfn of R-matirx}
 \mathbf R_{\mathfrak C_2\shortleftarrow\mathfrak C_1}:=\mathbf{Stab}_{\mathfrak C_2}^{-1}\circ \mathbf{Stab}_{\mathfrak C_1}.
\end{align}
This is a map from $\mscr S_{\mathsf X^{\mathsf A}}\otimes \tau(-\hbar\det\mathsf{ind}_1)^*\mscr U$ to $\mscr S_{\mathsf X^{\mathsf A}}\otimes \tau(-\hbar\det\mathsf{ind}_2)^*\mscr U$, where $\mathsf{ind}_1$ and $\mathsf{ind}_2$ are index bundles for the chambers $\mathfrak C_1$ and $\mathfrak C_2$ respectively.
\end{dfn}

Obviously it follows from definition that
\begin{align}\label{eqn: composing R-matrices}
 \mathbf R_{\mathfrak C_3\shortleftarrow\mathfrak C_2}\mathbf R_{\mathfrak C_2\shortleftarrow\mathfrak C_1}=\mathbf R_{\mathfrak C_3\shortleftarrow\mathfrak C_1}.
\end{align}
Moreover, if $\mathfrak C_1$ and $\mathfrak C_2$ share a face $\mathfrak C'$, then \eqref{eqn: stab on a wall} implies that
\begin{align}\label{eqn: $R$-matrix on a wall}
 \mathbf R_{\mathfrak C_2\shortleftarrow\mathfrak C_1}(z)=\mathbf R_{\mathfrak C_2/\mathfrak C'\shortleftarrow\mathfrak C_1/\mathfrak C'}(z-\hbar\det \mathsf{ind}_{\mathsf X^{\mathsf A'},\mathfrak C'}),
\end{align}
where $\mathsf A'$ is the subtorus corresponding to the chamber $\mathfrak C'$ and \smash{$\mathsf{ind}_{\mathsf X^{\mathsf A'},\mathfrak C'}$} is the index bundle on~\smash{$\mathsf X^{\mathsf A'}$} for the chamber $\mathfrak C'$. If $\mathfrak C'$ is of codimension one in both $\mathfrak C_1$ and $\mathfrak C_2$, then we say $\mathfrak C'$ is a wall and define
$
 \mathbf R_{\mathrm{wall}}:=\mathbf R_{\mathfrak C_2/\mathfrak C'\shortleftarrow\mathfrak C_1/\mathfrak C'}$.
The dynamical Yang--Baxter equation is a direct consequence of \eqref{eqn: composing R-matrices} and \eqref{eqn: $R$-matrix on a wall} for $\mathsf A=(\mathbb C^{\times})^3$. To see this, let $m_1$, $m_2$, $m_3$ be coordinates on~$\mathrm{Lie}(\mathsf A)$, then the two paths connecting the chamber $\mathfrak C_{123}=\{m_1<m_2<m_3\}$ to its opposite~${\mathfrak C_{321}=\{m_3<m_2<m_1\}}$, namely $\mathfrak C_{123}\to \mathfrak C_{132}\to \mathfrak C_{312}\to \mathfrak C_{321}$ and $\mathfrak C_{123}\to \mathfrak C_{213}\to \mathfrak C_{231}\to \mathfrak C_{321}$, gives rise to equation between two sequence of composing $R$-matrices
\begin{gather}
\mathbf R_{12}(z-\hbar\det \mathsf{ind}_{3(12)}) \mathbf R_{13}(z-\hbar\det \mathsf{ind}_{(13)2}) \mathbf R_{23}(z-\hbar\det \mathsf{ind}_{1(23)})\nonumber\\
\qquad=\mathbf R_{23}(z-\hbar\det \mathsf{ind}_{(23)1}) \mathbf R_{13}(z-\hbar\det \mathsf{ind}_{2(13)}) \mathbf R_{12}(z-\hbar\det \mathsf{ind}_{(12)3}),\label{eqn: DYBE}
\end{gather}
where $\mathbf R_{ij}$ is the $R$-matrix for the $ij$-wall, and the subscript of $\mathsf{ind}$ means the wall and chamber for which the index bundle is taken, for example, $3(12)$ means the chamber $\{m_3<m_1=m_2\}$.

 The particular case $\mathfrak C_3=\mathfrak C_1$ in \eqref{eqn: composing R-matrices} implies the unitarity
$ \mathbf R_{21}(-u,\hbar,z)\mathbf R(u,\hbar,z)=1$,
where~$\mathbf R(u,\hbar,z)$ is the wall $R$-matrix and subscript $21$ means swapping the two tensor components.

\subsection[$R$-matrices for quiver varieties]{$\boldsymbol{R}$-matrices for quiver varieties}\label{subsec: R-Matrices for Quiver Varieties}
Let us write down the dynamical Yang--Baxter equation for the quiver variety $\mcal M^{\zeta_-}(\mathbf w)$ for a~quiver $Q$ with decomposition of nodes $Q_0=Q^{\mathrm{ev}}_0\sqcup Q^{\mathrm{odd}}_0$ into even and odd parts and decomposition of arrows $Q_1=Q^+_1\sqcup Q^-_1$ such that the condition $(\star)$ in Example~\ref{ex: partial polarization for the quivers} is satisfied. We take~$\mathsf A=(\mathbb C^{\times})^3\subset T_W$ such that the framing vector space decomposes into eigenspaces as
\smash{$ W=a_1W^{(1)}+a_2W^{(2)}+a_3W^{(3)}$}.
Choosing the partial polarization \eqref{eqn: pol in the quiver case}, we can write down the index bundle for the chamber corresponding to the weight decomposition $W=aW'+bW''$ such that $b/a$ is attracting
\[
\mathsf{ind}=\sum_{i,j\in Q_0} \mathsf D_{ij}\mathrm{Hom}\bigl(\mcal V'_i,\mcal W''_j\bigr)+ (\mathsf Q_{ij}-\mathsf P_{ij})\mathrm{Hom}\bigl(\mcal V'_i,\mcal V''_j\bigr),
\]
where $\mathsf Q_{ij}$ is the \emph{weighted} adjacency matrix of $Q$ defined by
\[\mathsf Q_{ij}=\#\bigl(a\in Q_1^+: h(a)=j,\,t(a)=i\bigr)-\#(a\in Q_1^-: h(a)=j,t(a)=i),\]
and $\mathsf D_{ij}$ is the diagonal matrix such that
\[
\mathsf D_{ii}=
\begin{cases}
\hphantom{-}1, & i\notin Q^{\mathrm{ev},-}_0,\\
-1, & i\in Q^{\mathrm{ev},-}_0,
\end{cases}
\]
and $\mathsf P_{ij}$ is the diagonal matrix such that
\[
\mathsf P_{ii}=
\begin{cases}
\hphantom{-}1, & i\in Q^{\mathrm{ev},+}_0,\\
-1, & i\in Q^{\mathrm{ev},-}_0,\\
\hphantom{-}0, & i\in Q^{\mathrm{odd}}_0.
\end{cases}
\]
Define the quiver $R$-matrix
$\mathbf R^Q(\mbs{z}):=\mathbf R_{\mathrm{wall}}\bigl(\mbs{z}+\hbar \bigl(\mathsf P-\mathsf Q^{\mathrm{t}}\bigr)\mathbf v\bigr)$,
where $\mbs{z}:=\{z_i\}_{i\in Q_0}$ are the equivariant parameters corresponding to the ample line bundles $(\det\mcal V_{i})^{-1}$. Let $\mbs{\mu}:=\mathsf D\mathbf w-\mathsf C\mathbf v$, where
\begin{align}\label{eqn: Cartan matrix}
 \mathsf C=2\mathsf P-\mathsf Q-\mathsf Q^{\mathrm{t}},
\end{align}
then \eqref{eqn: DYBE} is specialized to
\begin{gather}
 \mathbf R^Q_{12}(\mbs{z}) \mathbf R^Q_{13}\bigl(\mbs{z}-\hbar\mbs{\mu}^{(2)}\bigr) \mathbf R^Q_{23}(\mbs{z})= \mathbf R^Q_{23}\bigl(\mbs{z}-\hbar\mbs{\mu}^{(1)}\bigr) \mathbf R^Q_{13}(\mbs{z}) \mathbf R^Q_{12}\bigl(\mbs{z}-\hbar\mbs{\mu}^{(3)}\bigr).\label{eqn: DYBE for quiver}
\end{gather}
This kind of equation is called the dynamical Yang--Baxter equations and was first studied by Felder \cite{Felder199412,Felder199407}.

\begin{ex}[$\mathfrak{sl}(1|1)$]
Consider the resolved determinantal variety case (see Example~\ref{ex: stab for M(N,L)}), namely a quiver $Q$ with one odd node, i.e., $Q_0=Q^{\mathrm{odd}}_0=\{1\}$. We choose stability condition~${\zeta=\zeta_-}$ and partial polarization \eqref{eqn: pol in the quiver case}. We choose the framing dimensions $\mathbf w^{(1)}=\mathbf w^{(2)}=\mathbf w^{(3)}=1$, then \smash{$\mcal M^{\zeta}\bigl(\mathbf w^{(1)}\bigr)$} is disjoint union of two points, and
\begin{align*}
 \mcal M^{\zeta}\bigl(\mathbf w^{(1)}+\mathbf w^{(2)}\bigr)=\mathrm{pt}\sqcup \mathrm{Tot}\bigl(\mscr O(-1)^{\oplus 2}\to \mathbb P^1\bigr)\sqcup \mathbb C^4.
\end{align*}
Let us label the basis for the elliptic cohomology $\mathsf E_{\mathsf T}\bigl(\mcal M^{\zeta}\bigl(\mathbf w^{(1)}\bigr)\bigr)$, which is a free module of rank two over $\mscr E_{x}\times \mscr E_{\hbar}\times \mscr E_{z}$, by $v_0=\bigl[\mcal M^{\zeta}(0,1)\bigr]$ and $v_1=\bigl[\mcal M^{\zeta}(1,1)\bigr]$,
then we can write down the quiver $R$-matrix in the basis $\{v_0\otimes v_0,v_1\otimes v_0, v_0\otimes v_1, v_1\otimes v_1\}$
\begin{equation}\label{eqn: quiver $R$-matrix for sl(1|1)}
	\def\arraystretch{1.5}
	\setlength{\arrayrulewidth}{.8pt}
 \setlength{\arraycolsep}{6pt}
	\mathbf R^Q(u,z)=
	\left(
	\begin{array}{c|cc|c}
		1 & 0 & 0 & 0
		\\
		\hline
		0 & \frac{\vartheta(z)\vartheta(z\hbar^{-2})\vartheta(u)}{\vartheta\bigl(z\hbar^{-1}\bigr)^2\vartheta\bigl(u\hbar^{-1}\bigr)} & \frac{\vartheta(\hbar)\vartheta(u z\hbar^{-1})}{\vartheta\bigl(z\hbar^{-1}\bigr)\vartheta\bigl(u^{-1}\hbar\bigr)} & 0
 \\[6pt]
 0 & \frac{\vartheta(\hbar)\vartheta(zu^{-1}\hbar^{-1})}{\vartheta\bigl(z\hbar^{-1}\bigr)\vartheta\bigl(u^{-1}\hbar\bigr)} & \frac{\vartheta(u)}{\vartheta\bigl(u\hbar^{-1}\bigr)} & 0
		\\[4pt]
		\hline
		0 & 0 & 0 & \frac{\vartheta(u\hbar)}{\vartheta\bigl(u^{-1}\hbar\bigr)}
	\end{array}\right),
\end{equation}
where $u=x_2/x_1$ is the ratio between equivariant weights in the decomposition $W=x_1W'+x_2W''$. In this case $\mathsf C=0$, which is the Cartan matrix for $\mathfrak{sl}(1|1)$, and $\mbs{\mu}(v_0)=\mbs{\mu}(v_1)=1$, so the dynamical Yang--Baxter equation \eqref{eqn: DYBE for quiver} reads
\begin{gather}\label{eqn: DYBE for sl(1|1)}
 \mathbf R^Q_{12}(u,z) \mathbf R^Q_{13}(uw,z/\hbar) \mathbf R^Q_{23}(w,z)= \mathbf R^Q_{23}(w,z/\hbar) \mathbf R^Q_{13}(uw,z) \mathbf R^Q_{12}(u,z/\hbar),
\end{gather}
which is the dynamical Yang--Baxter equation for $\mathfrak{sl}(1|1)$ in the fundamental representation.
\end{ex}

\begin{ex}[$\mathfrak{sl}(n|1)$]\label{exa:elliptic dynamical $R$-matrix for sl(n|1)}
Next, we consider a generalization to the above example, namely the quiver in Example~\ref{ex: quiver with one f node 1}. Let $Q$ be an $A_{n}$ quiver with $Q^{\mathrm{ev}}_0=\{1,\dots,n-1\}$ and $Q^{\mathrm{odd}}_0=\{n\}$, and the orientation is such that $h(a)<t(a)$ for all arrows $a\in Q_1$. We take $Q_1=Q_1^+$. We choose stability condition $\zeta=\zeta_-$ and the partial polarization \eqref{eqn: pol in the quiver case}. We choose framing dimensions~${\mathbf w^{(1)}=\mathbf w^{(2)}=\mathbf w^{(3)}=(1,0,\dots,0)}$, then $\mcal M^{\zeta}\bigl(\mathbf w^{(1)}\bigr)$ is the disjoint union of $n+1$ points which is labeled by integers $\alpha\in \{0,1,\dots,n\}$ so that the gauge dimension vector of the corresponding quiver variety is
\begin{align*}
 \mathbf v_{\alpha}=(\overbrace{\underbrace{1,\dots,1}_{\alpha},0,\dots,0}^{n}).
\end{align*}
Using the above notation, we have
\begin{align}\label{eqn: sl(n|1) moduli}
\mcal M^{\zeta}\bigl(\mathbf v_\alpha+\mathbf v_\beta,\mathbf w^{(1)}+\mathbf w^{(2)}\bigr)=
\begin{cases}
\mathrm{pt},& \alpha=\beta<n,\\
T^*\mathbb P^1,& \alpha<\beta<n \text{ or } \beta<\alpha<n,\\
\mathrm{Tot}\bigl(\mscr O(-1)^{\oplus 2}\to \mathbb P^1\bigr),& \alpha<\beta=n \text{ or } \beta<\alpha=n,\\
\mathbb C^4,& \alpha=\beta=n.
\end{cases}
\end{align}
Let us label the basis for the elliptic cohomology $\mathsf E_{\mathsf T}\bigl(\mcal M^{\zeta}\bigl(\mathbf w^{(1)}\bigr)\bigr)$, which is a free module of rank~${n+1}$ over $\mscr E_{x}\times \mscr E_{\hbar}\times \mscr E_{\mbs{z}}$, by $v_\alpha=\bigl[\mcal M^{\zeta}\bigl(\mathbf v_\alpha,\mathbf w^{(1)}\bigr)\bigr]$. Here the K\"ahler parameter ${\mbs z=(z_i)_{i=1}^n}$ corresponds to the ample line bundles $\bigl(\det\mcal V_i^\vee\bigr)_{i=1}^n$, where $\mcal V_i$ is the tautological bundle corresponding to the $i$-th node.

 For the quiver $Q$, we have the Cartan matrix
\[
\mathsf C_{i,j}=2\delta_{i,j}(1-\delta_{i,n})-\delta_{i+1,j}-\delta_{i-1,j}, \qquad 1\le i,j\le n,
\]
which is the symmetric Cartan matrix of $\mathfrak{sl}(n|1)$ \cite{frappat1996dictionary}. The weights of $v_{\alpha}$ are
\begin{align}\label{eqn: weights for sl(n|1)}
\mbs{\mu}(v_\alpha)=\mathbf w^{(1)}-\mathsf C\mathbf v_\alpha=
\begin{cases}
(1,0,\dots,0),& \alpha=0,\\
(\underbrace{0,\dots,0}_{\alpha-1},-1,1,0,\dots,0),& 0<\alpha<n,\\
(0,\dots,0,1),& \alpha=n,
\end{cases}
\end{align}
which is aligned with the weights for the fundamental representation of $\mathfrak{sl}(n|1)$. Moreover, the K\"ahler parameter shifts between the wall $R$-matrix and the quiver $R$-matrix is
\[
\bigl(\mathsf P-\mathsf Q^{\mathrm{t}}\bigr)\mathbf v_\alpha=
\begin{cases}
(\underbrace{0,\dots,0}_{\alpha-1},1,0,\dots,0),& 0<\alpha<n,\\
(0,\dots,0),& \alpha=0\text{ or }\alpha=n.
\end{cases}
\]
Using the known results on the $R$-matrices for all four varieties in \eqref{eqn: sl(n|1) moduli}, see \eqref{eqn: quiver $R$-matrix for sl(1|1)} and \cite[equation~(85)]{AganagicOkounkov201604}, we can write down the quiver $R$-matrix in this case. We introduce shorthand notations for some special functions
\begin{gather*}
\begin{split}
& A(u,z)=\frac{\vartheta(z\hbar)\vartheta\bigl(z\hbar^{-1}\bigr)\vartheta(u)}{\vartheta(z)^2\vartheta\bigl(u\hbar^{-1}\bigr)},\qquad B(u,z)=\frac{\vartheta(\hbar)\vartheta(u z)}{\vartheta(z)\vartheta\bigl(u^{-1}\hbar\bigr)},\qquad
C(u)=\frac{\vartheta(u)}{\vartheta\bigl(u\hbar^{-1}\bigr)},\\
& D(u)=\frac{\vartheta(u\hbar)}{\vartheta\bigl(u^{-1}\hbar\bigr)},
\end{split}
\end{gather*}
where we have suppressed the $\hbar$-dependence. Then the quiver $R$-matrix in this case is the following:
\begin{gather*}
\mathbf R^Q(u,\mbs{z})(v_\alpha\otimes v_{\beta})\\
\qquad=\begin{cases}
v_\alpha\otimes v_{\beta},& \alpha=\beta<n, \\
D(u)v_\alpha\otimes v_{\beta},& \alpha=\beta=n,\\
C(u)v_\alpha\otimes v_{\beta}+B\bigl(u,\hbar^{-\delta_{\beta,n}}\prod_{i=\alpha+1}^\beta z_i\bigr)v_\beta\otimes v_{\alpha}, & \alpha<\beta,\\
A\bigl(u,\hbar^{-\delta_{\alpha,n}}\prod_{i=\beta+1}^\alpha z_i\bigr)v_\alpha\otimes v_{\beta}+B\bigl(u,\hbar^{\delta_{\alpha,n}}\prod_{i=\beta+1}^\alpha z_i^{-1}\bigr)v_\beta\otimes v_{\alpha}, & \beta<\alpha.
\end{cases}
\end{gather*}
$\mathbf R^Q$ satisfies the dynamical Yang--Baxter equation \eqref{eqn: DYBE for quiver} for $\mbs{\mu}$ given by weights for the fundamental representation of $\mathfrak{sl}(n|1)$ \eqref{eqn: weights for sl(n|1)}.
\end{ex}

For quivers of finite or affine type A, we make the following observation. In addition to the condition $(\star)$ in Example~\ref{ex: partial polarization for the quivers}, we also require that
\begin{itemize}\itemsep=0pt
 \item[$(\star\star)$] if the valency at an odd node is two then the arrows connected to it are from different groups, i.e., exactly one of the arrows is from $Q^+_1$ and the other is from $Q^-_1$.
\end{itemize}
Here is an example of an allowed quiver (after adding framing and doubling)
\[
\begin{tikzpicture}[x={(2cm,0cm)}, y={(0cm,2cm)}, baseline=-1cm]
\node[draw, circle] at (0,0) (n1) {};
\node at (0.5,0.15) (a1) {$-$};
\node[draw, circle] at (1,0) (n2) {};
\node at (1.5,0.15) (a2) {$-$};
\node[draw, circle] at (2,0) (n3) {};
\node at (2.5,0.15) (a3) {$-$};
\node[draw, circle, cross] at (3,0) (ndot) {};
\node at (3.5,0.15) (adot) {$+$};
\node[draw, circle] at (4,0) (n4) {};
\node at (4.5,0.15) (a4) {$+$};
\node[draw, circle,cross] at (5,0) (n5) {};
\node at (5.5,0.15) (a5) {$-$};
\node[draw, circle,cross] at (6,0) (n6) {};

\draw[-stealth] (n1.25) to (n2.155);
\draw[-stealth] (n2.25) to (n3.155);
\draw[-stealth] (n3.25) to (ndot.155);
\draw[-stealth] (ndot.25) to (n4.155);
\draw[-stealth] (n4.25) to (n5.155);
\draw[-stealth] (n5.25) to (n6.155);

\draw[-stealth] (n6.205) to (n5.335);
\draw[-stealth] (n5.205) to (n4.335);
\draw[-stealth] (n4.205) to (ndot.335);
\draw[-stealth] (ndot.205) to (n3.335);
\draw[-stealth] (n3.205) to (n2.335);
\draw[-stealth] (n2.205) to (n1.335);

\node[draw, rectangle] at (0,-.75) (f1) {};
\node[draw, rectangle] at (1,-.75) (f2) {};
\node[draw, rectangle] at (2,-.75) (f3) {};
\node[draw, rectangle] at (3,-.75) (fdot) {};
\node[draw, rectangle] at (4,-.75) (f4) {};
\node[draw, rectangle] at (5,-.75) (f5) {};
\node[draw, rectangle] at (6,-.75) (f6) {};

\draw[-stealth] (n1.292) to (f1.60);
\draw[-stealth] (f1.120) to (n1.248);
\draw[-stealth] (n2.292) to (f2.60);
\draw[-stealth] (f2.120) to (n2.248);
\draw[-stealth] (n3.292) to (f3.60);
\draw[-stealth] (f3.120) to (n3.248);
\draw[-stealth] (ndot.292) to (fdot.60);
\draw[-stealth] (fdot.120) to (ndot.248);
\draw[-stealth] (n4.292) to (f4.60);
\draw[-stealth] (f4.120) to (n4.248);
\draw[-stealth] (n5.292) to (f5.60);
\draw[-stealth] (f5.120) to (n5.248);
\draw[-stealth] (n6.292) to (f6.60);
\draw[-stealth] (f6.120) to (n6.248);
\end{tikzpicture}.
\label{an admissible type A quiver}
\]
For simplicity, we consider an \smash{$\widehat{\mathrm{A}}_{m+n-1}$} quiver $Q$ and label the nodes by $Q_0=\mathbb Z/(m+n)\mathbb Z$. We choose the orientation $Q_1=\{i-1\to i\colon i\in \mathbb Z/(m+n)\mathbb Z\}$, and choose a splitting $Q_1=Q_1^+\sqcup Q_1^-$ such that $|Q_1^+|=m$ and $|Q_1^-|=n$. Define
\[
 c_i=\begin{cases}
 +1,& \text{if }(i-1\to i)\in Q_1^+,
 \\
 -1,& \text{if }(i-1\to i)\in Q_1^-.
 \end{cases}
\]
Then the condition $(\star)$ is equivalent to $c_i=c_{i+1}$ for all $i\in Q^{\mathrm{ev}}_0$, and the condition $(\star\star)$ is equivalent to $c_i=-c_{i+1}$ for all $i\in Q^{\mathrm{odd}}_0$. The Cartan matrix \eqref{eqn: Cartan matrix} reads
\[
\mathsf C_{i,j}=(c_{i}+c_{i+1})\delta_{i,j}-c_i\delta_{i,j+1}-c_j\delta_{i+1,j}.
\]
$\mathsf C_{i,j}$ is exactly the symmetric Cartan matrix of the Kac--Dynkin diagram for $\widehat{\mathfrak{sl}}(m|n)$ associated to the quiver $Q$. Moreover, the quiver $R$-matrix $\mathbf R^Q$ satisfies the dynamical Yang--Baxter equation~\eqref{eqn: DYBE for quiver} for $\mbs{\mu}$ given by weights in a certain highest-weight module of $\widehat{\mathfrak{sl}}(m|n)$.

\subsection{K-theory and cohomology limit}\label{sec:k-theory and cohomology limit}
In the $q\to 0$ limit, the elliptic curve $\mbb{E}=\mathbb C^{\times}/q^{\mathbb Z}$ degenerates to a nodal compactification of $\mathbb C^{\times}$, and the elliptic cohomology $\mathrm{Ell}_{\mathsf T}(\mathsf X)$ degenerates to the K-theory $K_{\mathsf T}(\mathsf X)\otimes\mathbb C$, such that sections of line bundles on $\mathrm{Ell}_{\mathsf T}(\mathsf X)$ becomes sections of line bundles on $K_{\mathsf T}(\mathsf X)\otimes\mathbb C$. Strictly speaking, because of half-periodicity property of $\vartheta$-function: $\vartheta\bigl({\rm e}^{2\pi {\rm i}}x\bigr)=-\vartheta(x)$, their $q\to 0$ limit is defined on the double cover of $\mathbb C^{\times}$.

 We expect that the elliptic stable envelope degenerates to the K-theoretic stable envelope in the $q\to 0$ limit, such that the K\"ahler parameter $z$ degenerates to the slope parameter in the K-theory side. Suppose that
\begin{equation}\label{eq:the k-theory slope parameter}
 \lim_{\substack{\ln z\to\infty \\ \ln q\to\infty}} \tenofo{Re}\left(-\frac{\ln z}{\ln q}\right)=\msf{s}\in\mathrm{Pic}_{\mathsf T}(\mathsf X)\otimes_{\mathbb Z}\mathbb R,
\end{equation}
such that the slope $\msf{s}$ is \emph{generic}, and then define
\begin{equation}\label{eq:k-theoretic stable envelope as the limit of elliptic one}
 \msf{Stab}^{\mathsf{s}}_{\mfk{C}}:= \lim_{q\to 0} \bigl[(\det \mathrm{Pol}_{\mathsf X})^{-\frac{1}{2}}\circ\mathbf{Stab}_{\mfk{C}}\circ(\det \mathrm{Pol}_{\mathsf X^{\mathsf A}})^{\frac{1}{2}}\bigr]\in K_{\mathsf T}\bigl(\mathsf X\times\mathsf X^{\mathsf A}\bigr).
\end{equation}
For every connected component $\mathcal F\subset \msf X^{\msf A}$,
\begin{align*}
\msf{Stab}^{\mathsf{s}}_{\mathfrak C}([\mathcal F])|_{\mathcal F}&=(-1)^{\operatorname{rk}\mathsf{ind}}\left(\frac{\det N^-_{\mathsf X/{\mathcal F}}}{\det \mathrm{Pol}_{\mathsf X}|_{\mathcal F,\mathrm{moving}}}\right)^{\frac{1}{2}}\bigwedge(N^-_{\mathsf X/{\mathcal F}})^\vee\\
&=\bigl(-\sqrt{\hbar}\bigr)^{\operatorname{rk} \mathsf{ind}}(\det \mathrm{Pol}_{\mathsf X}|_{\mathcal F,>0})^{-1} (\det \mscr G_{\mathrm{odd}}|_{\mathcal F,<0})^{-\frac{1}{2}} \bigwedge(N^-_{\mathsf X/{\mathcal F}})^\vee.
\end{align*}
Moreover, $\msf{Stab}^{\mathsf{s}}_{\mathfrak C}$ is supported on $\msf{Attr}^f_{\mathfrak C}$, i.e., if $\mathcal F_2\npreceq \mathcal F_1$ then $\msf{Stab}^{\mathsf{s}}_{\mathfrak C}([\mathcal F_1])|_{\mathcal F_2}=0$. We expect that~$\msf{Stab}^{\mathsf{s}}_{\mathfrak C}$ is the K-theoretic stable envelope with slope $\mathsf s$ for the chamber $\mfk C$. In other words, $\msf{Stab}^{\mathsf{s}}_{\mathfrak C}$ also satisfies the degree condition, which is checked in the following.

 For a connected component $\mathcal F$ of $\mathsf X^{\mathsf A}$, define
\[
 \Delta_{\mathcal F}:=\text{Convex hull}\left(\mathrm{Supp}_{\mathsf A}\msf{Stab}^{\mathsf{s}}_{\mfk{C}}([\mathcal F])|_{\mathcal F}\right),
\]
where $\mathrm{Supp}_{\mathsf A}$ of a $K_{\mathsf T}(\mathcal F)$ class is the set of $\mathsf A$-weights that appears in that class.
\begin{prop}\label{prop: K-theory reduction}
Let $\mathsf X$ be a quotient of Hamiltonian reduction {\rm\eqref{subsubsec: Example: Quotient of Hamiltonian Reduction}}, assume moreover that either
\begin{itemize}\itemsep=0pt
 \item the gauge group is abelian, i.e., the situation in Example~{\rm\ref{ex: abelian gauge theory}}, or
 \item $\mathsf X$ is constructed from the quiver representation, i.e., the situation in Section~{\rm\ref{subsubsec: Example: The Doubled Quivers}},
\end{itemize}
then for every pair of connected components $\mathcal F_2\preceq \mathcal F_1$ in $\mathsf X^{\mathsf A}$ with respect to the chamber $\mfk{C}$,
\begin{align}\label{eqn: newton polytope}
 \mathrm{Supp}_{\mathsf A}\msf{Stab}^{\mathsf{s}}_{\mfk{C}}([\mathcal F_1])|_{\mathcal F_2}\subset \Delta_{\mathcal F_2}+\mathsf A\text{-weight of }\mathsf s|_{\mathcal F_2}- \mathsf A\text{-weight of }\mathsf s|_{\mathcal F_1}.
\end{align}
\end{prop}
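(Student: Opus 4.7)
The plan is to mimic, in the partially-polarized setting, the Newton-polytope argument of Aganagic--Okounkov \cite[\S3]{AganagicOkounkov201604}. The key input is that $\msf{Stab}^{\mathsf s}_{\mfk C}([\mathcal F_1])$, being the $q\to 0$ degeneration of a section of an explicitly known line bundle $\mscr S_{\mathsf X}\otimes\mscr U$ on the extended elliptic cohomology $\mathsf E_{\mathsf T}(\mathsf X)$, has restriction to any $\mathcal F_2$ whose $\mathsf A$-weights lie in a polytope controlled by the $\mathsf A$-degree of that line bundle.

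First I would establish the abelian case (Example \ref{ex: abelian gauge theory}). Here the Higgs branch is an $\msf A$-equivariant smooth deformation of the hypertoric variety $\mathsf X_0$ over $\mathfrak g_{\mathrm{ev}}^\perp$, so $\mathbf{Stab}_{\mfk C,\mscr S_{\mathsf X}}=\mathbf{Stab}_{\mfk C,\mscr S_{\mathsf X_0}}$ (Example \ref{ex: abelian stable envelope}), and the explicit formula for the hypertoric stable envelope in \cite[\S4.1]{AganagicOkounkov201604} is available. Applying the limit \eqref{eq:k-theoretic stable envelope as the limit of elliptic one} monomial-by-monomial, each theta factor $\vartheta(w)$ degenerates (up to a power of $q$ depending on the slope $\msf s$) to $w^{1/2}-w^{-1/2}$, whose Newton polytope is the symmetric interval $\{\pm\tfrac12\}$ in the weight lattice. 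Summing these intervals over the attracting normal weights reproduces exactly $\Delta_{\mathcal F_2}$, while the K\"ahler-parameter shift $\tau(-\hbar\det\mathsf{ind})^*\mscr U\otimes\mscr U^{-1}$, combined with \eqref{eq:the k-theory slope parameter}, contributes the displacement $\msf s|_{\mathcal F_2}-\msf s|_{\mathcal F_1}$. The containment \eqref{eqn: newton polytope} then follows by checking that the extra triangularity support condition $\mathcal F_2\preceq \mathcal F_1$ only removes monomials outside this polytope.

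For the quiver case I would reduce to the abelian one via the abelianization square \eqref{eqn: abelianization}. Concretely, let $S=S_b\times S_f\subset G$ be a maximal torus and consider
\[
\msf{Stab}^{\mathsf s}_{\mfk C}=\pi_{\circledast}\,\msf j_+^*\,\bigl(\msf j_{-\circledast}\bigr)^{-1}\,\msf{Stab}^{\mathsf s}_{\mfk C,S}\,\msf j'_{-\circledast}\bigl(\msf j'^{*}_+\bigr)^{-1}(\pi'_{\circledast})^{-1},
\]
which is the $q\to 0$ limit of \eqref{eqn: abelianization}. Since $\msf{Stab}^{\mathsf s}_{\mfk C,S}$ on the abelian quotient $\mathsf X_S$ already satisfies \eqref{eqn: newton polytope} by the first step, the task is to track the Newton polytope under the additional $\pi_{\circledast}$, $\pi'_{\circledast}$, $\msf j_{\pm}^*$, and $\msf j_{-\circledast}$. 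The Weyl-symmetrization $\pi_{\circledast}$ does not enlarge the polytope because $\mathsf A$-weights are unchanged by the Weyl action of $G/S$; the pushforward $\msf j_{-\circledast}$ introduces a factor $\det\mscr N_b^\vee/\hbar$ whose $\mathsf A$-weight is independent of the fiber, and a matching contribution appears in $\pi'_{\circledast}$ at the fixed-point side, yielding a cancellation. The $\mscr E$-part of the partial polarization contributes $\Theta(\mscr E)^{1/2}$, which under $q\to 0$ gives a square root of $\det\mscr E$ whose $\mathsf A$-weights add symmetric intervals and hence preserve the polytope bound.

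The main obstacle will be controlling the square root $\Theta(\mscr E)^{1/2}$ in the degeneration, because Lemma \ref{lem: square root} only asserts the existence of such a square root up to $\mathrm{Pic}^0$, and different choices alter the K\"ahler parameter by a finite order shift. I expect this ambiguity to be absorbed into a redefinition of the slope $\mathsf s$ modulo $\mathbb Z$-lattice translations, so that the statement \eqref{eqn: newton polytope} remains intact; verifying this requires that the translation used in the abelianization diagram \eqref{eqn: abelianization} of the square root line bundles is compatible with the natural one on the K\"ahler torus $\mscr E_{\mathsf K}$. Once this compatibility is checked, the proposition follows.
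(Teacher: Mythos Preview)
Your overall strategy matches the paper's: reduce the abelian case to the known hypertoric result (the paper simply invokes \cite[Proposition 4.2]{AganagicOkounkov201604} rather than reproving it monomial-by-monomial), and handle the quiver case by abelianization. So the architecture is right.

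Where your execution diverges is in the nonabelian step. You try to track the Newton polytope through each operator $\pi_{\circledast},\msf j_+^*,(\msf j_{-\circledast})^{-1},\dots$ separately, and this leads you to worry about the ambiguity in $\Theta(\mscr E)^{1/2}$. The paper avoids this entirely by first restricting to fixed points. From the abelianization diagram \eqref{eqn: abelianization} one reads off two explicit formulas
\[
\msf{Stab}^{\mathsf s}([\mathcal F_1])|_{\mathcal F_2}=\frac{(\det\mscr N_f^{\vee}|_{\mathcal F'_2})^{-1/2}\,\msf{Stab}^{\mathsf s}_{S}([\mathcal F'_1])|_{\mathcal F'_2}}{\bigwedge((N_{\mathsf j_-}|_{\mathcal F'_2})_{\mathrm{mov}})^\vee\cdot\bigwedge((T_\pi|_{\mathcal F'_2})_{\mathrm{mov}})^\vee}\,(\det\mscr N_f^{\vee}|_{\mathcal F'_1,\mathrm{fixed}})^{1/2}
\]
and the analogous one for $\msf{Stab}^{\mathsf s}([\mathcal F_2])|_{\mathcal F_2}$. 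The point you are missing is that the second formula tells you exactly how $\Delta_{\mathcal F_2}$ is related to $\Delta_{\mathcal F'_2}$: the \emph{same} correction factors (the $\bigwedge$-terms and the $(\det\mscr N_f^\vee)^{1/2}$ moving part) appear in both, so they cancel when you compare $\mathrm{Supp}_{\mathsf A}\msf{Stab}^{\mathsf s}([\mathcal F_1])|_{\mathcal F_2}$ against $\Delta_{\mathcal F_2}$. The abelian bound on $\mathsf X_S$ then transfers directly.

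Your concern about the square root is therefore a red herring: in the quiver case the relevant square root is concretely $(\det\mscr N_f^\vee)^{1/2}$, a well-defined K-theory class up to sign, and signs do not affect $\mathrm{Supp}_{\mathsf A}$. There is no $\mathrm{Pic}^0$ ambiguity to absorb into the slope. Also note a small slip: the factor you attribute to $\msf j_{-\circledast}$ involves $\mscr N_b$, but the new piece relative to the symplectic case is $\mscr N_f$ (the odd part).
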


\begin{proof}
Suppose that $\mathsf X=\mu_{\mathrm{ev}}^{-1}(0)^{\zeta-ss}/\!\!/G$, where $G=G_{\mathrm{ev}}\times G_{\mathrm{odd}}$ acts on the representation~$\mcal R$ with moment map $\mu\colon\mcal R\oplus\mcal R^\vee\to \mfk g^\vee$ and $\mu_{\mathrm{ev}}$ is the composition of $\mu$ with projection~${\mathrm{pr}_{\mathrm{ev}}\colon\mfk g^\vee\to \mfk g_{\mathrm{ev}}^\vee}$, and $\zeta$ is a generic stability such that $\mu_{\mathrm{ev}}^{-1}(0)^{\zeta-ss}=\mu_{\mathrm{ev}}^{-1}(0)^{\zeta-s}$. As we have assumed in the setup of Section~\ref{subsubsec: Example: Quotient of Hamiltonian Reduction}, $\mu_{\mathrm{ev}}^{-1}(0)^{\zeta-ss}$ is smooth with free $G$ action.

 \textbf{Step 1.} Assume that $G$ is abelian, then according to our previous discussions in Example~\ref{ex: abelian stable envelope}, the equivariant elliptic cohomology of $\mathsf X$ is isomorphic to that of the central fiber ${\mathsf X_0=\mu^{-1}(0)^{\zeta-ss}/\!\!/G}$: $\mathrm{Ell}_{\mathsf T}(\mathsf X)\cong \mathrm{Ell}_{\mathsf T}(\mathsf X_0)$, and elliptic stable envelopes of $\mathsf X$ and $\mathsf X_0$ are identified via this isomorphism. $\mathsf X_0$ is a hypertoric variety, thus the result follows from \cite[Proposition~4.2]{AganagicOkounkov201604}.

 \textbf{Step 2.} If $G$ is nonabelian, then we shall utilize the abelianization \eqref{eqn: abelianization}. It suffices to prove in the special case when $\mathsf A=\mathbb C^{\times}$, by the same argument of \cite[Proposition~4.2]{AganagicOkounkov201604}. By step 1, we have
\begin{align*}
 \mathrm{Supp}_{\mathsf A}\msf{Stab}^{\mathsf{s}}_{S}\bigl(\bigl[\mathcal F'_1\bigr]\bigr)|_{\mathcal F'_2}\subset \Delta_{\mathcal F'_2}+\mathsf A\text{-weight of }\mathsf s|_{\mathcal F'_2}- \mathsf A\text{-weight of }\mathsf s|_{\mathcal F'_1}.
\end{align*}
The abelianization gives
\begin{align*}
 \msf{Stab}^{\mathsf{s}}([\mathcal F_1])|_{\mathcal F_2}&= \frac{\bigl(\det\mscr N_f^{\vee}|_{\mathcal F'_2}\bigr)^{-\frac{1}{2}}\msf{Stab}^{\mathsf{s}}_{S}\bigl(\bigl[\mathcal F'_1\bigr]\bigr)|_{\mathcal F'_2}}{\bigwedge((N_{\mathsf j_-}|_{\mathcal F'_2})_{\mathrm{moving}})^\vee\cdot \bigwedge((T_\pi|_{\mathcal F'_2})_{\mathrm{moving}})^\vee} \bigl(\det\mscr N_f^{\vee}|_{\mathcal F'_1,\mathrm{fixed}}\bigr)^{\frac{1}{2}},\\
 \msf{Stab}^{\mathsf{s}}([\mathcal F_2])|_{\mathcal F_2}&= \frac{(\det\mscr N_f^{\vee}|_{\mathcal F'_2,\mathrm{moving}})^{-\frac{1}{2}}\msf{Stab}^{\mathsf{s}}_{S}\bigl(\bigl[\mathcal F'_2\bigr]\bigr)|_{\mathcal F'_2}}{\bigwedge((N_{\mathsf j_-}|_{\mathcal F'_2})_{\mathrm{moving}})^\vee\cdot \bigwedge((T_\pi|_{\mathcal F'_2})_{\mathrm{moving}})^\vee}.
\end{align*}
The first line in the above equation implies that
\begin{gather*}
 \mathrm{Supp}_{\mathsf A}\msf{Stab}^{\mathsf{s}}([\mathcal F_1])|_{\mathcal F_2}\cdot \bigwedge((N_{\mathsf j_-}|_{\mathcal F'_2})_{\mathrm{moving}})^\vee\cdot \bigwedge((T_\pi|_{\mathcal F'_2})_{\mathrm{moving}})^\vee \\
 \qquad \subset \Delta_{\mathcal F'_2}-\frac{1}{2}\mathsf A\text{-weight of }\det\mscr N_f^{\vee}|_{\mathcal F'_2}+\mathsf A\text{-weight of }\mathsf s|_{\mathcal F'_2}- \mathsf A\text{-weight of }\mathsf s|_{\mathcal F'_1},
\end{gather*}
and the second line implies that $$\Delta_{\mathcal F'_2}=\Delta_{\mathcal F_2}\cdot\bigwedge((N_{\mathsf j_-}|_{\mathcal F'_2})_{\mathrm{moving}})^\vee\cdot \bigwedge((T_\pi|_{\mathcal F'_2})_{\mathrm{moving}})^\vee+\frac{1}{2}\mathsf A\text{-weight of }\det\mscr N_f^{\vee}|_{\mathcal F'_2},$$
thus we have
\begin{align*}
 \mathrm{Supp}_{\mathsf A}\msf{Stab}^{\mathsf{s}}([\mathcal F_1])|_{\mathcal F_2}\subset\Delta_{\mathcal F_2}+\mathsf A\text{-weight of }\mathsf s|_{\mathcal F_2}- \mathsf A\text{-weight of }\mathsf s|_{\mathcal F_1}.
\end{align*}
This finishes the proof.
\end{proof}

A further reduction from the K-theory to cohomology can be defined by
\begin{align}\label{eqn: reduction to cohomology}
 \mathrm{Stab}_{\mathfrak C}:=\text{lowest cohomological degree term in }\mathrm{ch}\bigl(\msf{Stab}^{\mathsf{s}}_{\mfk C}\bigr),
\end{align}
where $\mathrm{ch}\colon K_{\mathsf T}\bigl(\mathsf X\times \mathsf X^{\mathsf A}\bigr)\to H_{\mathsf T}\bigl(\mathsf X\times \mathsf X^{\mathsf A}\bigr)$ is the $\mathsf T$-equivariant Chern character map. The composition of reductions $\mathbf{Stab}_{\mfk C}\to \msf{Stab}^{\mathsf{s}}_{\mfk C}\to \mathrm{Stab}_{\mfk C}$ amounts to replacing $\vartheta(x)$ by $x$.

 For every connected component $\mathcal F\subset \msf X^{\msf A}$,
\smash{$ \mathrm{Stab}_{\mathfrak C}([\mathcal F])|_{\mathcal F}=(-1)^{\operatorname{rk} \mathsf{ind}}e(N^-_{\mathsf X/{\mathcal F}})$},
where $e(\cdot)$ is the $\mathsf T$-equivariant Euler class. Moreover, $\mathrm{Stab}_{\mathfrak C}$ is supported on \smash{$\msf{Attr}^f_{\mathfrak C}$}, i.e., if~${\mathcal F_2\npreceq \mathcal F_1}$ then $\mathrm{Stab}_{\mathfrak C}([\mathcal F_1])|_{\mathcal F_2}=0$. We expect that $\mathrm{Stab}_{\mathfrak C}$ is the cohomological stable envelope for the chamber~$\mfk C$. In other words, $\mathrm{Stab}_{\mathfrak C}$ also satisfies the degree condition, which is checked in the following.

\begin{prop}\label{prop: cohomology reduction}
Under the assumption of Proposition~{\rm\ref{prop: K-theory reduction}}, then for every pair of connected components $\mathcal F_2\preceq \mathcal F_1$ in $\mathsf X^{\mathsf A}$ with respect to the chamber $\mfk{C}$,
\smash{$ \deg_{\mathsf A}\mathrm{Stab}_{\mfk{C}}([\mathcal F_1])|_{\mathcal F_2}<\dim N^-_{\mathsf X/{\mathcal F}_2}$},
where $\deg_{\mathsf A}$ of an element in $H_{\mathsf T}\bigl(\mathsf X^{\mathsf A}\bigr)$ is the degree in $\mathbb C[\mfk a]$ under the non-canonical splitting $H_{\mathsf T}\bigl(\mathsf X^{\mathsf A}\bigr)\cong H_{\mathsf T/\mathsf A}\bigl(\mathsf X^{\mathsf A}\bigr)\otimes\mathbb C[\mfk a]$.\footnote{Different choice of splitting does not change the degree in $\mathbb C[\mfk a]$.}
\end{prop}

\begin{proof}
The idea is essentially the same as Proposition~\ref{prop: K-theory reduction}. Namely, the abelian case is the same as hypertoric varieties, which is shown by using the explicit formula \cite[equation~(56)]{AganagicOkounkov201604}. For the nonabelian case, we use the abelianization and get
\begin{align*}
 \mathrm{Stab}([\mathcal F_1])|_{\mathcal F_2}&= \frac{\mathrm{Stab}_{S}\bigl(\bigl[\mathcal F'_1\bigr]\bigr)|_{\mathcal F'_2}}{e((N_{\mathsf j_-}|_{\mathcal F'_2})_{\mathrm{moving}})\cdot e((T_\pi|_{\mathcal F'_2})_{\mathrm{moving}})}.
\end{align*}
Since the abelian stable envelope satisfies
\smash{$\deg_{\msf A}\bigl(\mathrm{Stab}_{S}\bigl(\bigl[\mathcal F'_1\bigr]\bigr)|_{\mathcal F'_2}\bigr)<\dim N^-_{\mathsf X_S/{\mathcal F}_{2,S}}$},
we conclude that
\begin{align*}
 \deg_{\msf A}\mathrm{Stab}([\mathcal F_1])|_{\mathcal F_2}&=\deg_{\msf A}(\mathrm{Stab}_{S}\bigl(\bigl[\mathcal F'_1\bigr]\bigr)|_{\mathcal F'_2})-\dim (N_{\mathsf j_-}|_{\mathcal F'_2})_{\mathrm{moving}}-\dim (T_\pi|_{\mathcal F'_2})_{\mathrm{moving}}\\
 &<\dim N^-_{\mathsf X_S/{\mathcal F}_{2,S}}-\dim (N_{\mathsf j_-}|_{\mathcal F'_2})_{\mathrm{moving}}-\dim (T_\pi|_{\mathcal F'_2})_{\mathrm{moving}}\\
 &=\dim N^-_{\mathsf X/{\mathcal F}_2}.
\end{align*}
This finishes the proof.
\end{proof}

Propositions~\ref{prop: K-theory reduction} and \ref{prop: cohomology reduction} are summarized in the following
\begin{cor}\label{cor:k-theoretic stable envelope from the elliptic one}
Under the assumption of Proposition~{\rm\ref{prop: K-theory reduction}}, $\msf{Stab}^{\mathsf{s}}_{\mfk{C}}$ is the K-theoretic stable envelope with slope $\msf s$ for the chamber $\mathfrak C$, and $\mathrm{Stab}_{\mfk{C}}$ is the cohomological stable envelope for the chamber~$\mathfrak C$.
\end{cor}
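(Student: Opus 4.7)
The plan is to verify that the limits $\msf{Stab}^{\mathsf s}_{\mathfrak C}$ and $\mathrm{Stab}_{\mathfrak C}$ satisfy the three axioms characterizing K-theoretic and cohomological stable envelopes in the sense of Okounkov \cite{Okounkov201512} and Maulik-Okounkov \cite{MaulikOkounkov201211}, namely (i) support on the full attracting set $\msf{Attr}^f_{\mathfrak C}$, (ii) the correct diagonal normalization on each fixed component $\mathcal F$, and (iii) the Newton-polytope/degree condition on off-diagonal restrictions. Since the stable envelopes are uniquely characterized by these properties, establishing them identifies the limits with the desired envelopes.

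The first two axioms have already been established in the lead-up to the corollary: the support condition follows from the fact that the elliptic stable envelope $\mathbf{Stab}_{\mathfrak C}$ is supported on $\msf{Attr}^f_{\mathfrak C}$ by Theorem \ref{thm: stable envelope}, and the $q\to 0$ limit preserves this support; the diagonal values were computed explicitly just after the definitions \eqref{eq:k-theoretic stable envelope as the limit of elliptic one} and \eqref{eqn: reduction to cohomology}, yielding $(-\sqrt{\hbar})^{\mathrm{rk}\,\mathsf{ind}}\,(\det\mathrm{Pol}_{\mathsf X}|_{\mathcal F,>0})^{-1}(\det\mscr G_{\mathrm{odd}}|_{\mathcal F,<0})^{-1/2}\bigwedge(N^-_{\mathsf X/\mathcal F})^{\vee}$ in K-theory and $(-1)^{\mathrm{rk}\,\mathsf{ind}} e(N^-_{\mathsf X/\mathcal F})$ in cohomology, which are precisely the required diagonal normalizations (up to the standard polarization-dependent normalization conventions, cf.\ footnote \ref{foot:normStab}).

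Thus the substantive content of the corollary is the third axiom, which is exactly Propositions \ref{prop: K-theory reduction} and \ref{prop: cohomology reduction}: for any pair $\mathcal F_2\preceq \mathcal F_1$ of components, $\mathrm{Supp}_{\mathsf A}\msf{Stab}^{\mathsf s}_{\mathfrak C}([\mathcal F_1])|_{\mathcal F_2}$ lies in $\Delta_{\mathcal F_2}$ shifted by the $\mathsf A$-weight difference of the slope (the K-theoretic Newton-polytope condition), and $\deg_{\mathsf A}\mathrm{Stab}_{\mathfrak C}([\mathcal F_1])|_{\mathcal F_2}<\dim N^-_{\mathsf X/\mathcal F_2}$ (the cohomological degree condition). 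These propositions cover precisely the two cases assumed in the corollary: abelian gauge groups (where $\mathsf X$ is elliptic-cohomologically identified with a hypertoric variety by the smooth-deformation argument of Example \ref{ex: abelian stable envelope}, reducing the claim to the hypertoric result of Aganagic-Okounkov) and quiver varieties (where the abelianization diagram \eqref{eqn: abelianization} of \S\ref{subsec: Abelianization} reduces the nonabelian case to the abelian one by pushforward along the partial-flag fibration $\pi$ and pullback along $\mathsf j_\pm$, with a bookkeeping of the contributions of $(N_{\mathsf j_-}|_{\mathcal F'})_{\mathrm{moving}}$ and $(T_\pi|_{\mathcal F'})_{\mathrm{moving}}$).

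Putting these ingredients together, the proof is essentially one paragraph: by construction $\msf{Stab}^{\mathsf s}_{\mathfrak C}$ and $\mathrm{Stab}_{\mathfrak C}$ are supported on $\msf{Attr}^f_{\mathfrak C}$ (axiom i), have the correct restriction to the diagonal (axiom ii) by direct computation, and satisfy the off-diagonal degree/polytope condition (axiom iii) by Propositions \ref{prop: K-theory reduction} and \ref{prop: cohomology reduction}; uniqueness of the K-theoretic and cohomological stable envelopes then identifies them with $\msf{Stab}^{\mathsf s}_{\mathfrak C}$ and $\mathrm{Stab}_{\mathfrak C}$ respectively. The main potential subtlety to verify carefully is that the limit $q\to 0$ commutes with taking supports on the closed attracting subvariety — this requires observing that the elliptic stable envelope is a \emph{holomorphic} section (not a meromorphic one in $q$) of the appropriate line bundle on the non-resonant locus, so its specialization at $q=0$ is well-defined and retains the support property. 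All other obstacles have already been dealt with in the cited propositions.
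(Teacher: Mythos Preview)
Your proposal is correct and takes essentially the same approach as the paper: the paper presents this corollary as a direct summary of Propositions \ref{prop: K-theory reduction} and \ref{prop: cohomology reduction}, relying on the support and diagonal-restriction facts already recorded before those propositions together with the uniqueness of stable envelopes. You have simply spelled out explicitly what the paper leaves as an immediate consequence.
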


\begin{ex}[Resolved Determinantal Varieties]\label{sec:k-theory and cohomology limit for resolved determinantal variety} We apply \eqref{eq:k-theoretic stable envelope as the limit of elliptic one} to \eqref{eqn: stable envelope for M(N,L)} and obtain the K-theoretic stable envelope for the resolved determinantal variety\footnote{Here we use the identities $\lim_{q\to 0}\vartheta(w)=w^{\frac{1}{2}}-w^{-\frac{1}{2}}$ and $\lim_{q\to 0}\frac{\vartheta(wz)}{\vartheta(z)}=w^{\lfloor\msf{s}\rfloor+\frac{1}{2}}$.}
\begin{align}\label{eqn: K-theoretic stable envelope for M(N,L)}
 \msf{Stab}^{\mathsf{s}}_{\mathfrak C}([\mathcal F_p])=\hbar^{\frac{\#(i>p(a))}{2}}\tenofo{Sym}_{S_N}\left[\left(\prod_{a=1}^N\tfunction_{p(a)}(s_a,\mbs{x},\hbar,\mbs{s})\right)
 \cdot\Biggl(\prod_{\substack{a>b}}\hat{\mathbf{a}}\bigl(s_as_b^{-1}\bigr)\Biggr)\right],
\end{align}
where $\tfunction_m(s,\mbs{x},\hbar,\msf{s})$ is the following function
\[
\tfunction_m(s,\mbs{x},\hbar,\msf{s}):=(sx_{m})^{\lfloor{\msf{s}}\rfloor}\prod_{i<m}\bigl(1-s^{-1}x_i^{-1}\bigr)\prod_{j>m}\bigl(1-\hbar^{-1} s^{-1}x_j^{-1}\bigr),
\]
and $\hat{\mathbf a}$ is a version of $\hat{A}$-genus
\smash{$\hat{\mathbf a}(w)=\frac{1}{w^{\frac{1}{2}}-w^{-\frac{1}{2}}}$}.
A further reduction to cohomology using \eqref{eqn: reduction to cohomology} gives the cohomological stable envelope
\begin{align}\label{eqn: cohomological stable envelope for M(N,L)}
 \mathrm{Stab}_{\mathfrak C}([\mathcal F_p])=\tenofo{Sym}_{S_N}\left[\left(\prod_{a=1}^N\rfunction_{p(a)}(s_a,\mbs{x},\hbar)\right)\cdot\Biggl(\prod_{\substack{a>b}}\frac{1}{s_a-s_b}\Biggr)\right],
\end{align}
where $\rfunction_m(s,\mbs{x},\hbar)$ is the following function
\smash{$
\rfunction_m(s,\mbs{x},\hbar):=\prod_{i<m}(s+x_i)\prod_{j>m}(s+x_j+\hbar)$}.
Note that \eqref{eqn: cohomological stable envelope for M(N,L)} equals to \smash{$(-1)^{\#(i>p(a))}W^{(10)}_{w_0,I}(-\mbs{s},\mbs{x},-\hbar)$}, where \smash{$W^{(10)}_{w_0,I}$} is the super weight function in \cite{RimanyiRozansky202105} for the longest element $w_0\in S_N$ and $I=w_0(\mathrm{image}(p))$. The sign $(-1)^{\#(i>p(a))}$, which equals to $(-1)^{\operatorname{rk} \mathsf{ind}}$, reflects the choice of normalization of the stable envelope in the \emph{loc.\ cit}.\ being~${ \mathrm{Stab}_{\mathfrak C}([\mathcal F])|_{\mathcal F}=e(N^-_{\mathsf X/{\mathcal F}})}$ as opposed to $(-1)^{\operatorname{rk} \mathsf{ind}}e(N^-_{\mathsf X/{\mathcal F}})$, see Axiom~A1 in Definition~4.1 of loc.\ cit.
\end{ex}

\part[The solution to dYBE for $\mfk{sl}(1|1)$ from 3d $\mcal{N}=2$ SQCD]{The solution to dYBE for $\boldsymbol{\mfk{sl}(1|1)}$\\ from 3d $\boldsymbol{\mcal{N}=2}$ SQCD} \label{part:physics}

\section[Stable envelopes and $R$-matrix from gauge theory: the setup]{Stable envelopes and $\boldsymbol{R}$-matrix from gauge theory: the setup}
\label{sec:the setup}

In the previous sections, we have defined the elliptic stable envelope for certain non symplectic varieties. Examples of such varieties include Higgs branches of 3d $\cN=2$ theories. We have characterized the stable envelopes by their formal properties and using these stable envelopes, we defined the $R$-matrix satisfying the dynamical Yang--Baxter equations for superspin chains. In practice, it is rather difficult to construct stable envelopes from their definitions alone. To write down explicit formulas for these elliptic functions, we turn to the 3d $\cN=2$ theories. In this section, we explain the motivation behind computing certain partition functions of these theories that we argue should give us stable envelopes. Finally, in Section~\ref{sec:stable envelopes and the $R$-matrix from gauge theory} when we actually compute these partition functions, we shall give an a posteriori justification for the computation by showing that the partition functions satisfy the criteria for elliptic stable envelopes set up earlier and that the $R$-matrix satisfies the dynamical Yang--Baxter equations. In fact, results from these computations were already used to define the functional forms of the elliptic $\sl(1|1)$ stable envelopes \rf{eqn: stable envelope for M(N,L)} and the $R$-matrix \rf{eqn: quiver $R$-matrix for sl(1|1)} and we checked that they indeed fulfill all the necessary criteria. Let us now go through the details of how we actually computed them.

\subsection{The Bethe/Gauge correspondence}\label{sec:bethe-gauge correspondence}
The correspondence relates integrable spin chains to supersymmetric quiver gauge theories, for both purely bosonic \cite{NekrasovShatashvili200901c,NekrasovShatashvili200901b} and superspin \cite{Nekrasov201811} chains. One way to introduce this is from a~correspondence between Kac--Dynkin diagrams and quivers.

 An integrable spin chain is based on a Lie algebra, say $\sl(m|n)$, which can be characterized by its Kac--Dynkin diagram. For example, the following is a Kac--Dynkin diagram for $\sl(1|3)$
\[
 \begin{tikzpicture}[x={(1.5cm,0cm)}, y={(0cm,1.5cm)}]
 \node[draw, circle] at (2,0) (n3) {};
 \node[draw, circle] at (1,0) (n2) {};
 \node[draw, circle, cross] at (0,0) (n1) {};

 \draw[-] (n1) -- (n2);
 \draw[-] (n2) -- (n3);
 \end{tikzpicture} .
\label{dynkin31}
\]
A hollow circle represents an even simple root and a crossed circle represents an odd simple root. In order to fully characterize a spin chain, in addition to the symmetry, we also need to specify the representations or the spins that appear in the chain and the anisotropy. For simplicity, let us only consider spin chains with highest-weight representations, with the highest-weight being the fundamental weight, then we just need to specify the number of sites in the spin chain. For the anisotropy, we have three choices (see Table~\ref{table:spinchains}), fully isotropic (XXX/rational), partially anisotropic (XXZ/trigonometric/K-theoretic), and fully anisotropic (XYZ/elliptic).

The Bethe/Gauge correspondence relates this data to a quiver such as the following:\footnote{In the hierarchy of quivers, the first one is just a Kac--Dynkin diagram (e.g., \rf{an example of Kac--Dynkin diagram}), then there's the framed and doubled quiver (e.g., \rf{an example of type A quiver}), and finally the one in \rf{quiver31} is a tripled quiver that is traditionally used in physics to define gauge theories with 4 supercharges.}
\beq
\begin{tikzpicture}[x={(2cm,0cm)}, y={(0cm,2cm)}, baseline=-1cm]
\node[draw, circle] at (0,0) (n1) {$N_1$};
\node[draw, circle] at (1,0) (n2) {$N_2$};
\node[draw, circle] at (2,0) (n3) {$N_3$};

\draw[-stealth] (n1.25) to (n2.155);
\draw[-stealth] (n2.25) to (n3.155);
\draw[-stealth] (n3.205) to (n2.335);
\draw[-stealth] (n2.205) to (n1.335);

\draw[-stealth] (n2.120) to [out=120, in=60, looseness=7] (n2.60);
\draw[-stealth] (n3.120) to [out=120, in=60, looseness=7] (n3.60);

\node[draw, rectangle, minimum size=.6cm] at (0,-.75) (f1) {$L$};

\draw[-stealth] (n1.292) to (f1.60);
\draw[-stealth] (f1.120) to (n1.248);
\end{tikzpicture}
\label{quiver31}
\eeq
Each circular/gauge node corresponds to a node in the Kac--Dynkin diagram~-- even nodes are distinguished from the odd ones by the presence of a self-loop. The presence of a single framing node attached to the left-most gauge node reflects our choice regarding the spin chain only having the fundamental highest weights. The value $L$ of the framing node is the size of the spin chain. The gauge dimensions $N_1$, $N_2$ and $N_3$ are not part of the data defining the spin chain. Rather, they correspond to a sector of the spin chain with definite magnon numbers. To find the gauge theory dual of the full spin chain we need to consider multiple quivers, varying the gauge dimensions over all possible magnon numbers. The choice of anisotropy is not part of the quiver itself, but it will be reflected in the choice of gauge theory that we shall assign to the quiver. The quiver \rf{quiver31} is a special case of Example~\ref{ex: quiver with one f node 2} with $n=3$.

 To quivers like \rf{quiver31}, we can assign supersymmetric gauge theories with 4 supercharges in various dimensions. In the version of the Bethe/Gauge correspondence in which we are interested in this paper, the relation between anisotropy and gauge theory is as follows:
\beq
 \begin{tabular}{rl}
 Rational & 1d $\cN=4$ (quantum mechanics), \\
 Trigonometric & 2d $\cN=(2,2)$, \\
 Elliptic & 3d $\cN=2$.
\end{tabular} \label{B/G}
\eeq
In the absence of any odd node, the amount of supersymmetry is doubled in all cases. All these theories possess the same classical Higgs branch $\higgs$, which is the quiver variety $\cM^\ze(\mbf v, \mbf w)$~\rf{HiggsSyQ} with $\mbf v = (N_1, N_2, N_3)$ and $\mbf w = (L, 0, 0)$. The stability condition $\ze$ corresponds to the real FI parameter in gauge theories. Supersymmetric ground states of these theories correspond to different cohomologies of the Higgs branch. For quantum mechanics, Witten's Morse theory arguments \cite{Witten198204} describe the space of ground states as the ordinary equivariant cohomology~$H_{\msf A}(\higgs)$ where $\msf A$ is the maximal torus of the $\SU(L)$ flavor symmetry of the quiver. In the 2d and 3d cases the cohomology is replaced by K-theory $K_{\msf A}(\higgs)$ and elliptic cohomology $\text{Ell}_{\msf A}(\higgs)$ respectively \cite{Witten198604, Witten1988}. On the spin chain side, for the rational, trigonometric, and elliptic cases, we have the Yangian $\msf Y_\hbar(\mfr g)$, the quantum affine algebra $\mcr U_\hbar(\what{ \mfr g})$, and the elliptic dynamical quantum group $E_{\tau, \hbar}(\mfr g)$ as the spectrum generating algebras. The Bethe/Gauge correspondence identifies the Bethe eigenstates of the spin chain with the space of supersymmetric vacua of the corresponding gauge theories. Consequently, it conjectures an action of the spectrum generating algebra on the corresponding cohomology \cite{FelderRimanyiTarasov201702,MaulikOkounkov201211, Okounkov201512}.\footnote{Since a single gauge theory corresponds to a specific magnon sector, we need to take direct sum of the cohomologies over gauge dimensions to get the full spin chain Hilbert space and the algebra action (cf.\ the sums over $\mbf v$ in Conjecture \ref{conj: algebra action}).}

There are more choices of gauge theories that we can assign to quivers such as \rf{quiver31} and find dual gauge theoretic interpretations of spin chain quantities. For example, in the original formulation of the Bethe/Gauge correspondence \cite{Nekrasov201811, NekrasovShatashvili200901c,NekrasovShatashvili200901b}, for both bosonic and supersymmetric spin chains, the gauge dual of the rational spin chain was presented as a 2d $\cN=(4,4)$ (bosonic spin chain) or an $\cN=(2,2)$ (superspin chain) theory. The quantum mechanics from \rf{B/G} is simply the dimensional reduction of these 2d theories. The supersymmetric vacua of these 2d theories correspond to the quantum deformation of the equivariant cohomology. In principle, there should be 3d and 4d theories with 4 supercharges that quantize K-theory \cite{Jockers:2018sfl, Jockers:2021omw,Kapustin:2013hpk, Okounkov201512} and elliptic cohomology, and whose dimensional reductions correspond to the 2d and 3d theories in \rf{B/G}. In this paper we shall not be concerned with quantum cohomology, regardless, it is of some interest to look at the 2d dual of the rational spin chains. We consider the example of a~$\sl(1|1)$ spin chain in the following.

\begin{ex}[$\sl(1|1)$ spin chains and 2d $\cN=(2,2)$ Gauge theories] \label{ex:2dgl11}
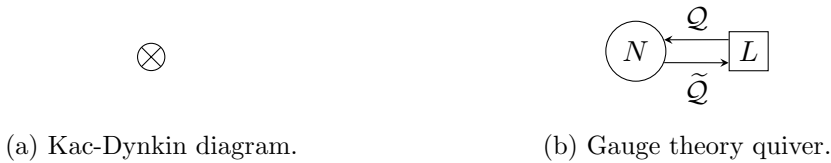
\begin{figure}[H]
\centering
\begin{subfigure}[b]{.45\textwidth}
\centering
\begin{tikzpicture}[x={(1.5cm,0cm)}, y={(0cm,1.5cm)}, baseline=-.75cm]
 \node[draw, circle, cross] at (0,0) {};
 \end{tikzpicture}
 \caption{Kac--Dynkin diagram.}\label{fig:gl11Dynkin}
\end{subfigure}
\begin{subfigure}[b]{.45\textwidth}
\centering
	\begin{tikzpicture}[x={(1.5cm,0cm)}, y={(0cm,1.5cm)}]
 	\node[draw, circle] at (0,0) (N) {$N$};
 	\node[draw, rectangle] at (1,0) (L) {$L$};
 	\draw[-stealth] (L.150) to node[midway, above] {$\cQ$} (N.22);
 	\draw[-stealth] (N.-22) to node[midway, below] {$\wtd \cQ$} (L.210);
 	\end{tikzpicture}
\caption{Gauge theory quiver.}\label{fig:gl11Quiver}
\end{subfigure}
\caption{$\sl(1|1)$ Kac--Dynkin diagram and the corresponding quiver.}
\end{figure}

Consider a length $L$ spin chain whose sites are labeled by $i=1,\dots,L$. At the $i$th site we assign the $\sl(1|1)$ representation with the highest-weight \smash{$\bigl(\la^{(1)}_i, \la^{(2)}_i\bigr)$}. We are using the $\gl(1|1)$ notation to write down the weights, i.e., \smash{$\la^{(1)}_i + \la^{(1)}_i$} is the $\sl(1|1)$ weight of the $i$th site. We also consider complex inhomogeneities~$ v_i$. Excitations containing $N$ magnons with rapidities $\si^\bC_a$ for $a=1,\dots,N$ correspond to Bethe eigenstates if the rapidities satisfy the Bethe ansatz equations (BAE) \cite{RagoucySatta200706} (also see \cite{Volin201012})
\begin{equation}\label{eq:Bethe equations for gl(1|1)}
 \prod_{i=1}^L\frac{+\sigma^\mbb{C}_a+\frac{\mfk{i}}{2}\ep - v_i+\mfk{i} \ep \lambda_i^{(1)}}{-\sigma_a^\mbb{C}-\frac{\mfk{i}}{2}\ep + v_i+\mfk{i} \ep \lambda_i^{(2)}}={\rm e}^{\mfk{i}\varphi}, \qquad a=1,\dots,N,
\end{equation}
The inhomogeneities and the phase on the right-hand side are arbitrary auxiliary parameters of the spin chain. $\ep$ is the quantization parameter of the spin chain.\footnote{Usually the quantization parameter in a quantum mechanical system is written as $\hbar$, however in this paper, we are using $\hbar$ for the elliptic deformation parameter which is roughly related to $\ep$ by $\ep \sim \ln \hbar$.} Note that the value of~\smash{$\la_i^{(1)} - \la_i^{(2)}$} is somewhat ambiguous since it can be changed by shifting the (arbitrary) inhomogeneity $ v_i$. The $\sl(1|1)$ charge \smash{$\la^{(1)}_i + \la^{(2)}_i$} is invariant under such shifts.

The BAE is the condition of extremizing a potential function called the Yang--Yang function
 \begin{align}
 Y\bigl(\mbs{\sigma}^\mbb{C}\bigr)={}&\sum_{a=1}^N\sum_{i=1}^L\bigl(\sigma^\mbb{C}_a-m^\mbb{C}_i\bigr)\bigl(\ln\bigl(\sigma^\mbb{C}_a-m^\mbb{C}_i\bigr)-1\bigr)\nonumber
 \\
 &+\sum_{a=1}^N\sum_{i=1}^L\bigl(-\sigma^\mbb{C}_a-\wt{m}^\mbb{C}_i\bigr)\bigl(\ln\bigl(-\sigma^\mbb{C}_a-\wt{m}^\mbb{C}_i\bigr)-1\bigr)-\mfk{i}\varphi\sum_{a=1}^N\sigma^\mbb{C}_a ,\label{eq:the Yang--Yang function for gl(1|1)}
 \end{align}
where we have defined the new parameters
\beq
	m_i^\mbb{C}:= v_i-\mfk{i} \ep \lambda_i^{(1)}-\frac{\mfk{i}}{2} \ep, \qquad
	 \wt{m}_i^\mbb{C}:= - v_i-\mfk{i} \ep \lambda_i^{(2)}+\frac{\mfk{i}}{2} \ep. \label{2dtwistedmass}
\eeq
In terms of this function, the BAE becomes:
\smash{$\exp \bigl(2\pi \mfr i \frac{\partial Y(\bm\si^\bC)}{\partial \si^\bC_a}\bigr)\! =\! 1$}, $ a \!=\! 1,\dots, N$.
The Bethe/Gauge correspondence identifies the Yang--Yang function \rf{eq:Bethe equations for gl(1|1)} as the effective twisted superpotential of the 2d $\cN=(2,2)$ theory defined by the quiver Figure \ref{fig:gl11Quiver} in the presence of twisted masses~\cite{Nekrasov201811}. We identify the rapidities $\si^\bC_a$ with the adjoint scalars in the Cartan part of the $\cN=(2,2)$ $\U(N)$ vector multiplet. Then we recognize the 1-loop contributions of the fundamental and the anti-fundamental chirals $\cQ^i_a$, $\wtd \cQ^a_i$ in the first and the second line of \rf{eq:the Yang--Yang function for gl(1|1)} \cite{HoriKatzKlemmPandharipandeThomasVafaVakilZaslow2003,Witten199301}. Masses of these chirals are precisely the parameters defined in \rf{2dtwistedmass}. We also notice the microscopic linear twisted superpotential as the last term in the second line.

 For positive real FI parameter, the 2d theory flows in the infrared to a phase where it is described by a sigma model into its Higgs branch \cite{Witten199301}. The Higgs branch is acted on by $\SU(L)$ and the fixed points of the $\U(1)^{L-1}$ action correspond to the vacua of the theory. When put on an interval, such as $I \times S^1$ for some interval $I$, the theory has brane-type boundary conditions supported on the attracting and repelling sets of these fixed points. All this Higgs branch structure can just as well be studied by looking at the dimensionally reduced $\cN=4$ quantum mechanics (from \rf{B/G}) on $I$ using Morse theory \cite{Witten198204}.
\end{ex}

\subsection[3d $\cN=2$ SQCD and its parameters]{3d $\boldsymbol{\cN=2}$ SQCD and its parameters}\label{sec:3d N=2 SQCD and its parameters}

The 3d theory assigned to the quiver Figure~\ref{fig:gl11Quiver} is the $\cN=2$ SQCD. The theory has a gauge group
$G := \U(N)$ with Lie algebra $\mfr g := \mfr u(N)$.
The flavor symmetry visible in the quiver is~$F_{\mathsf A} := \SU(L)$ with Lie algebra $\mfr f_{\msf A} := \su(L)$.
There is an additional $\U(1)_\hbar$ symmetry with Lie algebra $\bR_\hbar$. The total flavor symmetry we consider is
$F := \SU(L) \times \U(1)_\hbar$,
whose Lie algebra we denote by $\mfr f$.

 Let $\tr A$ be the component of the gauge field valued in the center of $\mfr u(N)$. From this abelian gauge field, we can form a current $\star \tr F$ in 3d which is conserved by Bianchi identity. This gives rise to an abelian topological symmetry $\U(1)_\text{top}$. It is topological in the sense that only the monopoles and none of the fields in the Lagrangian are coupled to this current. The Lie algebra of $\U(1)_\text{top}$ is written as $\bR_\text{top}$.
This global symmetry can be weakly gauged by introducing a~background vector multiplet for it. Let $A_\text{top}$, $\si_\text{top}$ and $\sfD_\text{top}$ be the gauge field, the scalar, and the auxiliary scalar in this vector multiplet. The purely bosonic part of the Lagrangian coupling this multiplet to the topological current is
\beq
 A_\text{top} \wedge \tr F + \si_\text{top} \tr \sfD + \sfD_\text{top} \tr \si . \label{topGauge}
\eeq
The theory has a $\mfr g$-valued vector multiplet $\si$ and chiral multiplets\footnote{We refer to a multiplet by its scalar component. Their full field content and our conventions about supersymmetry are presented in Section~\ref{sec:3dN=2susy}.}
\begin{gather}
	\bigl(\cQ, \wtd \cQ\bigr) \in \cR \oplus \cR^\vee ,\qquad
	\text{where} \qquad \cR = \Hom\bigl(\bC^L, \bC^N\bigr) , \qquad \cR^\vee = \Hom\bigl(\bC^N, \bC^L\bigr) .\label{chiralRep}
\end{gather}
We also define $\mbf M := \cR \oplus \cR^\vee$ to refer to the space of all the chirals. The chiral multiplets are accompanied by their conjugate anti-chiral multiplets \smash{$\bigl(\ov \cQ, \ov{\wtd \cQ}\bigr) \in \cR^\vee \oplus \cR$}. Charges of these multiplets under the groups defined above are as follows
\begin{table}[H]\renewcommand{\arraystretch}{1.2}
$$
\begin{array}{|c|cccc|}
\hline
& \U(N) & \SU(L) & \U(1)_\hbar & \U(1)_\text{top} \\
\hline
\si & \mbf{ad} & \mbf 1 & 0 & 0 \\
\cQ & \mbf N & \mbf L^\vee & \frac{1}{2} & 0 \\
\wtd \cQ & \mbf N^\vee & \mbf L & \frac{1}{2} & 0 \\
\hline
\end{array}
\nn
$$
\caption{Charges of 3d $\cN=2$ multiplets in the quiver gauge theory of Figure~\ref{fig:gl11Quiver}. $\mbf{N}$ and $\mbf L$ refer to the fundamental representations of $\U(N)$ and $\SU(L)$ respectively.
}\label{tab:charges of supersymmetric multiplets of the non-Abelian gauge theories}
\end{table}
There is also a $\U(1)_R$ $R$-symmetry but charges of various multiplets under this symmetry are a bit ambiguous since the symmetry can be mixed with any other $\U(1)$ global symmetry. Also, we do not turn on any background for the $R$-symmetry.

We put the 3d theory on a Euclidean space-time manifold $[y_-, y_+] \times \bE_\tau $.
Here $\bE_\tau$ is the elliptic curve with complex structure $\tau$, defined as the quotient
$\bE_\tau = \bC^\times/q^\bZ $, $ q := {\rm e}^{2\pi \ii \tau}$ .
For the global symmetries $\SU(L) \times \U(1)_\hbar \times \U(1)_\text{top}$ we turn on background vector multiplets. The multiplets contain, among other fields, flat connections.\footnote{The connections need to be flat to preserve supersymmetry (cf.\ \rf{Fzzbar}). The full constraints on the background fields follow from the BPS equations of Section~\ref{sec:BPSeq}.} These flat connections are parameterized by their holonomies around the non-contractible cycles, $S^1_\A$ and $S^1_\B$, of $\bE_\tau$. For example, if $A_{\mfr a}$ is the background connection for the $F_{\mathsf A} =\SU(L)$ flavor symmetry, then we can compute the~${\mfr a_\bC = \su(L) \otimes \bC}$ valued periods
\smash{$a_{\mfr a} := \oint_{S^1_\B} A_{\mfr a} - \tau \oint_{S^1_\A} A_{\mfr a}$}.
We can always take the background fields to be Cartan valued
\[
a_{\mfr a} = \text{diag}((a_{\mfr a})_1, \dots, (a_{\mfr a})_L) , \qquad (a_{\mfr a})_1 + \cdots + (a_{\mfr a})_L = 0.
\]
Once we quotient by equivalences due to gauge transformations, these periods are really valued in the torus~\smash{$a_{\mfr a} \in \frac{\mfr a_\bC}{\La^\vee_{\mfr a} \oplus \tau \La^\vee_{\mfr a}}$},
where $\La^\vee_{\mfr a}$ is the coroot lattice of $\mfr a$. Consequently, the holonomies of this connection around the cycles of the elliptic curve, also called the elliptic equivariant parameters, are valued in the elliptic curve as well
\[
	(x_i, \dots, x_L) := \bigl({\rm e}^{2\pi \ii \tau (a_{\mfr a})_1}, \dots, {\rm e}^{2\pi \ii \tau (a_{\mfr a})_L} \bigr) \in \bE^{L}_\tau , \qquad i = 1, \dots, L .
\]
The tracelessness condition translates to $x_1 \cdots x_L = 1$ for the equivariant parameters. Another common term for these parameters is fugacity, which we shall also use.

 In a completely analogous fashion, we turn on elliptic fugacities for the flavor symmetries~$\U(1)_\hbar$,~$\U(1)_\text{top}$, as well as for the gauge symmetry $\U(N)$
$$\renewcommand{\arraystretch}{1.2}
\begin{array}{|c|c|}
\hline
\text{Symmetry} & \text{Fugacities} \\
\hline
\U(1)_\hbar & \hbar \in \bE_\tau \\
\U(1)_\text{top} & z \in \bE_\tau \\
\U(N) & (s_1, \dots, s_N) \in \bE^N_\tau \\
\hline
\end{array} .
$$
The values of the flavor fugacities are constant along the interval $[y_-, y_+]$ due to BPS equations~\rf{flowAzbar}. The values of the gauge fugacities along the interval are unconstrained and only their values at the boundaries with fixed holonomy are parameters of the theory. These fugacities are the same equivariant parameters as in Example~\ref{ex: stab for M(N,L)}.

 In addition to the holonomies, we also turn on $\SU(L)$ twisted masses, by giving nonzero VEV to the adjoint scalar in the background $\SU(L)$ vector multiplet
$\lan \si_{\mfr a} \ran = m = \text{diag}(m_1, \dots, m_L)$,
as well as $\U(1)_\text{top}$ twisted mass
$\lan \si_\text{top} \ran = \ze $.
Note that, due to the coupling \rf{topGauge} between the topological and the dynamical gauge multiplets, the $\ze$ is nothing but the real FI parameter. The absolute values of these mass parameters do not play any role in this paper. However, the relative ordering of their values matter. We fix the sign of the FI parameter once and for all localization computations
$\ze > 0$.
For the $\SU(L)$-masses, we shall refer to different chambers at different times. A chamber simply refers to a particular ordering of the masses. For any permutation $\varsigma\colon \{1,\dots, L\} \hookrightarrow \{1, \dots, L\}$ we have a chamber
\beq
	\mfr C_\varsigma = \{(m_1, \dots, m_L) \mid m_{\varsigma(1)} < m_{\varsigma(2)} < \cdots < m_{\varsigma(L)}\} \subseteq \mfr a .
\label{chamberDef}\eeq
These are the same chambers as in \rf{chamberDefM}. Two arbitrary but specific chambers defined by $i \mapsto i$ and $i \mapsto L+1-i$, reintroduced in \rf{C01}, will be used for explicit computations of stable envelopes and the $R$-matrix in Section~\ref{sec:stable envelopes and the $R$-matrix from gauge theory}.

{\bf Reduction to 2d.} In Section~\ref{sec:2d and 1d avatars of elliptic stable envelopes}, we shall consider dimensional reduction of the 3d partition functions down to 2d and 1d. The 3d $\rightarrow$ 2d reduction is done by compactifying the $S^1_\B$ cycle of the elliptic curve. Then the period of the connections along the $S^1_\B$ cycle becomes a real adjoint scalar field in the 2d theory. Combined with the real adjoint scalar already in the 3d vector multiplet, this becomes the complex adjoint scalar of the 2d $\cN=(2,2)$ vector multiplet. For example, for the $F_{\msf A}$ vector multiplet we define the following complex scalar
\begin{equation}
 \si_{\mfr a}^\bC := \si_{\mfr a} + \ii \bigointsss_{S^1_\B} A_{\mfr a} . \label{complexScalar}
\end{equation}
We use $(\si_{\mfr g})^\bC$, $(\si_\hbar)^\bC$, and $\ze^\bC$ for the similarly defined complex adjoint scalars in the dynamical vector multiplet, in the $\U(1)_\hbar$ vector multiplet, and the complex FI parameter in 2d respectively. A further reduction on $S^1_\A$ gives an $\mcal{N}=4$ quantum mechanics.

\subsection{Branes and the Bethe/Gauge correspondence}
In this section, we are going to look at the brane construction of the 3d quiver theory and relations between certain string dualities and the Bethe/Gauge correspondence. The goal is to find hints regarding which gauge theoretic quantities may correspond to stable envelopes and the $R$-matrix. The brane constructions will serve merely as a motivation to set up the purely gauge theoretic computations of the future sections, we will be rather schematic in the present section.\looseness=-1

 Gauge theories with 4 supercharges have standard brane constructions in terms of D-branes suspended between rotated NS5s \cite{deBoerHoriOzYin, Elitzur:1997fh, Elitzur:1997hc}. We consider a 10d space-time $\bR_y \times \bR_1 \times \bE_\tau \times \bR_2 \times \bR_3 \times \bR^2_{+\ep} \times \bR^2_{-\ep}$. Here $\bR_y$ is a real direction parameterized by the coordinate $y$ and $\bR_1$, $\bR_2$, $\bR_3$ are three real directions parameterized by some unspecified coordinates. $\bR^2_{\pm \ep}$ are two $\Om$-deformed planes, $\ep$ being the deformation parameter. The 3d $\cN=2$ SQCD can be seen as the world-volume theory of a stack of $N$ D3 branes in this background. The complete brane configuration is in Table \ref{tab:brane configuation of the three-dimensional theory}, and we draw a cartoon of the branes in Figure~\ref{fig:SQCDbranes}.

\begin{table}[h]\centering\renewcommand{\arraystretch}{1.2}
$$
\begin{array}{|cc|ccccccc|}
\hline
\text{No.} & & \bR_y & \bR_1 & \bE_\tau & \bR_2 & \bR_3 & \bR^2_{+\ep} & \bR^2_{-\ep} \\
\hline
1 & \text{NS5} & \times & \times & \times &&& \times & \\
1 & \text{NS5}' & \times & \times & \times &&&& \times \\
N & \text{D3} & \times && \times & \times &&& \\
L & \text{D5} & \times && \times && \times & \times & \\
\hline
\end{array}
$$
 \caption{Brane configuration for 3d $\cN=2$ SQCD with $\U(N)$ gauge group and $\SU(L)$ flavor (see Figure~\ref{fig:gl11Quiver}). Simultaneous rotation of the $\bR^2_{+\ep}$ and $\bR^2_{-\ep}$ planes is also a symmetry of this configuration. This corresponds to the $\U(1)_\hbar$ symmetry of the 3d theory (see Table \ref{tab:charges of supersymmetric multiplets of the non-Abelian gauge theories}), the two parameters being roughly related by $\ep \sim \ln \hbar$.}
 \label{tab:brane configuation of the three-dimensional theory}
\end{table}

\begin{figure}[h]
\centering
\begin{tikzpicture}[x={(1.25cm,0cm)}, y={(0cm,1.25cm)}]
 \draw[line width=1] (0,0) -- (0,3);

 \draw[dashed, line width=1] (3,0) -- (3,3);

 \foreach \y in {2, 2.1, 2.6, 2.7} {
 \draw (0,\y) -- (3,\y);
 }
 \node at (1.5, 2.4) {$\vdots$};
 \coordinate (midpoint) at (1.5,2.7);
 \node[above] at (midpoint) {};

 \node[cross, draw, circle, minimum size=5mm] at (1.5, 1.3) {};

 \node at (1.5, .9) (5dots) {$\vdots$};

 \node[cross, draw, circle, minimum size=5mm] at (1.5, .3) {};

 \node[above left=0cm] at (0, 0) {NS5};
 \node[above right=0cm] at (3,0) {NS5$'$};

 \node at (4, 1.25) (D5) {$L$ D5s};
 \node at (-1,3) (D3) {$N$ D3s};

 \draw[-stealth] (D5.180) to [out=180, in=0] ($(5dots.-30) + (.3cm,0)$);

 \draw[-stealth] (D3.20) to [out=10, in=120] ($(midpoint.120) + (-1cm, .2cm)$);

 \node at (6,1) (origin) {};
 \node at ($(origin)+(0,1)$) (up) {};
 \node at ($(origin)+(1,0)$) (right) {};
 \draw[-stealth] (6,1) -- (up);
 \draw[-stealth] (6,1) -- (right);
 \node[left=0] at (up) {$\bR_1$};
 \node[below=0] at (right) {$\bR_2$};
\end{tikzpicture}
\caption{A cross-section of the brane configuration from Table \ref{tab:brane configuation of the three-dimensional theory}.}
\label{fig:SQCDbranes}
\end{figure}
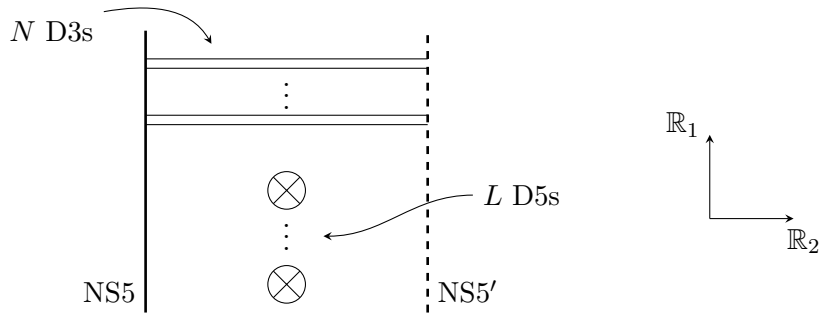

T-dualizing the elliptic curve and the $\bR_1$ direction takes us to a type IIA configuration with~$N$~D2 branes suspended between rotated NS5 branes, with $L$ D4 branes about. This is precisely the configuration studied in \cite{IshtiaqueMoosavianRaghavendranYagi202110}. In this duality frame, the world-volume theory of the~D2 branes is the 2d $\cN=(2,2)$ theory from Example~\ref{ex:2dgl11}. If we do not T-dualize $\bR_1$, then we find~D1 branes suspended between NS5s and additional D3 branes. The D1 branes are described the quantum mechanics from \rf{B/G} for which the D3 branes provide flavor symmetry. The main result of \cite{IshtiaqueMoosavianRaghavendranYagi202110} was to relate this setup to 4d Chern--Simons theory.\footnote{Also see \cite{CostelloYagi201810} for relations between $\Om$-deformed D5s and bosonic 4d Chern--Simons theory.} By a further S-duality we can convert the D1s, NS5s, and D3s to F1s, D5s, and D3s respectively. The $\Om$-deformed D5s give rise to a $\gl(1|1)$ 4d holomorphic-topological Chern--Simons theory on $\bR_y \times \bR_1 \times \bE_\tau$ where~${\bR_y \times \bR_1}$ is the topological plane and the elliptic curve provides the holomorphic direction. The D3 branes create line operators in this theory \cite{IshtiaqueMoosavianRaghavendranYagi202110, Ishtiaque:2022bck}. From \cite{CostelloWittenYamazaki201709} we know that crossing of line operators in the topological plane of 4d Chern--Simons creates $R$-matrix for certain spin chains. In the starting configuration of Table \ref{tab:brane configuation of the three-dimensional theory} this crossing corresponds to changing the $\bR_1$-positions of two~D5 branes as we move along in the $\bR_y$ direction, ultimately swapping the two (see Figure~\ref{fig:RD5}).
\begin{figure}[h]
\centering
\begin{tikzpicture}[x={(1cm,0cm)}, y={(0cm,1cm)}]
 \draw[line width=1, -stealth] (0,0) .. controls (0,2) and (2,1) .. (2,3);
 \draw[line width=1, -stealth] (2,0) .. controls (2,2) and (0,1) .. (0,3);
 \node[below=0] at (0,0) {$\ket{i}$};
 \node[below=0] at (2,0) {$\ket{j}$};
 \node[above=0] at (0,3) {$\bra{k}$};
 \node[above=0] at (2,3) {$\bra{l}$};

 \node[align=left] at (5.5,2) (text) {Line operators in 4d CS \\ or D5 branes from Table \ref{tab:brane configuation of the three-dimensional theory}};

 \draw[-stealth] (text) to (2.1,2.5);
 \draw[-stealth] (text) to (2.1,.5);

 \node at (-3.5,1) (origin) {};
 \node at ($(origin)+(0,1.25)$) (up) {};
 \node at ($(origin)+(1.25,0)$) (right) {};
 \draw[-stealth] (-3.5,1) -- (up);
 \draw[-stealth] (-3.5,1) -- (right);
 \node[left=0] at (up) {$\bR_y$};
 \node[below=0] at (right) {$\bR_1$};
\end{tikzpicture}
\caption{Line operators crossing in the topological plane of 4d Chern--Simons, computing the matrix element $\bra{k} \otimes \bra{l} R \ket{i} \otimes \ket{j}$ of an $R$-matrix. This translates to bending D5 branes (from Table \ref{tab:brane configuation of the three-dimensional theory}) in the $\bR_y \times \bR_1$ plane.} \label{fig:RD5}
\end{figure}
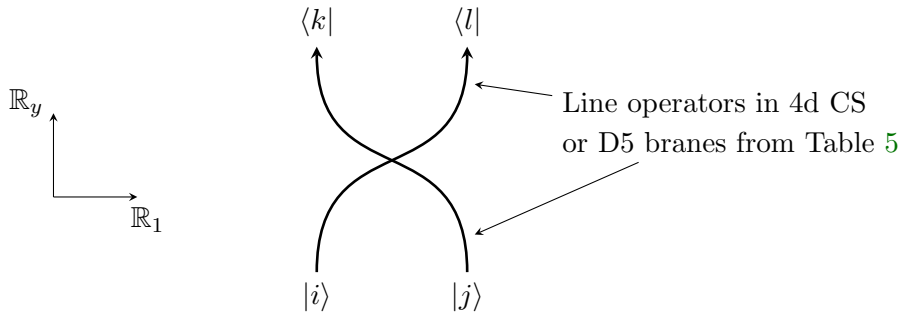
The incoming and outgoing asymptotic states correspond to states in the spin chain, or according to the Bethe/Gauge philosophy, vacua of the D3 brane theory. Without further analysis of the string background, it is unclear how exactly to bend branes while preserving supersymmetry or how to specify incoming and outgoing states. However, there is a natural supersymmetric process in the field theory language that accomplishes something analogous. The locations of the D5 branes in the $\bR_1$ direction are real twisted masses for the chiral multiplets of the SQCD that lives on the D3 branes. Swapping the $\bR_1$-locations, $m_1$ and $m_2$, of two D5 branes then correspond to giving $y$-dependent masses $m_1(y)$ and $m_2(y)$ such that
\beq
	\lim_{y \to -\infty} (m_1(y), m_2(y)) = (m_1, m_2) , \qquad \lim_{y \to +\infty} (m_1(y), m_2(y)) = (m_2, m_1) . \label{masschange}
\eeq
Such position-dependent masses can be made supersymmetric, creating a mass Janus interface \cite{Arav:2020obl, Bobev:2013yra, DHoker:2009lky, Gran:2008vx}. The interface interpolates between two distinct chambers, as defined in \rf{chamberDef}. In Section~\ref{sec:3d N=2 SQCD and half-bps boundary conditions}, we shall be more specific and identify suitable boundary conditions in the 3d theory, in any chamber, that mimics supersymmetric vacua, and define mass Janus interpolating between them.

\subsection{Gauge-theoretic definition of elliptic stable envelope}\label{sec:gauge-theoretic construction of elliptic stable envelope}

The gauge-theoretic definition of elliptic stable envelopes as Janus partition functions first appeared in \cite{BullimoreZhang202109,DedushenkoNekrasov202109}. They define the Janus interfaces for 3d $\cN=2$ supersymmetry but provide boundary conditions for 3d $\cN=4$ vacua, which give elliptic stable envelopes of 3d $\cN=4$ Higgs branches. We take our previous discussion on the brane construction and in particular the position dependent masses \rf{masschange} as motivations to study Janus interfaces. In this section, we briefly review the definition of the elliptic stable envelopes as Janus partition functions, and in Section~\ref{sec:3d N=2 SQCD and half-bps boundary conditions}, we generalize the boundary conditions to correspond to the vacua of 3d $\cN=2$ theories computing the elliptic stable envelopes for the 3d $\cN=2$ Higgs branches.

Our starting point here is the 3d $\cN=2$ SQCD on $[y_-, y_+] \times \bE_\tau$. For the moment, replace the interval $[y_-, y_+]$ with the infinite line $(-\infty, +\infty)$. Suppose, for any chamber $\mfr C$, we have a set of asymptotic boundary conditions $\{p\}$ for the 3d $\cN=2$ SQCD corresponding to its Higgs branch vacua.\footnote{In the presence of twisted masses, the classical Higgs branch of 3d $\cN=2$ SQCD retracts down to the fixed point set of the flavor torus action. This set is a disjoint union of connected components. Here $p$ labels a connected component, and by the boundary condition labeled by $p$ we roughly refer to a brane-type boundary supported on the attracting/repelling set of $p$. We shall describe in detail the fixed points, their attracting sets and the corresponding boundary conditions in Sections~\ref{sec:fixedPoints}, \ref{sec:MorseAndHamilton} and~\ref{sec:bc}, respectively.} We can construct a Janus interface interpolating between a vacuum $p_-$ in a chamber $\mfr C_-$ at infinite past and a vacuum $p_+$ in a chamber $\mfr C_+$ at infinite future. Then, our previous discussion on branes suggests that the expectation value of this interface with $p_-$ and $p_+$ boundary conditions at infinite past and future should give us a matrix element of an $R$-matrix
\[
 \mbf R_{\mfr C_+ \leftarrow \mfr C_-}(p_+, p_-) = \bra{p_+} \cJ(m_{\mfr C_-}, m_{\mfr C_+}) \ket{p_-} ,
\]
where $m_{\mfr C_-}$ and $m_{\mfr C_+}$ are two generic masses from the chambers $\mfr C_-$ and $\mfr C_+$ respectively, and~$\cJ$ is the Janus interface interpolating between these two masses. A similar expression for the $R$-matrix in terms of Janus interfaces was also used in \cite{BullimoreKimLukowski201708} in the context of bosonic rational spin chains.

 As already mentioned in the introduction (equations~\eqref{HXAtoHX}, \eqref{R=StabInvStab} and \eqref{RR...R}), a more fundamental quantity than the $R$-matrix is the stable envelope. While the $R$-matrix corresponds to changing masses across chambers, the stable envelope corresponds to changing masses from some nonzero value to zero. The full $R$-matrix is then a composition of a stable envelope and an inverse stable envelope \rf{eqn: dfn of R-matirx}, the first one taking the nonzero masses from one chamber to zero, the second one taking the zero masses to a nonzero value in a different chamber. In the gauge theory setup, a stable envelope then corresponds to a Janus interface interpolating between nonzero and zero masses. We expect that due to the presence of nonzero equivariant elliptic parameters, even in the massless side of the interface we can choose a basis of boundary conditions labeled by the same $p$ that labels the massive boundaries.\footnote{Note that, if we go down in dimension then the periods of the background connections become part of the twisted masses \rf{complexScalar}.} Then the stable envelope corresponds to the Janus partition function
\[
 \mbf{Stab}_{\mfr C}(p_+, p_-) = \bra{p_+} \cJ(m_{\mfr C}, 0) \ket{p_-} .
\]
We want to compute these partition functions using supersymmetric localization. In practice, localization on non-compact manifolds are technically subtle and it is more convenient to replace the infinite line with the finite interval $[y_-, y_+]$. Then we need to find boundary conditions for the boundaries at finite distances that are cohomologous to the asymptotic vacuum boundaries with respect to our choice of localizing supercharge. We describe our choice of supercharge in Appendix~\ref{sec:Qloc} and in Section~\ref{sec:3d N=2 SQCD and half-bps boundary conditions}, we shall identify the vacua and various boundary conditions equivalent to these vacua. For the massive boundary, we shall construct two types of boundary conditions, the so-called thimbles $\mcr D_{\mfr C}(p)$ (see Section~\ref{sec:thimble}) and the enriched Neumann $\mcr N_{\mfr C}(p)$ (see Section~\ref{sec:neumann}). For the massless boundary, we shall construct one type of boundary conditions depending on a choice of polarization $\bL(p)$ of the space of chirals $\mcr B_{\bL(p)}(p)$ (see Section~\ref{sec:masslessBC}). All these boundary conditions are labeled by the connected components of the (flavor) torus fixed-point set of the Higgs branch. We shall compute the matrix elements of the stable envelopes as the partition function
\beq
 \mbf{Stab}_{\mfr C}(p_+, p_-) = \bra{\mcr B_{\bL(p_+)}(p_+)} \cJ(m_{\mfr C}, 0) \ket{\mcr N_{\mfr C}(p_-)} .
 \label{eq:the gauge-theoretic definition of elliptic stable envelope}
\eeq
Here we have the boundary conditions $\mcr N_{\mfr C}(p_-)$ and $\mcr B_{\bL(p_+)}(p_+)$ at the two endpoints of the finite interval $[y_-, y_+]$ and a Janus interface interpolating between them. By keeping the masses constant throughout most of the interval and varying them rapidly near $y=0$ we can think of the Janus interface as being located at $y=0$. The setup of this computation is depicted in Figure~\ref{fig:the physical configuration for the computation of elliptic stable envelopes}.

\begin{figure}[h]
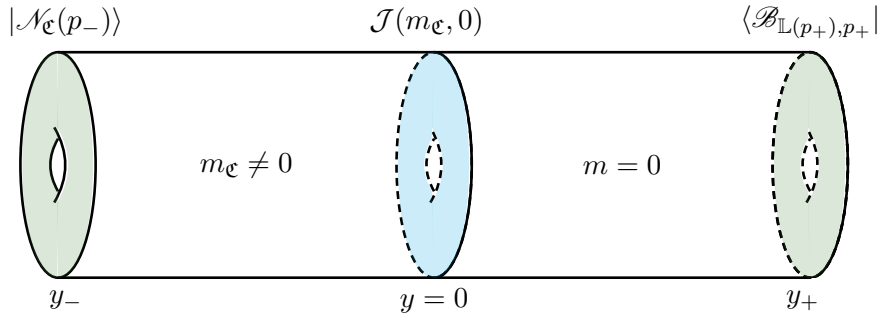

 \centering
 \SetupForComputationOfStableEnvelopes
 \caption{The physical setup for the computation of elliptic stable envelopes. We start with an enriched Neumann boundary condition $\mscr{N}_{\mfr C}(p_-)$ at the past boundary $y_-$. This boundary imitates the choice of the vacuum $p_-$ at $-\infty$. We insert a mass-Janus interface $\mcal{J}(m_\mfk{C},0)$ at~${y=0}$ whose role is to change the real masses from nonzero to zero as we cross over to $y>0$. We then choose an exceptional Dirichlet boundary condition $\mscr{B}_{\mbb{L}(p_+),p_+}$ at $y_+$. This future state is associated with the choice of a vacuum $p_+$ at $+\infty$ and a~polarization $\mbb{L}(p_+)$ of the space of the chirals. The interval partition function of this configuration gives the matrix elements \rf{eq:the gauge-theoretic definition of elliptic stable envelope} of elliptic stable envelopes.}
 \label{fig:the physical configuration for the computation of elliptic stable envelopes}
\end{figure}

Our choice of localizing supercharge makes the 3d theory topological in the $y$-direction \rf{y-exact}, and therefore the length of the interval $[y_-, y_+]$ will not be a parameter of the theory. In fact, we shall find an effective description of the theory as a 2d $\cN=(0,2)$ theory on $\bE_\tau$ -- the amount of supersymmetry reduced to half by the half BPS boundaries and interfaces. The 3d interval partition function will be given by the elliptic genus of this 2d theory.

Due to the presence of a mass-Janus interface along which real masses change, we have two different effective descriptions of the theory on the two sides of the interface. The theory living on $[0,y_+]$, where $m=0$, is the full gauge theory with the gauge Lie algebra $\mfr g = \mfr u(N)$. On the other hand, the theory living on $[y_-,0]$, where $m \ne 0$, is the gauge theory describing the effective theory of low energy (relative to $m$) excitations in the vacuum $p_-$. This theory depends on the choice of a chamber $\mfk{C}$ for the mass $m$, which in turn determines the phase of the theory. For generic masses, the adjoint scalar $\si$ from the vector multiplet takes a generic VEV at this boundary (according to \rf{siP}), and as a result gauge symmetry is broken down to its commutant, the maximal abelian subgroup $\mfr h = \bR^N$. %

\section[Boundaries and interfaces in 3d $\cN=2$ SQCD]{Boundaries and interfaces in 3d $\boldsymbol{\cN=2}$ SQCD}
\label{sec:3d N=2 SQCD and half-bps boundary conditions}

In this section, we look at the classical Higgs branch vacua of 3d $\cN=2$ SQCDs and $\cN=(0,2)$ boundary conditions that are cohomologous to these vacua. Our goal is to compute Janus partition functions \rf{eq:the gauge-theoretic definition of elliptic stable envelope} of the 3d theory on space-time manifolds of the form $[y_-, y_+] \times \bE_\tau$. We do so using supersymmetric localization using the supercharge \rf{Qloc}. The supercharge belongs to an $\cN=(0,2)$ subalgebra of the 3d $\cN=2$ supersymmetry algebra, which is why we shall look at boundary conditions preserving this subalgebra. We follow \cite{DedushenkoNekrasov202109} to find such boundary conditions that mimic the Higgs branch vacua of the theory. The main difference in our case is that, unlike in the 3d $\cN=4$ cases, we do not have a discrete set of completely massive vacua even in the presence of generic twisted masses. This requires us to slightly modify the notion of the so-called thimble boundaries so that they can be attached to extended subspaces of the Higgs branch.

\subsection[Vacua and BPS equations in 3d $\cN=2$ Gauge theories]{Vacua and BPS equations in 3d $\boldsymbol{\cN=2}$ Gauge theories}

\subsubsection{Vacuum equations} \label{sec:vacEq}
Consider a 3d $\cN=2$ Lagrangian gauge theory with gauge group $G$ and flavor symmetry group~$F$. We minimize its potential (consisting of bosonic fields only) to find the classical moduli of vacua.

 All terms of the action containing dynamical and background vector multiplet fields can be packaged together in terms of a single vector multiplet for the group $G' := G \times F$. We denote the fields of the $G'$-multiplet as $\bigl(A, \si, \la, \ov\la, \sfD\bigr)$. Part of the Lagrangian containing $A$, $\si$ and $\sfD$ only is
\beq
	\frac{1}{2} \tr\left(\frac{1}{2} F^{\mu\nu} F_{\mu\nu} + D^\mu \si D_\mu \si + \sfD^2 \right) - \ii \ze \tr\sfD . \label{Lvec}
\eeq
The first trace is from the Yang--Mills action and the second is the FI term. We are assuming that~$G$ has a $\U(1)$ center and so there is a single real FI parameter $\ze$. Note that $\ze$ corresponds to the background value of the adjoint scalar in the vector multiplet for the $\U(1)_\text{top}$ global symmetry~\rf{topGauge}.

 We consider chiral multiplets in the representation $\mbf M$ of the gauge group $G$ and anti-chiral multiplets in the contragradient representation $\mbf M^\vee$. Let us denote the gauge invariant pairing between $\mbf M$ and $\mbf M^\vee$ as $\lan\,,\,\ran$. Then, the part of the chiral multiplet Lagrangian containing only bosonic fields from the chiral and vector multiplets is
\beq
	-\lan \ov\phi, D^\mu D_\mu \phi \ran + \lan \si \cdot \ov\phi, \si \cdot \phi\ran + \ii \lan \ov\phi, \sfD \phi \ran + \lan \ov \sfF, \sfF \ran , \label{Lchi}
\eeq
The equations of motion for the auxiliary fields $\sfD$ and $\sfF$ are
\beq
	\sfD = \ii\ze - \ii \mu_{\mfr g} , \qquad \sfF = \ov\sfF = 0 , \label{auxEOM}
\eeq
where $\mu_{\mfr g} = \phi \ov\phi$ is the real moment map of the $G$-action on $\mbf M$. Eliminating the auxiliary fields from the Lagrangians \rf{Lvec} and \rf{Lchi} using the equations of motion we find the on-shell Lagrangian containing bosonic fields only
\[
	 \tr\left(\frac{1}{4} F^{\mu\nu} F_{\mu\nu} + \frac{1}{2} D^\mu \si D_\mu \si \right) - \lan \ov\phi , D^\mu D_\mu \phi \ran
	+ \tr(-\ze + \mu_{\mfr g} )^2 + \lan \si \cdot \ov\phi, \si \cdot \phi \ran .
\]
The derivative terms are the kinetic terms and the non-derivative terms constitute the potential that we want to minimize. The potential, being a sum of squares, is minimized when the individual terms are zero
\begin{subequations}\begin{align}
	&\mu_{\mfr g} - \ze = 0 , \label{momentEq} \\
	&(\si_{\mfr g} + \si_{\mfr f}) \cdot \phi = 0 . \label{fixedEq}
\end{align}\label{vacEq}\end{subequations}
Here $\si_{\mfr g} + \si_{\mfr f} = \si$ is the decomposition of $\si$ into its dynamical and background components.

 The classical Higgs branch of the theory is defined as the real symplectic quotient $\mu^{-1}_{\mfr g}(\ze)/G$. The first vacuum equation \rf{momentEq} is precisely this real moment map equation, and after implementing the $G$-quotient, the second equation \rf{fixedEq} becomes the criteria for fixed points under the infinitesimal action generated by $\si_{\mfr f}$. $\si_{\mfr f}$ will be some element from the Cartan subalgebra $\mfr t$ of the flavor symmetry group. Nonzero $\si=\si_{\mfr g} + \si_{\mfr f}$ generates mass terms for various chiral multiplets and W-bosons. Thus we come to the following characterization of the (classical) massive vacua of a 3d $\cN=2$ gauge theory: they correspond to the fixed point set in the Higgs branch under the action of the maximal torus of the flavor symmetry group.

\subsubsection{BPS equations} \label{sec:BPSeq}
We look for field configurations left invariant by the localizing supercharge $\thdQ$ \rf{Qloc}. We consider the 3d $\cN=2$ theory on the space-time $I \times \bE_\tau$, where $I$ is an interval with real coordinate $y$ and $\bE_\tau$ is an elliptic curve with holomorphic coordinate $z$. A supercharge of the theory is parameterized by 2 Dirac spinors $\ep$ and $\ov\ep$, as in \rf{Qee}, and our choice of $\mcr Q$ is determined by
\[
	\thdQ = \thdQ_{\ep, \ov\ep}, \qquad \text{where} \quad \ep = \ov\ep = \frac{1}{\sqrt{2}} \begin{pmatrix}
	 1 \\ 1
	\end{pmatrix} .
\]
We thus find the BPS equations by setting the supersymmetry variations for the fermions from the vector \rf{susyVec} and the chiral \rf{susyChiral} multiplets to zero for this specific choice of parameters. Doing so, we find the BPS equations~-- the ones containing derivatives along the elliptic curve~$\bE_\tau$%
\begin{subequations}\label{DzbarEq}\samepage
\begin{align}
	&F_{z \ov z} = 0 , \label{Fzzbar} \\
	&D_{\ov z} \si_{\mfr g} = D_{\ov z} \si_{\mfr f} = 0 , \label{Dzbarsi} \\
	& D_{\ov z} \phi = D_{\ov z} \ov\phi =0 , \label{Dzbarphi}
\end{align}
\end{subequations}
and the ones containing derivatives along $I$
\begin{subequations}\label{flowEq}
\begin{align}
	&D_y A_{\ov z} = 0 , \label{flowAzbar} \\
	&D_y \si_{\mfr g} + \ze + \mu_{\mfr g} = 0 , \label{flowsiG}\\
	&D_y \si_{\mfr f} + \ii \sfD_F = 0 , \label{flowsiF}\\
	&D_y \phi + (\si_{\mfr g} + \si_{\mfr f}) \cdot \phi = 0 , \label{flowphi}
\end{align}
\end{subequations}
along with the conjugate equation for $\ov\phi$. %
We have used the equations of motion \rf{auxEOM} to eliminate the auxiliary fields $\sfD_{\mfr g}$, $\sfF$ and $\ov\sfF$. The only constraint on the background auxiliary field $\sfD_F$ is~\rf{flowsiF} and we use this equation as the definition of this particular background. The equations~\rf{DzbarEq} determine the fields along the elliptic curve $\bE_\tau$ and the equations \rf{flowEq} determine their flow in the $y$-direction.

 The connections in the above equations all contain both dynamical and background fields for the gauge and the flavor symmetry groups respectively. According to \rf{Fzzbar} all these connections are flat and will be characterized by their holonomies around the cycles $S^1_A$ and $S^1_B$ of $\bE_\tau$. Equation \rf{flowAzbar} further implies that these holonomies are constant along the flow in the $y$-direction. $\si_{\mfr g}$ and $\si_{\mfr f}$ are holomorphic according to \rf{Dzbarsi} but since they are elements of real Lie algebras, they have to be constant on $\bE_\tau$. Similarly, $\phi$ and $\ov\phi$ are holomorphic according to \rf{Dzbarphi}. However, in the path integral we would like to impose the reality condition that the anti-chiral field $\ov\phi$ is the complex conjugate of the chiral field $\phi$. Then both $\phi$ and $\ov\phi$ have to be constant on $\bE_\tau$ as well. Having nonzero holonomies for the flat connections is only compatible with constant VEVs for $\phi$ if the holonomies act trivially on $\phi$
\beq
 s x^{-1} \hbar^{\frac{1}{2}} z \phi = \phi . \label{screenGeneric}
\eeq
Here $s$, $x$, $\hbar$ and $z$ are the group valued holonomies for the Cartan valued connections for the gauge group $G$, flavor groups $F_{\msf A}$ and $\U(1)_\hbar$, and the topological symmetry group $\U(1)_\text{top}$. None of the chirals are charged under $\U(1)_\text{top}$ so $z$ acts as identity on all chirals. The equation \rf{screenGeneric} is saying that the combined holonomies are invisible to the chirals, which is also referred to as a screening condition.

 There is no equation for $A_y$, which is gauge equivalent to the zero connection since the $y$-direction is contractible. We are now left with the flow equations for $\si_{\mfr g}$ \rf{flowsiG} and $\phi$ \rf{flowphi}, which we shall address in Section~\ref{sec:MorseAndHamilton}.

\subsection{Classical Higgs branch}\label{sec:higgs branch and fixed points for nonabelian gauge theory}
We consider the classical Higgs branch, where the chiral multiplet scalars have VEVs in the absence of twisted masses. The vacuum equation \rf{fixedEq} is then automatically satisfied because~${\si_{\mfr g} = \si_{\mfr f} = 0}$ and we find the classical Higgs branch to be (cf.\ \rf{HiggsSyQ})
\beq
	\higgs = \mu_{\mfr g}^{-1}(\ze)/G , \label{MH}
\eeq
where $G$ is the gauge group, $\mu_{\mfr g}$ is the real moment map of the $G$-action on the representation space $\mbf M$ of the chirals and $\ze$ is a positive central element in the gauge Lie algebra. The construction \rf{MH} is the real symplectic reduction of $\mbf M$ by $G$
$\higgs = \mbf M/\!\!/G $.
In 3d $\cN=2$, the space of chirals is always K\"ahler and so is its symplectic quotient by a real Lie group. If we are only interested in the complex structure of the Higgs branch, we can equivalently define it as a~GIT quotient by the complexified gauge group instead\footnote{This is how the Higgs branch was defined in \rf{MHgitDef}.}
\beq
	\higgs = \mbf M^s/G_\bC . \label{MHgit}
\eeq
Here $\mbf M^s$ is the stable locus of $\mbf M$, which consists of all points of $\mbf M$ whose $G_\bC$ orbits intersect~$\mu_{\mfr g}^{-1}(\ze)$.

Twisted masses are turned on by giving $\mfr t$-valued nonzero VEV to the adjoint scalar $\si_{\mfr f}$ in the flavor vector multiplet. We then have the additional vacuum equation $(\si_{\mfr g} + \si_{\mfr f}) \cdot \phi = 0$ \rf{fixedEq}. This tells us that most of the Higgs branch is now lifted and only the fixed points of the $\mathsf T$-action in the Higgs branch remain as vacua of the theory.\footnote{Later in computations, we only turn on $\msf A$-twisted masses and look for $\msf A$-fixed points where $\msf T = \msf A \times \U(1)_\hbar$.}

 We work out some details of the construction of the classical Higgs branches and their flavor fixed points for the specific theories defined by the quiver in Figure~\ref{fig:gl11Quiver}.

\subsubsection[Classical Higgs branches of 3d $\cN=2$ SQCDs]{Classical Higgs branches of 3d $\boldsymbol{\cN=2}$ SQCDs} The real moment map of the $G$-action on the space of chirals is dictated by the D-term in the chiral multiplet Lagrangian \rf{Lchi}. For the theory defined by Figure~\ref{fig:gl11Quiver} with the representations of the chirals as in \rf{chiralRep}, the moment map is\footnote{$\cQ$ and $\ov \cQ$ are a priori independent fields in Euclidean signature, but we have to choose some reality condition on them to do our path integrals and we choose $\ov \cQ = \cQ^\dagger$.}
$\mu = \cQ \cQ^\dagger - \wtd \cQ^\dagger \wtd \cQ$.
And so the real moment map equation becomes:
\beq
	\cQ \cQ^\dagger - \wtd \cQ^\dagger \wtd \cQ - \ze = 0 , \label{momentEqSQCD}
\eeq
where $\ze \in \bR_+$ is the positive FI parameter. The Higgs branch is the quotient
$\higgs(N,L) := \mu^{-1}(\ze)/U(N)$,
From this, we get the dimension of the Higgs branch
\beq
	\dim_\bC \higgs(N, L) = 2NL - N^2 . \label{MHdim}
\eeq
Furthermore, since \smash{$\cQ, \ov{\wtd \cQ} \in \Hom\bigl(\bC^L, \bC^N\bigr)$}, we have
$\operatorname{rk} \mu \le L$.
On the other hand, the moment map equation sets $\mu_{\mfr g}$ to $\ze$ which has rank $N$. We thus get the constraint
\beq
	N \le L . \label{N<=L}
\eeq
Notice from \eqref{MHdim} that for odd values of $N$ the complex dimension is odd and the resulting space can not have a holomorphic symplectic form. This is in sharp contrast with the case of 3d~${\cN=4}$ supersymmetry where the hyper-K\"ahler structure of Higgs branches equips them with a~holomorphic symplectic form.

For our purposes, we mainly need the complex structure of the Higgs branch instead of the full K\"ahler structure. Let us look at an example.

\begin{ex}[$N=1$, $L=2$] \label{ex:conifold}
For gauge group $G = \U(1)$ and flavor group $F_{\msf A}=\SU(2)$, the Higgs branch is a $3$-dimensional complex manifold. The space of the chirals is, $\mbf M = \Hom\bigl(\bC^2, \bC\bigr) \oplus \Hom\bigl(\bC, \bC^2\bigr)$ with holmorphic coordinates $\cQ^i$ and \smash{$\wtd \cQ_i$} for $i=1,2$. The moment-map equation is given by
\begin{equation}\label{eq:moment-map equation of Abelian gauge theories with L=2}
 \big|\cQ^1\big|^2+\big|\cQ^2\big|^2-\big|\wt{\cQ}_1\big|^2-\big|\wt{\cQ}_2\big|^2 - \ze = 0.
\end{equation}
Under a complexified gauge transformation by $\la \in \bC^\times$ the coordinates transform as
\[
	\bigl(\cQ, \wtd \cQ\bigr) \mapsto \bigl(\la \cQ, \la^{-1} \wtd \cQ\bigr) .
\]
So a set of gauge invariant holomorphic functions on the Higgs branch is given by the matrix elements of \smash{$\wtd \cQ \cQ$}
\[
 \wtd \cQ \cQ = \begin{pmatrix} \wtd \cQ_1 \cQ^1 & \wtd \cQ_1 \cQ^2 \\ \wtd \cQ_2 \cQ^1 & \wtd \cQ_2 \cQ^2 \end{pmatrix} .
\]
Any gauge invariant polynomial function on the Higgs branch can be written in terms of these. Since \smash{$\wtd \cQ \cQ\colon \bC^2 \to \bC^2$} factors through $\bC$, it has rank at most $1$, consequently
\begin{equation}\label{eq:conifold equation for L=2}
 \det \wtd \cQ \cQ=0.
\end{equation}
The equation \rf{eq:conifold equation for L=2} defines the singular conifold as a subspace of $\Hom\bigl(\bC^2, \bC^2\bigr)$ with singularity at \smash{$\cQ = \wtd \cQ = 0$} where the rank of \smash{$\wtd \cQ \cQ$} is strictly less than 1. The Higgs branch $\higgs(1,2)$ is a~resolution of this singularity by the moment map equation \rf{eq:moment-map equation of Abelian gauge theories with L=2}. The moment map equation requires a nonzero \smash{$\big|\cQ^1\big|^2 + \big|\cQ^2\big|^2$}, which means that with $\ze > 0$, at $\det \wtd \cQ \cQ = 0$ we have $\wtd \cQ = 0$. Then the moment map equation becomes \smash{$\big|\cQ^1\big|^2+\big|\cQ^2\big|^2=\zeta$}. This defines an $S^3$ with radius $\sqrt{\ze}$, which together with the $\U(1)$-gauge identification implies that the singularity of the conifold is replaced with $S^3/\U(1) \simeq \mbb{P}^1$ of size determined by $\ze$.

 The resolved complex geometry can be described as follows. For $\ze >0$, the requirement $\bigl(\cQ^1, \cQ^2\bigr) \ne (0,0)$ coming from the moment map equation says that the map $\cQ\colon\bC^2 \to \bC$ is surjective. Due to the quotient by complex gauge transformations, only the kernel of this map is physical. Different choices of $\ker \cQ \simeq \bC \subseteq \bC^2$ are parameterized by a $\bP^1$. $\cQ^1$ and $\cQ^2$ are the homogeneous coordinates on this $\bP^1$. We can define the two local coordinates
\smash{$\msf z := \frac{\cQ^1}{\cQ^2}$}, \smash{$ \msf z' := \frac{\cQ^2}{\cQ^1}$},
to cover this $\bP^1$. There is no constraint on $\wtd \cQ\colon \bC \to \bC^2$ and since we have already used up the gauge transformations to classify $\cQ$, all choices of $\wtd \cQ\colon \bC \to \bC^2$ are physically significant. A map $\bC \to \bC^2$ is characterized by the image of $1$ which is unconstrained. Over the chart of $\bP^1$ with local coordinate $\msf z$ we can parameterize these images of $1$ by \smash{$\bigl(\frac{\wtd \cQ_1}{\cQ^2}, \frac{\wtd \cQ_2}{\cQ^2}\bigr)$} and over the chart with coordinate $\msf z'$, we can parameterize the same images by
\[
\left(\frac{\wtd \cQ_1}{\cQ^1}, \frac{\wtd \cQ_2}{\cQ^1}\right) = \msf z' \left(\frac{\wtd \cQ_1}{\cQ^2}, \frac{\wtd \cQ_2}{\cQ^2}\right).
\] Overall, the maps $\bigl(\cQ, \wtd \cQ\bigr)$ modulo complex gauge transformations form a rank 2 vector bundle over $\bP^1$ with transition function $\msf z'$ as we change coordinate $\msf z \mapsto \msf z'$. Dualizing this transition function we get the underlying space as the Higgs branch
\beq
	\higgs(1,2) = \mcr O(-1) \oplus \mcr O(-1) \to \bP^1 . \label{MH12}
\eeq
\end{ex}
It is easy to generalize this example to arbitrary $N$ and $L \ge N$. First, as a consequence of the moment map equation we have
\begin{prop} \label{prop:rankQ=N}
 For $\ze > 0$, on the Higgs branch $\operatorname{rk} \cQ = N$.
\end{prop}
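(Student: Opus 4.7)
The plan is to read off the rank of $\cQ$ directly from the real moment map equation \eqref{momentEqSQCD}, using the positive-definiteness of the right-hand side when $\ze > 0$. Concretely, I would rearrange \eqref{momentEqSQCD} as
\begin{equation*}
    \cQ \cQ^\dagger = \ze \cdot \mathds{1}_N + \wtd \cQ^\dagger \wtd \cQ
\end{equation*}
and argue that the right-hand side is a strictly positive-definite Hermitian $N \times N$ matrix, hence has rank $N$, whence so does $\cQ$.

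Step by step: first I would observe that $\wtd \cQ^\dagger \wtd \cQ$ is Hermitian and positive semi-definite, being of the form $A^\dagger A$ for $A = \wtd \cQ \in \Hom(\bC^N, \bC^L)$. Therefore, for any nonzero $v \in \bC^N$,
\begin{equation*}
    v^\dagger\!\left(\ze \cdot \mathds{1}_N + \wtd \cQ^\dagger \wtd \cQ\right)\! v \;=\; \ze \|v\|^2 + \|\wtd \cQ\, v\|^2 \;\ge\; \ze \|v\|^2 \;>\; 0,
\end{equation*}
so $\cQ \cQ^\dagger$ is positive definite, and in particular $\ker(\cQ \cQ^\dagger) = 0$.

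Next, I would invoke the standard linear-algebra identity $\ker(\cQ \cQ^\dagger) = \ker(\cQ^\dagger)$, which follows from the fact that $\cQ^\dagger v = 0$ is equivalent to $\|\cQ^\dagger v\|^2 = v^\dagger \cQ \cQ^\dagger v = 0$. This gives $\ker(\cQ^\dagger) = 0$, so $\cQ^\dagger$ is injective as a map $\bC^N \to \bC^L$, which forces $\text{rk}(\cQ^\dagger) = N$ and therefore $\text{rk}(\cQ) = N$.

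There is no real obstacle here --- the proof is a one-line consequence of the positivity of $\ze$ together with $\wtd\cQ^\dagger \wtd\cQ \ge 0$. The only thing worth flagging is the compatibility with the bound $N \le L$ from \eqref{N<=L}: $\text{rk}\,\cQ = N$ automatically requires $N \le L$, so Proposition~\ref{prop:rankQ=N} reproduces (and in fact is the geometric origin of) that inequality. A useful corollary to state immediately afterwards is that $\cQ$ is surjective as a map $\bC^L \twoheadrightarrow \bC^N$, which will be the key input for the subsequent description of $\higgs(N,L)$ as a bundle over $\mathrm{Gr}(N,L)$ generalizing Example~\ref{ex:conifold}.
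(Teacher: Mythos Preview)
Your proof is correct and is essentially the same argument as the paper's: both exploit that $\ze\,\mathds{1}_N + \wtd\cQ^\dagger\wtd\cQ$ is strictly positive definite for $\ze>0$, forcing $\cQ\cQ^\dagger$ to have full rank. The only cosmetic difference is that the paper phrases this as a contradiction (a nonzero $v\in\ker\cQ\cQ^\dagger$ would make $-\wtd\cQ^\dagger\wtd\cQ$ have positive eigenvalue $\ze$), whereas you argue directly via positive definiteness; the content is identical.
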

\begin{proof}
 Suppose $\cQ$ has rank less than $N$. Then $\cQ \cQ^\dagger$ has rank less than $N$ and we can choose a nonzero $v \in \ker \cQ\cQ^\dagger$. The moment map equation \rf{momentEqSQCD} then implies
$- \wtd \cQ^\dagger \wtd \cQ v = \ze v $.
 However, the operator on the left-hand side is negative semi-definite and can not have a positive eigenvalue~-- we have a contradiction.
\end{proof}

Now let us consider the Higgs branch. Once again, a generating set of holomorphic functions on the Higgs branch are given by the matrix elements of $\wtd \cQ \cQ$. Since $\wtd \cQ \cQ$ factors through $\bC^N$ we have the constraint
$ \operatorname{rk} \wtd \cQ \cQ \le N$.
This defines the determinantal variety, denoted by $\msf{Det}(N, L)$, which has singularities where the rank of $\wtd \cQ\cQ$ is strictly less than $N$. These singularities are resolved by the real moment map equation.

 Since $\cQ\colon\bC^L \to \bC^N$ is surjective (see Proposition~\ref{prop:rankQ=N}), modulo complexified gauge transformation by $\GL_N$, $\cQ$ is completely fixed by its kernel. Once $\cQ$ is fixed, $\wtd \cQ\colon \bC^N \to \bC^L$ is unconstrained, though the choices of $\wtd \cQ$ are fibered nontrivially over the choices of $\cQ$ due to the~$\GL_N$ quotient that acts on both the choice of $\cQ$ and that of $\wtd \cQ$. The quotient of $\bC^L$ by $\ker \cQ$ is parameterized by the Grassmannian $\text{Gr}(N, L)$. We can remember the $\GL_N$ action on $\cQ$ by the tautological principal $\GL_N$ bundle $\cK_P$ over $\text{Gr}(N, L)$. Then the choices of $\bigl(\cQ, \wtd \cQ\bigr)$ such that~$\cQ$ is surjective are parameterized by $\cK_P \times \Hom\bigl(\bC^N, \bC^L\bigr)$. Taking the gauge quotient, we find the Higgs branch
\[
 \higgs(N, L) = \cK_P \times^{\GL_N} \Hom\bigl(\bC^N, \bC^L\bigr) = \Hom\bigl(\cK, \bC^L\bigr) ,
\]
where $\cK$ and $\bC^L$ refer to the tautological vector bundle and a trivial bundle over $\text{Gr}(N, L)$. By construction, this is a vector bundle over $\text{Gr}(N, L)$.\footnote{Had we used the opposite sign $\ze < 0$ of the real FI parameter in the moment map equation \rf{momentEqSQCD}, the Higgs branch would be the dual bundle $\Hom\bigl(\bC^L, \cK\bigr)$.} The above space is also called the desingularization of the determinantal variety $\msf{Det}(N, L)$ \cite{Lascoux1978}, and can be denoted by \smash{$\what{\msf{Det}}(N, L)$}.

 The singular conifold \rf{eq:conifold equation for L=2} is the special case of a determinantal variety for $N=1$ and~${L=2}$, and the resolved conifold \rf{MH12} is the aforementioned resolution thereof. In all abelian cases, we get
\[
 \higgs(1, L) = \what{\msf{Det}}(1, L) = \mcr O(-1)^{\oplus L}\to\mbb{P}^{L-1} .
\]

\subsubsection{Fixed points} \label{sec:fixedPoints}
Flavor fixed points in the Higgs branch are points of $\mbf M$ satisfying the real moment map equation where the flavor torus action can be undone by a gauge transformation. We are considering the gauge and flavor symmetry group $G = \U(N)$ and $\msf A = \U(1)^{L-1} \subseteq \SU(L)$. Under an infinitesimal~${G \times \msf A}$ action, the chiral multiplets transform as
$ \cQ\to \cQ+ \si \cQ-\cQ m$,
$
 \wt{\cQ}\to \wt{\cQ} - \wt{\cQ} \si + m \wt{\cQ}$
where $ \si \in\mfk{u}(N)$, and \beq
 m = \begin{pmatrix} m_1 &&& \\ & m_2 && \\ && \ddots & \\ &&& m_L \end{pmatrix} \in \mfr a = \bR^{L} ,
\label{twistedm}\eeq
with the constraint $m_1 + \cdots + m_L = 0$. Therefore, $\bigl(\cQ,\wt{\cQ}\bigr)$ is a fixed point if the following equations admit solutions for $\si$ \begin{equation}\label{eq:conditions for fixed points in the non-Abelian gauge theories}
 \begin{aligned}
 \si \cQ-\cQ m &=0 ,
 \qquad
 \wt{\cQ} \si - m \wt{\cQ} &=0 .
 \end{aligned}
\end{equation}
And $\bigl(\cQ, \wtd \cQ\bigr)$ must satisfy the moment map equations \rf{momentEqSQCD} to be in the Higgs branch.

 Let us denote by $\cQ^i$ and $\wtd \cQ_i$ for $i=1,\dots,L$ the columns and rows of $\cQ$ and $\wtd \cQ$, respectively. Then we can write \rf{eq:conditions for fixed points in the non-Abelian gauge theories} as
\beq
	\si \cQ^i = m_i \cQ^i , \qquad \wtd \cQ_i \si = m_i \wtd \cQ_i .\label{eigenEq}
\eeq
These are eigenvalue equations for the $2L$ vectors $\cQ^i$, $\wtd \cQ_i$. We assume that the masses $m_i$ are all distinct. There are then $L$ different eigenvalue equations since $\cQ^i$ and $\wtd \cQ_i$ share an eigenvalue. Since $\si$ is an $N \times N$ matrix with $N \le L$, at most $N$ out of these $L$ eigenvalue equations can produce nonzero eigenvectors. On the other hand, for $\ze > 0$, the moment map equation~\rf{momentEqSQCD} implies that $\operatorname{rk} \cQ = N$ (see Proposition~\ref{prop:rankQ=N}). Consequently, a solution to the eigenvalue equations~\rf{eigenEq} contains exactly $N$ nonzero eigenvectors. The nonzero eigenvectors are labeled by an inclusion
\beq
	p \colon\ \{1,\dots,N\} \hookrightarrow \{1, \dots, L\} , \label{LchooseN}
\eeq
such that the nonzero eigenvectors are \smash{$\cQ^{p(a)}$}. Eigenvectors of hermitian operators with distinct eigenvalues are mutually orthogonal. So, as an orthonormal basis for the $N$-dimensional vector space spanned by $\cQ^{p(a)}$ we can choose \smash{$\bigl\{\frac{\cQ^{p(1)}}{\lVert \cQ^{p(1)} \rVert}, \dots, \frac{\cQ^{p(N)}}{\lVert \cQ^{p(N)} \rVert} \bigr\}$}. In this basis, the matrix $\cQ$ satisfies
\begin{gather}
	\cQ^i_a = \de^i_{p(a)} \big\lVert \cQ^{p(a)} \big\rVert  \qquad \text{(no sum over $a$)} . \label{Qx}
\end{gather}
$\wtd \cQ_i$ must be a multiple of $\cQ^i$ since they have the same eigenvalue
\beq
 \wtd \cQ^a_i = \de^{p(a)}_i {\rm e}^{\ii \varphi} \big\lVert \wtd \cQ_{p(a)} \big\rVert . \label{Qtildex}
\eeq
The phase is not fixed. Substituting \rf{Qx} and \rf{Qtildex} in the moment map equation \rf{momentEqSQCD}, we get
\[
	\cQ \cQ^\dagger - \wtd \cQ^\dagger \wtd \cQ = \text{diag}\bigl(\big\lVert \cQ^{p(1)} \big\rVert^2 - \big\lVert \wtd \cQ_{p(1)} \big\rVert^2, \dots, \big\lVert \cQ^{p(N)} \big\rVert^2 - \big\lVert \wtd \cQ_{p(N)} \big\rVert^2\bigr) = \ze \id_N .
\]
Using gauge symmetry, we can fix all components of $\cQ$ to be real and positive, then there is a~one-parameter family of solutions to the above moment map equation:
\beq
 \cQ^i_a = \de^i_{p(a)} \sqrt{\ze + |\la|^2} , \qquad \wtd \cQ^a_i = \de^{p(a)}_i \la , \qquad \la \in \bC . \label{Qvev}
\eeq
To summarize, we have the following characterization of $\msf A$-fixed points of the Higgs branch $\higgs(N,L)$. There are $\binom{L}{N}$ mutually disjoint connected components of the fixed point set, labeled by a choice $p$ of $N$ numbers out of $L$ \rf{LchooseN}.\footnote{Only maps \rf{LchooseN} modulo permutations of $\{1,\dots, N\}$ lead to physically inequivalent vacua since the permutations coincide with the action of the Weyl subgroup of the gauge group $\U(N)$.} For such a choice $p$, the corresponding component of the fixed point set in the Higgs branch is given by the gauge orbit of a reference $N$-dimensional subspace $\what \cF_p \subseteq \mbf M$ which we take to be (no sum over $a$, $b$)
\begin{gather}
 \what \cF_p := \bigl\{\bigl(\cQ, \wtd \cQ\bigr) \in \mbf M \mid \bigl(\cQ^i_a, \wtd \cQ^b_j\bigr) = \bigl(\de^i_{p(a)} \sqrt{\ze + |\la_a|^2}, \de^{p(b)}_j \la_b\bigr), (\la_1, \dots, \la_N) \in \bC^N \bigr\} .
\label{fixedP}
\end{gather}
We denote the image of $\what \cF_p $ in the complex Higgs branch \rf{MHgit} under the bundle projection
\beq
 \mbf M^s \xrightarrow{\pi} \mbf M^s/\GL_N = \higgs \label{higgsProj}
\eeq
by
$ \cF_p := \pi\bigl(\what \cF_p\bigr) $.
The projection $\pi$ is one-to-one when restricted to $\what \cF_p $.

 Observe that the fixed points also satisfy the vacuum equations \rf{vacEq} in the case $\si_{\mfr f} = \si_{\msf A}$,
in other words, when we turn on twisted masses only with respect to the flavor symmetry $\SU(L)$ (the framing node of the quiver Figure~\ref{fig:gl11Quiver}).
At these vacua, the value of the vector multiplet scalar $\si$ is determined by \rf{fixedEq} (or equivalently \rf{eigenEq}), which is
\beq\begin{aligned}
 \si_p(m) := \begin{pmatrix} m_{p(1)} &&& \\ & m_{p(2)} && \\ && \ddots & \\ &&& m_{p(N)} \end{pmatrix} .
\end{aligned}\label{siP}\eeq
Thus, the $\msf A$-fixed point set corresponds to the residual classical Higgs branch in the presence of the $\msf A$-twisted masses
\beq
 \higgs^{\msf A} = \bigsqcup_p \cF_p = \text{classical Higgs branch with $\msf A$-twisted masses.} \label{MHA}
\eeq
At the vacuum $p$, we also get constraints on the holonomies for the flat connections around the elliptic curve from the screening conditions \rf{screenGeneric}. The nontrivial screening conditions come from the chirals \rf{Qvev} that take nonzero VEVs, the conditions being
\beq
 s_a x_{p(a)}^{-1} \hbar^{1/2} = 1 . \label{screening}
\eeq
We shall refer to the gauge holonomies satisfying this condition by $s^{(p)}_a$.

\subsubsection{Attracting sets and Morse flow} \label{sec:MorseAndHamilton}

Consider the definition of the attracting set of a fixed point $x_0 \in \cF_p $ \rf{attrDef}
\[
 \msf{Attr}_{\mfr C}(x_0) = \bigl\{ x \in \higgs \mid \lim_{t \to 0} f(t) \cdot x = x_0 \text{ for all } f \in \mfr C \bigr\} .
\]
Choose some generic $m = \text{diag}(m_1, \dots, m_L) \in \mfr a$ such that $t \mapsto t^m$ is a cocharacter in the chamber $\mfr C$. Now, introduce the variable $y = -\ln t$ such that the limit $t \to 0$ corresponds to the limit $y \to \infty$. We take the limit $t \to 0$ along the positive real axis so that $y$ becomes a~non-negative real number. Then we can rewrite the definition of the above attracting set as
\[
 \msf{Attr}_{\mfr C}(x_0) = \bigl\{ x \in \higgs \mid \lim_{y \to \infty} {\rm e}^{-ym} \cdot x = x_0 \bigr\}.
\]
Define the trajectory of $x$ in the Higgs branch, as we take the limit $y \to \infty$, as a path
\[
 x\colon\ [0,\infty) \to \higgs , \qquad x(y) := {\rm e}^{-ym} \cdot x_0 .
\]
This path satisfies the flow equation
\beq
 \pa{y} x + m \cdot x = 0 , \label{xiMorse}
\eeq
with the asymptotic boundary condition
$\lim_{y \to \infty} x = x_0$.
The definition of the above attracting set can now be rewritten as
\begin{align}
 \msf{Attr}_{\mfr C}(x_0) = \bigl\{ x(0) \in \higgs \mid \pa{y} x + m \cdot x = 0,\, \lim_{y \to \infty} x = x_0 \bigr\} . \label{attrx2}
\end{align}
Any smooth path $ x(y)$ in the Higgs branch can be lifted to a smooth path in the stable locus~$\mbf M^s$ by undoing the projection \rf{higgsProj} and non-uniquely choosing some point from each fiber over~$ x(y)$. Let $\what x(y)$ be such a lift of $ x(y)$. The flow equation \rf{xiMorse} for $ x$ lifts to the following flow equation for $\what x$
$D_{y} \what x + (\si + m) \cdot \what x = 0$.
Here $D_y$ involves some $\gl_N$-connection and $\si(y)$ is some $\gl_N$-valued function of $y$. These are the extra ingredients that will be lost once we take the projection by the $\GL_N$-quotient. Let $\what x_0 \in \what \cF_p $ be a lift of $x_0$. We can constraint the lift~$\what x$ by requiring that its asymptotic value, instead of being something arbitrary from the $\GL_N$-orbit of $\what x_0$, be exactly equal to $\what x_0$
$ \lim_{y \to \infty} \what x(y) = \what x_0$.
This completely breaks gauge symmetry at $y=\infty$ and therefore the value of $\si$ must also be constrained asymptotically to take the value~\rf{siP}~${\lim_{y \to \infty} \si(y) = \si_p(m)}$.
We can rewrite the definition of the attracting set \rf{attrx2} once more by projecting $\what x$ as follows
\begin{align}
 \msf{Attr}_{\mfr C}(x_0) =& \pi\bigl(\text{Flow}^+_{\mfr C}\bigl(\what x_0\bigr)\bigr) , \label{attrphi}
\end{align}
where we have defined
\beq
 \text{Flow}^\pm_{\mfr C}(\what x_0) := \bigl\{\what x(0) \in \mbf M^s \mid D_y \what x + (\si + m) \cdot \what x = 0,\,\lim_{y \to \pm\infty} (\what x, \si) = \bigl(\what x_0, \si_p(m)\bigr) \bigr\} .
\label{flow}
\eeq
This relates Morse flow in $\mbf M$ to the attracting sets in the Higgs branch. The flows in $\mbf M$ are not completely fixed because both the connection $D_y$ and $\si$ can be changed by $\GL_N$ gauge transformations as long as we satisfy the limit condition. We can choose some convenient representatives of these flows as follows.

 Let us zoom in on the relation \rf{attrphi} at $x_0$, i.e., we look at the tangent space to the attracting set at $x_0$
\begin{align}
 T_{x_0}\msf{Attr}_{\mfr C}(x_0) =& \pi_*\bigl(\bigl\{\what x(0)\! \in \!\mbf M^s \mid D_y \what x + (\si_p(m) + y D_y \si|_{x_0}\!\! + m)\! \cdot\! \what x = 0, \,\lim_{y \to \infty} \what x = \what x_0 \bigr\}\bigr) .\!\!\! \label{attrphi*}
\end{align}
We further use our gauge freedom to set $D_y = \pa{y}$ and \smash{$D_y \si\big|_{x_0} = 0$}. Small displacements around~${x_0 \in \mbf M^s}$ can be parameterized by the chiral scalars $\cQ^i_a$ and \smash{$\wtd \cQ^a_i$}. These displacements are scaled by the action of $(\si_p(m)+m)$ as
\begin{gather}\label{si+mAction}
 (\si_p(m)+m) \cdot \cQ^i_a= (m_{p(a)} - m_i) \cQ^i_a ,
\qquad
 (\si_p(m)+m) \cdot \wtd \cQ_i^a= -(m_{p(a)} - m_i) \wtd \cQ_i^a .
\end{gather}
Solutions to the flow equation $\pa{y}\what x + (\si_p(m) + m)\cdot \what x = 0$ are therefore given by
\[\cQ^i_a(y)= {\rm e}^{-y(m_{p(a)} - m_i)} \cQ^i_a(0) ,
\qquad
 \wtd \cQ_i^a(y)= {\rm e}^{y(m_{p(a)} - m_i)} \wtd \cQ_i^a(0) .
\]
The linear combination \smash{$\what x(y) = \what x_0 + \ka^a_i \cQ^i_a(y) + \wtd \ka^i_a \wtd \cQ^a_i(y)$} then satisfies the same flow equation as well as the limit condition $\lim_{y \to \infty} \what x = \what x_0$ if and only if the coefficients $\ka$ and $\wtd \ka$ vanish for all chirals with non-positive weights. Let us decompose the space of chirals into subspaces according to the sign of their weights under the action \rf{si+mAction}
\begin{subequations}
\beq
 \mbf M = \mbf M_{\mfr C;+}(p) \oplus \mbf M_{0}(p) \oplus \mbf M_{\mfr C;-}(p) , \label{XweightDecomp}
\eeq
where
\begin{gather*}
 \mbf M_{\mfr C;+}(p) := \text{Span}_\bC \bigl\{\cQ^i_a, \wtd \cQ^b_j \mid m_{p(a)} - m_i > 0,\, m_{p(b)} - m_j < 0 \bigr\} , \\
 \mbf M_{0}(p) := \text{Span}_\bC \bigl\{\cQ^i_a, \wtd \cQ^b_j \mid m_{p(a)} - m_i = m_{p(b)} - m_j = 0 \bigr\} , \\
 \mbf M_{\mfr C;-}(p) := \text{Span}_\bC \bigl\{\cQ^i_a, \wtd \cQ^b_j \mid m_{p(a)} - m_i < 0,\, m_{p(b)} - m_j > 0 \bigr\} .
\end{gather*}
\end{subequations}
We can now rewrite \rf{attrphi*} as
\begin{align}
 T_{x_0}\msf{Attr}_{\mfr C}(x_0) \simeq& \pi_*\left(\mbf M_{\mfr C;+}(p) \right) . \label{attrphi*RR}
\end{align}
Here the vector space inside the projection on the right-hand side has its origin at $\what x_0$. The attracting set $\msf{Attr}_{\mfr C}(x_0)$ is generated by the action of $f(t) = t^m$ for some generic cocharacter $f$ in~the chamber $\mfr C$. This action can be lifted to an action $\what f$ on $\mbf M$ via an embedding $\mathsf A_\bC \hookrightarrow \mathsf H_\bC \times \mathsf A_\bC$, such that for $\what x \in \mbf M$ we have $\what f(t) \cdot \what x = t^{\si_p(m)+m} \cdot \what x$. We can then define the attracting set of $\what x_0$ in $\mbf M^s$ with respect to this action:
\[ \what{\msf{Attr}}_{\mfr C}(\what x_0) := \bigl\{ \what x \in \mbf M \mid \lim_{t \to 0} t^{\si_p(m)+m} \cdot \what x = \what x_0 \bigr\} .
\]
The tangent space to this affine space at $\what x_0$ is $\mbf M_{\mfr C;+}(p)$. The projection $\pi$ is $\mathsf A_\bC$ equivariant and it is a surjection \rf{attrphi*RR} on tangent spaces. Therefore, it is a surjection globally, i.e., \smash{$\what{\msf{Attr}}_{\mfr C}(\what x_0)$} projects onto the full attracting set $\msf{Attr}_{\mfr C}(x_0)$. We can also immediately see that the attracting set \smash{$\what{\msf{Attr}}_{\mfr C}(\what x_0)$} is nothing but $\{\what x_0\} \times \mbf M_{\mfr C;+}(p)$. Therefore,
\[ \msf{Attr}_{\mfr C}(x_0) = \pi\bigl(\bigl\{\what x_0\bigr\} \times \mbf M_{\mfr C;+}(p) \bigr) .
\]
Comparing with \rf{attrphi} we find a gauge equivalence between positive weight spaces in $\mbf M$ and Morse flow in the stable locus $\mbf M^s$ that asymptotes some fixed point in $\what \cF_p $. More specifically, by varying $x_0$ over $ \cF_p $, we find
\beq
\pi\bigl( \text{Flow}^+_{\mfr C}\bigl(\what \cF_p \bigr) \bigr)
= \msf{Attr}_{\mfr C}( \cF_p )
= \pi\bigl(\what \cF_p \times \mbf M_{\mfr C;+}(p)\bigr) .
\label{morse+}
\eeq
Replacing the limit $y \to \infty$ with $y \to -\infty$, and following the same arguments we find
\beq
\pi\bigl( \text{Flow}^-_{\mfr C}\bigl(\what \cF_p \bigr) \bigr)
= \msf{Rep}_{\mfr C}( \cF_p )
= \pi\bigl(\what \cF_p \times \mbf M_{\mfr C;-}(p) \bigr) .
\label{morse-}\eeq
Here $\msf{Rep}$ refers to the repelling set defined similarly to the attracting set \rf{attrDef} but with the opposite limit $t \to \infty$.

\subsection{Boundaries and interfaces} \label{sec:bc}
We discuss half-BPS boundary conditions that are equivalent, in localization computations, to forcing the fields to reach any particular connected component of the solutions to the vacuum equation \rf{vacEq}. The choice of localizing supercharge is discussed in Appendix~\ref{sec:Qloc}, \rf{Qloc} in particular. We discuss half-BPS boundaries preserving $\mcr Q_-$, $\ov{\mcr Q}_-$. So these are $\cN=(0,2)$ boundary conditions in the 3d $\cN=2$ theory. We also discuss the mass Janus interface preserving the same supersymmetry. Our discussion mostly follows similar boundary conditions for the 3d $\cN=4$ case considered in \cite{DedushenkoNekrasov202109}. Our description of the thimble boundaries (see Section~\ref{sec:thimble}) differs slightly from this reference due to the non-isolated nature of 3d $\cN=2$ Higgs branch vacua. %

\subsubsection[Thimble (exceptional Dirichlet), $\mcr D_{\mfr C}(p)$]{Thimble (exceptional Dirichlet), $\boldsymbol{\mcr D_{\mfr C}(p)}$} \label{sec:thimble}
To find a boundary condition at $y=y_-$ that is cohomologous to a choice of vacuum, we extend our space-time to negative infinity and consider the theory on $(-\infty \times y_-] \times \bE_\tau$. We impose the constraint that the theory must reach the vacuum $p$ at infinity. Then we observe the supersymmetric evolution of the fields along the interval $(-\infty, y_-]$ and see what values the fields take at $y=y_-$. More specifically, we solve the BPS equations \rf{flowEq} with the boundary condition that at $y=-\infty$ the fields satisfy the vacuum equation \rf{vacEq}. Then the values of these solutions at $y=y_-$ provide a past boundary condition for the theory at $y>=y_-$ that mimics the chosen vacuum in BPS computations. Similar boundary conditions were constructed and called thimble or exceptional Dirichlet boundary conditions in \cite{Bullimore:2016nji}.

 In this process, we observe that the BPS equation \rf{flowphi} with the condition that at past infinity the chiral and vector multiplet scalars satisfy the vacuum equations \rf{vacEq} is precisely what defines the set $\text{Flow}^-_{\mfr C}\bigl(\what \cF_p \bigr)$ \rf{flow}. The choice of chamber and vacuum determine the Morse function and the connected component of the vacuum configurations defining the boundary condition. The BPS equation \rf{flowsiG} puts constraints on the evolution of $\si_{\mfr g}$ but $\text{Flow}^-_{\mfr C}\bigl(\what \cF_p \bigr)$ is agnostic to such constraints. According to \rf{morse-}, the set \smash{$\text{Flow}^-_{\mfr C}\bigl(\what \cF_p \bigr)$} is gauge equivalent to the subspace~${\what \cF_p \times \mbf M_{\mfr C;-}(p)}$ of $\mbf M$. So, we formulate a $(0,2)$ boundary condition of brane type with support for the chiral multiplet scalars on \smash{$\what \cF_p \times \mbf M_{\mfr C;-}(p)$}.

\begin{rmk}[thimbles in $\cN=4$ vs $\cN=2$]
Support of this brane (at the past boundary) on the Higgs branch is the projection of $\what \cF_p \times \mbf M_{\mfr C;-}(p)$, which is the repelling set \rf{morse-}. In \cite{Bullimore:2016nji}, the thimble boundaries are supported on holomorphic Lagrangians in the hyperk\"ahler Higgs branches of 3d $\cN=4$ theories. For 3d $\cN=2$ theories, the Higgs branches are generally only K\"ahler and the attracting/repelling sets \eqref{morse+}, \eqref{morse-} are not supported on holomorphic Lagrangians, they are just holomorphic.
\end{rmk}

Let us decompose the space of chirals further from \rf{XweightDecomp} based on where the $\cQ$s and the~$\wtd \cQ$s belong
\[
 \mbf M_{\mfr C; \pm}(p) = \cR_{\mfr C; \pm}(p) \oplus \wtd\cR_{\mfr C; \pm}(p) , \qquad
 \mbf M_{0}(p) = \cR_{0}(p) \oplus \wtd\cR_{0}(p) .
\]
Since $\cQ^i_a$ and $\wtd \cQ^a_i$ have weights with opposite signs (see \rf{si+mAction}), we have the following relations among these subspaces
\[
 \cR_{\mfr C;\pm}(p) = \wtd \cR_{\mfr C;\mp}^\vee(p) , \qquad
 \cR_0(p) = \wtd \cR_0^\vee(p) .
\]
We thus have a Lagrangian splitting of the chirals
\begin{gather}
 \mbf M = \bL_{\mfr C}(p) \oplus \bL^\perp_{\mfr C}(p) , \qquad
 \bL_{\mfr C}(p) := \mbf M_{\mfr C; +}(p) \oplus \cR_0(p) , \nonumber\\
 \bL^\perp_{\mfr C}(p) := \mbf M_{\mfr C; -}(p) \oplus \wtd\cR_0(p) .\label{LMdecomp}
\end{gather}
$\what \cF_p $ is parameterized by arbitrary VEVs of fields from $\wtd \cR_0(p)$ (see \rf{fixedP}), therefore, chirals from~$\bL^\perp_{\mfr C}(p)$ parameterize the brane \smash{$\what \cF_p \times \mbf M_{\mfr C;-}(p)$}. So we shall put Dirichlet and Neumann boundary conditions on the scalars from the chirals in $\bL_{\mfr C}(p)$ and $\bL^\perp_{\mfr C}(p)$ respectively. The boundary conditions will be extended to the rest of the fields of these multiplets by the requirement of preserving $(0,2)$ supersymmetry.

 Gauge symmetry is broken at the vacuum by the VEVs of the chirals and as such the gauge field receives Dirichlet boundary condition, the rest of the vector multiplet receiving boundary conditions accordingly. The holonomy of the dynamical gauge field at the vacuum is given in terms of the holonomy of the background gauge fields via the screening condition \rf{screening}.

 Altogether, the thimble boundary condition $\mcr D_{\mfr C}(p)$ involves:
\beq\begin{aligned}
 &\text{Chirals in } \bL_{\mfr C}(p)\colon \ \text{Dirichlet with VEVs according to vacuum } p, \\
& \text{Chirals in } \bL^\perp_{\mfr C}(p)\colon \ \text{Neumann}, \\
 &\text{Vector multiplet} \colon \ \text{Dirichlet with screening \rf{screening} of gauge holonomy.}
\end{aligned}\label{thimbleBC}\eeq

{\bf Boundary 't Hooft anomaly.} Let ${\mbf f} \in \mfr h$, ${\mbf f}_x \in \mfr a$ and ${\mbf f}_\hbar \in \bR_\hbar$ be the curvatures of the gauge, $\SU(L)$ flavor, and $\U(1)_\hbar$ flavor symmetry at the boundary. The gauge symmetry is completely broken at the boundary and ${\mbf f}$ is a function of the flavor curvatures. Their dependence is fixed by demanding that the nonzero VEVs at the boundary do not observe any flux at the boundary ${\mbf f} = f^{(p)}({\mbf f}_x, {\mbf f}_\hbar)$ such that,
$w({\mbf f} + {\mbf f}_x + {\mbf f}_\hbar) = 0$ for all $w \in \cR_0(p)$ .
Contributions to the anomaly polynomial come from two sources: 1)~Chern--Simons terms created by integrating out massive fermions and the Chern--Simons coupling to background multiplet associated with~$\U(1)_\text{top}$ symmetry, 2)~purely from boundary conditions. In the present case, these contributions are \cite{DedushenkoNekrasov202109,DimofteGaiottoPaquette201712}\footnote{The trace without explicit representation is taken over the fundamental representation.}
\[
 P^\text{CS}_-(\mcr D_{\mfr C}(p)) = 2\tr_{\mbf M_{\mfr C;+}(p)} ({\mbf f}_\hbar ({\mbf f} + {\mbf f}_x)) - 2 \tr({\mbf f} {\mbf f}_\text{top}) , \qquad
 P^\partial(\mcr D_{\mfr C}(p)) = 2\tr_{\cR_0(p)}({\mbf f}_\hbar ({\mbf f} + {\mbf f}_x)) .
\]
The subscript in $P^\text{CS}_-$ refers to the past boundary. The Chern--Simons anomaly depends on the orientation of the boundary and therefore $P^\text{CS}_- = -P^\text{CS}_+$. $P^\partial$ is independent of the choice of past or future. If we define
\[
 P(\mcr D_{\mfr C}(p)) := P^\text{CS}_-(\mcr D_{\mfr C}(p)) + P^\partial(\mcr D_{\mfr C}(p)) = 2\tr_{\bL_{\mfr C}(p)} ({\mbf f}_\hbar ({\mbf f} + {\mbf f}_x)) - 2 \tr({\mbf f} {\mbf f}_\text{top}) ,
\]
then the total anomaly polynomial associated to the thimble boundary is $P(\mcr D_{\mfr C}(p))$ evaluated at ${\mbf f} = {\mbf f}^{(p)}({\mbf f}_x, {\mbf f}_\hbar)$.

\subsubsection[Enriched Neumann, $\scrN_{\mfr C}(p)$]{Enriched Neumann, $\boldsymbol{\scrN_{\mfr C}(p)}$} \label{sec:neumann}

Swap the boundary conditions on the chiral multiplets from \rf{thimbleBC}. Meaning, the chirals in~$\bL_{\mfr C}(p)$ now receive Neumann, and those in $\bL^\perp_{\mfr C}(p)$ receive Dirichlet boundary conditions with zero VEV. The gauge field in the vector multiplet receives the Neumann boundary condition. This implies that the adjoint scalar $\si_{\mfr g}$ from the vector multiplet gets Dirichlet \cite[Section~2.5.2]{DimofteGaiottoPaquette201712}. We set the value of $\si_{\mfr g}$ equal to the VEV $\si_p(m)$ \rf{siP} corresponding to the vacuum $p$. This breaks the gauge symmetry down to the maximal torus $\msf H = \U(1)^N$. Additionally, we couple a~2d~$\cN=(0,2)$ theory $\Upsilon_{\mfr C}(p)$ at the boundary by gauging some of its global symmetries with the boundary gauge symmetry $\msf H$. We shall characterize the theory $\Upsilon_{\mfr C}(p)$ momentarily. Let us first summarize this enriched Neumann boundary condition $\mcr N_{\mfr C}(p)$
\begin{gather*}\label{TCBC}
 \text{Chirals in } \bL_{\mfr C}(p) \colon\ \text{Neumann}, \nonumber\\
 \text{Chirals in } \bL^\perp_{\mfr C}(p) \colon\ \text{Dirichlet with zero VEVs}, \nonumber\\
 \text{Vector multiplet} \colon\ \text{Neumann}, \nonumber\\
 \text{Coupled to the boundary} \colon\ \text{2d } \cN=(0,2) \text{ theory } \Upsilon_{\mfr C}(p).
\end{gather*}

The relation between the thimble boundary conditions \rf{thimbleBC} and vacua are relatively straightforward. However, in computation, we shall use this new boundary condition $\mcr N_{\mfr C}(p)$ to represent the vacuum $p$. In \cite[Section~6.3]{DedushenkoNekrasov202109}, Dedushenko--Nekrasov argue that both $\ket{\mcr D_{\mfr C}(p)} \bra{\mcr N_{\mfr C}(p)}$ and~${\ket{\mcr N_{\mfr C}(p)} \bra{\mcr D_{\mfr C}(p)}}$ are projections onto the vacuum $p$ in cohomology, where $\bra{\cdot}$ and $\ket{\cdot}$ refer to future and past boundary condition, respectively. A necessary condition for this to be true is that these projectors are free of 't Hooft anomaly
\beq
 P_\pm(\mcr D_{\mfr C}(p))\big|_{{\mbf f} = {\mbf f}^{(p)}({\mbf f}_x, {\mbf f}_\hbar)} + P_\mp(\mcr N_{\mfr C}(p)) = 0 . \label{anomalymatching}
\eeq
Here $P_+$ and $P_-$ refer to anomaly polynomials of the future and past boundary conditions, respectively. In this paper, we only describe the $\mcr N_{\mfr C}(p)$ boundary condition and show that this anomaly condition is satisfied, beyond that, we take their equivalence as an ansatz. For more details, see \cite{DedushenkoNekrasov202109}.

{\bf Boundary theory $\boldsymbol{\Upsilon_{\mfr C}(p)}$.} At the boundary preserving $\cN=(0,2)$ supersymmetry, we have~$\U(1)^N$ gauge symmetry. So we can couple a 2d $\cN=(0,2)$ theory with $\U(1)^N$ global symmetry by gauging it as long as the anomaly matching condition \rf{anomalymatching} is satisfied. It is straightforward to write down a boundary theory for any chamber, but for the sake of explicit formulas we shall focus here on the following two chambers
\beq
 \mfr C_1 := m_1 < m_2 < \cdots < m_L , \qquad
 \mfr C_2 := m_1 > m_2 > \cdots > m_L .
 \label{C01}
\eeq
For $\mfr C_1$, we propose the following field content and charges for the theory under various symmetry groups applying the procedure of field content of $\Upsilon_{\mfr C_1}(p)$:
$$
\begin{array}{|c|ccc|}
\hline
& \U(1)_a & \U(1)_\hbar & \U(1)_\text{top} \\
\hline
\text{Chiral}_{1,a} & -1 & 1 & 0 \\
\text{Chiral}_{2,a} & 0 & \frac{L}{2} - p(a) & 1 \\[1mm]
\text{Fermi}_{1,a} & 1 & \frac{L}{2} - p(a) & 1 \\
\text{Fermi}_{2,a} & 0 & 1 & 0 \\
\hline
\end{array} .
$$
We have one set of these fields for each $a=1,\dots, N$, totaling in $2N$ Fermi and $2N$ chiral multiplets. The $\U(1)_a$ global symmetry is identified with a $\U(1)$ subgroup of the bulk ${\U(1)^N \times \U(1)^{L-1} \times \U(1)_\hbar}$ symmetry such that the 2d $\U(1)_a$-fugacity is identified with the bulk fugacity $s_a x^{-1}_{p(a)} \hbar^{1/2}$. The remaining symmetries are identified straightforwardly. The elliptic genus of this theory is \cite{BeniniEagerHoriTachikawa201305, BeniniEagerHoriTachikawa201308}
\beq
 \bW(\Upsilon_{\mfr C_1}(p)) = \prod_{a=1}^N \frac{\vth\bigl( s_a x_{p(a)}^{-1} \hbar^{\frac{1}{2} + \frac{L}{2} - p(a)} z \bigr) \vth( \hbar )}
 {\vth\bigl( s^{-1}_a x_{p(a)} \hbar^{\frac{1}{2}} \bigr) \vth\bigl( z \hbar^{\frac{L}{2} - p(a)} \bigr)} . \label{EGUp}
\eeq
Here we are using the theta functions defined in the notation of \cite{AganagicOkounkov201604}
\begin{equation}\label{eq:elliptic theta function in the aganagic-okounkov convention}
 \vartheta(t;q):= \bigl(t^{\frac{1}{2}}-t^{-\frac{1}{2}}\bigr)\mbs{\prod}_{n=1}^\infty\bigl(1-tq^n\bigr)\bigl(1-t^{-1}q^n\bigr) .
\end{equation}
Some of its identities are
 \begin{gather*}
 \vartheta(1;q)=0 , \qquad \vartheta\bigl(t^{-1};q\bigr)=-\vartheta(t;q) , \qquad \vartheta\bigl({\rm e}^{2\pi \mfk{i}}t;q\bigr)=-\vartheta(t;q) ,
 \\
 \vartheta\bigl(q^kt;q\bigr)=q^{-\frac{k^2}{2}}(-t)^{-k}\vartheta(t;q),\qquad
 k\in\mbb{Z} .
 \end{gather*}
This theta function can be related to Jacobi's theta function and the Dedekind eta function as follows:
\[
 q^{\frac{1}{12}} \vth(t; q) = \frac{\ii \tht_1\bigl(\frac{\ln t}{2\pi\ii} ; \frac{\ln q}{2\pi \ii} \bigr)}{\eta\bigl(\frac{\ln q}{2\pi\ii}\bigr)} .
\]
We omit the argument $q$ in the theta function when it is clear from the context.

 For the chamber $\mfr C_2$, we simply write down the elliptic genus
\beq
 \bW(\Upsilon_{\mfr C_2}(p)) = \prod_{a=1}^N \frac{\vth\bigl( s_a x_{p(a)}^{-1} \hbar^{-\frac{1}{2} - \frac{L}{2} + p(a)} z \bigr) \vth( \hbar )}
 {\vth\bigl( s^{-1}_a x_{p(a)} \hbar^{\frac{1}{2}} \bigr) \vth\bigl( z \hbar^{-\frac{L}{2} + p(a) - 1} \bigr)} . \label{EGUpOp}
\eeq
The field content of the theory and their charges can be read off from this if needed. Notice that when the screening conditions \rf{screening} are imposed the elliptic genera reduce to 1.

 The choice of the boundary theory is defined only up to the addition of anomaly-free 2d $\cN=(0,2)$ theories. This leads to some ambiguities regarding the overall partition function, and consequently regarding the stable envelope. We can fix some of this ambiguity by imposing certain normalization conditions on the stable envelopes (as was done in their definition in Theorem~\ref{thm: stable envelope}). The normalization condition restricts our choice of boundary theory to those without anomaly and whose elliptic genus reduces to 1 when the screening conditions \rf{screening} are imposed. Even then the choice is not unique. Our choice differs minimally from the closely related example of the $T^*\text{Gr}(N,L)$ Grassmannian theory with 3d $\cN=4$ symmetry from \cite[Section~6.5.3]{DedushenkoNekrasov202109}.

{\bf Boundary 't Hooft anomaly.}
The enriched Neumann boundary condition \rf{TCBC} receives contributions to its anomaly polynomial from Chern--Simons terms, boundary conditions on chirals and vectors, and the boundary theory
\[
 P(\mcr N_{\mfr C}(p)) := P^\text{CS}_-(\mcr N_{\mfr C}(p)) + P^\partial(\mcr N_{\mfr C}(p)) + P(\Upsilon_{\mfr C}(p)) .
\]
The Chern--Simons anomalies only depend on the orientation of the boundary, $P^\text{CS}_-(\mcr N_{\mfr C}(p)) = P^\text{CS}_-(\mcr D_{\mfr C}(p))$. Whereas, the boundary conditions on the chirals and the vector multiplet fields are swapped between the thimble \rf{thimbleBC} and the enriched Neumann \rf{TCBC}. So they contribute with opposite signs: $P^\partial_-(\mcr N_{\mfr C}(p)) = -P^\partial_-(\mcr D_{\mfr C}(p))$. Since we have a gauge theory at the boundary, the gauge anomalies from all the different sources must neatly cancel. In fact, the boundary theory is defined to satisfy the constraint that only 't Hooft anomalies survive and they match the negative of the future thimble boundary anomaly \rf{anomalymatching} define $\Upsilon_{\mfr C}(p)$ such that
\[
P_-(\mcr N_{\mfr C}(p)) = -P_+(\mcr D_{\mfr C}(p))\big|_{{\mbf f} = {\mbf f}^{(p)}({\mbf f}_x, {\mbf f}_\hbar)} .
\]
The anomaly of the elliptic genera \rf{EGUp} and \rf{EGUpOp} satisfy this condition for the chambers $\mfr C_1$ and $\mfr C_2$ \rf{C01} respectively. For reference, we note that in any chamber $\mfr C$, the anomaly of the boundary theory must be
\[
 P(\Upsilon_{\mfr C}(p)) = - 2 (\tr_{\bL_{\mfr C}(p)} - 2\tr_{\cR_0(p)} ) \bigl(\bigl({\mbf f} - {\mbf f}^{(p)}\bigr) {\mbf f}_\hbar \bigr) + 2 \tr\bigl(\bigl({\mbf f} - {\mbf f}^{(p)}\bigr) {\mbf f}_\text{top}\bigr) .
\]
It can be checked that the elliptic genera \rf{EGUp} and \rf{EGUpOp} contribute the appropriate anomalies for their respective chambers.

\subsubsection[Massless boundary condition, $\mcr B_{\bL(p)}(p)$]{Massless boundary condition, $\boldsymbol{\mcr B_{\bL(p)}(p)}$} \label{sec:masslessBC}
Broadly speaking, stable envelope is a map from the cohomology of the flavor fixed points of the Higgs branch to the cohomology of the Higgs branch itself \rf{HXAtoHX}. To compute this map, we compute the partition function of the 3d theory on $[y_-, y_+] \times \bE_\tau$ such that at the past boundary we have twisted masses which eventually vanish at the future boundary \rf{eq:the gauge-theoretic definition of elliptic stable envelope}. At the past boundary, we can, in principle, use either the thimble (see Section~\ref{sec:thimble}) or the enriched Neumann (see Section~\ref{sec:neumann}) boundary conditions both of which are labeled by connected components of the fixed point set and provide a basis for the cohomology of the fixed point set. As vector spaces, the cohomology of the Higgs branch is isomorphic to that of its fixed points. We, therefore, look for a set of boundary conditions to impose at the future massless boundary parameterized by connected components of the fixed point set as well, such that we can write down the matrix elements of the stable envelope in terms of pairs of connected components. Such a set of boundary conditions was proposed by Dedushenko--Nekrasov in \cite{DedushenkoNekrasov202109}. These boundary conditions are very similar to the thimble boundary conditions. However, since there are no masses at this boundary, there is no choice of chamber and we can not choose a polarization, as we did for the thimbles in~\rf{LMdecomp}, based on positive and negative masses. More specifically, there are no obvious candidates for $\mbf M_{\mfr C;\pm}(p)$. The massless parts of the polarization, namely $\cR_0(p)$ and~\smash{$\wtd\cR_0(p)$} still make sense as they are chamber independent. The proposal of Dedushenko--Nekrasov in this case is to arbitrarily choose a polarization of the chirals
$\mbf M = \bL(p) \oplus \bL^\perp(p)$,
as long as they satisfy (cf.\ \rf{LMdecomp})
\beq
 \cR_0(p) \subseteq \bL(p) , \qquad \wtd\cR_0(p) \subseteq \bL^\perp(p) . \label{RinL}
\eeq
In all our computations, we make the following choice
$ \bL(p) := \cR$, \smash{$ \bL^\perp(p) := \wtd\cR$},
which obviously satisfies \rf{RinL}.

 Once the choice of polarization is made, this boundary condition, which we call $\mcr B_{\bL(p)}(p)$, can be described just as in \rf{thimbleBC}:
\[\begin{aligned}
 &\text{Chirals in} \bL(p) \colon\ \text{Dirichlet with VEVs according to vacuum}\ p, \\
 &\text{Chirals in} \bL^\perp(p) \colon\ \text{Neumann}, \\
 &\text{Vector multiplet} \colon\ \text{Dirichlet with screening \rf{screening} of gauge holonomy.}
\end{aligned}\label{masslessBC}\]

\subsubsection[Janus interface, $\cJ(m_{\mfr C}, 0)$]{Janus interface, $\boldsymbol{\cJ(m_{\mfr C}, 0)}$}

The transition from a massive boundary in the past labeled by a fixed point $p$ to a massless boundary in the future is achieved via a mass Janus interface. We assume that the masses in the past belong to the chamber $\mfr C$. Take $y_-<0$ and $y_+>0$ and put the Janus interface at $y=0$ wrapping the elliptic curve $\bE_\tau$.

 We only vary the $F_{\msf A}=\SU(L)$ twisted masses in the theory across the Janus interface. We do not turn on any twisted masses for the $\U(1)_\hbar$ flavor symmetry. In principle, the mass Jauns is implemented by giving a $y$-dependent background value to the adjoint scalar $\si_{\mfr f_{\msf A}}$ in the flavor vector multiplet. The background value interpolates between some generic mass~$m$~\rf{twistedm} in the chamber $\mfr C$ and zero. This is supersymmetric as long as we also turn on the auxiliary field in the multiplet satisfying the BPS equation \rf{flowsiF} (with $\si_\hbar = \sfD_\hbar = 0$ such that ${\si_{\mfr f} = \si_{\mfr f_{\msf A}}}$ and~${\sfD_F = \sfD_{F_{\msf A}}}$). In particular, we can choose a background profile for the mass that is constant on the intervals $[y_-,0)$ and $(0, y+]$ and changes rapidly near $y=0$. Similar Janus interfaces in~3d~${\cN=4}$ theories were previously used to compute stable envelopes \cite{DedushenkoNekrasov202109} and $R$-matrices \cite{BullimoreKimLukowski201708, BullimoreZhang202109} for bosonic spin chains.

 The supersymmetric quantities we compute are supposed to be renormalization group invariants and capture the infrared dynamics. We, therefore, do not lose anything by considering the nonzero masses to be arbitrarily large. In this limit, we can integrate out the massive fields from the interval $[y_-, 0)$. But these fields are all present on the other side $(0, y_+]$. Thus, the fields that acquire masses in the massive vacuum $p$, including chirals and W-bosons, terminate at the interface when approaching from the massless side. This is implemented by a set of half-BPS boundary conditions at $y=0$ that were worked out in \cite{DedushenkoNekrasov202109}. We are simply going to use their prescription.

 The chirals with positive and negative masses in the vacuum $p$ belong to $\mbf M_{\mfr C;+}$ and $\mbf M_{\mfr C;-}$ respectively, according to our earlier definition of these spaces \rf{XweightDecomp}. To characterize the massive W-bosons, let us decompose the set $\De$ of roots of the gauge Lie algebra into subsets corresponding to the sign of their masses at the vacuum $p$. At $p$, the adjoint scalar in the vector multiplet takes the VEV $\si_p(m)$ given by \rf{siP}. Therefore, the vector field associated to the root $\al \in \De$ picks up a mass $\al(\si_p(m))$. The decomposition of the roots can then be written as~$ \De= \De_+(p) \sqcup \De_0 \sqcup \De_-(p)$,
where
\begin{gather*}
 \De_+(p) := \{\al \in \De | \al(\si_p(m)) > 0\} ,
 \qquad
 \De_0 := \{\al \in \De | \al(\si_p(m)) = 0\} ,
 \\
 \De_-(p) := \{\al \in \De | \al(\si_p(m)) < 0\} .
\end{gather*}
For generic masses, always only the Cartan valued fields remain massless, so $\De_0 = \{0\}$ in any vacuum.

 The interface is then implemented by the following boundary conditions on fields \cite{DedushenkoNekrasov202109}:
\beq\begin{aligned}
 &\text{Chirals in } \mbf M_{\mfr C;+}(p)\colon\ \text{Dirichlet}, \\
 & \text{Chirals in } \mbf M_{\mfr C;-}(p)\colon\ \text{Neumann}, \\
 & \text{Vectors in } \De_+(p)\colon\ \text{Dirichlet}, \\
 & \text{Vectors in } \De_-(p) \colon\ \text{Neumann}.
\end{aligned}\label{janusBC}
\eeq
From the definition of the Janus interface as a supersymmetric position-dependent background profile for the masses, it is clear that it does not introduce any anomaly. However, in practice, we implement the Janus interface effectively using the above-mentioned boundary conditions and it is not obvious that they are anomaly free. After all, both the boundary conditions and the Chern--Simons terms created by integrating out massive fermions contribute to anomaly at the boundary. It is however a simple exercise to check that these two anomalies precisely cancel each other, as was checked explicitly in \cite{DedushenkoNekrasov202109}.

\section[Stable envelopes and $R$-matrix from gauge theory: the computation]{Stable envelopes and $\boldsymbol{R}$-matrix from gauge theory:\\ the computation}
\label{sec:stable envelopes and the $R$-matrix from gauge theory}

\subsection{Elliptic stable envelope}\label{sec:explicit construction of elliptic stable envelope}
We are now in a position to compute the matrix elements of the stable envelope of $\sl(1|1)$ elliptic spin chains. For the purpose of BPS computations, we can shrink the interval $[y_-, y_+]$ to a point, resulting in an effective 2d $\cN=(0,2)$ theory on $\bE_\tau$. Then the 3d $\cN=2$ partition function on~${[y_-, y_+]\times \bE_\tau}$ is given by the elliptic genus of this effective 2d theory. To derive the field content of this effective 2d theory, we need to look at the intervals $[y_-, 0]$ and $[0, y_+]$, in addition to their union, and determine which 3d multiplets have compatible boundary conditions at both ends of all intervals where it is defined. If any field living on $[y_\text{start}, y_\text{end}]$ has opposite boundary conditions at $y_\text{start}$ and $y_\text{end}$ then it vanishes in the effective theory. A 3d multiplet that has the same boundary condition at both ends of an interval survives the collision of boundaries and decomposes into 2d $\cN=(0,2)$ multiplets. Of these $(0,2)$ multiplets, the ones with Dirichlet boundary conditions are simply frozen to their boundary values. The remaining~$(0,2)$ multiplets with Neumann boundary conditions, along with the theory $\Upsilon_{\mfr C_1}(p_1)$ at the past massive boundary, comprise the field content of the effective 2d theory. Note that throughout the computation of the stable envelope, we use the specific chamber \rf{C01} to write down formulas.

The vector fields that remain massless in the vacuum $p_1$ are part of the theory along the entire interval $[y_-, y_+]$ and they have opposite boundary conditions at $\ket{\mcr B_{\bL(p_2)}(p_2)}$ (Dirichlet) and $\bra{\mcr N_{\mfr C_1}(p_1)}$ (Neumann). These vector fields, therefore, do not contribute to the elliptic genus. All the vector fields receive Dirichlet boundary condition at $\ket{\mcr B_{\bL(p_2)}(p_2)}$ and all the vector fields with positive mass in the vacuum $p_1$ receive Dirichlet at the Janus interface. In the IR, these vector multiplets with positive masses in $p_1$ are only part of the massless theory along $y>0$ and therefore they survive. They decompose into 2d $\cN=(0,2)$ vector and adjoint chiral multiplets. Dirichlet for 3d $\cN=2$ vector multiplet imposes Dirichlet and Neumann on its constituent 2d vector and 2d adjoint chiral multiplets respectively \cite{DimofteGaiottoPaquette201712}. Contributions from these adjoint chirals to the elliptic genus are
\beq
 \bV(p_1) := \prod_{\substack{a, b=1 \\ p_1(a) > p_1(b)}}^N \frac{q^{-1/12}}{\vth\bigl(s_a s_b^{-1}\bigr)} . \label{V(p)}
\eeq
These 2d chirals are not charged under any global symmetry group.

Next, let us look at all the 3d $\cN=2$ chiral multiplets that survive with Dirichlet boundary conditions. Multiplets receiving Dirichlet at $\ket{\mcr N_{\mfr C_1}(p_1)}$ are in $\bL^\perp_{\mfr C_1}(p_1) = \mbf M_{\mfr C_1;-}(p_1) \oplus \wtd \cR_0(p_1)$ (\rf{TCBC} and \rf{LMdecomp}). But everything in $\mbf M_{\mfr C_1;-}(p_1)$ receives Neumann at the Janus \rf{janusBC} and therefore vanishes in the effective 2d theory. Multiplets in $\wtd \cR_0(p_1)$ extends unconstrained across the Janus and intersects with the multiplets $\bL(p_2)$ receving Dirichlet at $\bra{\mcr B_{\bL(p_2)}(p_2)}$. Similarly, everything in $\mbf M_{\mfr C_1;+}(p_1)$ receives Dirichlet at the Janus and intersects with multiplets receiving Dirichlet at $\bra{\mcr B_{\bL(p_2)}(p_2)}$. Therefore, 3d chiral multiplets surviving the collision of boundaries with Dirichlet are from
\[
 \bL(p_2) \cap \bigl(\wtd \cR_0(p_1) \oplus \mbf M_{\mfr C_1;+}(p_1)\bigr) = \cR \cap \bigl(\wtd \cR_0(p_1) \oplus \cR_{\mfr C_1;+}(p_1) \oplus \wtd \cR_{\mfr C_1;+}(p_1)\bigr) = \cR_{\mfr C_1;+}(p_1) .
\]
Dirichlet on a 3d chiral multiplet means Dirichlet and Neumann on its constituent 2d $\cN=(0,2)$ chiral and fermi multiplets respectively. Fields with Dirichlet are frozen at their boundary values in the effective 2d theory and the remaining fermi multiplets contribute to the elliptic genus
\[
 \bM_\text{fe}(p_1) := \prod_{a=1}^N \prod_{\substack{i=1 \\ i < p_1(a)}}^L q^{1/12} \vth\bigl(s_a x_i^{-1} \hbar^{1/2} \bigr) .
\]
Now we look for the 3d $\cN=2$ chiral multiplets with Neumann boundary condition contributing to the effective 2d theory. At $\ket{\mcr N_{\mfr C_1}(p_1)}$ all fields in $\bL_{\mfr C_1}(p_1) = \mbf M_{\mfr C_1;+} \oplus \cR_{0}(p_1)$ get Neumann. Only $\cR_0(p_1)$ passes through the Janus unconstrained. On the interval $[0, y_+]$ all fields with negative masses in the vacuum $p_1$ receive Neumann boundary condition at the Janus interface. Taking the intersection of all these fields with the fields receiving Neumann at $\bra{\mcr B_\bL(p_2)}$, we find
\[
 \bL^\perp(p_2) \cap ( \cR_0(p_1) \oplus \mbf M_{\mfr C_1;-}(p_1) ) = \wtd \cR \cap \bigl( \cR_0(p_1) \oplus \cR_{\mfr C_1;-}(p_1) \oplus \wtd \cR_{\mfr C_1;-}(p_1) \bigr) = \wtd\cR_{\mfr C_1;-}(p_1) .
\]
Neumann on a 3d chiral multiplet means Neumann and Dirichlet on its constituent 2d $\cN=(0,2)$ chiral and fermi multiplets respectively \cite{DimofteGaiottoPaquette201712}. These 2d chirals contribute to the elliptic genus
\[
 \bM_\text{ch}(p_1) := \prod_{a=1}^N \prod_{\substack{i=1 \\ i < p_1(a)}}^L \frac{q^{-1/12}}{\vth\bigl(s_a^{-1} x_i \hbar^{1/2} \bigr)} .
\]
Lastly, we have the contribution \rf{EGUp} of the boundary theory $\Up_{\mfr C_1}(p_1)$ to the elliptic genus. Putting all these together, we find the (unnormalized) elliptic stable envelope%
\beq
 \wtd{\estab }_{\mfr C_1}(p) = \text{Sym}_{S_N} \left(\bV(p) \bM_\text{fe}(p) \bM_\text{ch}(p) \bW(\Upsilon_{\mfr C_1}(p)) \right) . \label{Stab1}
\eeq
The symmetrization is over the gauge fugacities. In addition to the maximal torus $\U(1)^N$, the Weyl group $S_N$ of $\U(N)$ also remains as gauge symmetry at the massive vacuum $p_1$ and it acts on the gauge fugacities by permuting them.

Let us define
\beq
 \efunction_{\mfr C_1, m}(s, \bm x, \hbar, z) := \Biggl(\prod_{i < m} \vth(s x_i ) \Biggr)\frac{\vth\bigl(s x_m z \hbar^{m - L}\bigr)}{\vth\bigl( z \hbar^{m - L} \bigr)} \Biggl( \prod_{i > m} \vth(s x_i \hbar ) \Biggr) . \label{fC0m}
\eeq
Then the stable envelope \rf{Stab1} becomes
\[
 \wtd{\estab }_{\mfr C_1}(p) = \frac{q^{- \binom{N}{2}/12} \vth(\tilde \hbar )^N}{\prod_{a,i} \vth\bigl( s_a \tilde x_i \tilde \hbar \bigr)} \estab _{\mfr C_1}(p) ,
\]
where
\begin{align}
 \estab _{\mfr C_1}(p) :={}& (-1)^{\sum_a\#(i<p(a))}\nonumber\\
 &\times\text{Sym}_{S_N} \Biggl[ \Biggl(\prod_a \efunction_{\mfr C_1, p(a)}(s_a, \tilde{\bm x}, \tilde \hbar, \tilde z)\Biggr) \Biggl( \prod_{p(a) > p(b)} \frac{1}{\vth\bigl(s_a s_b^{-1}\bigr)} \Biggr) \Biggr] ,
\label{Stab2}
\end{align}
and we have also redefined the fugacities as
\beq
 \tilde x_i := x^{-1}_i \hbar^{\frac{1}{2}} , \qquad \tilde \hbar := \hbar^{-1} , \qquad \tilde z := z \hbar^{-\frac{L}{2}} . \label{tildeFug}
\eeq
These redefinitions and the specific form of the stable envelope in terms of the function \rf{fC0m} are merely to facilitate comparison with analogous known formulas coming from BPS computations in 3d $\cN=4$ SQCDs \cite{DedushenkoNekrasov202109} or from the theory of elliptic stable envelopes on complex symplectic varieties \cite{AganagicOkounkov201604} that correspond to bosonic XYZ spin chains.

\begin{rmk}[normalization of the stable envelopes] \label{rmk:normStab}
The overall factor in \smash{$\wtd{\estab }_{\mfr C_1}(p)$} is independent of the choice of the chamber $\mfr C$ and the vacuum $p$. This makes it ambiguous in the context of defining stable envelopes since we can change it by normalizing all the stable envelopes simultaneously. Only the relative normalization of the stable envelopes associated to different vacua and different chambers is meaningful. We have therefore defined the normalized stable envelope $\estab _{\mfr C_1}(p)$ without any leading factor. This normalized version is used in Section~\ref{sec:ellStab} to check that it satisfies all formal properties of elliptic stable envelopes for $\sl(1|1)$ spin chains. Note however that, the overall sign from \rf{Stab2} is missing in \rf{eqn: stable envelope for M(N,L)} due to a slightly different choice of normalization for the stable envelopes, see footnote \ref{foot:normStab}.
\end{rmk}

In the expression \rf{Stab2}, we have not yet fully imposed the boundary condition $\mcr B_{\bL(p_2)}(p_2)$. We still need to impose the screening condition \rf{screening} associated to the fixed point $p_2$. Doing so completes the computation of the 3d $\cN=2$ interval partition function, providing us with the matrix elements of the stable envelope
\[
 \estab _{\mfr C_1}(p_2, p_1) = \bra{ \mcr B_{\bL(p_2)}(p_2) } \cJ(m, 0) \ket{\mcr N_{\mfr C_1}(p_1)} = \estab _{\mfr C_1}(p_1)\big|_{\bm s = \bm s^{(p_2)}} .
\]
The specific values of the mass $m$ is not important as long as it belongs to the chamber $\mfr C_1$.

\begin{rmk}[SQED and comparison with 3d $\cN=4$/bosonic stable envelopes]\label{rmk:abelianN=2vsN=4}
The difference between the quiver \rf{fig:gl11Quiver} we used to compute elliptic $\sl(1|1)$ stable envelope and the quiver used in \cite{DedushenkoNekrasov202109} to compute the elliptic $\sl(2)$ stable envelope is the lack of a self-loop at the gauge node in the $\sl(1|1)$-quiver. In field theory, this simply means the absence of the 3d $\cN=2$ adjoint chiral multiplet, which is necessarily a part of the 3d $\cN=4$ vector multiplet. In the computation of the last section, the 3d $\cN=4$ vector multiplet contribution to the elliptic genus would involve~$\bV(p)$ from \rf{V(p)} along with the contributions from this adjoint chiral. This is the only difference between the $\sl(1|1)$ and $\sl(2)$ cases. Notice then, in the 1-magnon sector, i.e., for SQEDs, there is no distinction between the two cases. Because the vector multiplet, as well as any adjoint chiral multiplet, do not contribute anything to the elliptic genus in abelian theories. Thus, the~$\sl(1|1)$ stable envelopes and the $R$-matrix in the 1-magnon sector must coincide with those of the $\sl(2)$ spin chains. More generally, the following two quivers will always lead to the same stable envelopes, assuming that the hidden parts of the quivers coincide:
\[
\begin{tikzpicture}[x={(1cm,0cm)}, y={(0cm,1cm)}, baseline=0]
 	\node[draw, circle] at (0,0) (N) {$1$};
 	\node[circle] at (0,-1.25) (D) {$\vdots$};
 	\draw[-stealth] (D.70) to (N.296);
 	\draw[-stealth] (N.244) to (D.110);
 \node[circle] at (1.25,0) (R) {$\cdots$};
 \node[circle] at (-1.25,0) (L) {$\cdots$};
 \draw[-stealth] (N.148) to (L.24);
 \draw[-stealth] (L.-24) to (N.212);
 \draw[-stealth] (N.-32) to (R.204);
 \draw[-stealth] (N.32) to (R.156);
 	\end{tikzpicture}
 \simeq
 \begin{tikzpicture}[x={(1cm,0cm)}, y={(0cm,1cm)}, baseline=0]
 	\node[draw, circle] at (0,0) (N) {$1$};
 	\node[circle] at (0,-1.25) (D) {$\vdots$};
 	\draw[-stealth] (D.70) to (N.296);
 	\draw[-stealth] (N.244) to (D.110);
 \node[circle] at (1.25,0) (R) {$\cdots$};
 \node[circle] at (-1.25,0) (L) {$\cdots$};
 \draw[-stealth] (N.148) to (L.24);
 \draw[-stealth] (L.-24) to (N.212);
 \draw[-stealth] (N.-32) to (R.204);
 \draw[-stealth] (N.32) to (R.156);

 \draw[-stealth] (N.60) to [out=60, in=120, looseness=6] (N.120);
 	\end{tikzpicture}
\]
Another way of saying this is that in a magnon sector with magnon number 1 assigned to a~simple root, the stable envelopes (and consequently the $R$-matrix) will remain invariant if we change the parity of the simple root. The equivalence between even and odd gauge nodes in the case of abelian gauge groups was also pointed out in Example~\ref{ex: abelian stable envelope} from a more mathematical perspective.
\end{rmk}

\subsection[$R$-matrix for elliptic $\sl(1|1)$ spin chains]{$\boldsymbol{R}$-matrix for elliptic $\boldsymbol{\sl(1|1)}$ spin chains}\label{sec:elliptic r-matrix for sl(1|1)}
From the perspective of integrable systems, we are interested in computing the $R$-matrix for supersymmetric XYZ spin chains, not just the stable envelope. As stated in \rf{eqn: dfn of R-matirx}, $R$-matrix is a composition of a stable envelope and an inverse stable envelope.

The action of an $R$-matrix on a spin chain preserves magnon numbers. As a result, an $R$-matrix takes a block diagonal form, one block for each sector with fixed magnon numbers. We focus on the $\sl(1|1)$ spin chain with two fundamental spins.\footnote{The spin chain can be arbitrarily large, but the $R$-matrix always acts on the tensor product of two representations. So for the purpose of computing the $R$-matrix it suffices to consider just two sites. While testing Yang--Baxter we will have to extend the $R$-matrix to act trivially on a third site as well.} The fundamental representation of~$\sl(1|1)$ is $\bC^{1|1}$, where the lowest weight state is bosonic, and the only excited state is fermionic. Thus, the tensor product of two such representations is four complex dimensional, consisting of a~bosonic one-dimensional subspace of magnon number 0, a fermionic two-dimensional subspace of magnon number 1, and a bosonic one-dimensional subspace of magnon number 2. The $R$-matrix also decomposes as
\beq
 \ermatrix = \begin{pmatrix}
 \ermatrix^{(0)} && \\ & \ermatrix^{(1)} & \\ && \ermatrix^{(2)}
 \end{pmatrix} , \label{Rschematic}
\eeq
where the superscript refers to the magnon number.

The gauge theory dual of a spin chain with two fundamental spins for $\sl(1|1)$ corresponds to a 3d $\cN=2$ SQCD with $\SU(2)$ flavor symmetry. Each magnon sector corresponds to a different unitary gauge group, the rank of the gauge group is equal to the magnon number. The zero magnon sector thus corresponds to a trivial gauge theory and accordingly, the $R$-matrix in the 0-magnon sector is always trivial
\beq
 \ermatrix^{(0)}_{\mfr C_2 \leftarrow \mfr C_1} = 1 . \label{R0}
\eeq
In the above formula, $\mfr C_1$ and $\mfr C_2$ are the two chambers for the Cartan of $\su(2)$. To compute the $R$-matrix, we always need stable envelopes in two different chambers. We use two real masses~$m_1$ and $m_2$ to parameterize the Cartan subalgebra of $\su(2)$ with the tracelessness condition~${m_1 + m_2 = 0}$. There are then two chambers for the masses
$\mfr C_1 = m_1 < m_2$, $ \mfr C_2 = m_1 > m_2$.
We computed the stable envelope for $\mfr C_1$ in \rf{Stab2}. The stable envelope in the other chamber can be computed following identical steps as in the previous section. Here we simply write down the answer
\begin{align}
 \estab _{\mfr C_2}(p) := {}&(-1)^{\sum_a \#(i>p(a))}\nonumber \\
 &\times\text{Sym}_{S_N} \Biggl[ \Biggl(\prod_a \efunction_{\mfr C_2, p(a)}(s_a, \bm x, \hbar, z)\Biggr) \Biggl( \prod_{p(a) < p(b)} \frac{1}{\vth\bigl(s_a s_b^{-1}\bigr)} \Biggr) \Biggr] ,
\label{Stab3}\end{align}
where we have defined
\[
 \efunction_{\mfr C_2, m}(s, \bm x, \hbar, z) := \Biggl(\prod_{i > m} \vth(s x_i ) \Biggr)\frac{\vth\bigl(s x_m z \hbar^{1-m}\bigr)}{\vth\bigl( z \hbar^{1-m} \bigr)} \Biggl( \prod_{i < m} \vth(s x_i \hbar ) \Biggr) .
\]
Note that, in this section, all fugacities are the newly defined ones from \rf{tildeFug}, but to avoid clutter we omit the tildes.

Let us now specialize these formulas to the cases of gauge group $\U(1)$ and $\U(2)$ to compute the stable envelopes and the $R$-matrix in the 1 and 2-magnon sectors.

{\bf 1-magnon.}
The Higgs branch (see Example~\ref{ex:conifold}) has two fixed points $p_1$ and $p_2$, corresponding to the two possible maps $\{1\} \hookrightarrow \{1, 2\}$ mapping 1 to either 1 or 2 (cf.\ \rf{LchooseN}). There is a~single gauge fugacity and two screening conditions \rf{screening} for the two fixed points. We can now compute the four matrix elements of each of the two stable envelopes \rf{Stab2} and \rf{Stab3} in the two chambers, we find
\[\begin{aligned}
 \estab ^{(1)}_{\mfr C_1} =& \begin{pmatrix}
 \vth\left(x_1^{-1} x_2 \hbar \right) & 0 \\
 \frac{\vth(x_1 x^{-1}_2 z \hbar^{-1} ) \vth(\hbar)}{\vth(z h^{-1} )} & -\vth\left(x_1 x^{-1}_2\right)
 \end{pmatrix} ,
 \\
 \estab ^{(1)}_{\mfr C_2} =& \begin{pmatrix}
 -\vth\left(x_1^{-1} x_2 \right) & \frac{\vth\left(x^{-1}_1 x_2 z \hbar^{-1} \right) \vth(\hbar)}{\vth(z h^{-1} )} \\
 0 & \vth\left(x_1 x^{-1}_2 \hbar\right)
 \end{pmatrix} .
\end{aligned}\]
The $R$-matrix is a simple matrix product
\begin{align}
 \ermatrix^{(1)}_{\mfr C_2 \ot \mfr C_1} ={}& \estab ^{-1}_{\mfr C_2} \cdot \estab _{\mfr C_1} \nn\\
 ={}& \frac{-1}{\vth\left(x^{-1}_1 x_2 \hbar^{-1}\right)} \begin{pmatrix}
 \frac{\vth(x^{-1}_1 x_2) \vth(z) \vth(z\hbar^{-2})}{\vth(z \hbar^{-1})^2} & \frac{\vth(\hbar) \vth(x^{-1}_1 x_2 z \hbar^{-1})}{\vth(z \hbar^{-1})} \\
 \frac{\vth(\hbar) \vth(x_1 x^{-1}_2 z \hbar^{-1})}{\vth(z \hbar^{-1})} & -\vth(x_1 x^{-1}_2)
 \end{pmatrix} .\label{R1}
\end{align}
To simplify the upper left component, we have used the identity
\[
 \sum_\text{cyclic} \vth(AB) \vth\bigl(AB^{-1}\bigr) \vth(C)^2 = 0 .
\]

{\bf 2-magnon.}
The gauge theory dual of the 2-magnon sector has gauge group $\U(2)$. There are two gauge fugacities and a unique vacuum corresponding to the unique (modulo permutation) map $\{1,2\} \hookrightarrow \{1,2\}$. The stable envelopes are just functions
\[
 \estab ^{(2)}_{\mfr C_1} = -\vth\bigl(x^{-1}_1 x_2 \hbar\bigr) , \qquad
 \estab ^{(2)}_{\mfr C_2} = -\vth\bigl(x_1 x^{-1}_2 \hbar\bigr) .
\]
As is the $R$-matrix,
\beq
 \ermatrix^{(2)}_{\mfr C_2 \ot \mfr C_1} = \estab ^{-1}_{\mfr C_2} \estab _{\mfr C_1} = \frac{\vth\bigl(x^{-1}_1 x_2 \hbar\bigr)}{\vth\bigl(x_1 x^{-1}_2 \hbar\bigr)} . \label{R2}
\eeq

Putting together \rf{R0}, \rf{R1}, and \rf{R2} as in \rf{Rschematic}, we find the complete $R$-matrix for the elliptic $\sl(1|1)$ spin chain
\begin{equation}\label{eq:gl(1|1) elliptic r-matrix}
 \def\arraystretch{1.5}
	\setlength{\arrayrulewidth}{.6pt}
 \setlength{\arraycolsep}{6pt}
 \ermatrix_{\mfk{C}_2\ot\mfk{C}_1}=
 \left(\begin{array}{c|cc|c}
 1 & 0 & 0 & 0 \\
 \hline
 0 & -\frac{\vth (x^{-1}_1 x_2 ) \vth(z) \vth(z\hbar^{-2})}{\vth (x^{-1}_1 x_2 \hbar^{-1} ) \vth(z \hbar^{-1})^2} & \frac{\vth(\hbar) \vth (x^{-1}_1 x_2 z \hbar^{-1} )}{\vth (x_1 x^{-1}_2 \hbar ) \vth (z \hbar^{-1} )} & 0
 \\[6pt]
 0 & \frac{\vth(\hbar) \vth (x_1 x^{-1}_2 z \hbar^{-1} )}{\vth (x_1 x^{-1}_2 \hbar ) \vth (z \hbar^{-1} )} & - \frac{\vth (x^{-1}_1 x_2 )}{\vth (x^{-1}_1 x_2 \hbar^{-1} )} & 0
 \\[4pt]
 \hline
 0 & 0 & 0 & \frac{\vth (x^{-1}_1 x_2 \hbar )}{\vth (x_1 x^{-1}_2 \hbar )}
 \end{array}\right) .
\end{equation}
Here $x_1$ and $x_2$ are referred to as the spectral parameters associated to the two sites, $z$ as the dynamical parameter, and $\hbar$ as the quantization parameter. This $R$-matrix solves the dynamical Yang--Baxter equation for $\sl(1|1)$ spin chains \rf{eqn: DYBE for sl(1|1)}. Note that, relative to the $R$-matrix presented in \rf{eqn: quiver $R$-matrix for sl(1|1)}, the signs in front of the two diagonal entries in the 1-magnon sector are different in the above formula. This is due to the slightly different choice of normalization of the stable envelopes, as pointed out in Remark~\ref{rmk:normStab}.

\section[2d and 1d avatars of elliptic stable envelopes and the $R$-matrix]{2d and 1d avatars of elliptic stable envelopes and the $\boldsymbol{R}$-matrix}
\label{sec:2d and 1d avatars of elliptic stable envelopes}

This section is devoted to taking the 3d $\to$ 2d and 2d $\to$ 1d reduction of elliptic stable envelopes. In Section~\ref{sec:the K-theory limit}, we take the 3d $\to$ 2d reduction, which produces the K-theoretic stable envelopes and the trigonometric $R$-matrix for $\mfk{sl}(1|1)$. The 2d $\to$ 1d reduction, which produces cohomological stable envelopes and the rational $R$-matrix, is the subject of Section~\ref{sec:the cohomology limit}. In the case of cohomological stable envelopes and the rational $R$-matrix for $\mfk{sl}(1|1)$, we compare our result with that of Rim\'anyi and Rozansky in \cite{RimanyiRozansky202105} and find perfect agreement. The discussion in this section is based on Section~\ref{sec:k-theory and cohomology limit} and also the physical description in \cite[Section~2.1]{DedushenkoNekrasov202109}.

\subsection{The K-theory limit}\label{sec:the K-theory limit}

In this section, we reduce the formulas for elliptic stable envelopes to two dimensions and give a prediction for K-theoretic stable envelopes associated with Lie superalgebra $\mfk{sl}(1|1)$.

{\bf Procedure for 3d $\boldsymbol{\to}$ 2d reduction.} In Section~\ref{sec:3d N=2 SQCD and its parameters}, we explained the parameters in three-dimensional theory on $I\times\mbb{E}_\tau$. One could shrink one of the cycles of $\mbb{E}_\tau$, corresponding to $\tau\to 0$, to get a two-dimensional theory $S^1_\A\times I$. The procedure of going from three to two dimensions is that we shrink the B-cycles of $\mbb{E}_\tau$. This limit could be better handled if we take our parameters to take value in the $S$-transformed elliptic curve \smash{$\mbb{E}_{-\frac{1}{\tau}}$}. Then, the two-dimensional limit is given by
$\tau\to 0$, \smash{$ q^{-\frac{2\pi\mfk{i}}{\tau}}\to 0$}.
In this limit, the elliptic curve becomes nodal and the smooth locus is isomorphic to $\mbb{C}^\times$. It turns out that the K-theoretic stable envelope depends on the choice of a slope. This makes the behavior of elliptic K\"ahler parameters in the $q\to 0$ limit more subtle. Defining the slope parameter as \eqref{eq:the k-theory slope parameter} such that $\msf{s}$ is generic (i.e., it belongs to an alcove). The precise procedure to take the K-theory limit is explained in Section~\ref{sec:k-theory and cohomology limit}.

In the case of $\mcal{N}=2$ SQCD, we need two results for deriving the K-theoretic stable envelopes:
\begin{itemize}\itemsep=0pt
\item[(1)] the behavior of this function in the limit $q\to 0$, that can be deduced from its very definition \eqref{eq:elliptic theta function in the aganagic-okounkov convention}
\smash{$\lim_{q\to 0}\vartheta(t;q)=t^{\frac{1}{2}}-t^{-\frac{1}{2}}$}.
\item[(2)] and that \cite[equation~(2)]{AganagicOkounkov201604}
\begin{equation}\label{eq:q->0 limit of ratio of theta functions}
 \lim_{q\to 0}\left.\left(\frac{\vartheta(t_1z;q)}{\vartheta(t_2z;q)}\right)\right|_{z=q^{\msf{s}}}=\left(\frac{t_1}{t_2}\right)^{\lfloor\msf{s}\rfloor+\frac{1}{2}}.
\end{equation}
\end{itemize}

Finally, we note that all the holonomies introduced in Section~\ref{sec:3d N=2 SQCD and its parameters} are still multiplicative but are now valued in $\mbb{C}^\times$.

{\bf K-theoretic stable envelopes for $\boldsymbol{\mfk{sl}(1|1)}$.}
We now perform a 3d $\to$ 2d reduction of the elliptic stable envelope, which gives us the K-theoretic stable envelopes.

From the definition of the function $\efunction_{\mfk{C}_1,m}$ \eqref{fC0m} and \eqref{eq:q->0 limit of ratio of theta functions}, we find that
\[
 \lim_{q\to 0}\Biggl[\frac{\vartheta\bigl(s_ax_{p(a)}\hbar^{-(L-p(a))}z\bigr)}{\vartheta\bigl(\hbar^{-(L-p(a))}z\bigr)}\Bigg|_{z=q^{\msf{s}}}\Biggr]=(s_ax_{p(a)})^{\lfloor{\msf{s}}\rfloor+\frac{1}{2}}.
\]
Hence,
\begin{align*}
 \lim_{q\to 0}\efunction_{\mfk{C}_1,m}(s,\mbs{x},\hbar,z)&=\biggl(\hbar^{\frac{\#(i>p(a))}{2}}\prod_{i}(sx_i)^{\frac{1}{2}}\bigg)(sx_m)^{\lfloor\msf s\rfloor}\prod_{i<m}\bigl(1-(sx_i)^{-1}\bigr)\prod_{i>m}\bigl(1-(sx_i\hbar)^{-1}\bigr)
 \\
 &=\biggl(\hbar^{\frac{\#(i>p(a))}{2}}\prod_{i}(sx_i)^{\frac{1}{2}}\bigg)\tfunction_{\mfk{C}_1,\msf{s},m}(s,\mbs{x},\hbar),
\end{align*}
where we have defined
\[
 \tfunction_{\mfk{C}_1,\msf{s},m}(s,\mbs{x},\hbar):= (sx_{m})^{\lfloor{\msf{s}}\rfloor}\prod_{i<m}\bigl(1-(sx_i)^{-1}\bigr)\prod_{i>m}\bigl(1-(sx_i\hbar)^{-1}\bigr).
\]
Furthermore,
\[
 \lim_{q\to 0}\left(\prod_{\substack{a>b}}\frac{1}{\vartheta\bigl(s_as_b^{-1}\bigr)}\right)=\prod_{\substack{a>b}}\frac{1}{\bigl(s_as_b^{-1}\bigr)^{\frac{1}{2}}-\bigl(s_as_b^{-1}\bigr)^{-\frac{1}{2}}}.
\]
Finally, we know from \eqref{eq:k-theoretic stable envelope as the limit of elliptic one} that we need to multiply the result by two factors: (1) the inverse of the square root of the determinant of the partial polarization \smash{$(\det\text{Pol}_{\msf{X}})^{-\frac{1}{2}}$} for $\msf{X}=\mcal{M}_H(N,L)$, which is given by the product of fugacities of the quarks $Q$
\[
(\det\text{Pol}_{\msf{X}})^{-\frac{1}{2}}=\prod_{a,i}\frac{1}{\bigl(s_ax_i^{-1}\hbar^{1/2}\bigr)^{\frac{1}{2}}}.
\]
However, we need to implement the redefinition of fugacities according to \eqref{tildeFug}. Therefore,
\[
 (\det\text{Pol}_{\msf{X}})^{-\frac{1}{2}}=\prod_{a,i}\frac{1}{(s_ax_i)^{\frac{1}{2}}},
\]
(2) the square root of the determinant of the polarization of the fixed loci $(\det\text{Pol}_{\msf{X}^{\msf{A}}})^{\frac{1}{2}}$, which in our case is trivial
\smash{$(\det\text{Pol}_{\msf{X}^{\msf{A}}})^{+\frac{1}{2}}=1$}.
Putting these results together, and using \eqref{eq:k-theoretic stable envelope as the limit of elliptic one}, we see that\footnote{The K-theory limit of elliptic stable envelope for the resolved determinantal variety $\mcal{M}_H(N,L)$ has also been computed in \eqref{eqn: K-theoretic stable envelope for M(N,L)} using a different polarization as explained there. Compared to \eqref{eqn: K-theoretic stable envelope for M(N,L)}, the expression \eqref{eq:k-theoretic stable envelope coming from reduction of bps computation} contains the extra factor $(-1)^{\sum_a\#(i<p(a))|}$, which is the result of a choice of different polarization in the BPS computation of elliptic stable envelope.}
\begin{align}
 \tstab_{\mfk{C}_1,\msf{s}}(p)={}&(-1)^{\sum_a\#(i<p(a))}\hbar^{\frac{1}{2}\sum_a\#(i>p(a))}\nonumber
 \\
 &\times \tenofo{Sym}_{S_N}\left[\left(\prod_{a=1}^N \tfunction_{\mfk{C}_1,\msf{s},p(a)}(s_a,\mbs{x},\hbar)\right)\cdot
 \Biggl(\prod_{\substack{a>b}}\frac{1}{\bigl(s_as_b^{-1}\bigr)^{\frac{1}{2}}-\bigl(s_as_b^{-1}\bigr)^{-\frac{1}{2}}}\Biggr)\right].\label{eq:k-theoretic stable envelope coming from reduction of bps computation}
\end{align}
By Lemma \ref{cor:k-theoretic stable envelope from the elliptic one}, this is the K-theoretic stable envelope for the Lie superalgebra $\mfk{sl}(1|1)$. Notice that it is manifestly a locally-constant function of $\msf{s}$, which is a characteristic feature of K-theoretic stable envelopes as a function of $\msf{s}$.

{\bf Trigonometric $\boldsymbol{R}$-matrix for $\boldsymbol{\mfk{sl}(1|1)}$.}
We can now perform the reduction of the $\mfk{sl}(1|1)$ elliptic $R$-matrix \eqref{eq:gl(1|1) elliptic r-matrix} to construct the~$\mfk{sl}(1|1)$ trigonometric $R$-matrix. Let us define the function~\smash{$\msf{G}(x):= x^{\frac{1}{2}}-x^{-\frac{1}{2}}$}.
The result of the reduction is
\begin{equation}\label{eq:gl(1|1) trigonometric r-matrix}
	\def\arraystretch{1.5}
	\setlength{\arrayrulewidth}{.8pt}
 \setlength{\arraycolsep}{6pt}
	\trmatrix_{\mfk{C}_2\ot\mfk{C}_1}(u)=
	\left(
	\begin{array}{c|cc|c}
		1 & 0 & 0 & 0
		\\
		\hline
		0 & \frac{\msf{G}(u)}{\msf{G} (u^{-1}\hbar )} & u^{\lfloor \msf{s}\rfloor+\frac{1}{2}}\frac{\msf{G}(\hbar)}{\msf{G} (u^{-1}\hbar )} & 0
 \\[6pt]
 0 & u^{-\lfloor \msf{s}\rfloor-\frac{1}{2}}\frac{\msf{G}(\hbar)}{\msf{G} (u^{-1}\hbar )} & \frac{\msf{G}(u)}{\msf{G} (u^{-1}\hbar )} & 0
		\\[4pt]
		\hline
		0 & 0 & 0 & \frac{\msf{G}(u\hbar)}{\msf{G} (u^{-1}\hbar )}
	\end{array}\right).
\end{equation}
One can check that it satisfies the trigonometric version of Yang--Baxter equations given by
\[
 \trmatrix_{12}(u/v) \trmatrix_{13}(u/w) \trmatrix_{23}(v/w)= \trmatrix_{23}(v/w) \trmatrix_{13}(u/w) \trmatrix_{12}(u/v).
\]

\subsection{The cohomology limit}\label{sec:the cohomology limit}

In the previous section, we have constructed the K-theoretic stable envelopes and the trigonometric $R$-matrix for Lie superalgebra $\mfk{sl}(1|1)$ by performing a 3d $\to$ 2d reduction. It is then natural to consider a further 2d $\to$ 1d reduction to construct the cohomological stable envelopes and the rational $R$-matrix for $\mfk{sl}(1|1)$. Fortunately, the mathematical construction of cohomological stable envelope for Lie superalgebra $\mfk{sl}(1|1)$ is available in the literature \cite{RimanyiRozansky202105}, and we can compare our result. In this section, we perform this 2d $\to$ 1d reduction, and we recover the cohomological stable envelopes and the geometrical $R$-matrix of Rim\'anyi and Rozansky.

{\bf Procedure for 2d $\boldsymbol{\to}$ 1d reduction.}
We need to know what happens by going from 2d to~1d. In the 3d $\to$ 2d, reduction, one considers the reduction of the theory on one of the circles of the elliptic curve. Similarly, the 2d $\to$~1d result is obtained by the reduction of the theory on the remaining circle. This amounts to the following substitution of gauge and flavor holonomies\footnote{The precise statement in going from K-theoretic stable envelope to the cohomological one is given in \eqref{eqn: reduction to cohomology}.} (see Section~\ref{sec:k-theory and cohomology limit} and \cite[Section~2.1]{DedushenkoNekrasov202109})
\begin{equation}\label{eq:multiplicative to additive chern roots}
 (s_a,x_i,\hbar)\mapsto \lim_{\epsilon\to 0}\bigl({\rm e}^{\epsilon s_a},{\rm e}^{\epsilon x_i},{\rm e}^{\epsilon\hbar}\bigr),
\end{equation}
On the other hand, the holonomy associated with the topological symmetry defines K\"aher parameter in three dimensions. Upon 3d $\to$ 2d reduction, this becomes a theta angle, i.e., the slope parameter, as explained in Section~\ref{sec:the K-theory limit}. By 2d $\to$ 1d reduction, it turns out that this parameter completely disappears \cite[p.~16]{DedushenkoNekrasov202109}. This is consistent with the fact that cohomological stable envelopes are not dependent on extra choices like a slope. Therefore, one should (1) use \eqref{eq:multiplicative to additive chern roots} to replace multiplicative holonomies with additive parameters in the $\epsilon\to 0$ limit by throwing away terms of order $\mcal{O}\bigl(\epsilon^2\bigr)$, and (2) discard all $\msf{s}$-dependent factors. In this process, all gauge and flavor holonomies become additive and $\mbb{C}$-valued.

{\bf Cohomological stable envelope for $\boldsymbol{\mfk{sl}(1|1)}$.}
We can now perform the reduction explicitly. Different factors of \eqref{eq:k-theoretic stable envelope coming from reduction of bps computation} are reduced as follows
\[
 \tfunction_{\mfk{C}_1,\msf{s},m}\quad\to\quad \rfunction_{\mfk{C}_1,m}(s,\mbs{x},\hbar):= \prod_{i<m}(s+x_i)\prod_{i>m}(s+x_i+\hbar),
\]
and
\[
 \prod_{\substack{a>b}}\frac{1}{\bigl(s_as_b^{-1}\bigr)^{\frac{1}{2}}-\bigl(s_as_b^{-1}\bigr)^{-\frac{1}{2}}}\quad \to\quad \prod_{\substack{a>b}}\frac{1}{s_a-s_b}.
\]
Therefore, according to Lemma \ref{cor:k-theoretic stable envelope from the elliptic one}, we end up with the cohomological stable envelope\footnote{The cohomology limit of K-theoretic stable envelope for the resolved determinantal variety $\mcal{M}_H(N,L)$ has also been computed in \eqref{eqn: cohomological stable envelope for M(N,L)} using a different polarization as explained there. Compared to \eqref{eqn: cohomological stable envelope for M(N,L)}, the expression~\eqref{eq:cohomological stable envelope coming from reduction of bps computation} contains the extra factor $(-1)^{\sum_a\#(i<p(a))|}$, which is the result of a choice of different polarization in the BPS computation of elliptic stable envelope.}
\begin{equation}\label{eq:cohomological stable envelope coming from reduction of bps computation}
 \rstab_{\mfk{C}_1}(p)=(-1)^{\sum_a\#(i<p(a))}\tenofo{Sym}_{S_N}\left[\left(\prod_{a=1}^N\rfunction_{\mfk{C}_1,m}(s,\mbs{x},\hbar)\right)\cdot
 \Biggl(\prod_{\substack{a>b}}\frac{1}{s_a-s_b}\Biggr)\right].
\end{equation}

{\bf Rational $\boldsymbol{R}$-matrix for $\boldsymbol{\mfk{sl}(1|1)$}.}
We can now perform the reduction of the $\mfk{sl}(1|1)$ trigonometric $R$-matrix \eqref{eq:gl(1|1) trigonometric r-matrix} to construct the $\mfk{sl}(1|1)$ rational $R$-matrix. The result is
\begin{equation}\label{eq:gl(1|1) rational r-matrix}
	\def\arraystretch{1.5}
	\setlength{\arrayrulewidth}{.6pt}
 \setlength{\arraycolsep}{6pt}
	\rrmatrix_{\mfk{C}_2\ot\mfk{C}_1}(u)=
	\left(
	\begin{array}{c|cc|c}
		1 & 0 & 0 & 0
		\\
		\hline
		0 & \frac{u}{\hbar-u} & \frac{\hbar}{\hbar-u} & 0
 \\[4pt]
 0 & \frac{\hbar}{\hbar-u} & \frac{u}{\hbar-u} & 0
		\\[4pt]
		\hline
		0 & 0 & 0 & \frac{\hbar+u}{\hbar-u}
	\end{array}\right).
\end{equation}
This satisfies the rational Yang--Baxter equations given by
\[
 \rrmatrix_{12}(u-v) \rrmatrix_{13}(u-w) \rrmatrix_{23}(v-w)= \rrmatrix_{23}(v-w) \rrmatrix_{13}(u-w) \rrmatrix_{12}(u-v).
\]
Let us finally compare our result with the construction in the mathematical literature \cite{RimanyiRozansky202105}. For this matter, we first give a synopsis of \cite{RimanyiRozansky202105}.

{\bf Brief recap of the work of Rim\'anyi and Rozanski.}
Consider an index with values $(00), (01), (10),$ or $(11)$. Depending on the choice of $r$, we have
	\begin{gather*}
		r=00\colon\ \mbb{C}_{\tenofo{even}}\oplus \mbb{C}_{\tenofo{even}}, \qquad
		r=10\colon\ \mbb{C}_{\tenofo{even}}\oplus \mbb{C}_{\tenofo{odd}},\qquad
		r=01\colon\ \mbb{C}_{\tenofo{odd}}\oplus \mbb{C}_{\tenofo{even}}, \\
		r=11\colon\ \mbb{C}_{\tenofo{odd}}\oplus \mbb{C}_{\tenofo{odd}},
	\end{gather*}
which corresponds to $\mfk{sl}(2,0)$, $\mfk{sl}(1,1)$, $\mfk{sl}(1|1)$, and $\mfk{sl}(0|2)$, respectively. We are only interested in~${r=(10)}$ (or equivalently $r=(01)$).\footnote{The case $r=(00)$ corresponds to $\mfk{sl}(2)$ spin chains, which is explored in \cite{BullimoreKimLukowski201708, MaulikOkounkov201211}.}
Analog to the classical case, there is a super-stable map~\smash{$\tenofo{Stab}^{(r)}_\pi$}, which depends on a permutation $\pi\in S_L$. The definition of this map involves the triple~${(N,L,\pi)}$ and are defined if certain classes \smash{$\kappa^{(r)}_{p,\pi}:= \tenofo{Stab}^{(r)}_\pi(1_{p})$}\footnote{In \cite{RimanyiRozansky202105}, these classes are denoted as \smash{$\kappa^{(r)}_{\mbs{I},\pi}$}, where $\mbs{I}=\{I_1,\ldots,I_N\}\subseteq\{1,\ldots,L\}$. $\mbs{I}$ is determined by the injective map $p\colon\{1,\ldots,N\}\to\{1,\ldots,L\}$. We prefer to use the map $p$ to have a uniform notation throughout the paper.} exist and satisfy certain axioms with $p$ denoting the injective map \eqref{LchooseN}, that uniquely determine \smash{$\tenofo{Stab}^{(r)}_\pi$} if it exists~\cite[Definition 4.1]{RimanyiRozansky202105}.
The explicit forms of the stable envelopes are then given in terms of certain rational functions \smash{$W^{(r)}_{l;N,L}(\mbs{s},\mbs{x},\hbar)$} that are called super-weight functions and are the super-analog of weight functions defined by Tarasov and Varchenko \cite{RimanyiTarasovVarchenko201212,TarasovVarchenko199311}. Here, $\mbs{x}:= \{s_1,\ldots,s_N\}$ and~${\mbs{x}:= \{x_1,\ldots,x_L\}}$ are Chern roots and are equivariant parameters, respectively.
The superweight functions are defined as \cite[Section~5.2]{RimanyiRozansky202105}
\begin{equation}\label{eq:definiton of functions W}
	W^{(10)}_{p}(\mbs{s},\mbs{x},\hbar):= \tenofo{Sym}_{S_N}\bigl(U^{(10)}_{p}(\mbs{s},\mbs{x};\hbar)\bigr),
\end{equation}
where\footnote{The explicit form of $U^{(r)}_{l}(\mbs{s};\mbs{x};\hbar)$ for other values of $r$ can be found in \cite[Section~5]{RimanyiRozansky202105}.}
\begin{equation}\label{eq:explicit form of U}
 U^{(10)}_{p}(\mbs{s},\mbs{x},\hbar)=\prod_{a=1}^N\left\{\prod_{i=1}^{p(a)-1}\left(s_a-x_i+\hbar\right)\prod_{i=p(a)+1}^{L}\left(-s_a+x_i\right)\right\}\cdot\left(\prod_{a>b}\frac{1}{(s_a-s_b)}\right).
\end{equation}
More generally, one can define the superweight functions for any permutation $\pi\in S_L$ through
\begin{equation}\label{eq:function W for a permutation}
	W^{(10)}_{p,\pi}(\mbs{s},\mbs{x};\hbar):= W^{(10)}_{\pi^{-1}(p)}(\mbs{s},\pi(\mbs{x}),\hbar),
\end{equation}
where $\pi(\mbs{x})=\{x_{\pi(1)},\ldots,x_{\pi(L)}\}$. Consider the set of all subsets of $\{1,\dots,L\}$ consisting of $N$ elements. Each such subset is determined by an injective map $p\colon\{1,\ldots,N\}\to\{1,\ldots,L\}$. We denote the set of all such injective maps as $\mbs{p}_N$. For any $p,p'\in \mbs{p}_N$, we denote the polynomial function\footnote{The fact that $W_{p,\pi}(\mbs{x}_{p'};\mbs{x};\hbar)$ is a polynomial function is proven in \cite[Proposition~6.1]{RimanyiRozansky202105}.} obtained by substituting $s_a=x_{p'(a)}$\footnote{This is the additive analog of the screening condition \smash{$s_a\to s^{(p)}_a$} that we used to construct matrix elements of elliptic stable envelopes.}
by \smash{$W^{(10)}_{p,\pi}(\mbs{x}_{\mbs{J}},\mbs{x},\hbar)$}. Then, the tuple~\smash{$\bigl(W^{(10)}_{p,\pi}(\mbs{x}_{p'},\mbs{x},\hbar)\bigr)_{p'\in\mbs{p}_N}$},
is a class in the equivariant cohomology $H_{T}(\tenofo{Gr}(N,L))$, which we denote as \smash{$\bigl[W^{(10)}_{p,\pi}\bigr]$} \cite[Proposition~7.1]{RimanyiRozansky202105}, and satisfies the axioms for the class \smash{$\kappa^{(r)}_{p,\pi}$} \cite[Theorem~7.3]{RimanyiRozansky202105}. Hence by \cite[Definition~4.1]{RimanyiRozansky202105},
\smash{$\kappa^{(10)}_{p,\pi}=\tenofo{Stab}^{(10)}_\pi(1_{p})=\bigl[W^{(10)}_{p,\pi}\bigr]$}.
Therefore, the stable envelope maps the identity class to the class \smash{$\bigl[W^{(10)}_{p,\pi}\bigr]$}, and as a result is given by the superweight functions \smash{$W^{(10)}_{p,\pi}(\mbs{s},\mbs{m},\hbar)$}. Furthermore, the geometric $R$-matrix for~${r=(10)}$ is given by \cite[Section~8.2]{RimanyiRozansky202105}
\begin{equation}\label{eq:rimanyi-rozansky gl(1|1) rational r-matrix}
	\def\arraystretch{1.5}
	\setlength{\arrayrulewidth}{.6pt}
 \setlength{\arraycolsep}{6pt}
	\rrmatrix^{\tenofo{RR}}(u)=
	\left(
	\begin{array}{c|cc|c}
		1 & 0 & 0 & 0
		\\
		\hline
		0 & \frac{u}{\hbar-u} & \frac{\hbar}{\hbar-u} & 0
 \\[4pt]
 0 & \frac{\hbar}{\hbar-u} & \frac{u}{\hbar-u} & 0
		\\[4pt]
		\hline
		0 & 0 & 0 & \frac{\hbar+u}{\hbar-u}
	\end{array}\right).
\end{equation}

{\bf Recovering the result of Rim\'anyi and Rozansky.}
We can now compare our results to that of \cite{RimanyiRozansky202105}. First consider a permutation $\pi$ such that $\{x_1,\ldots,x_{p(a)},\ldots, x_L\}\to\{x_L,\ldots,x_{p(a)},\ldots,x_1\}$. Then,
\begin{align*}
 W^{(10)}_{p,\pi}(\mbs{s},-\mbs{x},\hbar)={}&(-1)^{\sum_a\#(i<p(a))}
 \\
 &\times \tenofo{Sym}_{S_N}\left[\prod_{a=1}^N\Biggl(\prod_{i<p(a)}\!(s_a+x_i)\!\prod_{i>p(a)}\!(s_a+x_i+\hbar)\Biggr)\!\cdot\!\Biggl(\prod_{a>b}\frac{1}{(s_a-s_b)}\Biggr)\right].
\end{align*}
Comparing this expression to \eqref{eq:cohomological stable envelope coming from reduction of bps computation}, we find that
$\tenofo{Stab}_{\mfk{C}_1}(p)=W^{(10)}_{p,\pi}(\mbs{s},-\mbs{x},\hbar)$.
Furthermore, by comparing \eqref{eq:rimanyi-rozansky gl(1|1) rational r-matrix} to \eqref{eq:gl(1|1) rational r-matrix}, we find that
$ \rrmatrix_{\mfk{C}_2\ot\mfk{C}_1}(u)= \rrmatrix^{\tenofo{RR}}(u)$.
This verifies our construction of cohomological stable envelopes for $\mfk{sl}(1|1)$.

\appendix

\section{Equivariant elliptic cohomology}
\label{subsec: Equivariant Elliptic Cohomology}

In this appendix, we give a brief overview of equivariant elliptic cohomology, we only list the essential ingredients that are used in this paper. We basically follow the presentation in \cite[Section~1.4, Appendices~A and B]{okounkov2021inductive}. The general theory on equivariant elliptic cohomology can be found in \cite{ganter2014elliptic,Gepner2006,GinzburgKapranovVasserot199502,Grojnowski200703,lurie2009survey,rosu2001equivariant},

Let $q$ be a nonzero complex number such that $|q|<1$, then take the elliptic curve $\mathbb E=\mathbb C^{\times}/q^{\mathbb Z}$. For simplicity, we choose the elliptic curve $\bE$ to be general enough such that it has no complex multiplication, i.e., $\mathrm{End}(\mathbb E)=\mathbb Z$.

Fix a compact Lie group $G$ and denote its complexification by $G_{\mathbb C}$, then the $G$-equivariant elliptic cohomology is a functor
\[
\{\text{pairs of $G$-spaces}\}\to \{\text{graded super schemes over }\mscr E_G\}\qquad
(X,\partial X)\mapsto \mathrm{Ell}_G(X,\partial X),
\]
where $\mscr E_G$ is the moduli scheme of semistable principal $G_{\mathbb C}$-bundles of trivial topological type on the dual elliptic curve $\mathbb E^\vee$. For a torus $T$,
$\mscr E_T=\mathbb E\otimes_{\mathbb Z}\mathrm{Cochar}(T)$,
and for a compact Lie group~$G$ with maximal torus $T$, it is known that \cite{friedman1997principal}
$ \mscr E_G\cong \mscr E_T/W$,
where $W$ is the Weyl group of $G$ which acts on $\mathrm{Cochar}(T)$ naturally. When $G$ is simple and simply-connected, it is known that~\cite{looijenga1976root} $\mscr E_G$ is isomorphic to the weighted projective space $\mathbb P(1,g_1,\dots,g_r)$, where $g_i$ are coefficients in the decomposition
$\theta^\vee=\sum_{i}g_i\alpha_i^\vee$,
of the dual of highest root into simple coroots. The map $\mathrm{Ell}_G$ is functorial with respect to the change of groups, i.e., for a group homomorphism $G\to G'$, there is a natural transformation $\mathrm{Ell}_G\to \mathrm{Ell}_{G'}$. In the following, we focus on the case when $\partial X=\varnothing$ and $X$ is a complex algebraic variety.

The graded super scheme means the structure sheaf \smash{$\mscr O^{\bullet}_{\mathrm{Ell}_G(X)}=\bigoplus_{d\in \mathbb Z}\mscr O^{d}_{\mathrm{Ell}_G(X)}$} is graded and graded-commutative. It is known that each homogeneous piece $\mscr O^{d}_{\mathrm{Ell}_G(X)}$ is a coherent sheaf on~$\mscr E_G$. Moreover, $\mathrm{Ell}_G$ is two-periodic
\smash{$\mscr O^{d}_{\mathrm{Ell}_G(X)}=\mscr O^{d+2}_{\mathrm{Ell}_G(X)}\otimes \omega$}, $\omega=T^*_{0}E$.
Since the stable~enve\-lope is in degree zero, we will focus on \smash{$\mscr O^{0}_{\mathrm{Ell}_G(X)}$}, which is the structure sheaf of a~scheme, and it is finite over $\mscr E_G$ \cite{GinzburgKapranovVasserot199502}. In the later discussions, we use the same notation $\mathrm{Ell}_G(X)$ to denote the degree zero piece.

\subsection{Chern class} If $H$ is another compact Lie group, and $P\to X$ is a $G$-equivariant principal $H$-bundle, then $P$ induces a map
$ c\colon\mathrm{Ell}_G(X)\to \mathrm{Ell}_H(\mathrm{pt})=\mscr E_H$,
called the \textit{Chern class} map. For a vector bundle~$V$ of rank $r$, which is equivalent to a principal $\mathrm{GL}_r$-bundle, the \textit{Thom line bundle} associated to~$V$ is defined as
\[
 \Theta(V):=c^*\mscr O(D_{\Theta}), \qquad D_{\Theta}=\{0\}+S^{r-1}\mathbb E\subset S^r\mathbb E=\mscr E_{\mathrm{GL}_r}.
\]
$\Theta(V)$ inherits a canonical section $\vartheta(V)$ from the effective divisor $D_{\Theta}$. $\Theta\colon V\to \Theta(V)$ descends to a group homomorphism $K_G(X)\to \mathrm{Pic}(\mathrm{Ell}_G(X))$, this follows from
\[
 \Theta(V)=\Theta(V_1)\otimes \Theta(V_1)\qquad \text{for short exact sequence } 0\to V_1\to V\to V_2\to 0.
\]
Note that the canonical section simply multiplies: $\vartheta(V)=\vartheta(V_1)\vartheta(V_2)$. We also have
$\Theta\bigl(V^\vee\bigr)\cong \Theta(V)$,
such that the canonical section picks up a sign $\vartheta\bigl(V^\vee\bigr)=(-1)^{\operatorname{rk} V}\vartheta(V)$.

For vector bundles $V_1$, $V_2$ of ranks $r_1$, $r_2$, $\Theta(V_1\otimes V_2)$ is the pullback of $\mscr O(D_{\Theta})$ on $S^{r_1r_2}E$ via the composition of maps
\[
 \mathrm{Ell}_G(X)\overset{c_1\times c_2}{\longrightarrow} S^{r_1}\mathbb E\times S^{r_2}\mathbb E\overset{m}{\longrightarrow} S^{r_1r_2}\mathbb E,
\]
where $m$ is induced from the multiplication of elliptic curve
\[
 m\colon\ (\{x_1,\dots,x_{r_1}\},\{y_1,\dots,y_{r_2}\})\mapsto \{x_i+y_j\}.
\]
In particular, for $G=1$ and considering the tensor product of line bundles, the above gives rise to the formal group law of non-equivariant elliptic cohomology.

\subsection{Gysin map}
For a proper $G$-equivariant map $f\colon X\to Y$, assume that $f$ factors as a regular embedding ${i\colon X\hookrightarrow Z}$ and a smooth projection $p\colon Z\to Y$, and that both $i$ and $p$ are $G$-equivariant, then there exists a distinguished element
\[
 f_{\circledast}\in \mathrm{Hom}_{\mscr O_{\mathrm{Ell}_G(Y)}}(f_*\Theta(T_f),\mscr O_{\mathrm{Ell}_G(Y)}),
 \]
where $f_*\colon\mathrm{Ell}_G(X)\to \mathrm{Ell}_G(Y)$ is the induced map between elliptic cohomologies, and $T_f$ is the relative tangent bundle. We call $f_{\circledast}$ the Gysin map.

If $T_f$ equals to $f^*V$ in $K_G(X)$ for some $V\in K_G(Y)$, then we denote by $[X]$ the section~${\Gamma(\mathrm{Ell}_G(Y),\Theta(-V))}$ induced by the Gysin map $f_{\circledast}\colon f_*\mscr O_{\mathrm{Ell}_G(X)}\to \Theta(-V)$ precomposed with the canonical map $\mscr O_{\mathrm{Ell}_G(Y)}\to f_*\mscr O_{\mathrm{Ell}_G(X)}$.

For example, let $N\to X$ be a $G$-equivariant vector bundle and let $i\colon X\hookrightarrow N$ be the zero section, then $f_*\colon\mathrm{Ell}_G(X)\to \mathrm{Ell}_G(N)$ is an isomorphism of schemes due to the homotopy invariance of elliptic cohomology, and $i_{\circledast}\in \Gamma(\mathrm{Ell}_G(X),\Theta(N))$ is the section $\vartheta(N)$.

\subsection{Supports}
For a section $\alpha$ of a coherent sheaf $\mscr F$ on $\mathrm{Ell}_G(X)$, and a $G$-invariant open subset $j\colon U\hookrightarrow X$, we say that $\alpha$ is supported on $X\setminus U$ if $j^*(\alpha)=0$ in $\mathrm{Ell}_G(U)$. Define the support $\mathrm{supp}(\alpha)$ to be the intersection of $G$-invariant closed subset that $\alpha$ is supported on.

The Gysin map can be defined for compactly supported sections. Namely for a $G$-equivariant map $f\colon X\to Y$ such that it factors as a regular embedding followed by a smooth projection, then there exists a distinguished element
\[
 f_{\circledast}\in \mathrm{Hom}_{\mscr O_{\mathrm{Ell}_G(Y)}}(f_*\Theta(T_f)_c,\mscr O_{\mathrm{Ell}_G(Y)}),
 \]
where ${\Theta(T_f)_c\subset \Theta(T_f)}$ is the subsheaf of sections $\alpha$ such that $f|_{\mathrm{supp}(\alpha)}$ is proper.

\subsection{Correspondences}
Consider the diagram
\[
\begin{tikzcd}
& X_2\times X_1 \arrow[dl,"p_2" ']\arrow[dr,"p_1"] & \\
X_2 \arrow[dr,"q_2" '] & & X_1. \arrow[dl,"q_1"]\\
&\mathrm{pt} &
\end{tikzcd}
\]
Assume that $X_1$ is smooth, then for a pair of line bundles $\mscr L_i\in \mathrm{Pic}(\mathrm{Ell}_G(X_i))$, and for any section
$ \alpha\in \Gamma\bigl(\mathrm{Ell}_G(X_2\times X_1),\mscr L_2\boxtimes\bigl(\mscr L_1^\vee\otimes\Theta(T_{X_1})\bigr)\bigr)$,
such that $\mathrm{supp}(\alpha)$ is proper over $X_2$, $\alpha$ induces a map
\[\begin{tikzcd}
 q_{1*}\mscr L_1\arrow[rr,"p_{2\circledast}(\alpha p_1^*(\cdot))"] & & q_{2*}\mscr L_2,
\end{tikzcd}\]
in $\mathrm{Coh}(\mscr E_G)$.

\subsection{Degree of a line bundle}
For a line bundle $\mscr L$ on an abelian variety $\mcal A$, we define its degree $\deg\mscr L$ to be its image in the N\'eron--Severi group $\mathrm{NS}(\mcal A)=\mathrm{Pic}(\mcal A)/\mathrm{Pic}^0(\mcal A)$. It is known that $\mathrm{NS}(\mcal A)$ is isomorphic to the subgroup of homomorphisms $f\in \mathrm{Hom}\bigl(\mcal A,\mcal A^\vee\bigr)$ such that $f=f^\vee$, and the isomorphism is given by\footnote{This fact can be derived from Theorem~2 together with the remark after that in \cite[Section~20]{mumford1974abelian}.}
$ \mscr L\mapsto \bigl(\phi_{\mscr L}\colon x\mapsto x^*\mscr L\otimes\mscr L^{-1}\bigr)$.
Let $T$ be a torus, then $\mathrm{Hom}\bigl(\mscr E_T,\mscr E^\vee_T\bigr)$ is isomorphic to~${\mathrm{Char}(T)^{\otimes 2}\otimes_{\mathbb Z}\mathrm{End}(\mathbb E)}$. As we have assumed in the beginning, $\mathbb E$ is chosen such that~${\mathrm{End}(\mathbb E)=\mathbb Z}$, so the N\'eron--Severi group $\mathrm{NS}(\mscr E_T)$ is isomorphic to $S^2\mathrm{Char}(T)$. Explicitly, any $\mu\in \mathrm{Char}(T)$ gives rise to a map $\phi_{\mu}\colon\mscr E_T\to \mathbb E$, then $\deg \phi_{\mu}^*\mscr O(D_{\Theta})=\mu\otimes\mu\in S^2\mathrm{Char}(T)$. So for $V=\sum_{\mu}V_{\mu}\cdot\mu\in K_{T}(\mathrm{pt})$, we have
$\deg\Theta(V)=\sum_{\mu}(\dim V_{\mu})\mu\otimes\mu$.

\section{Notations and conventions for supersymmetry}
\label{sec:details of three-dimensional computations}
In this appendix, we collect the notations and conventions regarding supersymmetry used in the computation of elliptic stable envelopes in Section~\ref{sec:stable envelopes and the $R$-matrix from gauge theory}.

\subsection{Conventions for spinors} \label{sec:spinors}

For a two component $\SU(2)$ spinor $\ep$, we write its components as $\ep^\al$ where $\al=1,2$. We use~${\al,\be,\ldots}$ as spinor indices. Our spinors will be anti-commuting. Let $\vep^{\al\be}$ and $\vep_{\al\be}$ be the totally antisymmetric tensors satisfying
$\vep^{12} = \vep_{21} = 1$.
These tensors are used to raise and lower spinor indices
$\ep_\al = \vep_{\al\be} \ep^\be$, $ \ep^\al = \vep^{\al\be} \ep_\be$.
Let $\ga^\mu$ for $\mu=1,2,3$ be the Pauli matrices
\[
	\bigl[\tensor{\bigl(\ga^1\bigr)}{_\al^\be}\bigr] = \begin{pmatrix} & 1 \\ 1 & \end{pmatrix} , \qquad
	\bigl[\tensor{\bigl(\ga^2\bigr)}{_\al^\be}\bigr] = \begin{pmatrix} & -\ii \\ \ii & \end{pmatrix} , \qquad
	\bigl[\tensor{\bigl(\ga^3\bigr)}{_\al^\be}\bigr] = \begin{pmatrix} 1 & \\ & -1 \end{pmatrix} .
\]
We also define the charge conjugation matrix
\[
	C = \bigl[C^{\al\be}\bigr] := -\ii \ga^2 = \begin{pmatrix} & -1 \\ 1 & \end{pmatrix} .
\]
In other words, $C^{\al\be} = -\vep^{\al\be}$. Some properties of the charge conjugation matrix are
\beq
	C^2 = -1, \qquad C^T = C^{-1} = -C , \qquad C\ga^\mu = -(\ga^\mu)^T C . \label{Cproperties}
\eeq
Contraction of spinors $\ep$ and $\la$ are defined as
\beq
	\ep\la := \ep^\al \la_\al , \label{epla}
\eeq
i.e., we use the NE-SW convention for contracting indices on anti-commuting spinors. We can think of two-component spinors as column vectors, with lowered indices, e.g.,
$\ep = \left(\begin{smallmatrix} \ep_1 \\ \ep_2 \end{smallmatrix}\right)$.
Then we can write the spinor bilinear \rf{epla} in matrix notation as\footnote{There should not be any confusion between when the matrix notation is being used in spinor bilinears, since in the matrix notation there will always be the transpose of a spinor to the left (as in the right-hand side of \rf{eplaMatrix}), otherwise there will not be any transpose (as in the left-hand side of \rf{eplaMatrix}).}
\beq
	\ep\la = \ep^T C \la . \label{eplaMatrix}
\eeq
Using the properties \rf{Cproperties} of the charge conjugation matrix, we can derive some symmetry properties of various spinor bilinears:
\[\ep \ga^{\mu_1} \cdots \ga^{\mu_n} \la = (-1)^{\frac{n(n+1)}{2}} \la \ga^{\mu_1} \cdots \ga^{\mu_n} \ep .
\]

\subsection[3d $\mcal{N}=2$ supersymmetry]{3d $\boldsymbol{\mcal{N}=2}$ supersymmetry} \label{sec:3dN=2susy}
The supersymmetry algebra has four supercharges which we can denote as $\thdQ_\al$, $ \ov\thdQ_\al$ with $\al$ being a spinor index. A generic supercharge is a linear combination of these
\beq
	\thdQ_{\ep,\ov\ep} := \ep^\al \thdQ_\al + \ov\ep^\al \ov\thdQ_\al , \label{Qee}
\eeq
where the coefficients $\ep$, $\ov\ep$ are two components spinors. The variation of a field in the theory caused by the supercharge $\thdQ_{\ep,\ov\ep}$ will be denoted $\de_{\ep, \ov\ep}$.

For a Lie group $G$, the corresponding 3d $\cN=2$ vector multiplet consists of a vector field $A$, two scalars $\si$ and $\sfD$ and two spinors $\la$ and $\ov\la$. All these fields are valued in the Lie algebra $\mfr g$ of $G$. We adopt the convention that elements of real Lie algebras are hermitian. Their supersymmetry variations are
\begin{gather}
	\de_{\ep, \ov\ep} A_\mu = \frac{\ii}{2} \bigl(\ov\ep \ga_\mu \la - \ov\la \ga_\mu \ep\bigr) , \qquad
	\de_{\ep, \ov\ep} \si = \frac{1}{2} \bigl(\ov \ep \la - \ov\la \ep\bigr) , \nonumber\\
	\de_{\ep, \ov\ep} \la = - \frac{1}{2} \ga^{\mu\nu} F_{\mu\nu} \ep - \sfD \ep + \ii \ga^\mu D_\mu \si \ep ,\qquad
	\de_{\ep, \ov\ep} \ov\la = - \frac{1}{2} \ga^{\mu\nu} F_{\mu\nu} \ov\ep + \sfD \ov\ep - \ii \ga^\mu D_\mu \si \ov\ep , \nonumber\\
	\de_{\ep, \ov\ep} \sfD = - \frac{\ii}{2} \ov\ep \ga^\mu D_\mu \la - \frac{\ii}{2} D_\mu \ov\la \ga^\mu \ep + \frac{\ii}{2} [\ov\ep \la, \si] + \frac{\ii}{2} \bigl[\ov \la \ep, \si\bigr] .\label{susyVec}
\end{gather}
Given a representation $R$ of $G$ we can add an $R$-valued chiral multiplet containing two scalars~$\phi$ and $\sfF$, and a spinor~$\psi$. We also have the anti-chiral multiplet containing scalars $\ov\phi$, $\ov\sfF$, and spinor~$\ov\psi$ valued in the dual representation $R^\vee$. These fields transform under supersymmetry as
\begin{gather}
	\de_{\ep,\ov\ep} \phi = \ov\ep \psi , \qquad
	\de_{\ep,\ov\ep} \ov\phi = \ep \ov\psi , \qquad
	\de_{\ep,\ov\ep} \psi = \ii \ga^\mu \ep D_\mu \phi + \ii \ep \si \cdot \phi + \ov\ep \sfF , \nonumber\\
	\de_{\ep,\ov\ep} \ov\psi = \ii \ga^\mu \ov \ep D_\mu \ov\phi - \ii \ov\ep \si \cdot \ov\phi + \ep \ov \sfF ,\qquad
	\de_{\ep,\ov\ep} \sfF = \ep\bigl(\ii \ga^\mu D_\mu \psi - \ii \si \cdot \psi - \ii \la \cdot \phi\bigr) , \nonumber\\
	\de_{\ep,\ov\ep} \ov \sfF = \ov\ep\bigl(\ii \ga^\mu D_\mu \ov\psi + \ii \si \cdot \ov\psi - \ii \ov\la \cdot \ov\phi\bigr) .\label{susyChiral}
\end{gather}
Here $\si\cdot$ is the action of $\si$ on whatever follows in the corresponding representation. %

By computing the commutator $[\de_{\ep,\ov\ep}, \de_{\eta, \ov\eta}]$ on fields, we find the (anti)commutation relations between the supercharges
\beq
	\bigl\{\thdQ_\al, \ov\thdQ_\be\bigr\} = -\ii \ga^\mu_{\al\be} D_\mu - \ii \vep_{\al\be} \si \cdot . \label{3dsusy}
\eeq
In a gauge theory, we always have a dynamical vector multiplet for the gauge group. So the last term above is an infinitesimal gauge transformation by the adjoint scalar $\si$ of this vector multiplet. But we can also introduce background vector multiplets for flavor symmetry groups. If $G$ is the gauge group and $F$ the flavor group with Lie algebras $\mfr g$ and $\mfr f$, respectively, then we can treat the vector multiplet from \rf{susyVec} as $(\mfr g \oplus \mfr f)$-valued. Thus all the fields of the multiplet become sums of $\mfr g$ and $\mfr f$-valued fields. In particular, we write $\si = \si_G + \si_F$ according to this decomposition and the supersymmetry algebra \rf{3dsusy} contains a gauge and a flavor symmetry transformation by the dynamical $\si_G$ and the background $\si_F$ respectively. We only turn on background values for scalar fields, namely $\si_F$ and $\sfD_F$. Furthermore, these values are constrained by the requirement to preserve some amount of supersymmetry, i.e., to solve the BPS equations. More on BPS equations in Section~\ref{sec:BPSeq}.

\subsection{Localizing supercharge} \label{sec:Qloc}
Suppose the elliptic curve $\bE_\tau$ in our 3d space-time $I \times \bE_\tau$ has a holomorphic volume form $\dd z$ and two 1-cycles, $S^1_A$ and $S^1_B$ defined by:
\smash{$\oint_{S^1_A} \dd z = 1$}, \smash{$ \oint_{S^1_B} \dd z = \tau$}.
Let $y$, $\tht_A$ and $\tht_B$ be the real coordinates on $I$, $S^1_A$ and $S^1_B$, respectively. We take the orientation defining volume form on $M$ to be $\dd y \wedge \dd \tht_A \wedge \dd \tht_B$. This induces the volume forms $\dd y \wedge \dd \tht_A$ on $\Si := \bR \times S^1_A$ and $\dd \tht_A \wedge \dd \tht_B$ on $\bE_\tau$. We use holomorphic coordinates $w$ and $z$ on $\Si$ and $\bE_\tau$ respectively such that the volume forms in real and complex coordinates are related by
$\dd y \wedge \dd \tht_A = \frac{\ii}{2} \dd w \wedge \dd \ov w$, $
	\dd \tht_A \wedge \dd \tht_B = \frac{\ii}{2} \dd z \wedge \dd \ov z $.
The real and complex coordinates are related by
$w = y + \ii \tht_A$, $ z = \tht_A + \ii \tht_B$.
We denote the derivatives with respect to $y$, $\tht_A$ and $\tht_B$ by $\pa{y}$, $\pa{A}$ and $\pa{B}$, respectively. They are related to the derivatives with respect to the complex coordinates as
$\pa{w} = \frac{1}{2} (\pa{y} - \ii \pa{A})$, $ \pa{z} = \frac{1}{2} (\pa{A} - \ii \pa{B})$.
Similarly for the covariant derivatives $D_w$, $D_z$, $D_y$, $D_A$ and $D_B$.

We can relabel our 3d $\cN=2$ supercharges $\thdQ_\al$, $\ov\thdQ_\al$ to adjust to the complex structures of either $\Si$ or $\bE_\tau$. If we define
\begin{gather*}
	\thdQ_+ := \frac{1}{\sqrt{2}} (\thdQ_1 + \thdQ_2) , \qquad \thdQ_- := \frac{1}{\sqrt{2}} (\thdQ_1 - \thdQ_2) ,\qquad
	\ov\thdQ_+ := \frac{1}{\sqrt{2}} \bigl(\ov\thdQ_1 + \ov\thdQ_2\bigr) , \\
 \ov\thdQ_- := \frac{1}{\sqrt{2}} \bigl(\ov\thdQ_1 - \ov\thdQ_2\bigr) ,
\end{gather*}
then the subscripts $\pm$ refer to the chirality on $\bE_\tau$ and the algebra \rf{3dsusy} reflects the complex structure of $\bE_\tau$
\begin{gather*}
	\bigl\{\thdQ_+, \ov\thdQ_+\bigr\} = 2 D_z , \qquad \bigl\{\thdQ_-, \ov\thdQ_-\bigr\} = 2 D_{\ov z} , \qquad
	\bigl\{\thdQ_+, \ov\thdQ_-\bigr\} = \ii ( D_y - \si) , \\ \bigl\{\thdQ_-, \ov\thdQ_+\bigr\} = \ii (D_y + \si) .
\end{gather*}
Alternatively, if we define
\begin{gather*}
	q_+ :=\frac{\thdQ_1}{\sqrt{2}} , \qquad q_- := -\frac{\thdQ_2}{\sqrt{2}} , \qquad
	\ov q_+ := \frac{\ov \thdQ_1}{\sqrt{2}} , \qquad \ov q_- := -\frac{\ov \thdQ_2}{\sqrt{2}} ,
\end{gather*}
then the subscripts $\pm$ refer to the chirality on $\Si$ and the algebra \rf{3dsusy} reflects the complex structure of $\Si$
\begin{gather*}
	\{q_+, \ov q_+\} = \ii D_w , \qquad \{q_-, \ov q_-\} = - \ii D_{\ov w} , \qquad
	\{q_+, \ov q_-\} = \frac{\ii}{2}(D_B - \si) , \\ \{q_-, \ov q_+\} = \frac{\ii}{2}(D_B + \si) .
\end{gather*}
The localization results used in this note are with respect to the supercharge
\beq
	\thdQ := \thdQ_- + \ov\thdQ_- = q_+ + q_- + \ov q_+ + \ov q_- , \label{Qloc}
\eeq
which satisfies
\beq
	\thdQ^2 = 2 D_{\ov z} . \label{Q^2}
\eeq
Furthermore, translation in the $y$-direction is $\mcr Q$-exact
\beq
 \bigl\{\mcr Q, \mcr Q_+ + \ov{\mcr Q}_+\bigr\} = 2 \ii D_y . \label{y-exact}
\eeq
From the point of view of both $\Si$ and $\bE_\tau$, the 3d $\cN=2$ algebra appears as the 2d $\cN=(2,2)$ algebra. On the elliptic curve, the supercharge $\thdQ$ belongs to the $\cN=(0,2)$ subalgebra. On~$\Si$, the supercharge $\thdQ$ does not belong to any proper subalgebra of $(2,2)$ but rather it can be related to certain deformations of both the A-model and the B-model BRST operators, as we find below.

{\bf Dimensional reduction.}
By compactifying the $S^1_B$ direction of the 3d space-time, we land on a 2d $\cN=(2,2)$ theory on $\Si$ with four supercharges $\sfQ_\pm$, $\ov \sfQ_\pm$ that descend from the 3d supercharges $q_\pm$, $\ov q_\pm$, respectively. The 2d $\cN=(2,2)$ supersymmetry has the usual A-model and B-model supercharges
$\sfQ_A := \ov \sfQ_+ + \sfQ_-$, $ \sfQ_B := \ov\sfQ_+ + \ov\sfQ_-$,
that are nilpotent up to gauge transformations.\footnote{After compactifying $S^1_B$ the component of a gauge field in this direction becomes an adjoint scalar field in the~2d theory and the derivative $D_B$ becomes a gauge transformation generated by this field.} The remaining supersymmetry in the A and B-model can be labeled as
\[
	\sfG_+^A := -\ii \sfQ_+ , \qquad \sfG_-^A := \ii \ov\sfQ_- , \qquad
	\sfG_+^B := -\ii \sfQ_+ , \qquad \sfG_-^B := \ii \sfQ_- .
\]
For any vector field $V = V_+ \pa{w} + V_- \pa{\ov w}$, we define the linear combination $\iota_V \sfG := V_+ \sfG_+ + V_- \sfG_-$ and we have the anti-commutation relations
\[
	\bigl\{\sfQ_A, \iota_V \sfG^A\bigr\} =\bigl\{\sfQ_B, \iota_V \sfG^B \bigr\} = V_+ D_w + V_- D_{\ov w} = \cL_V + \text{gauge transformation} .
\]
$\Om$-deforming the A and B-model supercharges with respect to a vector field $V$ means changing the BRST operator from $\sfQ_A$ and $\sfQ_B$ to $\sfQ_A + \iota_V \sfG^A$ and $\sfQ_B + \iota_V \sfG^B$ such that the new BRST operator squares to the space-time transformation generated by $V$
\[
	\bigl(\sfQ_A + \iota_V \sfG^A\bigr)^2 = \bigl(\sfQ_B + \iota_V \sfG^B\bigr)^2 = \cL_V + \text{gauge transformation} .
\]
We observe that the localizing supercharge \rf{Qloc}, upon reduction to 2d, descends to an $\Om$-deformed supercharge both in the A and in the B-model
\[
	\thdQ \xrightarrow{\text{3d $\to$ 2d}} \sfQ_+ + \sfQ_- + \ov\sfQ_+ + \ov \sfQ_- =
	 \sfQ_A + \ii \sfG^A_+ - \ii \sfG^A_- = \sfQ_B + \ii \sfG^B_+ - \ii \sfG^B_- .
\]
These $\Om$-deformations are defined with respect to the vector field
$\ii \pa{w} - \ii \pa{\ov w} = \pa{A}$,
which rotates the $S^1_A$ circle. So, the 3d localizing supercharge can be thought of as a lift of the $S^1_A$-rotating $\Om$-deformed supercharge from either the A or the B-model on $\Si$.

If we further compactify the $S^1_A$ direction and reduce to a 1d theory on $I$, we get an ${\cN=4}$ supersymmetric gauged quantum mechanics. The localizing supercharge $\thdQ$ descends to a~nilpotent (up to gauge transformation) supercharge. Nilpotency in 1d follows from the fact that there is no $D_y$ on the right-hand side of \rf{Q^2}, which is the only remaining translation in 1d.

\section[$\mfk{sl}(1|1)$ rational $R$-matrix from the A-model localization]{$\boldsymbol{\mfk{sl}(1|1)}$ rational $\boldsymbol{R}$-matrix from the A-model localization}
\label{sec:geometric r-matrix from the a-model localization}

The aim of this appendix is to compute the geometric rational $R$-matrix for $\mfk{sl}(1|1)$ superspin chain from its natural theory from the perspective of the Bethe/Gauge correspondence, i.e., a~2d gauge theory corresponding to the $\mfk{sl}(1|1)$ spin chains with fundamental representations, and using the A-model localization formula of Closset, Cremonesi and Park \cite{ClossetCremonesiPark201504}. The computation here is inspired by insights and observations in \cite{BullimoreKimLukowski201708,Nekrasov2013,Nekrasov2014}, and its $\mfk{sl}(2)$ counterpart has been performed in \cite{BullimoreKimLukowski201708}. Unlike the construction in the main body of this work, the computation in this appendix should be considered an interesting observation rather than being based on a~systematic framework. Furthermore, in this section, we assume that the explicit form of cohomological stable envelopes is known. Of course, by knowing the cohomological stable envelopes, the computation of $R$-matrix is a simple matrix manipulation. However, viewing the $R$-matrix elements as the two-point functions of certain (normalized) A-model observables would provide an alternative route and may lead to a better understanding of the Bethe/Gauge correspondence.\looseness=-1

\subsection[Construction of geometric rational $R$-matrix]{Construction of geometric rational $\boldsymbol{R}$-matrix}\label{sec:construction of geometric r-matrix}

In Section~\ref{sec:2d and 1d avatars of elliptic stable envelopes}, we have successfully constructed cohomological stable envelopes and the geometric $R$-matrix associated with $\mfk{sl}(1|1)$ by starting from their 3d counterparts and performing a~3d$\to$~2d$\to$ 1d reduction, corresponding to reducing the three-dimensional gauge theory, living on $I\times\mbb{E}_\tau$, on the two cycles of the elliptic curve $\mbb{E}_\tau$. This is expected since the interval \mbox{partition} function of the 3d theory reduces to the interval partition function of the corresponding~1d quantum mechanical system after discarding all instanton corrections coming from particles wrapping~$S^1_\A$ and $S^1_\B$. Before presenting the setup and the computation of the $R$-matrix, let us make a~series of remarks.

\begin{rmk}[2d Gauge theory] The first remark is the theory we use for localization computation. This is the gauge theory that Bethe/Gauge correspondence naturally associates to an $\mfk{sl}(1|1)$ spin chain in a fundamental representation. This gauge theory is constructed in Example~\ref{ex:2dgl11}. For the $N$-magnon sector of a spin chain with $L$ sites, the 2d gauge theory is a $\U(N)$ gauge theory with $\mcal{N}=(2,2)$ supersymmetry coupled to a pair of $\SU(L)\times\U(1)_\hbar$ fundamental/anti-fundamental chiral multiplets.
\end{rmk}

\begin{rmk}[relation to the dimensional reduction of 3d Gauge theory of Section~\ref{sec:3d N=2 SQCD and its parameters}] A~natural question is the relation between the gauge theory we use for the localization computation of this section and the 3d$\to$2d dimensional reduction of the 3d theory of Section~\ref{sec:3d N=2 SQCD and its parameters} used for the computation of elliptic stable envelopes. The dimensional reduction of the 3d theory of Section~\ref{sec:3d N=2 SQCD and its parameters} is again a 2d gauge theory with an $\mcal{N}=(2,2)$ supersymmetry. However, this is a {\it different} 2d theory, although with the same gauge and matter content. To see the relation, it is expected~\cite{DedushenkoNekrasov202109} that we should start from a 4d theory\footnote{The reason we need to start in 4d is that such a theory is a natural arena for the quantum version of equivariant elliptic cohomology. The BRST cohomology of a 4d analog of the A-model supercharge is expected to be equivalent to quantum equivariant elliptic cohomology of the Higgs branch of the 4d theory.} with $\mcal{N}=1$ supersymmetry on~${I\times S^1_\A\times S^1_\B\times S^1_\C}$,\footnote{Alternatively, we can think of the 4d theory on $S^2\times S^1_\A\times S^1_\B$. We can then think of $S^2$ as $I\times S^1_\C$ to which two cigars are capped. Reduction of the theory on $S^1_\A\times S^1_\B$ gives rise to a theory on $S^2$.} where $S^1_\C$ is an extra circle.
 The dimensional reduction of the 4d theory on $S^1_C$ would be the 3d theory of Section~\ref{sec:3d N=2 SQCD and its parameters}.\footnote{The reason that this 3d theory is enough for the purpose of the computation of the $R$-matrix is that the latter is insensitive to the deformed non-perturbative corrections coming from the extra circle $S^1_\C$ or equivalently, the deformed ring structure of the quantum equivariant elliptic cohomology.} On the other hand, the dimensional reduction of the 4d theory on $S^1_\A\times S^1_\B$ gives a 2d theory on $I\times S^1_\C$, which is the gauge theory that the 2d gauge theory we use for the localization of this section. This is precisely the gauge theory that the Bethe/Gauge correspondence assigns to an $\mfk{sl}(1|1)$ spin chain with fundamental representations. Furthermore, the 2d gauge theory on $I\times S^1_\C$ can be thought of as a theory living on the sphere from which two cigars have been capped off as shown in Figure~\ref{fig:from sphere to cylinder}.
\begin{figure}[H]
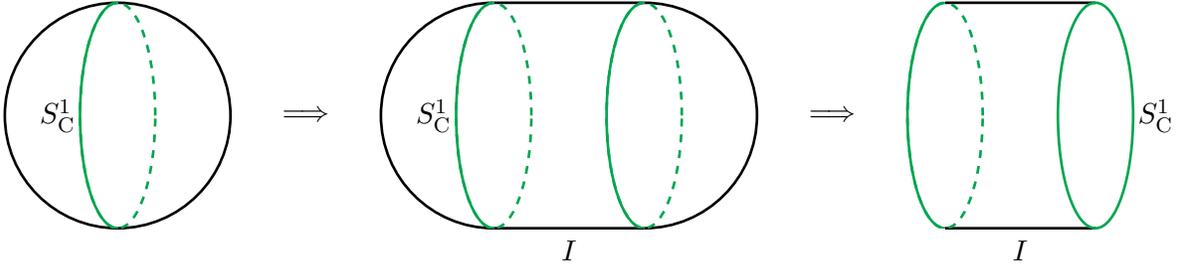

 \centering
 \FromSphereToCylinder
 \caption{The cylinder $I\times S^1_\C$ can be thought as starting from the sphere (left), stretching (middle), and capping off two cigars (right).}
 \label{fig:from sphere to cylinder}
\end{figure}
The hierarchy of theories we have considered in this work is shown in Figure~\ref{fig:from four to lower dimensions}.
\end{rmk}

\begin{figure}[t]
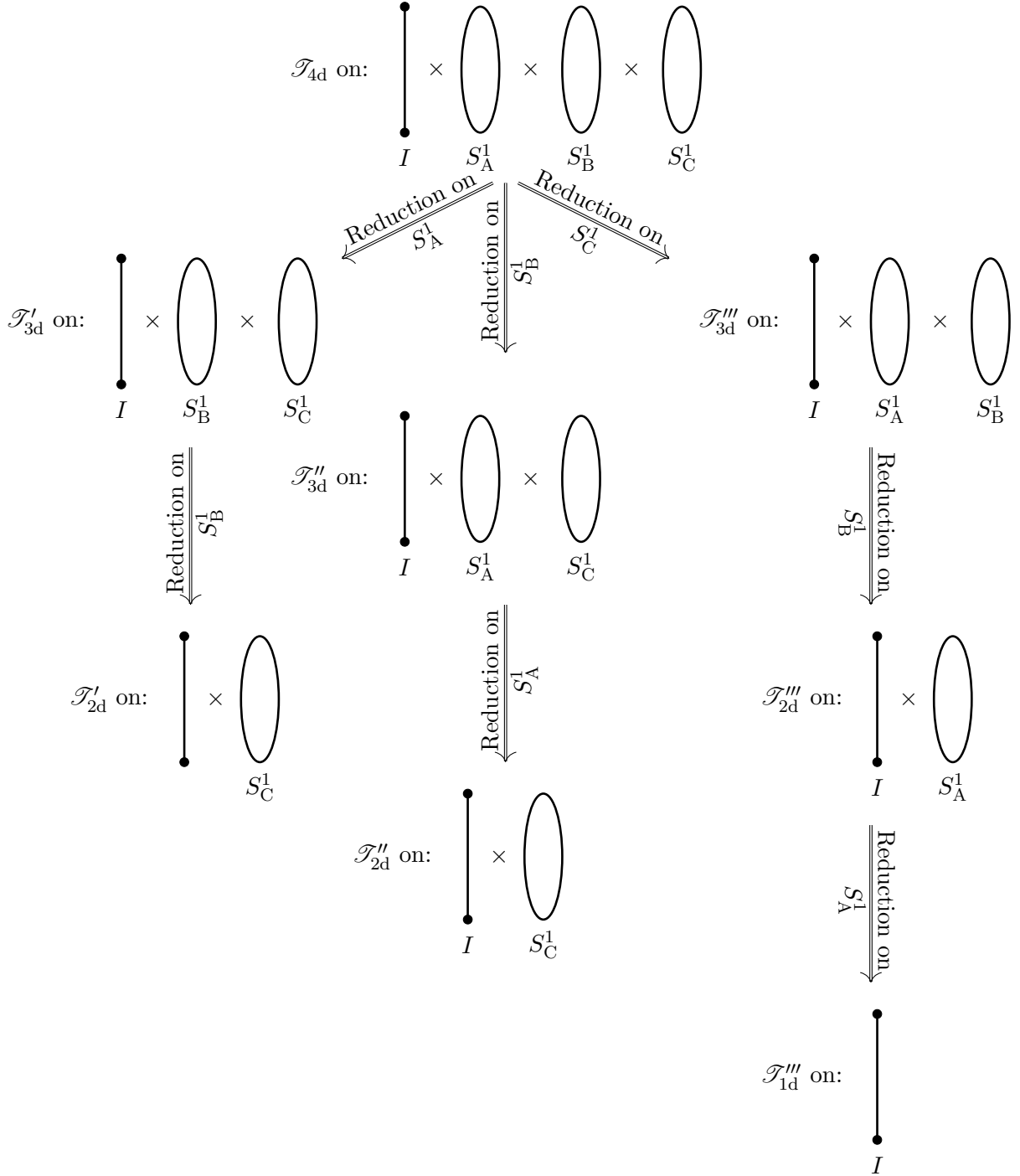

 \centering
 \FromFourToLowerDimensions

 \vspace{-1mm}

 \caption{Since the order of dimensional reduction should not matter, one can identify $\mscr{T}'_{\text{2d}}$ and~$\mscr{T}''_{\text{2d}}$ and it is the theory we use in this appendix. $\mscr{T}'''_{\text{3d}}$ is the theory we used for the computation of elliptic stable envelopes in Section~\ref{sec:explicit construction of elliptic stable envelope}.}
 \label{fig:from four to lower dimensions}\vspace{-1mm}
\end{figure}

\begin{rmk}[identification of parameters] As we have seen in Section~\ref{sec:explicit construction of elliptic stable envelope}, the elliptic stable envelopes depend on the holonomies for the gauge $\mbs{s}$, flavor $(\mbs{x},\hbar)$ and the topological symmetry $z$ along the cycle of the elliptic curve. Furthermore, we constructed cohomological stable envelopes by reducing the elliptic ones to 1d upon which the holonomy $z$ disappears and all other holonomies become complex twisted masses of the 2d theory on $I\times S^1_\C$. We thus need to compare the quantities in the theory $\mscr{T}'''_{\text{1d}}$ (the theory used in deriving \eqref{eq:cohomological stable envelope coming from reduction of bps computation}) and $\mscr{T}'_{\text{2d}}$ (the theory we use in this appendix) to be able to write the formulas in terms of parameters of the latter (see Figure~\ref{fig:from four to lower dimensions}). First note that in the theory $\mscr{T}'''_{1d}$,
\[
 \mbs{s}:=\bigointsss_\A A^{\mfk{h}}_{\text{4d}}+\mfk{i}\bigointsss_\B A^{\mfk{h}}_{\text{4d}},
\]
where $A^{\mfk{h}}_{\text{4d}}$ is the four-dimensional gauge field for the Cartan of the gauge Lie algebra $\mfk{h}=\mbb{R}^{\oplus N}$. Hence, they will be identified with $\mbs{\sigma}^\mbb{C}:=\text{diag}(\sigma^\mbb{C}_1,\ldots,\sigma^\mbb{C}_N)$, the complex scalar in the $\mfk{u}(N)$ vector multiplet, in the theory $\mscr{T}'_{\text{2d}}$. Furthermore, in the theory $\mscr{T}'''_{1d}$,
\begin{gather*}
\mbs{x}:=\bigointsss_\A A^{\mfk{a}}_{\text{4d}}+\mfk{i}\bigointsss_\B A^{\mfk{a}}_{\text{4d}},
 \qquad
 \hbar:=\bigointsss_\A A^{\hbar}_{\text{4d}}+\mfk{i}\bigointsss_\B A^{\hbar}_{\text{4d}},
\end{gather*}
where $A^{\mfk{a}}_{\text{4d}}$ is the four-dimensional gauge field for the Cartan of the flavor symmetry $\mfk{a}=\mfk{u}(1)^{\oplus L-1}$,\footnote{Note that $x_1+\dots+x_L=0$.} and $A^{\hbar}_{\text{4d}}$ is the four-dimensional gauge field for the $\mbb{R}_\hbar$ flavor symmetry. Therefore, they are identified with $\mbs{m}^\mbb{C}:=\bigl(m^\mbb{C}_1,\ldots,m^\mbb{C}_L\bigr)$ and $\hbar^{\mbb{C}}$, the complex twisted masses, in the theory $\mscr{T}'_{\text{2d}}$. Hence, we identify the parameters as follows:
\begin{gather}
 s_a \longleftrightarrow \sigma^\mbb{C}_a,\qquad a=1,\ldots,N,
 \qquad
 x_i \longleftrightarrow m_i^\mbb{C}, \qquad i=1,\ldots,L,\qquad
 \hbar \longleftrightarrow \hbar^{\mbb{C}}.\label{eq:identification of parameters of reduction and in quantum mechanics}
\end{gather}
Let us emphasize that say $m^\mbb{C}_i$ has nothing to do with the (expectation value of the) quantity defined in \eqref{complexScalar}, which is the complex scalar in the vector multiplet of the theory $\mscr{T}'''_{2d}$ (see the right portion of Figure~\ref{fig:from four to lower dimensions}), while $\bigl(\sigma^{\mbb{C}}_a,m^\mbb{C}_i,\hbar^{\mbb{C}}\bigr)$ in \eqref{eq:identification of parameters of reduction and in quantum mechanics} are parameters of the theory $\mscr{T}'_{\text{2d}}$ (or equivalently $\mscr{T}''_{\text{2d}}$). On the other hand, there are real masses in the theory $\mscr{T}'''_{1d}$ whose order determines a chamber $\mfk{C}$ in the theory $\mscr{T}'''_{1d}$,
\begin{equation}\label{eq:real masses in the quantum mechanics}
m_i=\bigointsss_\C A^{\mfk{a}}_{\text{4d}}, \qquad i=1,\ldots,L.
\end{equation}
From the perspective of $\mscr{T}'_{2d}$, these are holonomies of the 2d gauge field\footnote{Note that the gauge field in the theory $\mscr{T}'_{\text{2d}}$ comes from the components of the gauge field in 4d along $I\times S^1_\C$.} along $S^1_\C$. This completes the identification of parameters. Furthermore, for convenience in the computation, we consider a shift in the parameters\footnote{These shifts are not essential and as we mentioned are done just for the convenience in the computations and coping with existing literature.}
\begin{gather}\label{eq:the shift in the 1d holonomies}
s_a\to s_a, \qquad a=1,\ldots,N,
 \qquad
 x_i\to x_i+\frac{\hbar}{2}, \qquad i=1,\ldots,L.
\end{gather}
\end{rmk}

\begin{rmk}[using the A-model localization]\label{rmk:philosophy of using a-model supercharge}
 Note that the stable envelope we have constructed is written in the stable basis, which is a basis for the equivariant quantum cohomology of the Higgs branch of the gauge theory. The quantum equivariant cohomology consists of the vector space of the underlying classical equivariant cohomology together with a deformed ring structure \cite{Vafa199111,Witten199112}. On the other hand, for an A-twisted theory, the equivariant quantum cohomology of the target space is the cohomology of the ordinary A-model supercharge, which by definition is the twisted chiral ring of the gauge theory \cite{HoriKatzKlemmPandharipandeThomasVafaVakilZaslow2003,Vafa199111,Witten199112}. Therefore, the full Bethe--Gauge correspondence would have to invoke the cohomology of A-model supercharge, as has been argued in \cite{DedushenkoNekrasov202109,NekrasovShatashvili200901c,NekrasovShatashvili200901b}. Since the stable envelopes provide a map between quantum equivariant cohomology\footnote{Stable envelope map preserves the ring structure of the quantum equivariant cohomology in some weak sense~\cite[Theorem~7.2.1]{MaulikOkounkov201211}.} of the fixed-point loci\footnote{This is the fixed-point loci of the flavor-symmetry action on the corresponding Higgs branch.} and the target, one would like to construct a~basis for the twisted chiral ring which coincides with the stable basis for the equivariant quantum cohomology. Therefore, cohomological stable envelopes \eqref{eq:cohomological stable envelope coming from reduction of bps computation} are the corresponding A-model observables. As it is known, these generators are functions of the complex scalar in the vector multiplet of the 2d theory {\normalfont\cite{ClossetCremonesiPark201504}}.
 \end{rmk}

 \begin{rmk}[identifying the A-model observables] The A-model observables $\mcal{O}\bigl(\sigma^{\mbb{C}}\bigr)$ are functions of complex scalars $\sigma^\mbb{C}$ in the vector multiplet associated with the gauge invariance~\cite{ClossetCremonesiPark201504}. In the computation of the $R$-matrix, we use particular A-model observables obtained from the stable envelopes using the identification \eqref{eq:identification of parameters of reduction and in quantum mechanics}. Considering the identification \eqref{eq:identification of parameters of reduction and in quantum mechanics} and the shifts~\eqref{eq:the shift in the 1d holonomies}, and using \eqref{eq:definiton of functions W} and \eqref{eq:explicit form of U}, the cohomological stable envelopes take the following form:
\begin{align}
 \rstab_{\mfk{C}}(p)={}&\tenofo{Sym}_{S_N}\left[\prod_{a=1}^N\Biggl(\prod_{i<p(a)}\left(+\sigma^\mbb{C}_a-m^\mbb{C}_i+\frac{\hbar^\mbb{C}}{2}\right)\prod_{i>p(a)}
 \left(-\sigma^\mbb{C}_a+m^\mbb{C}_i+\frac{\hbar^\mbb{C}}{2}\right)\Biggr)\nonumber\right.\\
 &\left.\times \Biggl(\prod_{a>b}\frac{1}{\sigma^{\mbb C}_a-\sigma^{\mbb C}_b}\Biggr)\right].\label{eq:cohomological stable envelope after identification of parameters}
\end{align}
This is the form of the cohomological stable envelopes for $\mfk{sl}(1|1)$ that we use in the following. Note that these observables can be inserted anywhere on $S^2$ and the final result is independent of the location of insertions \cite[Section~3.3]{ClossetCremonesiPark201504}.
\end{rmk}

\begin{rmk}[Coulomb vs.\ Higgs branch localization schemes] For GLSMs, there are two different localization schemes depending on the choice of the localizing action: the Higgs-branch localization and the Coulomb-branch localization \cite{BeniniCremonesi201206,ClossetCremonesiPark201504}. These localization schemes are expected to give rise to the same final results but the explicit details and the localization procedure could be very different. The A-localization computation of this section is an example of the Coulomb-branch localization.
 \end{rmk}

 \begin{rmk}[$R$-matrix from the A-model localization] From the mathematical definition, it is expected that the $R$-matrix is related to the two-point function of stable envelopes~\cite{BullimoreKimLukowski201708}. As we briefly recall in Section~\ref{sec:two-point function of GLSM observables}, the A-model localization computations involve a summation over all instanton sectors. However, $R$-matrix is insensitive to these non-perturbative corrections. As such, the computation of $R$-matrix reduces to the computation of a two-point function of stable envelopes in the zero-flux sector. We will further elaborate on the construction below.
 \end{rmk}

 \begin{rmk}[interpretation in terms of $\mcal{N}=4$ quantum mechanics]
 Since the computation of $R$-matrix is the restriction of certain two-point functions to zero-flux section, we can interpret the result in the corresponding quantum mechanics. This theory is obtained by the dimensional reduction of our 2d theory on $I\times S^1_\C$. The insertion of A-model observables on $S^2$ leads to certain boundary conditions in quantum mechanics, as illustrated in Figure~\ref{fig:from 2d gauge theory to 1d QM}. Therefore, the two-point function of A-model observables in the zero-flux sector is the interval partition function of the corresponding $\mcal{N}=4$ quantum mechanics in the presence of these boundary conditions.
 \end{rmk}

Having summarized all the necessary ingredients, we can now compute the $R$-matrix. In this construction, we use the cohomological stable envelopes \eqref{eq:cohomological stable envelope after identification of parameters}.

 \begin{figure}[t]
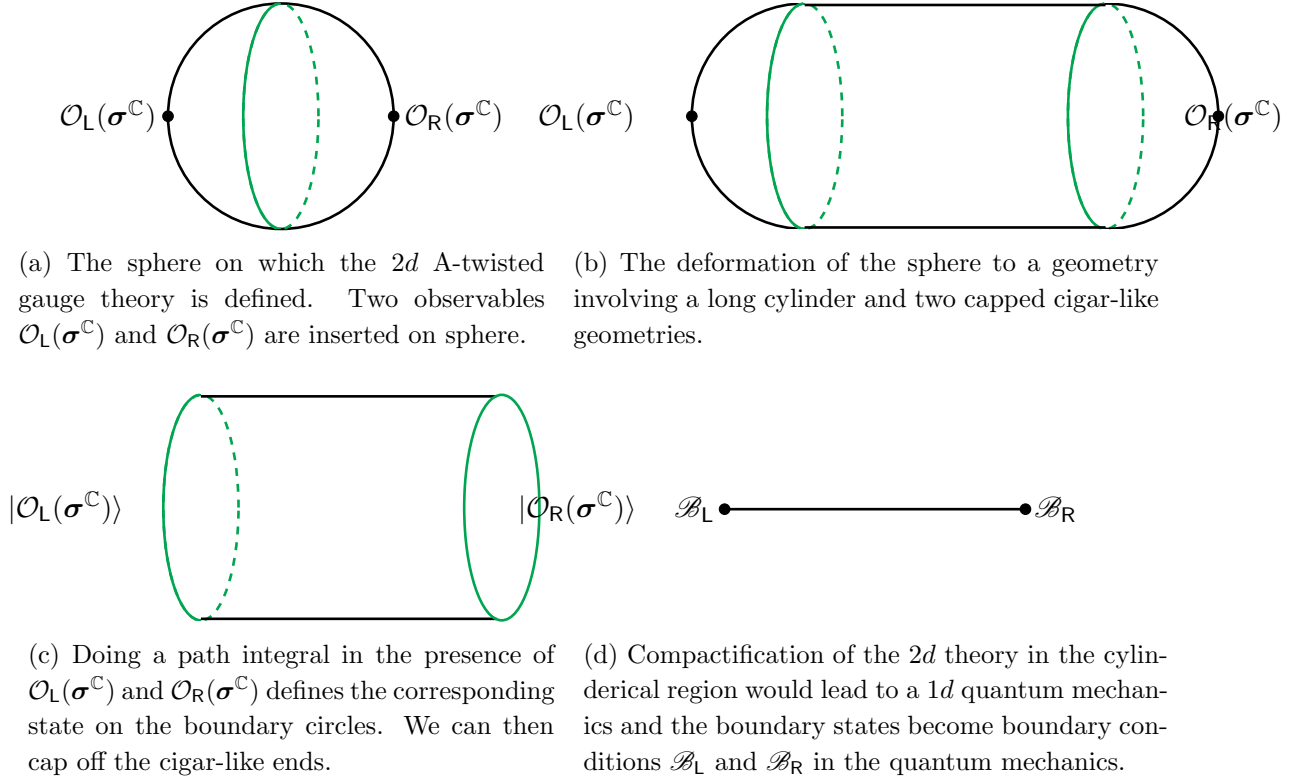

 \centering
 \FromTwoDimensionToOneDimension
 \caption{The 2d $\to$ 1d reduction of the 2d $\mcal{N}=(2,2)$ theory to a quantum mechanical system with boundary conditions prescribed by the observables $\mcal{O}_{\msf{L}}(\sigma)$ and $\mcal{O}_{\msf{R}}(\sigma)$. The subscripts just mean left and right. Restricting to the zero-flux sector, the two-point function on the sphere becomes the partition function of the quantum mechanics with certain boundary conditions.}
 \label{fig:from 2d gauge theory to 1d QM}
\end{figure}

{\bf Rational $\boldsymbol{R}$-matrix from Gauge theory.}
The geometric definition of $R$-matrix \eqref{eqn: dfn of R-matirx} (and its trigonometric and rational versions) depends on two pieces of data: (1) the cohomological stable envelope and its inverse, and (2) a choice of the chamber for the cohomological stable envelope and another (in general different) chamber for its inverse. Therefore, it is natural to consider the configuration of Figure~\ref{fig:two janus interfaces in two dimensions that compute the r-matrix} of the 2d gauge theories on $S^2$.

\begin{figure}[t]
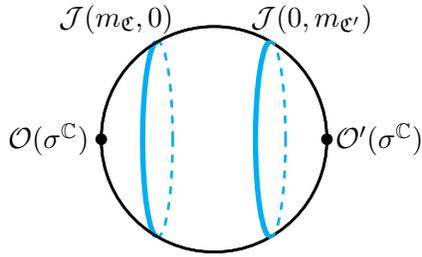
 \centering
\vspace{-1mm}

 \TwoJanusInterfacesInTwoDimensions
 \caption{The configuration of two Janus interfaces for computing the rational $R$-matrix. $\mcal{O}\bigl(\sigma^\mbb{C}\bigr)$ and $\mcal{O}'\bigl(\sigma^\mbb{C}\bigr)$ are certain A-model observables.}
 \label{fig:two janus interfaces in two dimensions that compute the r-matrix}
\end{figure}
$\mcal{J}(m_{\mfk{C}},0)$ and $\mcal{J}(0,m_{\mfk{C}'})$ are certain supersymmetric Janus interfaces.\footnote{Through \eqref{eq:real masses in the quantum mechanics} and going to 1d by compactification on $S^1_\C$, we can think of these interfaces as real-mass Janus interfaces in the corresponding 4d $\mcal{N}=4$ quantum mechanics. The presence of these types of interfaces leads to an exact deformation of the theory \cite[Section~4.2]{DedushenkoNekrasov202109}. Therefore, the precise shape of the profile of the parameters does not matter and only their asymptotic values affect the computations. This is the only fact we need.} Colliding these interfaces gives rise to a single interface: $\mcal{J}(m_{\mfk{C}},0)\mcal{J}(0,m'_{\mfk{C}'})=\mcal{J}(m_{\mfk{C}},m'_{\mfk{C}'})$. Setting $m'_{\mfk{C}'}=m_{\mfk{C}}$ and noting that $\mcal{J}(m_{\mfk{C}},m_{\mfk{C}})$ is equivalent to no interface, we see that $\mcal{J}^{-1}(m_\mfk{C},0)=\mcal{J}(0,m_{\mfk{C}})$. Therefore, we consider the configuration of Figure~\ref{fig:configuration for the computation of r-matrix} for the computation of $R$-matrix.

\begin{figure}[t]
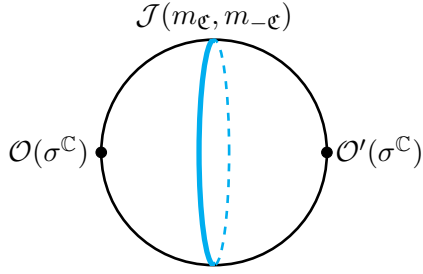
 \centering
\vspace{-1mm}

 \OneJanusInterfaceInTwoDimensions
 \caption{The configuration for the computation of $R$-matrix obtained by colliding the two interfaces in Figure~\ref{fig:two janus interfaces in two dimensions that compute the r-matrix}.}
 \label{fig:configuration for the computation of r-matrix}
\end{figure}
The effect of the presence of the Janus interface $\mcal{J}(m_{\mfk{C}},m_{-\mfk{C}})$ is that the parameters belong to the two opposite chambers $\mfk{C}$ and$-\mfk{C}$ on the two sides of the interface.

The next ingredient is the formula for the computation of the two-point functions of GLSM observables on sphere $S^2$ in the zero-flux sector. In the case of theory $\mscr{T}'_{\text{2d}}$, it is given by \cite{ClossetCremonesiPark201504} (see Appendix~\ref{sec:explicit construction of two-point functions} for the derivation)
\begin{equation}\label{eq:generic expression of correlation functions of GLSM observables}
 \langle\mcal{O}\mcal{O}'\rangle_{S^2,0}=\frac{1}{N!}\bigointsss_{\tenofo{JK}} \sd\mbs{\sigma}^\mbb{C} \mcal{Z}[\mcal{O},\mcal{O}']\bigl(\mbs{\sigma}^\mbb{C}\bigr),
\end{equation}
where the subscript $0$ denotes the zero-flux sector, $\mbs{\sigma}^\mbb{C}=\bigl(\sigma^\mbb{C}_1,\ldots,\mbs{\sigma}^\mbb{C}_N\bigr)$,
\[
 \sd\mbs{\sigma}^\mbb{C}=\sd\sigma^\mbb{C}_1\wedge \dots\wedge \sd \sigma^\mbb{C}_N,
 \qquad
 \mcal{Z}[\mcal{O},\mcal{O}']\bigl(\mbs{\sigma}^\mbb{C}\bigr)=\mcal{Z}\bigl(\mbs{\sigma}^\mbb{C}\bigr)\mcal{O}\bigl(\mbs{\sigma}^\mbb{C}\bigr)\mcal{O}'\bigl(\mbs{\sigma}^\mbb{C}\bigr),
\]
and
\begin{equation}\label{eq:partition function of GLSM on the coulomb branch}
 \mcal{Z}\bigl(\mbs{\sigma}^\mbb{C}\bigr)\equiv \prod_{a=1}^{N}\prod_{i=1}^L\frac{1}{\bigl(\sigma^\mbb{C}_a-m^\mbb{C}_i+\frac{1}{2}\hbar^\mbb{C}\bigr)\bigl(-\sigma^\mbb{C}_a+m^\mbb{C}_i+\frac{1}{2}\hbar^\mbb{C}\bigr)}\cdot\prod_{\substack{a\ne b}}^N\bigl(\si^\mbb{C}_{a}-\si^\mbb{C}_b\bigr).
\end{equation}
The contribution to the integral is determined by the JK-pole prescription and comes from the contribution of chiral multiplets. The location of the poles is given by \eqref{eq:the JK poles}.

The A-model observables relevant for the computation of $R$-matrix are constructed from~\eqref{eq:cohomological stable envelope after identification of parameters} as follows. First of all these observables depend on the pair $(N,L)$ and also the injective map~$p$. Furthermore, we can consider a permutation $\pi\colon\{1,\ldots,L\}\to\{1,\ldots,L\}$ which leads to a~permutation of complex masses and the real parameters \eqref{eq:real masses in the quantum mechanics}, and hence determines a choice of chamber. We denote these observables as $\mcal{O}_{p,\pi,N,L}$. It turns out that the relevant two-point functions are \cite{BullimoreKimLukowski201708}\footnote{In \cite[Section~6.2]{BullimoreKimLukowski201708}, this type of two-point functions has been interpreted as an inner product in the corresponding $\mcal{N}=4$ quantum mechanics, where the $\star$ operation provides a notion of ``complex conjugation".}
\begin{eqgathered}\label{eq:relevant correlation functions of a-model observables}
 \bigl\langle \mcal{O}_{p,\pi;N,L}\bigl(\sigma^\mbb{C}\bigr)\mcal{O}^\star_{p',\pi';N,L}\bigl(\sigma^\mbb{C}\bigr)\bigr\rangle_{S^2,0},
\end{eqgathered}
where
\begin{equation}\label{eq:relevant a-model observables}
 \mcal{O}_{p,\pi;N,L}\propto\rstab_{\mfk{C}}(p),
\end{equation}
where the chamber $\mfk{C}$ is determined through the permutation $\pi$, and $\star$ operation corresponds to replacing $\sigma^\mbb{C}_a\to-\sigma^\mbb{C}_a$ and $m^\mbb{C}_i\to-m^\mbb{C}_i$. This operation corresponds to the longest permutation~${\pi\colon \{1,\ldots,L\}\to\{L,\ldots,1\}}$ in \eqref{eq:function W for a permutation}. We comment below (see Remark~\ref{rmk:normalization of observables}) on how to determine the proportionality constant in \eqref{eq:relevant a-model observables}.

We are now in a position to compute the rational $R$-matrix for $\mfk{sl}(1|1)$. For this purpose, it suffices to restrict to $L=2$, which through \eqref{N<=L}, implies $N=0,1,2$, corresponding to the vacuum, one-magnon, and two-magnon sectors of the spin chain. The $R$-matrix of the vacuum sector is trivially one. We can compute the $R$-matrix in the one- and two-magnon sectors as follows.

\begin{rmk}[normalization of observables]\label{rmk:normalization of observables} Let us make a curious observation before proceeding to details. In the following computations, we demand that the observables $\mcal{O}_{p,\pi;N,L}$ are orthonormal, meaning $\langle\mcal{O}_{p,\pi;N,L}\mcal{O}^\star_{p,\pi;N,L}\rangle=1$ and $\langle\mcal{O}_{p,\pi;N,L}\mcal{O}^\star_{p',\pi;N,L}\rangle=0$. If we do not stress on orthonormality of these observables, then we end up with a $4\times 4$ matrix that all the components of the middle $2\times 2$ diagonal block\footnote{For an $R$-matrix, this $2\times 2$ block is the one-magnon sector $R$-matrix.} are multiplied by $\bigl(\hbar^{\mbb{C}}\bigr)^{-1}$ and the bottom $1\times 1$ diagonal block\footnote{For an $R$-matrix, this $1\times 1$ block is the two-magnon sector $R$-matrix.} is multiplied by \smash{$\bigl(\hbar^{\mbb{C}}\bigr)^{-2}$}. Such a matrix {\it does not} satisfy the Yang--Baxter equation and as such is not an $R$-matrix at all. In the case of $\mfk{sl}(2)$ spin chains considered in \cite[Section~6.1]{BullimoreKimLukowski201708}, the analogous observables are already normalized and there is no need for a rescaling. The reason is a certain contribution of the adjoint chiral multiplet that takes care of the normalization. In the present case, there is no adjoint chiral multiplet and as such we need this rescaling.
\end{rmk}

{\bf $\boldsymbol{R}$-matrix in the one-magnon sector.} The one-magnon sector is described by setting~${N=1}$, hence there are two possible choices of the injective map, which we call $p$ and $p'$: $p(1)=1$ or $p'(1)=2$.
There are also two possible choices of permutation, which we denote by~$\pi$ and $\pi'$: $\pi(\{1,2\})=\{1,2\}$ and $\pi'(\{1,2\})=\{2,1\}$. There are two complex twisted masses: $m^\mbb{C}=\bigl(m_1^\mbb{C},m_2^\mbb{C}\bigr)$ with $m_1^\mbb{C}+m_2^\mbb{C}=0$, and there is only one effective complex twisted mass~${m^\mbb{C}\equiv m_1^\mbb{C}=-m_2^{\mbb{C}}}$. Similarly, there are two real parameters, defined in \eqref{eq:real masses in the quantum mechanics}, with~${m_1+m_2=0}$, and a single effective parameter $m\equiv m_1=-m_2$. Correspondingly, there are two chambers $\mfk{C}=\{m>0\}$ and $-\mfk{C}=\{m<0\}$. These chambers correspond to the permutations $\pi$ and $\pi'$. We then have
\begin{gather*}
 \mcal{O}_{p,\pi;1,2}\bigl(\si^\mbb{C}\bigr)=\bigl(\hbar^{\mbb{C}}\bigr)^{\frac{1}{2}}\left(-\si^\mbb{C}+m^{\mbb{C}}_2+\frac{\hbar^\mbb{C}}{2}\right),
 \\
 \mcal{O}_{p,\pi';1,2}\bigl(\si^\mbb{C}\bigr)=\bigl(\hbar^{\mbb{C}}\bigr)^{\frac{1}{2}}\left(+\si^\mbb{C}-m_2^\mbb{C}+\frac{\hbar^\mbb{C}}{2}\right)=\mcal{O}^\star_{p,\pi;1,2}\bigl(\si^\mbb{C}\bigr),
 \\
 \mcal{O}_{p',\pi;1,2}\bigl(\si^\mbb{C}\bigr)=\bigl(\hbar^{\mbb{C}}\bigr)^{\frac{1}{2}}\left(+\si^\mbb{C}-m^{\mbb{C}}_1+\frac{\hbar^\mbb{C}}{2}\right),
 \\
 \mcal{O}_{p',\pi';1,2}\bigl(\si^\mbb{C}\bigr)=\bigl(\hbar^{\mbb{C}}\bigr)^{\frac{1}{2}}\left(-\si^\mbb{C}+m_1^\mbb{C}+\frac{\hbar^\mbb{C}}{2}\right)=\mcal{O}^\star_{p',\pi;1,2}\bigl(\si^\mbb{C}\bigr).
\end{gather*}
The prefactor $\bigl(\hbar^\mbb{C}\bigr)^{\frac{1}{2}}$ is chosen to guarantee the orthonormality of the observables $\mcal{O}_{p,\pi;N,L}$, which can be seen using \eqref{eq:generic expression of correlation functions of GLSM observables} as follows
\begin{alignat*}{3}
& \bigl\langle \mcal{O}_{p,\pi;1,2}\bigl(\si^\mbb{C}\bigr)\mcal{O}^\star_{p,\pi;1,2}\bigl(\si^\mbb{C}\bigr)\bigr\rangle_{S^2,0}=1,
 \qquad&&
 \bigl\langle \mcal{O}_{p,\pi;1,2}\bigl(\si^\mbb{C}\bigr)\mcal{O}^\star_{p',\pi;1,2}\bigl(\si^\mbb{C}\bigr)\bigr\rangle_{S^2,0}=0,&
 \\
& \bigl\langle \mcal{O}_{p',\pi;1,2}\bigl(\si^\mbb{C}\bigr)\mcal{O}^\star_{p',\pi;1,2}\bigl(\si^\mbb{C}\bigr)\bigr\rangle_{S^2,0}=1,
 \qquad&&
 \bigl\langle \mcal{O}_{p',\pi;1,2}\bigl(\si^\mbb{C}\bigr)\mcal{O}^\star_{p,\pi;1,2}\bigl(\si^\mbb{C}\bigr)\bigr\rangle_{S^2,0}=0.&
\end{alignat*}
All these observables only depend on a single permutation $\pi$, which means that real masses are chosen in a single chamber, say $\mfk{C}$. Physically this means that there is no Janus interface, and these are just orthonormality conditions on the observables.
Furthermore,
\begin{gather*}
 \bigl\langle \mcal{O}_{p,\pi';1,2}\bigl(\si^\mbb{C}\bigr)\mcal{O}^\star_{p,\pi;1,2}\bigl(\si^\mbb{C}\bigr)\bigr\rangle_{S^2,0}=-\frac{m_2^\mbb{C}-m_1^\mbb{C}}{m_2^\mbb{C}-m_1^\mbb{C}+\hbar^\mbb{C}},
 \\
 \bigl\langle \mcal{O}_{p,\pi';1,2}\bigl(\si^\mbb{C}\bigr)\mcal{O}^\star_{p',\pi;1,2}\bigl(\si^\mbb{C}\bigr)\bigr\rangle_{S^2,0}=\frac{\hbar^\mbb{C}}{m_2^\mbb{C}-m_1^\mbb{C}+\hbar^\mbb{C}},
 \\
 \bigl\langle \mcal{O}_{p',\pi';1,2}\bigl(\si^\mbb{C}\bigr)\mcal{O}^\star_{p,\pi;1,2}\bigl(\si^\mbb{C}\bigr)\bigr\rangle_{S^2,0}=\frac{\hbar^\mbb{C}}{m_2^\mbb{C}-m_1^\mbb{C}+\hbar^\mbb{C}},
 \\
 \bigl\langle \mcal{O}_{p',\pi';1,2}\bigl(\si^\mbb{C}\bigr)\mcal{O}^\star_{p',\pi;1,2}\bigl(\si^\mbb{C}\bigr)\bigr\rangle_{S^2,0}=-\frac{m_2^\mbb{C}-m_1^\mbb{C}}{m_2^\mbb{C}-m_1^\mbb{C}+\hbar^\mbb{C}}.
\end{gather*}
These correlation functions involve two observables and in the presence of a Janus interface $\mscr{J}(m_\mfk{C},m_{-\mfk{C}})$, the observables in the two sides of the interval depend on two different permutations $\pi$ and $\pi'$.
Hence, we find the one-magnon sector $R$-matrix constructed from the quantum mechanics
\begin{gather}\label{eq:r-matrix in one-magnon sector from gauge theory}
 \rrmatrix^{\tenofo{gt},(1)}\equiv
 \begin{pmatrix}
 -\frac{m_2^\mbb{C}-m_1^\mbb{C}}{m_2^\mbb{C}-m_1^\mbb{C}+\hbar^\mbb{C}} & \frac{\hbar^\mbb{C}}{m_2^\mbb{C}-m_1^\mbb{C}+\hbar^\mbb{C}}
 \vspace{1mm}\\
 \frac{\hbar^\mbb{C}}{m_2^\mbb{C}-m_1^\mbb{C}+\hbar^\mbb{C}} & -\frac{m_2^\mbb{C}-m_1^\mbb{C}}{m_2^\mbb{C}-m_1^\mbb{C}+\hbar^\mbb{C}}
 \end{pmatrix}.
\end{gather}

\begin{rmk}
 By comparing this with \cite[equation~(6.7)]{BullimoreKimLukowski201708}, we see that the one-magnon sector $R$-matrix is the same for $\mfk{sl}(2)$ and $\mfk{sl}(1|1)$. This is expected since the statistics of the excitation do not play any role in the one-magnon. We have seen this above but in a different disguise: the stable envelopes have been the same as we have explained in Remark~\ref{rmk:abelianN=2vsN=4}.
\end{rmk}
Finally,
\begin{gather*}
 \bigl\langle \mcal{O}_{p,\pi;1,2}\bigl(\si^\mbb{C}\bigr)\mcal{O}^\star_{p,\pi';1,2}\bigl(\si^\mbb{C}\bigr)\bigr\rangle_{S^2,0}=-\frac{m_2^\mbb{C}-m_1^\mbb{C}}{m_2^\mbb{C}-m_1^\mbb{C}-\hbar^\mbb{C}},
 \\
 \bigl\langle \mcal{O}_{p,\pi;1,2}\bigl(\si^\mbb{C}\bigr)\mcal{O}^\star_{p',\pi';1,2}\bigl(\si^\mbb{C}\bigr)\bigr\rangle_{S^2,0}=-\frac{\hbar^\mbb{C}}{m_2^\mbb{C}-m_1^\mbb{C}-\hbar^\mbb{C}},
 \\
 \bigl\langle \mcal{O}_{p',\pi;1,2}\bigl(\si^\mbb{C}\bigr)\mcal{O}^\star_{p,\pi';1,2}\bigl(\si^\mbb{C}\bigr)\bigr\rangle_{S^2,0}=-\frac{\hbar^\mbb{C}}{m_2^\mbb{C}-m_1^\mbb{C}-\hbar^\mbb{C}},
 \\
 \bigl\langle \mcal{O}_{p',\pi;1,2}\bigl(\si^\mbb{C}\bigr)\mcal{O}^\star_{p',\pi';1,2}\bigl(\si^\mbb{C}\bigr)\bigr\rangle_{S^2,0}=-\frac{m_2^\mbb{C}-m_1^\mbb{C}}{m_2^\mbb{C}-m_1^\mbb{C}-\hbar^\mbb{C}},
\end{gather*}
that regarding \eqref{eq:r-matrix in one-magnon sector from gauge theory} is just the inverse \smash{$\bigl(\rrmatrix^{\tenofo{gt},(1)}\bigr)^{-1}$} of $\rrmatrix^{\tenofo{gt},(1)}$, as it is expected
\[
 \bigl(\rrmatrix^{\tenofo{gt},(1)}\bigr)^{-1}\bigl(m_1^\mbb{C},m_2^\mbb{C}\bigr)=
 \begin{pmatrix}
 -\frac{m_2^\mbb{C}-m_1^\mbb{C}}{m_2^\mbb{C}-m_1^\mbb{C}-\hbar^\mbb{C}} & -\frac{\hbar^\mbb{C}}{m_2^\mbb{C}-m_1^\mbb{C}-\hbar^\mbb{C}}
 \vspace{1mm}\\
 -\frac{\hbar^\mbb{C}}{m_2^\mbb{C}-m_1^\mbb{C}-\hbar^\mbb{C}} & -\frac{m_2^\mbb{C}-m_1^\mbb{C}}{m_2^\mbb{C}-m_1^\mbb{C}-\hbar^\mbb{C}}
 \end{pmatrix}.
\]
Let us now compute the correlation functions \eqref{eq:relevant correlation functions of a-model observables} for the gauge theory corresponding to the two-magnon sector.

{\bf $\boldsymbol{R}$-matrix in the two-magnon sector.} The two-magnon sector is described by setting~${N=2}$. There is only one possible choice of the injective map $p(\{1,2\})=\{1,2\}$. As the Abelian case, there are two possible choices of permutation, which we denote by $\pi$ and $\pi'$: $\pi(\{1,2\})=\{1,2\}$ and $\pi'(\{1,2\})=\{2,1\}$. The complex masses and the real parameters are as in the Abelian case. We have $\bigl(\sigma^\mbb{C}=\bigl(\si^\mbb{C}_1,\si^\mbb{C}_2\bigr)$)
\[
 \mcal{O}_{p,\pi;2,2}\bigl(\sigma^\mbb{C}\bigr)=\hbar^{\mbb{C}}\left(\frac{\bigl(-\si^\mbb{C}_1+m_2^\mbb{C}+\frac{\hbar^\mbb{C}}{2}\bigr)
 \bigl(\si^\mbb{C}_2-m_1^\mbb{C}+\frac{\hbar^\mbb{C}}{2}\bigr)}{\sigma_2^\mbb{C}-\sigma_1^\mbb{C}}\right)+\bigl(\sigma_1^\mbb{C}\longleftrightarrow\sigma_2^\mbb{C}\bigr),
\]
and $\mcal{O}_{p,\pi';2,2}\bigl(\sigma^\mbb{C}\bigr)=\mcal{O}^\star_{p,\pi;2,2}\bigl(\sigma^\mbb{C}\bigr)$. Again, the prefactor $\hbar^\mbb{C}$ is determined by demanding orthogonality. Using \eqref{eq:generic expression of correlation functions of GLSM observables}, we see that
\[
 \bigl\langle \mcal{O}_{p,\pi;2,2}\bigl(\sigma^\mbb{C}\bigr)\mcal{O}^\star_{p,\pi;2,2}\bigl(\sigma^\mbb{C}\bigr)\bigr\rangle_{S^2,0}=1,
\]
which means that $\mcal{O}_{p,\pi;2,2}\bigl(\sigma^\mbb{C}\bigr)$ is normalized. Furthermore,
\[
 \bigl\langle \mcal{O}_{p,\pi';2,2}\bigl(\sigma^\mbb{C}\bigr)\mcal{O}^\star_{p,\pi;2,2}\bigl(\sigma^\mbb{C}\bigr)\bigr\rangle_{S^2,0}=-\frac{m_2^\mbb{C}-m_1^\mbb{C}-\hbar^\mbb{C}}{m_2^\mbb{C}-m_1^\mbb{C}+\hbar^\mbb{C}}.
\]
This gives the geometric $R$-matrix in the two-magnon sector
\begin{equation}\label{eq:r-matrix in two-magnon sector from gauge theory}
 \rrmatrix^{\tenofo{gt},(2)}=-\frac{m_2^\mbb{C}-m_1^\mbb{C}-\hbar^\mbb{C}}{m_2^\mbb{C}-m_1^\mbb{C}+\hbar^\mbb{C}}.
\end{equation}
Finally,{\samepage
\[
 \bigl\langle \mcal{O}_{p,\pi;2,2}\bigl(\sigma^\mbb{C}\bigr)\mcal{O}^\star_{p,\pi';2,2}\bigl(\sigma^\mbb{C}\bigr)\bigr\rangle_{S^2,0}=-\frac{m_2^\mbb{C}-m_1^\mbb{C}+\hbar^\mbb{C}}{m_2^\mbb{C}-m_1^\mbb{C}-\hbar^\mbb{C}},
\]
which is just the inverse of \eqref{eq:r-matrix in two-magnon sector from gauge theory}.}

Putting \eqref{eq:r-matrix in one-magnon sector from gauge theory} and \eqref{eq:r-matrix in two-magnon sector from gauge theory} together, the $R$-matrix of $\mfk{sl}(1|1)$ superspin chain constructed through the Bethe--Gauge correspondence is
\[
	\def\arraystretch{1.5}
	\setlength{\arrayrulewidth}{.6pt}
 \setlength{\arraycolsep}{6pt}
	\rrmatrix^{\tenofo{gt}}=
	\left(
	\begin{array}{c|cc|c}
		1 & 0 & 0 & 0
		\\
		\hline
		0 & -\frac{m_2^\mbb{C}-m_1^\mbb{C}}{m_2^\mbb{C}-m_1^\mbb{C}+\hbar^\mbb{C}} & \frac{\hbar^\mbb{C}}{m_2^\mbb{C}-m_1^\mbb{C}+\hbar^\mbb{C}} & 0
 \\[6pt]
 0 & \frac{\hbar^\mbb{C}}{m_2^\mbb{C}-m_1^\mbb{C}+\hbar^\mbb{C}} & -\frac{m_2^\mbb{C}-m_1^\mbb{C}}{m_2^\mbb{C}-m_1^\mbb{C}+\hbar^\mbb{C}} & 0
		\\ [6pt]
		\hline
		0 & 0 & 0 & -\frac{m_2^\mbb{C}-m_1^\mbb{C}-\hbar^\mbb{C}}{m_2^\mbb{C}-m_1^\mbb{C}+\hbar^\mbb{C}}
	\end{array}\right).
\]
If we identify $u\to m^\mbb{C}_1-m^\mbb{C}_2$ and $\hbar\to\hbar^{\mbb{C}}$, we see from \eqref{eq:rimanyi-rozansky gl(1|1) rational r-matrix} that
$\rrmatrix^{\tenofo{gt}}(u)=\rrmatrix^{\tenofo{RR}}(u)$.
Therefore, our A-model localization successfully reproduces the geometric $R$-matrix of $\mfk{sl}(1|1)$ superspin chain, as it is expected from the Bethe/Gauge correspondence.

\subsection{Further details on two-point functions of 2d GLSM observables}
\label{sec:two-point function of GLSM observables}
Here, we will collect concepts and formulas for the computation of two-point function of $\Omega$-deformed A-model observables. We follow \cite{ClossetCremonesiPark201504} to which we refer for all the details.

\subsubsection{Brief recap of Closset--Cremonesi--Park localization formula}

Consider a 2d $\mcal{N}(2,2)$ GLSM with $\tenofo{rank}(\mfk{g})=r$ gauge Lie algebra $\mfk{g}$, whose Cartan subalgebra we denote by $\mfk{h}$, in the $\Omega$-deformed sphere background $S^2_\Omega$.\footnote{We do not need the details of this background, which can be found in \cite[Section~2.1]{ClossetCremonesiPark201504}.} We denote the corresponding groups by $G$ and $H$ and the complexified Lie algebras by $\mfk{g}_\mbb{C}$ and $\mfk{h}_\mbb{C}$, respectively. Different topological sectors are labeled by magnetic flux $\mbs{n}=(n_1,\dots,n_r)\in\Gamma_{\mfk{g}^\vee}\simeq\mbb{Z}^{r}\subset \mfk{ih}$, where
\[
	\Gamma_{\mfk{g}^\vee}\equiv \{\mbs{n}\mid  \rho(\mbs{n}),\, \forall\rho\in\Gamma_{\mfk{g}}\},
\]
and $\rho(\mbs{n})$ is given by the canonical pairing of dual vector spaces and electric charges take value in $\Gamma_{\mfk{g}}$, the weight lattice. Supersymmetric observables belong to the cohomology of left- and right-moving supercharges $\twdQ$ and $\wt{\twdQ}$. In the presence of $\Omega$-deformation, we have $\delta_{\twdQ}\sigma=\delta_{\wt{\twdQ}}\sigma=0$ when the vector field $V$, that generates the rotation isometry, vanishes. Hence such operators are functions of $\sigma$, the complex scalar field in the vector multiplet of the theory. Furthermore, they can only be inserted at the north and south poles of $S^2_\Omega$. When we turn off the $\Omega$ background, they can be inserted anywhere on the sphere. The two-point function of such operators, which we denote by $\mcal{O}^N(\sigma_N)$ and $\mcal{O}^S(\sigma_S)$ for $N$ and $S$ denoting the north and south poles, are given by \cite{ClossetCremonesiPark201504}
\begin{gather}
	\left\langle\mcal{O}^N(\sigma_N)\mcal{O}^R(\sigma_R)\right\rangle_{S^2_\Omega}\nonumber\\
\qquad=\frac{(-1)^{\#}}{|W|}\sum_{\mbs{n}\in\Gamma_{\mfk{g}^\vee}}\msf{q}^{\mbs{n}}\bigointsss_{\tenofo{JK}}\mcal{Z}_{\mbs{n}}^{\tenofo{$1$-loop}}(\sigma,\epsilon)\mcal{O}^N\left(\sigma-\frac{\epsilon\mbs{ n}}{2}\right)\mcal{O}^S\left(\sigma+\frac{\epsilon \mbs{n}}{2}\right),\label{eq:the correlation functions of GLSMs observables}
\end{gather}
where $|W|$ denotes the order of the Weyl group $W$ of $\mfk{g}$, $\#$ denotes the number of chiral multiplets with $R$-charge $2$,\footnote{The prefactor $(-1)^{\#}$ is a sign ambiguity which is irrelevant for the rest of our discussion since we do not have a chiral multiplet of $R$-charge $+2$. The prescription to fix this sign ambiguity, which we have stated here, is given in \cite[Section~4.5]{ClossetCremonesiPark201504}.} $\sigma$ is the coordinate on $\wt{\mscr{M}}\equiv \mfk{h}_{\mbb{C}}\simeq\mbb{C}^{r}$, which is the cover of $\mscr{M}\equiv \wt{\mscr{M}}/W$, the Coulomb-branch moduli space, and $\epsilon$ is the $\Omega$-deformation parameter. $\mcal{Z}_{\mbs{n}}^{\tenofo{$1$-loop}}(\sigma)$ is the 1-loop contribution to the partition function from the $\mbs{n}$\textsuperscript{th} topological sector, which is give by
\[
	\mcal{Z}_{\mbs{n}}^{\tenofo{$1$-loop}}(\sigma)=\mcal{Z}_{\mbs{n}}^{\tenofo{vector}}(\sigma;\epsilon)\prod_i\mcal{Z}_{\mbs{n}}^{\tenofo{chiral}_i}(\sigma;\epsilon),
\]
where $\tenofo{chiral}_i$ denotes the $i$\textsuperscript{th} chiral multiplet. These contributions could be written explicitly using
\begin{equation}\label{eq:the function Zp}
	Z_{p}(x,\epsilon)\equiv
	\begin{cases}
		\displaystyle\prod_{q=-\frac{p}{2}+1}^{\frac{p}{2}-1}(x+\epsilon q), &\quad p>1,
		\\
1, &\quad p=1,
		\\
		\displaystyle\prod_{q=-\frac{|p|}{2}}^{\frac{|p|}{2}} (x+\epsilon q)^{-1}, &\quad p<1.
	\end{cases}
\end{equation}
For our purpose, we only need to know $Z_0(x,\epsilon=0)$ and $Z_2(x,\epsilon=0)$ for $\mbs{n}=\mbs{0}$. From \eqref{eq:the function Zp}, we have
\begin{equation}\label{eq:functions Z0 and Z2}
	Z_{2}(x,\epsilon=0)=x, \qquad Z_{0}(x,\epsilon=0)=\frac{1}{x}.
\end{equation}
We then have
\begin{equation}\label{eq:the contribtion of vector and chiral fields to the correlation functions of GLSMs}
	\begin{aligned}
		\mcal{Z}_{\mbs{n}}^{\tenofo{vector}}(\sigma,\epsilon)&=\prod_{\alpha\in\Delta}Z_{2-\alpha(\mbs{n})}(\alpha(\sigma),\epsilon),
		\\ \mcal{Z}_{\mbs{n}}^{\tenofo{chiral}_i}(\sigma,\epsilon)&=\prod_{\rho_i\in\mcal{R}_i}Z_{q^R_i-\rho_i(\mbs{n})}(\rho_i(\sigma)+m_i,\epsilon).
	\end{aligned}
\end{equation}
In this formula, $\Delta$ denotes the set of roots, $\rho_i$ labels the weights of representation $\mcal{R}_i$ of $G$ in which the $i$\textsuperscript{th} chiral multiplet transforms, and $m_i$ and $q^R_i$ denote the twisted mass and $U(1)_V$ $R$-symmetry charge of the $i$\textsuperscript{th} chiral multiplet. Defining the notation
\begin{equation}\label{eq:the collective notation for multiplets}
	\bigl(\mscr{W}_\mcal{I},m_\mcal{I},q^R_\mcal{I}\bigr)=
	\begin{cases}
		\bigl(\rho_i,m_i,q^R_i\bigr), &  \mcal{I}=(i,\rho_i),
		\\
		(\alpha,0,2), & \mcal{I}=\alpha.
	\end{cases}
\end{equation}
We can write the 1-loop partition function as follows:\footnote{Considering Figure~\ref{fig:from 2d gauge theory to 1d QM}, we only need the zero-flux sector of the GLSM observables in the computation of interval partition function of quantum mechanics, as we will see momentarily.}
\begin{equation}\label{eq:the compact form of the 1-loop partition function}
	\mcal{Z}_{\mbs{n}}^{\tenofo{1-loop}}(\sigma,\epsilon)=\prod_{\mcal{I}}\mcal{Z}_{\mscr{W}_\mcal{I}(\mbs{n})-q^R_\mcal{I}}(\mscr{W}_\mcal{I}(\sigma)+m_\mcal{I},\epsilon).
\end{equation}
The parameter $\msf{q}$ in \eqref{eq:the correlation functions of GLSMs observables} is related to the complexified FI parameters, defined as follows:
\[
	\eta^I\equiv \frac{\theta^I}{2\pi}+\mfk{i}\zeta^I\in\mbb{C}, \qquad 1\le I\le \dim(\mfk{c}^*_{\mbb{C}}),
\]
where $\theta^I$ and $\zeta^I$ are the theta angle and real FI parameters. They belong to the dual of the complexified center $\mfk{c}^*_{\mbb{C}}\subset\mfk{g}^*_{\mbb{C}}$ and could be thought of as elements of $\mfk{h}^*_{\mbb{C}}$ through the embedding~${\mfk{c}^*_{\mbb{C}}\hookrightarrow\mfk{h}^*_{\mbb{C}}}$. We denote the image of the physical FI parameters $\zeta^I$ under this embedding by~${\zeta^a, a=1,\dots,r}$. Then,
\smash{$
\msf{q}^{\mbs{n}}\equiv \exp\bigl(2\pi\mfk{i}\sum_{I}\mbs{\eta}\cdot\mbs{n}\bigr)$},
where $n_I$ denotes the flux in the free $I$\textsuperscript{th} $\U(1)$ part of the center of $G$. Note that $\mbs{n}\in\mfk{i\mfk{h}}$ and hence $\mbs{\eta}\cdot\mbs{n}$ is given by the canonical pairing between $\mfk{h}_\mbb{C}$ and $\mfk{h}^*_{\mbb{C}}$.

Finally, we need to specify the contour of integration, which is given by the JK contour \cite{BrionVergne199903,JeffreyKirwan199307,SzenesVergne200306}. The poles of $\mcal{Z}_{\mbs{n}}^{\tenofo{1-loop}}(\sigma,\epsilon)$ are the loci of intersections of $s\ge r$ hyperplanes $\bigl\{\mcal{H}^{k_1}_{\mcal{I}_1},\ldots,\mcal{H}^{k_s}_{\mcal{I}_s}\bigr\}$, where
\begin{equation}\label{eq:the singular hyperplanes}
	\mcal{H}^{k}_\mcal{I}\equiv \left\{\sigma\,\biggl|\, \mscr{W}_\mcal{I}(\sigma)=-m_\mcal{I}-\left(k+\frac{q^R_\mcal{I}-\mscr{W}_\mcal{I}(\mbs{n})}{2}\right)\epsilon \right\}, \qquad k\in\bigl[0,\mscr{W}_\mcal{I}(\mbs{n})-q^R_\mcal{I}\bigr]_{\tenofo{int}},
\end{equation}
and $[a,b]_{\tenofo{int}}$ denotes the set of integers between $a$ and $b$. There are two types of singularities: the non-degenerate singularities corresponding to the intersection of exactly $r$ hyperplanes and the degenerate ones corresponding to the intersection of $s>r$ hyperplanes. This complex-codimension-$r$ singular loci is denoted as $\wt{\mscr{M}}^{\tenofo{sing}}_{\mbs{n}}\subset\wt{\mscr{M}}$. Note that $\mcal{H}^k_\mcal{I}$ is a singular locus of~\smash{$\mcal{Z}_{\mbs{n}}^{\tenofo{1-loop}}$} only when $\mcal{I}$ labels a chiral multiplet field. $\mcal{H}^k_\alpha$ labels the codimension-1 poles with zero residue. The definition of JK residues at \smash{$\sigma_*\in\wt{\mscr{M}}_{\mbs{n}}^{\tenofo{sing}}$} depends on an additional parameter $\eta\in \mfk{i}\mfk{h}^*$, which is defined as follows. The charge $\mscr{W}_{\mcal{I}}$, defined in \eqref{eq:the collective notation for multiplets}, belongs to $\mfk{i}\mfk{h}^*$ and hence any set of charges $\{\mscr{W}_{\mcal{I}_1},\dots,\mscr{W}_{\mcal{I}_r}\}$ generates a cone
\begin{equation}\label{eq:the generic cones}
	\tenofo{Con}[\mscr{W}_{\mcal{I}_1},\dots,\mscr{W}_{\mcal{I}_r}]\subset\mfk{i}\mfk{h}^*.
\end{equation}
The set of all such cones can be divided into dimension-$r$ chambers that are separated by codimension-$1$ walls. The chamber to FI parameters $\xi^a\in\mfk{i}\mfk{h}^*, a=1,\dots,r$ belong to determines the phase of the GLSM \cite{MorrisonPlesser199412,Witten199301}. The effective FI parameter $\zeta^{\tenofo{UV},I}_{\tenofo{eff}}$ at the infinity (i.e., when~${|\sigma|\to\infty}$) of Coulomb branch $\mscr{M}$ is defined to be
\[
	\zeta^{\tenofo{UV},a}_{\tenofo{eff}}\equiv \zeta^a+\frac{b^a}{2\pi}\lim_{\msf{R}\to \infty}\ln \msf{R},\qquad b^a\equiv \sum_{i}\tenofo{Tr}_{\mcal{R}_i}(t^a)=\sum_{\mcal{I}}\mscr{W}_\mcal{I}.
\]
$t^a$ denotes the generator of the $a$\textsuperscript{th} $\mfk{u}(1)\subset \mfk{g}$,\footnote{$t^a$ is the image of the $I$\textsuperscript{th} $\mfk{u}(1)$ generator of the center of $\mfk{g}^*$ under the embedding $\mfk{c}^*\hookrightarrow\mfk{h}^*$.} and the last equality follows from embedding~${\mfk{c}^*\subset \mfk{i}\mfk{h}^*}$. Hence, phases of the theory on the Coulomb branch depend on the chamber~${\mfk{C}\subset\mfk{i}\mfk{h}^*}$ that the parameter $\zeta^{\tenofo{UV}}_{\tenofo{eff}}$ belong to. This can be stated as follows: $\zeta^{\tenofo{UV}}_{\tenofo{eff}}$ belongs to a chamber $\mcal{C}\subset\mfk{i}\mfk{h}^*$ if
$\zeta+\frac{b}{2\pi}\ln \msf{R}\in\mcal{C}$, $ \forall \msf{R}\ge \msf{R}_*$,
for some $\msf{R}_*$, which always exists for practical purposes \cite{ClossetCremonesiPark201504}. When $b\ne 0$, there could be two cases: 1) either $b$ belongs to a definite chamber~$\mcal{C}$; In this case, $\zeta$ is irrelevant and the theory has only one phase determined by the chamber~$\mcal{C}$, or 2) $b$ belongs to a wall where chambers~${\mcal{C}_1,\dots,\mcal{C}_p}$ meet; In this case, one can turn on $\zeta$ and depending on its value, $\zeta^{\tenofo{UV}}_{\tenofo{eff}}$ could belong to any of the chambers $\mfk{C}_i,i=1,\dots,p$, which in turn determines the phase of the theory. When $b=0$, and hence the IR theory is a CFT, then~${\zeta^{\tenofo{UV}}_{\tenofo{eff}}=\zeta}$ could be taken to be in any chamber $\mcal{C}\subset\mfk{i}\mfk{h}^*$. We then set
\begin{equation}\label{eq:the definition of effective FI parameter}
	\xi\equiv \zeta^{\tenofo{FI}}_{\tenofo{eff}}.
\end{equation}
The JK residues can now be defined as follows. If a singular point $\sigma_*$ comes from the intersection of $s\ge r$ hyperplanes $\bigl\{\mcal{H}^{k_1}_{\mcal{I}_1},\dots,\mcal{H}^{k_s}_{\mcal{I}_s}\bigr\}$, we denote the set of $\mcal{H}$-charges as $\mbs{\mscr{W}}(\sigma_*)\equiv \{\mscr{W}_{\mcal{I}_1},\dots,\mscr{W}_{\mcal{I}_s}\}$. The JK residues at $\sigma_*=\{\sigma_{*,1},\dots,\sigma_{*,r}\}$ is then give by
\begin{gather}
	\left.\tenofo{JK-Res}[\mbs{\mscr{W}}(\sigma),\xi]\prod_{\mscr{W}_j\in \mbs{\mscr{W}}_r}\frac{\sd\sigma_1\wedge\cdots\wedge\sd\sigma_{r}}{\mscr{W}_{1}(\sigma)\cdots \mscr{W}_{r}(\sigma)}\right|_{\sigma=\sigma_*}\nonumber\\
\qquad=
	\begin{cases}
		\dfrac{1}{|{\det}(\mscr{W}_i)|}, & \xi\in\tenofo{Con}[\mscr{W}_1,\dots,\mscr{W}_r],
		\\
0, & \xi\notin\tenofo{Con}[\mscr{W}_1,\dots,\mscr{W}_r],
	\end{cases}\label{eq:the generic prescription for JK residues}
\end{gather}
where $\mbs{\mscr{W}}_r$ is any set among $s \choose r$ sets of $r$ $\mfk{h}$-charges in $\mbs{\mscr{W}}(\sigma_*)$. This concludes our discussion of the necessary results from \cite{ClossetCremonesiPark201504}.

\subsubsection{Explicit computation of two-point functions}\label{sec:explicit construction of two-point functions}
We would like to compute the contribution of vector and chiral multiplets to the correlation functions \eqref{eq:relevant correlation functions of a-model observables}, given in \eqref{eq:the correlation functions of GLSMs observables} or more compactly in \eqref{eq:the compact form of the 1-loop partition function}, explicit.
For the fields in the vector multiplet, $\mcal{I}=(i,\rho_i)\equiv (ab,\rho_{ab})$, as \eqref{eq:the collective notation for multiplets}
$\rho_{ab}=\alpha_{ab}=e_a-e_b$, $ a\ne b$,
where $\alpha_{ab}$ denotes the $ab$\textsuperscript{th} root of $\mfk{u}(N)$, associated to the generator $E_{ab}$, and $\{e_1,\dots,e_N\}$ is the standard basis for $\mbb{R}^N$.\footnote{For $\alpha_{ab}=e_a-e_b$ for $a\ne b$, we consider the one-dimensional vector space generated by $E_{ab}$. This space is an eigenspace of the Cartan subalgebra with eigenvalue $\alpha_{ab}$. These eigenspaces together with the Cartan subalgebra generate $\mfk{u}(N)$.} Therefore,
\begin{gather}
		\mscr{W}_\mcal{I}\bigl(\sigma^\mbb{C}\bigr)=\rho_{ab}\bigl(\sigma^\mbb{C}\bigr)=\si_a^\mbb{C}-\si_b^\mbb{C}, \qquad a\ne b,\nonumber
		\\
		\mscr{W}_\mcal{I}(\mbs{n})=\rho_{ab}(\mbs{n})=n_a-n_b,\qquad a\ne b.\label{eq:quantities W(si) and W(n)}
\end{gather}
Similarly, for chiral fields in the fundamental $(+)$ and antifundamental $(-)$ representations, the weights are ($\mcal{I}=(i,\rho_i)\equiv (a,\rho_{a,\pm})$ as in \eqref{eq:the collective notation for multiplets})
\begin{equation}\label{eq:the weights for the fundamental and antifundamental representations}
	\rho_{a,\pm}=\pm e_a, \qquad a=1,\dots,N,
\end{equation}
and hence for the chiral multiplets\footnote{We use the same notation for chiral multiplets of the theory $\mscr{T}''_{\text{2d}}$ in Figure~\ref{fig:from four to lower dimensions} as the ones for $\mscr{T}'''_{\text{3d}}$ and its dimensional reductions to 2d and 1d.} $\bigl(\mcal{Q}\indices{_a^i}, \wt{\mcal{Q}}\indices{_i^a}\bigr)$, we have
\begin{gather*}
 \mcal{Q}\indices{_a^i}\colon\ 	\mscr{W}_\mcal{I}\bigl(\sigma^\mbb{C}\bigr)+m_\mcal{I}=\rho_{a,+}\bigl(\sigma^\mbb{C}\bigr)+m_{\mcal{Q}\indices{_a^i}}=\si_a^\mbb{C}-m^\mbb{C}_i+\hbar^\mbb{C},
		\\
	\wt{\mcal{Q}}\indices{_i^a}\colon\ 	\mscr{W}_\mcal{I}\bigl(\sigma^\mbb{C}\bigr)+m_\mcal{I}=\rho_{a,-}\bigl(\sigma^\mbb{C}\bigr)+m_{\wt{\mcal{Q}}\indices{_i^a}}=-\si^\mbb{C}_a+m_i^\mbb{C}+\hbar^\mbb{C}.
\end{gather*}
Using \eqref{eq:functions Z0 and Z2} and \eqref{eq:the contribtion of vector and chiral fields to the correlation functions of GLSMs},
the contribution of vector multiplet is\footnote{Note that the contribution of vector multiplet is the same as an adjoint chiral with $R$-charge $+2$ \cite{ClossetCremonesiPark201504}. Hence, we have taken $q_\tenofo{vector}^R=2$.}
\begin{equation}\label{eq:the contribution of vector multiplet to partition function}
	\mcal{Z}^{\tenofo{vector}}_2(\sigma,\epsilon=0)=\prod_{\substack{a,b=1 \\ a\ne b}}^N\bigl(\si_{a}^\mbb{C}-\si_b^\mbb{C}\bigr).
\end{equation}
Similarly, setting $q_{\mcal{Q}}^R=q_{\wt{\mcal{Q}}}^R=0$, we obtain the contribution of chiral multiplets
\begin{gather}
		\mcal{Z}_0^\mcal{Q}(\sigma^\mbb{C};\epsilon=0)=\prod_{a=1}^{N}\prod_{i=1}^L\frac{1}{\bigl(\sigma_a^\mbb{C}-m_i^\mbb{C}+\frac{\hbar^\mbb{C}}{2}\bigr)},\nonumber
		\\
		\mcal{Z}_0^{\wt{\mcal{Q}}}(\sigma^\mbb{C};\epsilon=0)=\prod_{a=1}^{N}\prod_{i=1}^L\frac{1}{\bigl(-\sigma_a^\mbb{C}+m_i^\mbb{C}+\frac{\hbar^\mbb{C}}{2}\bigr)}.\label{eq:the contribution of chiral multiplets to partition function}
\end{gather}
Putting together \eqref{eq:the contribution of vector multiplet to partition function} and \eqref{eq:the contribution of chiral multiplets to partition function} and using \eqref{eq:the correlation functions of GLSMs observables}, we thus see that the zero-flux sector contribution to the sphere partition function can be written explicitly as\footnote{As we mentioned above, there is no overall $(-1)^{\#}$ sign since there is no chiral multiplet of $R$-charge $+2$ in our construction.} (${\mbs{n}=0}$ in \eqref{eq:the correlation functions of GLSMs observables})
\[
	\mcal{Z}\equiv \langle 1\rangle_{S^2,0}=\frac{1}{N!}\bigointsss_{\tenofo{JK}}\sd\sigma_1^\mbb{C}\wedge \dots\wedge \sd \sigma_N^\mbb{C} \mcal{Z}\bigl(\sigma^\mbb{C}_1,\dots,\si_N^\mbb{C}\bigr),
\]
where
\begin{equation}\label{eq:partition function on the coulomb branch, derived}
 \mcal{Z}\bigl(\sigma^\mbb{C}_1,\dots,\si^\mbb{C}_N\bigr)\equiv \prod_{a=1}^{N}\prod_{i=1}^L\frac{1}{\bigl(\sigma^\mbb{C}_a-m^\mbb{C}_i+\frac{1}{2}\hbar^\mbb{C}\bigr)\bigl(-\sigma^\mbb{C}_a+m^\mbb{C}_i+\frac{1}{2}\hbar^\mbb{C}\bigr)}\cdot\prod_{\substack{a\ne b}}^N(\si^\mbb{C}_{a}-\si^\mbb{C}_b).
\end{equation}
The factor $1/N!$ comes from the order of the Weyl-group factor in \eqref{eq:the correlation functions of GLSMs observables}. This proves \eqref{eq:partition function of GLSM on the coulomb branch}. Substituting \eqref{eq:partition function on the coulomb branch, derived} in \eqref{eq:the correlation functions of GLSMs observables} gives \eqref{eq:generic expression of correlation functions of GLSM observables}. The poles of JK prescription for the positive (effective) FI parameter ($\xi>0$, where $\xi$ is defined in \eqref{eq:the definition of effective FI parameter}) are given by
\begin{gather}\label{eq:the JK poles}
	\tenofo{JK Poles}\colon\ \sigma_a^\mbb{C}=m_i^\mbb{C}-\frac{\hbar^\mbb{C}}{2}, \qquad a=1,\dots,N,\qquad i=1,\dots,L,
\end{gather}
coming from the intersection of hyperplanes $\mcal{H}^k_{\mcal{I}}$ in \eqref{eq:the singular hyperplanes} associated with fundamental chiral multiplets $\mcal{Q}\indices{_a^i}$. Note that we need to take the contribution from all poles, i.e., all pairs $(a,i)$. In the following, we provide further details on this choice of JK poles in the computation of $R$-matrix.

{\bf Jeffrey--Kirwan poles for the computation of $\boldsymbol{R}$-matrix.}
Let us give the details of the choice \eqref{eq:the JK poles} of JK poles necessary for the computation of $R$-matrix in \ref{sec:construction of geometric r-matrix}. Although we have stated the poles in \eqref{eq:the JK poles}, we give more details on this choice using the formalism of \cite{ClossetCremonesiPark201504}, as reviewed in Section~\ref{sec:two-point function of GLSM observables}. As we have explained there, the choice of JK poles depends on the choice of the (effective) FI parameter $\xi$, defined in \eqref{eq:the definition of effective FI parameter}.

For the $1$-magnon sector, we have a $\U(1)$ gauge theory with $\SU(2)$ fundamental $\mcal{Q}^i$ and antifundamental $\wt{\mcal{Q}}_i$ flavors. Using \eqref{eq:the weights for the fundamental and antifundamental representations} and \eqref{eq:the singular hyperplanes}, the equations of hyperplanes \eqref{eq:the singular hyperplanes} are given by
\begin{gather*}
		\mcal{H}^{k^\mcal{Q}_1}_{\mcal{I}_{\mcal{Q}^1}}=\left\{\si^\mbb{C}\,\biggl|\, \si^\mbb{C}=m^\mbb{C}_1-\frac{\hbar^\mbb{C}}{2}-\left(k_{1}^\mcal{Q}-\frac{n}{2}\right)\epsilon\right\} , \qquad k_{1}^\mcal{Q}\in[0,n]_{\tenofo{int}},
		\\
		\mcal{H}^{k^\mcal{Q}_2}_{\mcal{I}_{\mcal{Q}^2}}=\left\{\si^\mbb{C}\,\biggl|\, \si^\mbb{C}=m^\mbb{C}_2-\frac{\hbar^\mbb{C}}{2}-\left(k_{2}^\mcal{Q}-\frac{n}{2}\right)\epsilon\right\} , \qquad k_{2}^\mcal{Q}\in[0,n]_{\tenofo{int}},
		\\
		\mcal{H}^{k^{\wt{\mcal{Q}}}_1}_{\mcal{I}_{\wt{\mcal{Q}}_1}}=\left\{\si^\mbb{C}\,\biggl|\, -\si^\mbb{C}=-m^\mbb{C}_1-\frac{\hbar^\mbb{C}}{2}-\left(k_{1}^{\wt{\mcal{Q}}}+\frac{n}{2}\right)\epsilon\right\} , \qquad k_{1}^{\wt{\mcal{Q}}}\in[0,n]_{\tenofo{int}},
		\\
		\mcal{H}^{k^{\wt{\mcal{Q}}}_2}_{\mcal{I}_{\wt{\mcal{Q}}_2}}=\left\{\si^\mbb{C}\,\biggl|\, -\si^\mbb{C}=-m_2^\mbb{C}-\frac{\hbar^\mbb{C}}{2}-\left(k_{2}^{\wt{\mcal{Q}}}+\frac{n}{2}\right)\epsilon\right\} , \qquad k_{2}^{\wt{\mcal{Q}}}\in[0,n]_{\tenofo{int}},
\end{gather*}
with $m^\mbb{C}_1+m^\mbb{C}_2=0$. The weight diagram is
\begin{figure}[H]\centering
	\WeightDiagramOneMagnonSector
\end{figure}
We only need to consider the zero-flux sector $n=0$. From \eqref{eq:the generic cones}, we see that there are two ``cones"
\[
		\tenofo{Con}(\mscr{W}_{\mcal{I}_{\mcal{Q}^1}},\mscr{W}_{\mcal{I}_{\mcal{Q}^2}})=\tenofo{Con}(+1,+1),\qquad \tenofo{Con}(\mscr{W}_{\mcal{I}_{\wt{\mcal{Q}}_1}},\mscr{W}_{\mcal{I}_{\wt{\mcal{Q}}_2}})=\tenofo{Con}(-1,-1).
	\]
Using \eqref{eq:the generic prescription for JK residues} and taking $\xi>0$ and using \eqref{eq:the generic prescription for JK residues}, we see that only the poles coming from \smash{$H^{k^\mcal{Q}_1}_{\mcal{Q}_1}$} and \smash{$H^{k^\mcal{Q}_2}_{\mcal{Q}_2}$}, which is the same as \eqref{eq:the JK poles}.

For the $2$-magnon sector, the gauge group is $\U(2)$ and the $\SU(2)\times\U(1)_\hbar$ matter multiplet consists of $\mcal{Q}\indices{_a^i}$ and $\wt{\mcal{Q}}\indices{^a_i}$ with \smash{$q_\mcal{Q}^R=q_{\wt{\mcal{Q}}}^R=0$}. \eqref{eq:the weights for the fundamental and antifundamental representations}, and \eqref{eq:quantities W(si) and W(n)}, we can write the equations for singular hyperplanes \eqref{eq:the singular hyperplanes} associated to the fundamental chiral multiplets
\begin{gather*}
		\mcal{H}^{k^\mcal{Q}_{11}}_{\mcal{I}_{\mcal{Q}\indices{_1^1}}} =\left\{\bigl(\si_1^\mbb{C},\si_2^\mbb{C}\bigr)\,\biggl|\, \sigma_1^\mbb{C}=m_1^\mbb{C}-\frac{\hbar^\mbb{C}}{2}-\left(k_{11}^\mcal{Q}-\frac{n_1}{2}\right)\epsilon\right\} , \qquad k_{11}^\mcal{Q} \in[0,n_1]_{\tenofo{int}},
		\\
		\mcal{H}^{k^\mcal{Q}_{12}}_{\mcal{I}_{\mcal{Q}\indices{_1^2}}} =\left\{\bigl(\si_1^\mbb{C},\si_2^\mbb{C}\bigr)\,\biggl|\, \sigma_1^\mbb{C}=m_2^\mbb{C}-\frac{\hbar^\mbb{C}}{2}-\left(k_{12}^\mcal{Q}-\frac{n_1}{2}\right)\epsilon\right\} , \qquad k_{12}^\mcal{Q} \in[0,n_1]_{\tenofo{int}},
		\\
		\mcal{H}^{k^\mcal{Q}_{21}}_{\mcal{I}_{\mcal{Q}\indices{_2^1}}} =\left\{\bigl(\si_1^\mbb{C},\si_2^\mbb{C}\bigr)\,\biggl|\, \sigma^\mbb{C}_2=m_1^\mbb{C}-\frac{\hbar^\mbb{C}}{2}-\left(k_{21}^\mcal{Q}-\frac{n_2}{2}\right)\epsilon\right\} , \qquad k_{21}^\mcal{Q} \in[0,n_2]_{\tenofo{int}},
		\\
		\mcal{H}^{k^\mcal{Q}_{22}}_{\mcal{I}_{\mcal{Q}\indices{_2^2}}} =\left\{\bigl(\si_1^\mbb{C},\si_2^\mbb{C}\bigr)\,\biggl|\, \sigma_2^\mbb{C}=m_2^\mbb{C}-\frac{\hbar^\mbb{C}}{2}-\left(k_{22}^\mcal{Q}-\frac{n_2}{2}\right)\epsilon\right\} , \qquad k_{22}^\mcal{Q} \in[0,n_2]_{\tenofo{int}}.
\end{gather*}
Similarly, the hyperplanes associated to the antifundamental chiral multiplets are
\begin{gather*}
		H^{k^{\wt{\mcal{Q}}}_{11}}_{\mcal{I}_{{\wt{\mcal{Q}}}\indices{_1^1}}} =\left\{\bigl(\si_1^\mbb{C},\si_2^\mbb{C}\bigr)\,\biggl|\, -\sigma_1^\mbb{C}=-m_1^\mbb{C}-\frac{\hbar^\mbb{C}}{2}-\left(k_{11}^{\wt{\mcal{Q}}}+\frac{n_1}{2}\right)\epsilon\right\} , \qquad k_{11}^{\wt{\mcal{Q}}} \in[0,n_1]_{\tenofo{int}},
		\\ H^{k^{\wt{\mcal{Q}}}_{12}}_{\mcal{I}_{{\wt{\mcal{Q}}}\indices{_1^2}}} =\left\{\bigl(\si_1^\mbb{C},\si_2^\mbb{C}\bigr)\,\biggl|\, -\sigma_1^\mbb{C}=-m_2^\mbb{C}-\frac{\hbar^\mbb{C}}{2}-\left(k_{12}^{\wt{\mcal{Q}}}+\frac{n_1}{2}\right)\epsilon\right\} , \qquad k_{12}^{\wt{\mcal{Q}}} \in[0,n_1]_{\tenofo{int}},
		\\
		H^{k^{\wt{\mcal{Q}}}_{21}}_{\mcal{I}_{{\wt{\mcal{Q}}}\indices{_2^1}}} =\left\{\bigl(\si_1^\mbb{C},\si_2^\mbb{C}\bigr)\,\biggl|\, -\sigma_2^\mbb{C}=-m_1^\mbb{C}-\frac{\hbar^\mbb{C}}{2}-\left(k_{21}^{\wt{\mcal{Q}}}+\frac{n_2}{2}\right)\epsilon\right\} , \qquad k_{21}^{\wt{\mcal{Q}}} \in[0,n_2]_{\tenofo{int}},
		\\
		H^{k^{\wt{\mcal{Q}}}_{22}}_{\mcal{I}_{{\wt{\mcal{Q}}}\indices{_2^2}}} =\left\{\bigl(\si_1^\mbb{C},\si_2^\mbb{C}\bigr)\,\biggl|\, -\sigma_2^\mbb{C}=-m_2^\mbb{C}-\frac{\hbar^\mbb{C}}{2}-\left(k_{22}^{\wt{\mcal{Q}}}+\frac{n_2}{2}\right)\epsilon\right\} , \qquad k_{22}^{\wt{\mcal{Q}}} \in[0,n_2]_{\tenofo{int}}.
\end{gather*}
We only need the zero-flux sector $n_1=n_2=0$. To specify the JK poles, we take %
$\xi$ to be in the first quadrant, and using \eqref{eq:the generic cones} and \eqref{eq:the generic prescription for JK residues}, we see that we need to take the contribution from the hyperplanes forming the following cones associated to $\mcal{Q}\indices{_a^i}$
\smash{$\tenofo{Con}(\mscr{W}_{\mcal{I}_{\mcal{Q}\indices{_1^i}}},\mscr{W}_{\mcal{I}_{\mcal{Q}\indices{_2^i}}})$}, $ i=1,2$,
which contain $\xi$. These poles are as given in \eqref{eq:the JK poles}.

\subsection*{Acknowledgements}

We thank Kevin Costello, Yuan Miao, Hiraku Nakajima, Tadashi Okazaki, Surya Raghavendran, Junya Yagi, Masahito Yamazaki, and Zijun Zhou for useful discussions. We thank Davide Gaiotto, and Andrei Okounkov for reading the draft and sending us valuable feedback. Special thanks go to Mykola Dedushenko for explaining some parts of his work with Nikita Nekrasov and for providing extensive commentary on an earlier version of this work. We thank the Banff International Research Station for generously hosting us during the last stage of this project. N.I. is supported by the Huawei Young Talents Program Fellowship at IHES. The work of S.F.M. is funded by the Natural Sciences and Engineering Research Council of Canada (NSERC) funding number SAPIN-2022-00028, and also in part by the Alfred P. Sloan Foundation, grant FG-2020-13768. S.F.M would like to thank Davide Gaiotto for his support during the visit to the Perimeter Institute. Kavli IPMU is supported by World Premier International Research Center Initiative (WPI), MEXT, Japan.

\addcontentsline{toc}{section}{References}
\LastPageEnding

\end{document}